\newenvironment{menum}
{\begin{enumerate}
  \setlength{\itemsep}{1pt}
  \setlength{\parskip}{0pt}
  \setlength{\parsep}{0pt}}
{\end{enumerate}}
\newenvironment{mitem}
{\begin{itemize}
  \setlength{\itemsep}{1pt}
  \setlength{\parskip}{0pt}
  \setlength{\parsep}{0pt}}
{\end{itemize}}
\newtheorem{thm}{Theorem}[section]
\newtheorem*{thm*}{Theorem}
\newtheorem{prop}[thm]{Proposition}
\newtheorem*{prop*}{Proposition}
\newtheorem{lem}[thm]{Lemma}
\newtheorem{cor}[thm]{Corollary}
\theoremstyle{definition}
\newtheorem{dfn}[thm]{Definition}
\newtheorem{ex}[thm]{Example}
\theoremstyle{remark}
\newtheorem*{remark}{Remark}
\newcommand{\ZX}{\textsc{zx}}
\newcommand{\NOT}{\textsc{not}}
\newcommand{\SWAP}{\textsc{swap}}
\newcommand{\abs}[1]{\left| #1 \right|}
\newcommand{\avg}[1]{\left\langle #1 \right\rangle}
\newcommand{\CC}{\mathbb{C}}
\newcommand{\ZZ}{\mathbb{Z}}
\newcommand{\HH}{\mathcal{H}} 
\newcommand{\ket}[1]{\left| #1 \right>} 
\newcommand{\bra}[1]{\left< #1 \right|} 
\newcommand{\intf}[1]{\left\llbracket #1 \right\rrbracket} 
\newcommand{\FHilb}{\mathbf{FHilb}}
\newcommand{\FRel}{\mathbf{FRel}}
\newcommand{\Hilb}{\mathbf{Hilb}}
\newcommand{\Mat}{\mathbf{Mat}}
\newcommand{\Rel}{\mathbf{Rel}}
\newcommand{\Set}{\mathbf{Set}}
\newcommand{\Spek}{\mathbf{Spek}}
\renewcommand{\t}[1]{\ensuremath{^{\otimes #1}}}
\tikzstyle{none}=[inner sep=0pt]
\tikzstyle{rn}=[circle,fill=Red,draw=Black,line width=0.8 pt,minimum size=5pt,inner sep=0pt]
\tikzstyle{gn}=[circle,fill=Lime!40,draw=Black,line width=0.8 pt,minimum size=5pt,inner sep=0pt]
\tikzstyle{Hadamard}=[rectangle,fill=Yellow,draw=Black,minimum size=8pt,inner sep=0pt,label={center:$\scriptstyle\mathrm{H}$}]
\tikzstyle{gphase}=[rounded rectangle,rounded rectangle arc length=90,fill=Lime!15,inner sep=2pt]
\tikzstyle{rphase}=[rounded rectangle,rounded rectangle arc length=90,fill=Red!25,inner sep=2pt]
\tikzstyle{HadSpek}=[rectangle,fill=Yellow,draw=Black,minimum size=6pt]
\tikzstyle{normalrect}=[rectangle,fill=white,draw=black,minimum height=10pt,minimum width=12pt,inner sep=0pt]
\tikzstyle{unitary}=[rectangle,fill=white,draw=black,minimum height=14pt,minimum width=14pt,inner sep=0pt]
\tikzstyle{wideunitary}=[rectangle,fill=white,draw=black,minimum height=14pt,minimum width=24pt,inner sep=0pt]
\tikzstyle{graph}=[ellipse,fill=White,draw=Black,minimum height=0.5cm,minimum width=3cm,inner sep=0pt]
\tikzstyle{biggraph}=[ellipse,fill=White,draw=Black,minimum height=0.75cm,minimum width=6.5cm,inner sep=0pt]
\tikzstyle{scalardiamond}=[diamond,draw=black,fill=white,inner sep=1pt,minimum size=0.4cm]
\tikzstyle{dtr}=[regular polygon,regular polygon sides=3,shape border rotate=180,fill=white,draw=black,minimum size=22pt,inner sep=0pt]
\tikzstyle{utr}=[regular polygon,regular polygon sides=3,fill=white,draw=black,minimum size=22pt,inner sep=0pt]
\tikzstyle{bigcloud}=[cloud,fill=white,draw=black]
\tikzstyle{bscalar}=[star,fill=Black,draw=Black,minimum size=8pt,inner sep=0pt]
\tikzstyle{diagram}=[rectangle,fill=white,draw=black,minimum height=0.5cm,minimum width=1cm,inner sep=0pt]
\tikzstyle{squ}=[rectangle,fill=white,draw=black,minimum height=0.5cm,minimum width=0.6cm,inner sep=0pt]
\tikzstyle{twqu}=[rectangle,fill=white,draw=black,minimum height=1.25cm,minimum width=0.6cm,inner sep=0pt]
\tikzstyle{thrqu}=[rectangle,fill=white,draw=black,minimum height=2cm,minimum width=0.6cm,inner sep=0pt]
\tikzstyle{bn}=[circle,fill=black,draw=black,inner sep=0pt,minimum size=4pt]
\tikzstyle{whn}=[circle,fill=none,draw=black,minimum size=8pt]
\tikzstyle{xn}=[circle,fill=none,draw=none,minimum size=8pt]
\tikzstyle{solidn}=[circle,fill=Black,draw=Black,minimum size=7pt,inner sep=0pt]
\tikzstyle{hollown}=[circle,fill=White,draw=Black,line width=0.8 pt,minimum size=7pt,inner sep=0pt]
\tikzstyle{map}=[trapezium,trapezium left angle=90,trapezium right angle=120,fill=White,draw=Black,inner sep=0pt, minimum height=0.5cm]
\tikzstyle{daggermap}=[trapezium,trapezium left angle=90,trapezium right angle=60,fill=White,draw=Black,inner sep=0pt, minimum height=0.5cm]
\tikzstyle{roundedLeft}=[rounded rectangle,fill=white,draw=black,minimum width=1cm,minimum height=1.5cm,rounded rectangle east arc=none]
\tikzstyle{every picture}=[baseline=-0.1cm,scale=0.5]
\newcommand{\effect}[1]{\begin{tikzpicture}
	\begin{pgfonlayer}{nodelayer}
		\node [style=none] (0) at (0, -0.25) {};
		\node [style={#1}] (1) at (0, 0.25) {};
	\end{pgfonlayer}
	\begin{pgfonlayer}{edgelayer}
		\draw [] (0.center) to (1);
	\end{pgfonlayer}
\end{tikzpicture}}
\newcommand{\Hadamard}[0]{\begin{tikzpicture}
	\begin{pgfonlayer}{nodelayer}
		\node [style=Hadamard] (0) at (0, 0) {};
		\node [style=none] (1) at (0, 0.5) {};
		\node [style=none] (2) at (0, -0.5) {};
	\end{pgfonlayer}
	\begin{pgfonlayer}{edgelayer}
		\draw (2.center) to (1.center);
	\end{pgfonlayer}
\end{tikzpicture}}
\newcommand{\HadSpek}[0]{\begin{tikzpicture}
	\begin{pgfonlayer}{nodelayer}
		\node [style=HadSpek] (0) at (0, 0) {};
		\node [style=none] (1) at (0, 0.5) {};
		\node [style=none] (2) at (0, -0.5) {};
	\end{pgfonlayer}
	\begin{pgfonlayer}{edgelayer}
		\draw (2.center) to (1.center);
	\end{pgfonlayer}
\end{tikzpicture}}
\newcommand{\halfscalar}[0]{\begin{tikzpicture}
	\begin{pgfonlayer}{nodelayer}
		\node [style=bscalar] (1) at (0, 0) {};
	\end{pgfonlayer}
\end{tikzpicture}}
\newcommand{\innerprod}[2]{\begin{tikzpicture}
	\begin{pgfonlayer}{nodelayer}
		\node [style={#1}] (0) at (0, 0.6) {};
		\node [style={#2}] (1) at (0, -0.6) {};
	\end{pgfonlayer}
	\begin{pgfonlayer}{edgelayer}
		\draw (0) to (1);
	\end{pgfonlayer}
\end{tikzpicture}}
\newcommand{\innerprodgr}[0]{\begin{tikzpicture}
	\begin{pgfonlayer}{nodelayer}
		\node [style=gn] (0) at (0, 0.4) {};
		\node [style=rn] (1) at (0, -0.15) {};
	\end{pgfonlayer}
	\begin{pgfonlayer}{edgelayer}
		\draw (0) to (1);
	\end{pgfonlayer}
\end{tikzpicture}}
\newcommand{\joinnode}[0]{\begin{tikzpicture}
	\begin{pgfonlayer}{nodelayer}
		\node [style=gn] (0) at (0, -0) {};
		\node [style=none] (1) at (0, 0.5) {};
		\node [style=none] (2) at (0.5, -0.5) {};
		\node [style=none] (3) at (-0.5, -0.5) {};
	\end{pgfonlayer}
	\begin{pgfonlayer}{edgelayer}
		\draw [bend right=15] (2.center) to (0);
		\draw [bend left=15] (3.center) to (0);
		\draw (0) to (1.center);
	\end{pgfonlayer}
\end{tikzpicture}}
\newcommand{\phase}[1]{\begin{tikzpicture}
	\begin{pgfonlayer}{nodelayer}
		\node [style=none] (0) at (0, 0.4) {};
		\node [style={#1}] (1) at (0, -0) {};
		\node [style=none] (2) at (0, -0.4) {};
	\end{pgfonlayer}
	\begin{pgfonlayer}{edgelayer}
		\draw [] (0.center) to (1);
		\draw [] (1) to (2.center);
	\end{pgfonlayer}
\end{tikzpicture}}
\newcommand{\scalar}[1]{\begin{tikzpicture}
	\begin{pgfonlayer}{nodelayer}
		\node [style={#1}] (1) at (0, 0) {};
	\end{pgfonlayer}
\end{tikzpicture}}
\newcommand{\splitnode}[0]{\begin{tikzpicture}
	\begin{pgfonlayer}{nodelayer}
		\node [style=gn] (0) at (0, -0) {};
		\node [style=none] (1) at (0, -0.5) {};
		\node [style=none] (2) at (-0.5, 0.5) {};
		\node [style=none] (3) at (0.5, 0.5) {};
	\end{pgfonlayer}
	\begin{pgfonlayer}{edgelayer}
		\draw [bend right=15] (2.center) to (0);
		\draw [bend left=15] (3.center) to (0);
		\draw (0) to (1.center);
	\end{pgfonlayer}
\end{tikzpicture}}
\newcommand{\state}[1]{\begin{tikzpicture}
	\begin{pgfonlayer}{nodelayer}
		\node [style={#1}] (0) at (0, 0) {};
		\node [style=none] (1) at (0, 0.5) {};
	\end{pgfonlayer}
	\begin{pgfonlayer}{edgelayer}
		\draw (0) to (1.center);
	\end{pgfonlayer}
\end{tikzpicture}}
\newcommand{\gendiagram}[1]{\begin{tikzpicture}
	\begin{pgfonlayer}{nodelayer}
		\node [style=diagram] (0) at (0, -0) {#1};
		\node [style=none] (1) at (-0.75, 1) {};
		\node [style=none] (2) at (0.75, 1) {};
		\node [style=none] (3) at (-0.75, -1) {};
		\node [style=none] (4) at (0.75, -1) {};
		\node [style=none] (5) at (0, 0.75) {$\ldots$};
		\node [style=none] (6) at (0, -0.75) {$\ldots$};
	\end{pgfonlayer}
	\begin{pgfonlayer}{edgelayer}
		\draw (1.center) to (3.center);
		\draw (2.center) to (4.center);
	\end{pgfonlayer}
\end{tikzpicture}}
\newcommand{\geneffect}[1]{\begin{tikzpicture}
	\begin{pgfonlayer}{nodelayer}
		\node [style=utr,label={center:#1}] (0) at (0, 0.4) {};
		\node [style=none] (1) at (0, -0.35) {};
	\end{pgfonlayer}
	\begin{pgfonlayer}{edgelayer}
		\draw (0) to (1.center);
	\end{pgfonlayer}
\end{tikzpicture}}
\newcommand{\gengraph}[1]{\begin{tikzpicture}
	\begin{pgfonlayer}{nodelayer}
		\node [style=none] (0) at (-1.5, 0.75) {};
		\node [style=none] (1) at (1.5, 0.75) {};
		\node [style=none] (2) at (-1.5, 0) {};
		\node [ellipse,fill=White,draw=Black,minimum width=2cm,minimum height=0.5cm,inner sep=0pt] (3) at (0, -0) {#1};
		\node [style=none] (4) at (1.5, 0) {};
		\node [style=none] (5) at (0, 0.65) {$\ldots$};
	\end{pgfonlayer}
	\begin{pgfonlayer}{edgelayer}
		\draw (0.center) to (2.center);
		\draw (1.center) to (4.center);
	\end{pgfonlayer}
\end{tikzpicture}}
\newcommand{\genscalar}[1]{\begin{tikzpicture}
	\begin{pgfonlayer}{nodelayer}
		\node [style=scalardiamond] (0) at (0, -0) {#1};
	\end{pgfonlayer}
\end{tikzpicture}}
\newcommand{\genstate}[1]{\begin{tikzpicture}
	\begin{pgfonlayer}{nodelayer}
		\node [style=dtr,label={center:#1}] (0) at (0, 0) {};
		\node [style=none] (1) at (0, 0.75) {};
	\end{pgfonlayer}
	\begin{pgfonlayer}{edgelayer}
		\draw (0) to (1.center);
	\end{pgfonlayer}
\end{tikzpicture}}
\newcommand{\genunitary}[1]{\begin{tikzpicture}
	\begin{pgfonlayer}{nodelayer}
		\node [style=unitary] (0) at (0, 0) {#1};
		\node [style=none] (1) at (0, 0.75) {};
		\node [style=none] (2) at (0, -0.75) {};
	\end{pgfonlayer}
	\begin{pgfonlayer}{edgelayer}
		\draw (2.center) to (1.center);
	\end{pgfonlayer}
\end{tikzpicture}}
\newlength{\Speklength}
\newcommand{\Spekcolour}[0]{teal}
\newcommand{\SpekOneSquare}[4]{\begin{tikzpicture}[fill=\Spekcolour, every path/.style={draw}]
  #1 (0,-\Speklength) rectangle +(\Speklength,\Speklength);
  #2 (\Speklength,-\Speklength) rectangle +(\Speklength,\Speklength);
  #3 (0,0) rectangle +(\Speklength,\Speklength);
  #4 (\Speklength,0) rectangle +(\Speklength,\Speklength);
\end{tikzpicture}}
\newcommand{\SpekTwoSquare}[1]{\begin{tikzpicture}[fill=\Spekcolour]
  #1%
  \foreach \x in {-2\Speklength,-\Speklength,0,\Speklength}
    \foreach \y in {-2\Speklength,-\Speklength,0,\Speklength}
      \draw (\x,\y) rectangle +(\Speklength,\Speklength);
\end{tikzpicture}}
\newcommand{\Spekfill}[2]{\fill (#1\Speklength,#2\Speklength) rectangle +(\Speklength,\Speklength);
}
\newcommand{\sqGate}[1]{\begin{tikzpicture}
	\begin{pgfonlayer}{nodelayer}
		\node [style=squ] (0) at (0, -0) {#1};
		\node [style=none] (1) at (-1, -0) {};
		\node [style=none] (2) at (1, -0) {};
	\end{pgfonlayer}
	\begin{pgfonlayer}{edgelayer}
		\draw (1.center) to (2.center);
	\end{pgfonlayer}
\end{tikzpicture}}
\title{Completeness and the \ZX-calculus}   
\author{Miriam K. Backens} 
\begin{document}

\baselineskip=18pt plus1pt

\setcounter{secnumdepth}{3}
\setcounter{tocdepth}{3}

\maketitle                  
\begin{acknowledgements}
 Firstly, I would like to thank my supervisors, Samson Abramsky and Bob Coecke, for giving me the opportunity to do this research.
 I owe much gratitude to Dominic Horsman, whose feedback, advice, and encouragement have been invaluable.

 Thank you to Ross Duncan for bringing the question of \ZX-calculus completeness to my attention.
 I also wish to thank all the other people working on this topic and on related questions for many interesting discussions.

 Thanks to the administrative staff at the Department for Computer Science for always being helpful, whether with university bureaucracy or with the organisation of student conferences and other academic or social events.
 Thank you also to everyone involved with CoGS and OxWoCS -- my time at this department would not have been the same without you.

 Many thanks to my family for their support and encouragement.

 Finally, a big thank you to my friends for being there in good times as well as in hard ones.
 The last four years would have been a lot less fun without OUSFG.
 Special thanks to Lyndsey and John for letting me stay at their houses while writing up, to bridge the time until my move to Bristol.
\end{acknowledgements}
\begin{abstract}
 Graphical languages offer intuitive and rigorous formalisms for quantum physics.
 They can be used to simplify expressions, derive equalities, and do computations.
 Yet in order to replace conventional formalisms, rigour alone is not sufficient: the new formalisms also need to have equivalent deductive power.
 This requirement is captured by the property of completeness, which means that any equality that can be derived using some standard formalism can also be derived graphically.

 In this thesis, I consider the \ZX-calculus, a graphical language for pure state qubit quantum mechanics.
 I show that it is complete for pure state stabilizer quantum mechanics, so any problem within this fragment of quantum theory can be fully analysed using graphical methods.
 This includes questions of central importance in areas such as error-correcting codes or measurement-based quantum computation.
 Furthermore, I show that the \ZX-calculus is complete for the single-qubit Clifford+T group, which is approximately universal: any single-qubit unitary can be approximated to arbitrary accuracy using only Clifford gates and the T-gate.
 In experimental realisations of quantum computers, operations have to be approximated using some such finite gate set.
 Therefore this result implies that a wide range of realistic scenarios in quantum computation can be analysed graphically without loss of deductive power.
 
 Lastly, I extend the use of rigorous graphical languages outside quantum theory to Spekkens' toy theory, a local hidden variable model that nevertheless exhibits some features commonly associated with quantum mechanics.
 The toy theory for the simplest possible underlying system closely resembles stabilizer quantum mechanics, which is non-local; it thus offers insights into the similarities and differences between classical and quantum theories.
 I develop a graphical calculus similar to the \ZX-calculus that fully describes Spekkens' toy theory, and show that it is complete.
 Hence, stabilizer quantum mechanics and Spekkens' toy theory can be fully analysed and compared using graphical formalisms.

 Intuitive graphical languages can replace conventional formalisms for the analysis of many questions in quantum computation and foundations without loss of mathematical rigour or deductive power.
\end{abstract}

\begin{romanpages}          
\tableofcontents            
\end{romanpages}            

\chapter{Introduction}

The problems being investigated in quantum-theoretical research are getting increasingly complex.
As the experimental realisation of usefully-sized general-purpose quantum computers approaches, the focus of much theoretical research in quantum computation is on fault-tolerant computation schemes \cite{gottesman_introduction_2010,devitt_quantum_2013}, which need to deal with a large number of physical qubits to encode a reasonably-sized logical computation: a fault-tolerant implementation of Shor's algorithm \cite{shor_algorithms_1994} for factorising a 2048 bit number -- a typical size for an RSA key -- is likely to require billions of underlying physical qubits \cite{van_meter_blueprint_2013}.

While much progress has been made in the understanding of quantum information theory, computation, and foundations, the mathematical formalisms have not changed very much.
In classical computer science, the increasing complexity of problems and algorithms has led to the invention of increasingly abstract formalisms: for example, programming languages that are designed for ease of use by human programmers have almost completely replaced the old languages that closely followed the physical workings of the computing device.
This abstraction has two advantages: on the one hand, it makes writing code easier and less error-prone, and on the other hand, it makes code in modern programming languages more widely portable because the details of the implementation that vary from processor to processor are handled automatically.
Computer scientists have also invented a wide range of new formalisms for describing algorithms and problems, from the notion of abstract games \cite{apt_lectures_2011} to flow charts (originally introduced as process charts \cite{gilbreth_process_1921}).

A similar change is needed in quantum information theory.
If quantum computing is indeed more powerful than classical computation, the details of general operations on quantum systems will never be efficiently tractable using classical means.
Nevertheless, more intuitive and abstract formalisms can simplify the analysis of problems significantly.
From this perspective, matrix mechanics is like assembly language, one of the earliest programming languages: it is good for controlling all the details of a problem, but for complicated tasks, those details make the formalism error-prone and drown out the conceptual properties and high-level features that may be more relevant to a solution.

An important class of high-level formalisms are graphical languages, which consist of two-dimensional diagrams -- as opposed to algebraic notations, which are written as one-dimensional strings of symbols.
A range of such languages have been developed in the quantum computation, information, and foundations community.
The most widely known graphical language for quantum computation is quantum circuit notation, where qubits are drawn as wires and operations as boxes \cite{deutsch_quantum_1989,nielsen_quantum_2010}.
Many graphical languages are introduced informally, nevertheless it is possible to make them rigorous using category theory \cite{joyal_geometry_1991}.
This process involves defining a translation from diagrams to an algebraic language, and proving that different translations of the same diagram, or translations of different diagrams that nevertheless seem ``intuitively equal'', produce equivalent algebraic terms.
For example, in a quantum circuit diagram, gate symbols can ``slide along'' wires and the length of wires does not matter.
E.g., this diagram:
\begin{center}
 \input{tikz_files/circuit_wire_length1.tikz}
\end{center}
seems intuitively equal to this one:
\begin{center}
 \input{tikz_files/circuit_wire_length2.tikz}
\end{center}
for any single-qubit gates $U$ and $V$, and it is possible to make this equality rigorous.
Furthermore, the equivalence of the two diagrams is much more intuitively obvious than the corresponding algebraic equality:
\begin{equation}\label{eq:time_ordering}
 (U\otimes I)(I\otimes V) = (I\otimes V)(U\otimes I),
\end{equation}
where $I$ is the single-qubit identity operator.

There are also more specifically quantum-mechanical phenomena that can be represented particularly intuitively in graphical languages.
These require a move away from quantum circuit notation to richer graphical languages which represent states, measurement outcomes, and in particular entanglement in a coherent way.
Such graphical languages for quantum theory were introduced by Abramsky and Coecke \cite{coecke_logic_2003,coecke_logic_2004,abramsky_categorical_2004}.
They keep the notation of qubits as wires and unitary operators as boxes, but rotate it by $90^\circ$ so diagrams are read from bottom to top rather than left-to-right.
Inspired by the Dirac kets, a single-qubit state is denoted by a triangle with one wire coming out:
\begin{equation}
 \ket{\psi} \qquad \mapsto \qquad \genstate{$\psi$}.
\end{equation}
There is no wire going in because it is irrelevant what the qubit was doing before: this is the essence of state preparation.
Similarly, the outcome of a destructive single-qubit measurement is a triangle pointing the other way, with one wire going in:
\begin{equation}
 \bra{\phi} \qquad \mapsto \qquad \geneffect{$\phi$}.
\end{equation}
There is no wire going out because what comes after the measurement does not matter.
Two-qubit states can be represented by triangles with two wires coming out, and two-qubit measurement outcomes by triangles with two wires going in, and so on.
So the Bell state $\frac{1}{\sqrt{2}}(\ket{00}+\ket{11})$ could be denoted by a triangle with two wires coming out, but as this state is maximally entangled, it actually makes more sense to draw it as a curved wire, a ``cup'' \cite{abramsky_categorical_2004} (we drop the normalisation factor for consistency with later definitions):
\begin{equation}
 \ket{00}+\ket{11} \qquad \mapsto \qquad \begin{tikzpicture}
	\begin{pgfonlayer}{nodelayer}
		\node [style=none] (0) at (1, 0.25) {};
		\node [style=none] (1) at (2, 0.25) {};
		\node [style=none] (2) at (1.5, -0.25) {};
	\end{pgfonlayer}
	\begin{pgfonlayer}{edgelayer}
		\draw [bend right=45, looseness=1.00] (2.center) to (1.center);
		\draw [bend right=45, looseness=1.00] (0.center) to (2.center);
	\end{pgfonlayer}
\end{tikzpicture}
\end{equation}
Then, ignoring normalisation, the quantum teleportation protocol \cite{bennett_teleporting_1993} is represented by the following diagram, where we have assumed for simplicity that no Pauli correction is necessary:
\begin{equation}
 \input{tikz_files/teleportation1.tikz}
\end{equation}
Alice holds the unknown state $\ket{\psi}$ and half of a Bell pair.
Bob holds the other half.
Alice performs a Bell-basis measurement on her two qubits with outcome $\frac{1}{\sqrt{2}}(\bra{00}+\bra{11})$, which is denoted by the ``cap'', or upside-down curved wire.
Now the proof that Bob ends up with the state $\ket{\psi}$ consists of straightening and then shortening the wire:
\begin{equation}
 \input{tikz_files/teleportation1.tikz} \qquad \mapsto \qquad \input{tikz_files/teleportation2.tikz} \qquad \mapsto \qquad \input{tikz_files/teleportation3.tikz}
\end{equation}
This is not just a way of informally illustrating the quantum teleportation protocol: the process of straightening and shortening wires is mathematically well-defined and rigorous \cite{abramsky_categorical_2004}.
Algebraic notations can therefore be replaced with more intuitive graphical languages without losing mathematical rigour.
It is also possible to derive complicated equalities entirely graphically.

When working with algebraic equations to solve a mathematical problem, these equations are transformed according to certain rules.
For example, adding the same thing to both sides of a true equality yields another true equality.
Another example is the rule that a part of an algebraic formula can be replaced with something equal to yield a new formula equal to the original one.
For example, consider the equation:
\begin{equation}\label{eq:rewriting_example}
 HZH = X,
\end{equation}
where $H$ is the Hadamard gate and $Z$ and $X$ are the respective Pauli gates.
As a consequence of this equality, whenever the term $HZH$ appears in an expression, it can be replaced with $X$, or conversely.
These ``cut and paste'' algebraic transformations are called \emph{rewriting}, and equalities like \eqref{eq:rewriting_example} are \emph{rewrite rules}.
Systems of such rewrite rules are analysed in the area of computer science called \emph{term rewriting} \cite{baader_term_1998}.
A similar approach can be taken with diagrams: specifying a set of basic diagram equalities as rewrite rules allows the derivation of more complicated diagram equations by cutting and pasting parts of diagrams.
That is \emph{graphical rewriting} \cite{heckel_graph_2006}.
For example, \eqref{eq:rewriting_example} can easily be turned into an equality between two quantum circuits:
\begin{equation}
 \sqGate{$H$}\!\sqGate{$Z$}\!\sqGate{$H$} \;  = \; \sqGate{$X$} \, ,
\end{equation}
which can then be used as a graphical rewrite rule.

The \ZX-calculus is a graphical language for pure state qubit quantum mechanics that allows the representation of states, measurement outcomes, and entanglement.
It was first introduced by Coecke and Duncan \cite{coecke_interacting_2008} and extended by Duncan and Perdrix \cite{duncan_graph_2009}.
In this graphical language, qubits are represented by wires and maps by labelled nodes.
The \ZX-calculus comes with a set of rewrite rules.
It has already been used to analyse a range of questions in quantum computation and quantum foundations, from quantum circuits \cite{coecke_interacting_2008}, via measurement-based quantum computations \cite{coecke_interacting_2008,duncan_rewriting_2010}, topological cluster-state computation \cite{horsman_quantum_2011}, quantum key distribution \cite{coecke_graphical_2011,hillebrand_superdense_2012}, and quantum secret sharing \cite{hillebrand_quantum_2011,zamdzhiev_abstract_2012}, to non-locality \cite{coecke_phase_2011}.

In order to replace other standard formalisms, a graphical language like the \ZX-calculus needs to have several important properties.
Firstly, it should be \emph{universal}, meaning that any process in the underlying theory can be represented graphically.
Secondly, the graphical language should be \emph{sound}, meaning that the rewrite rules allow only the derivation of true equalities.
This property is crucial: a new formalism is no good if it conflicts with the old one.
Thirdly, it should be \emph{complete}, meaning that the rewrite rules allow the derivation of \emph{all} true equalities.

Universality and soundness are straightforward to ensure, and indeed the \ZX-calculus is both universal and sound by construction \cite{coecke_interacting_2008,coecke_interacting_2011}.

In this thesis, I prove that the \ZX-calculus is complete for several important fragments of quantum theory, i.e.\ within these fragments, any true equality between \ZX-calculus diagrams can be derived using the rewrite rules.
Therefore, standard formalisms for those fragments of quantum theory can be replaced with the \ZX-calculus without any loss of deductive power.

The first \ZX-calculus completeness result in this thesis is for stabilizer quantum mechanics \cite{gottesman_stabilizer_1997}, a fragment of quantum theory that can be operationally described by restricting the allowed operations to preparations of computational basis states, computational basis measurements, and the Clifford group of unitaries.
Stabilizer quantum mechanics is of central importance in areas such as error-correcting codes \cite{nielsen_quantum_2010} or measurement-based quantum computation \cite{raussendorf_one-way_2001}.

I show that, using the \ZX-calculus rewrite rules, any stabilizer \ZX-calculus diagram can be brought into a normal form.
This normal form is not unique, but all equalities between normal form diagrams can be derived graphically.
As the rewrite rules of the \ZX-calculus are invertible, being able to bring any diagram into a normal form and being able to derive all equalities between normal form diagrams implies that all equalities between arbitrary diagrams can be derived.
Thus any question within pure state qubit stabilizer quantum mechanics can be analysed entirely using the intuitive graphical formalism.
This includes the derivation of equalities between operators as well as the computation of probabilities.

Furthermore, I show that the \ZX-calculus is also complete for the single-qubit Clifford+T group.
This group of operations is approximately universal, i.e.\ any single-qubit unitary can be approximated to arbitrary accuracy using just operations from the Clifford+T gate set \cite{boykin_universal_1999}.
The completeness proof for the single-qubit Clifford+T group is built around the definition of a normal form for such diagrams and the proof that it is unique.
As all the rewrite rules are invertible, the existence of a unique normal form immediately implies that all equalities between single-qubit Clifford+T operators can be derived from the rewrite rules of the \ZX-calculus.

In realistic implementations of quantum computers, particularly fault-tolerant ones, not all operations can be implemented directly \cite{nielsen_quantum_2010}.
Instead, general operations are approximated using gates from a finite set such as Clifford+T, e.g.\ using the Solovay-Kitaev algorithm \cite{dawson_solovay-kitaev_2006}.
Thus being able to derive all equalities within such an approximately universal group means that a wide range of realistic questions can be analysed graphically without loss of deductive power.
Work is ongoing to combine single-qubit Clifford+T completeness with stabilizer completeness into a completeness result for multi-qubit Clifford+T operators.

The final completeness result in this thesis extends the use of rigorous graphical languages outside quantum theory.
Toy models for quantum foundations are models that are described entirely using classical physics but which nevertheless exhibit many phenomena usually considered quantum.
They therefore offer insights into the similarities and differences between quantum and classical behaviour.
To gain these insights, it is useful to have similar formalisms for describing a toy model and its quantum-physical equivalent.
Here, I focus on Spekkens' toy bit theory \cite{spekkens_evidence_2007,spekkens_quasi-quantization_2014}, a toy model that is very similar to stabilizer quantum mechanics while being described in terms of local hidden variables.
Stabilizer quantum mechanics on the other hand is non-local: it is possible to violate Bell inequalities \cite{bell_einstein-podolsky-rosen_1964} using only stabilizer operations.

I construct a graphical language similar to the \ZX-calculus for the toy theory and give a set of sound rewrite rules for it.
Furthermore, I prove that this graphical language allows the derivation of all true equalities about the theory.
Therefore stabilizer quantum mechanics and Spekkens' toy bit theory can be fully analysed and compared using intuitive graphical methods.

The remainder of this thesis is structured as follows.

Chapter \ref{ch:graphical} contains an introduction to graphical languages for quantum theory, and how to make them rigorous.
Furthermore, the properties of soundness and completeness are rigorously defined.

The \ZX-calculus with its rewrite rules is introduced in detail in Chapter \ref{s:ZX-calculus}.
That chapter also contains an introduction to stabilizer quantum mechanics and the single-qubit Clifford+T group, together with standard formalisms for describing them, as well as their representations in the \ZX-calculus.

Chapter \ref{ch:completeness} starts with a recap of the proof that the full \ZX-calculus is incomplete.
Original work is contained in Section \ref{s:possible_completeness}, where it is shown that completeness results for restricted fragments of pure state qubit quantum mechanics are possible despite the incompleteness proof, and from Section \ref{s:ZX_graph_states} onwards.
A normal form for stabilizer \ZX-calculus diagrams is introduced and then used to prove that the \ZX-calculus is complete for scalar-free stabilizer quantum mechanics, i.e.\ where two operators $U$ and $V$ are taken to be equal if there exists some non-zero complex number $c$ such that $U=cV$.

That completeness result is expanded in Chapter \ref{ch:more_completeness}.
First, it is shown that the \ZX-calculus is complete for stabilizer quantum mechanics with scalars, i.e.\ any true equality between stabilizer \ZX-calculus diagrams (now with the usual notion of equality) can be derived from the rewrite rules.
This includes the definition of a unique normal form for stabilizer zero diagrams: diagrams representing a zero matrix.
Finally, the completeness proof is extended to the single-qubit Clifford+T group.

Spekkens' toy theory is introduced in the first section of Chapter \ref{s:spekkens}.
A \ZX-like graphical calculus for this toy model is developed and then shown to be complete.
Most of the original work in Sections \ref{s:spekkens_graphical} and \ref{s:spekkens_completeness} was done jointly with Ali Nabi Duman, with the exception of results involving scalar diagrams, which are solely my own work.

Chapter \ref{ch:conclusions} contains the conclusions and some ideas for further work.

\chapter{Graphical languages and completeness}
\label{ch:graphical}

New discoveries in theoretical physics are formulated and derived using mathematics.
The specific mathematical formalism used thus plays an important role in determining how easy it is to find new results, or to understand them.
Some results are much more intuitive in certain formalisms than in others.

For example, the rule for chain rule for differentiation of a function $f(y)$ with respect to some variable $x$ that $y$ depends on is very intuitive when expressed in Leibniz's notation:
\begin{equation}
 \frac{df}{dx} = \frac{df}{dy} \frac{dy}{dx},
\end{equation}
but much less so in operator notation:
\begin{equation}
 D_x f = (D_y f)(D_x y).
\end{equation}
On the other hand, the operator notation clearly separates the differential operator from the operand, whereas Leibniz's notation does not.

Similarly, while the Old Babylonians would have been able to do many quantum mechanical calculations in cuneiform, that would not have been easy -- and not just because they did not know quantum theory.
Cuneiform uses a base-60 position-value system that allows the representation of large numbers as well as fractions, as long as they terminate in base-60.
Other numbers were approximated, for example $\sqrt{2}$, which is given as $1\;24\;51\;10$ in base-60 -- in cuneiform, there is no symbol for separating the whole part of a number from the fractional part -- in the clay tablet shown in Figure \ref{fig:clay_tablet}; i.e.:
\begin{equation}
 \sqrt{2} \approx 1 + 24(60)^{-1} + 51(60)^{-2} + 10(60)^{-3} \approx 1.4142130,
\end{equation}
which is the closest approximation to three sexagesimal places, and correct up to six decimal places.

For arithmetic operations such as reciprocals, squares, and square roots, the Babylonians relied heavily on pre-computed tables.
They did not have vectors, or complex numbers, so computations involving a three-component complex vector would have to be split into six interlinked computations for the real and imaginary parts of each component.
The Babylonians did not use equations either (those would not be introduced until the 16th century AD), instead relying on ``recipes'' for solving specific classes of problems \cite{merzbach_history_2011}.
Thus, even if they had known about quantum theory, they probably would not have been able to explore the conceptual consequences in much depth as they would have been too busy shutting up and calculating.

\begin{figure}
 \centering
 \includegraphics{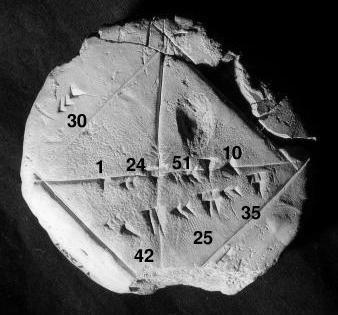}
 \caption{A Babylonian clay tablet showing an approximation of $\sqrt{2}$ as $1\;24\;51\;10$ in base-60.
 This number is used to compute the diagonal of a square of side $30$ with result $42\;25\;35$ in base-60.
 (Photo by Bill Casselman under Creative Commons Attribution 2.5 Generic license \cite{casselman_ybc_2007}.)}
 \label{fig:clay_tablet}
\end{figure}

In this chapter, we consider different formalisms for quantum theory with a particular focus on graphical languages.
By graphical languages we mean two-dimensional mathematical notations or formalisms.
A variety of such languages are currently in use in the quantum computing and quantum foundations community.
We give an overview over some of these languages.
Often, graphical languages are introduced informally; nevertheless, they can be made rigorous using the mathematical framework of category theory.
We introduce the category theory needed to formalise graphical languages for quantum theory.
Next, we give a short introduction to graphical rewriting as a method for deriving equalities between diagrams in graphical languages, and introduce completeness and related concepts.
Finally, we explain how derivations in graphical languages can be automated.

\section{Formalisms for quantum theory}

Many new mathematical formalisms were developed alongside quantum theory; the physicist's interests driving mathematical innovation and the mathematical progress enabling new understanding of physical theory.
We explain why we focus on formalisms used in quantum computation and quantum foundations, and why graphical languages are particularly useful in those areas of research.

\subsection{Quantum computation and quantum foundations}

Quantum theory encompasses the study of many different types of physical systems, from photons to atoms and larger structures.
Quantum foundations is particularly concerned with investigating the differences between quantum physical behaviour and classical physics.
To do this, it helps to focus on idealised systems and ignore aspects of real systems that complicate their analysis but are not considered to be relevant to foundational questions.

For example, while many real physical systems have an infinite-dimensional state space, it is a lot more straightforward to deal with finite-dimensional systems.
Moreover, research often focuses on the smallest non-trivial quantum system: the qubit, whose state space is $\CC^2$.

Qubits also play a central role in the study of quantum computation as the analogues of classical bits.
In classical computing, any finite amount of data can be encoded in a finite string of bits; similarly, in quantum computing, any quantum state of a finite-dimensional system can be encoded in the joint state of some finite number of qubits.
Therefore the study of qubit-based quantum computers yields insights about more general systems as well.

The field of quantum computation is closely related to quantum foundations in that both are concerned with finding similarities and differences between quantum behaviour and classical behaviour.
Quantum computation is more restricted in that it focuses on the efficiency of solving various mathematical problems by encoding them in quantum systems,
whereas quantum foundations involves more general aspects of quantum physics.
Furthermore, most approaches to quantum computation consider the evolution of quantum systems to happen in discrete controlled time steps, e.g.\ the gates in the quantum circuit model or the measurements in measurement-based quantum computing, whereas generally quantum systems evolve continuously.
Many approaches to quantum computing focus on unitary evolution with measurements; this is justified as any quantum process can be considered to be a unitary process on some larger system, parts of which are then discarded.
Quantum computation makes use of a wide range of tools developed in classical computer science, many of which are not used in other areas of quantum foundations.

\subsection{Why graphical languages}

Matrix mechanics has been one of the dominant formalisms for quantum theory since its inception, and it is the main formalism for quantum computing.
It has been used to derive many important and interesting results.
Yet, matrix mechanics is a very low-level formalism, which makes it unwieldy and hard to parse when computations get more complicated.
For example, the size of a matrix is exponential in the dimension of the state space of the underlying system.
Furthermore, it is not straightforward to determine conceptual properties of quantum operations expressed as matrices, e.g.\ whether a matrix acting on multiple systems represents a local transformation or not.
Similarly, some important quantum mechanical phenomena are not at all obvious in matrices, e.g.\ quantum teleportation, which was only discovered more than 60 years after the introduction of matrix mechanics \cite{bennett_teleporting_1993}.

For some fragments of quantum theory there exist more efficient descriptions, e.g. the stabilizer formalism for stabilizer quantum mechanics \cite{gottesman_stabilizer_1997}.
Yet the stabilizer formalism is efficient only for the stabilizer fragment of quantum theory.
Thus, general high-level languages are needed for the study of quantum computation and quantum foundations.
By this we mean languages that hide some of the intricacies of the matrix formalism and instead focus more on conceptual properties of the quantum processes.
The terminology is taken from computer science, where low-level programming languages -- those that closely mimic the actual workings of a computer -- have been superseded by higher-level ones, which are much easier for humans to write and understand, at the cost of requiring a more complicated translation before code can be executed \cite{goldschlager_computer_1982}.
Example of low-level programming languages include machine code and assembly language.
Almost all programming languages commonly used today are high-level, e.g.\ Python, Java, or C++.

The distinction between low-level and high-level can be used more widely, where generally low-level descriptions or formalisms are more detailed and specific, whereas high-level descriptions are more abstract and general.

Specifically, we consider high-level \emph{graphical} languages, i.e.\ languages that use two-dimensional diagrams.
This is in contrast to algebraic terms, which are written on a line and thus are one-dimensional.
Graphical languages can be much more intuitive and easier to understand than algebraic ones.
They are also better at showing symmetries of the underlying structures.

Two-dimensional languages allow \emph{parallel composition} -- applying transformations to two different systems at the same time -- to be separated from \emph{sequential composition} -- the application of transformations to the same system at different times -- by designating one dimension to roughly correspond to ``space'' and the other to ``time''.
This makes graphical languages particularly useful for the study of networked and multiply-connected processes.

\begin{ex}
 The parallel composition of matrices is the Kronecker product.
 E.g., for two 2 by 2 matrices $A$ and $B$ defined as:
 \begin{equation}
  A = \begin{pmatrix} a_{11} & a_{12} \\ a_{21} & a_{22} \end{pmatrix} \qquad \text{and} \qquad B = \begin{pmatrix} b_{11} & b_{12} \\ b_{21} & b_{22} \end{pmatrix},
 \end{equation}
 their parallel composite is the 4 by 4 matrix:
 \begin{equation}
  A\otimes B = \begin{pmatrix} a_{11} b_{11} & a_{11} b_{12} & a_{12}b_{11} & a_{12} b_{12} \\ a_{11} b_{21} & a_{11} b_{22} & a_{12}b_{21} & a_{12} b_{22} \\  a_{21} b_{11} & a_{21} b_{12} & a_{22}b_{11} & a_{22} b_{12} \\ a_{21}b_{21} & a_{21} b_{22} & a_{22}b_{21} & a_{22} b_{22} \end{pmatrix}.
 \end{equation}
\end{ex}

\begin{ex}
 The sequential composition of matrices is matrix multiplication.
 E.g., for two 2 by 2 matrices $A$ and $B$ as defined in the previous example, their sequential composite is the 2 by 2 matrix:
 \begin{equation}
  AB = \begin{pmatrix} a_{11} b_{11} + a_{12} b_{21} & a_{11} b_{12} + a_{12} b_{22} \\ a_{21} b_{11} + a_{22} b_{21} & a_{21} b_{12} + a_{22} b_{22} \end{pmatrix}.
 \end{equation}
\end{ex}

A major difference between classical physics and quantum physics is the way the state spaces of systems compose in parallel, i.e.\ when the systems are put ``side by side'' \cite{abramsky_categorical_2008}: classically, the resulting state space is the Cartesian product of the original spaces, meaning each state of the joint system can be described by specifying separate states for each of the component systems.
For quantum systems, on the other hand, the joint state space is the tensor product of the original state spaces and joint states may not correspond to well-defined states of the separate systems: they can be entangled.
Thus in the study of quantum foundations, the study of composite systems is centrally important, and graphical languages offer an intuitive way of analysing these systems.

\section{Graphical languages for quantum theory}

A variety of high-level graphical languages are already in use in the quantum computing, quantum information, and quantum foundations community.
We introduce several of these languages and discuss their applications and limits.

\subsection{Quantum circuit notation}
\label{s:quantum_circuits}

Quantum circuit notation is a well known graphical language for quantum computation, associated with the quantum circuit model of quantum computation \cite{deutsch_quantum_1989}, which is derived from classical logic gate circuits.
In both quantum and classical circuits, computations are broken down into basic steps called \emph{gates}, which are taken from a fixed \emph{gate set}.
This enables the complexity of computations to be analysed: if every gate is assumed to take a fixed amount of time or some other resource, then the number of gates in the circuit or the number of sequential layers of gates is a measure of the complexity.

The quantum circuit model implicitly assumes that the evolution of the underlying quantum systems happens in discrete steps, as represented by the discrete gates.
Furthermore, it assumes that systems remain in the same state unless acted upon by a gate.

In quantum circuit notation, gates are (usually) denoted by labelled boxes with $n$ input wires on the left and $n$ output wires on the right, where $n$ is some positive integer \cite{nielsen_quantum_2010}.
According to the number of their inputs (and outputs), gates are referred to as ``single-qubit gates'', ``two-qubit gates'', and so forth.
A piece of wire without any gates denotes the identity transformation on a single qubit, thus the length of wires is irrelevant for the interpretation of a quantum circuit diagram.
Gates can be combined by stacking them horizontally, which denotes the tensor product of the corresponding matrices, i.e.\ the gates are applied to different systems at the same time.
Alternatively, the inputs of one gate can be plugged into the outputs of another, which denotes matrix multiplication: the gates are applied to the same system at different times.
An example circuit with one-, two-, and three-qubit gates is shown in Figure \ref{fig:circuit_example}.

\begin{figure}
 \centering
 \input{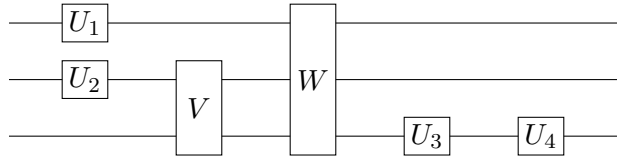}
 \caption{A quantum circuit diagram on three qubits. The operators $U_1,U_2,U_3,$ and $U_4$ are single-qubit unitaries, $V$ is a two-qubit unitary, and $W$ a three-qubit unitary.}
 \label{fig:circuit_example}
\end{figure}

Wires in a quantum circuit diagram always connect inputs of one gate to outputs of another: never inputs to inputs or outputs to outputs, and never inputs of one gate to outputs of the same gate.
Thus quantum circuit diagrams do not contain any cycles; a path following wires from outputs to inputs and traversing gates from inputs to outputs can never return to a gate previously visited.

The most commonly used gate set for quantum circuits consists of arbitrary single-qubit gates together with the two-qubit controlled-\NOT{} gate, which represents the matrix:
\begin{equation}
 C_X = \begin{pmatrix} 1 & 0 & 0 & 0 \\ 0 & 1 & 0 & 0 \\ 0 & 0 & 0 & 1 \\ 0 & 0 & 1 & 0 \end{pmatrix}.
\end{equation}
The controlled-\NOT{} gate is usually denoted by the symbol shown in Figure \ref{fig:gate_examples} a, rather than by a box.

Any unitary operation on a finite number of qubits can be expressed as a quantum circuit consisting of controlled-\NOT{} and single-qubit gates.
In classical computing, a finite set of gates suffices to construct a logic circuit computing any Boolean function, e.g.\ the \textsc{nand} gate.
For quantum computing, there are many finite gate sets that allow any unitary operator to be approximated to arbitrary accuracy \cite{nielsen_quantum_2010}.

One such set is the so-called Clifford+T set \cite{boykin_universal_1999}, which consists of the Clifford gates:
\begin{equation}
 \left\{ \sqGate{$S$}, \quad \sqGate{$H$}, \quad \input{tikz_files/circuit_cNOT.tikz} \right\},
\end{equation}
where:
\begin{align}
 S &= \begin{pmatrix} 1 & 0 \\ 0 & i \end{pmatrix}, \quad\text{and} \\
 H &= \frac{1}{\sqrt{2}} \begin{pmatrix} 1 & 1 \\ 1 & -1 \end{pmatrix},
\intertext{together with the $T$-gate:}
 T &= \begin{pmatrix} 1 & 0 \\ 0 & e^{i\pi/4} \end{pmatrix}.
\end{align}
The $S$-gate is often called \emph{phase gate}, though that name is sometimes used for any or all gates of the form:
\begin{equation}\label{eq:generalised_phase_gate}
 R_\phi = \begin{pmatrix} 1 & 0 \\ 0 & e^{i\phi} \end{pmatrix}
\end{equation}
with some real number $\phi$.
To avoid confusion, we shall call the latter \emph{generalised phase gates}.
The $H$-gate is called \emph{Hadamard gate}.

Strictly speaking, the phase gate is redundant in the Clifford+T set as $T^2=S$.
It makes sense to include $S$ as a separate gate nevertheless, since in many quantum error correcting codes, Clifford gates (including the phase gate) are easy to implement in a fault-tolerant fashion, while $T$ is much harder to implement fault-tolerantly \cite{nielsen_quantum_2010}.
Thus, for the analysis of the complexity of fault-tolerant computations, it makes sense to distinguish between $S$- and $T$-gates.

In both of the above gate sets, the controlled-Z gate (see Figure \ref{fig:gate_examples} c) is sometimes used instead of controlled-\NOT{} because it is symmetric under interchange of the two qubits it acts upon.
The fundamental properties of the gate set remain unchanged under this substitution because the two gates can be transformed into each other using single-qubit Clifford gates:
\begin{equation}
 C_Z = (I\otimes H) C_X (I\otimes H),
\end{equation}
where $I$ denotes the single-qubit identity transformation:
\begin{equation}
 I = \begin{pmatrix} 1 & 0 \\ 0 & 1 \end{pmatrix}.
\end{equation}

\begin{figure}
 \centering
 \input{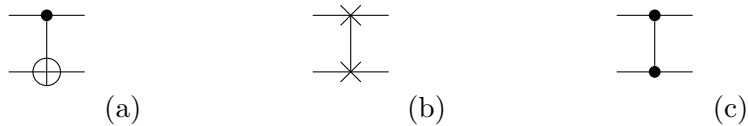}
 \caption{Examples of special gate symbols in quantum circuit notation: (a) controlled-\NOT{} gate, (b) \SWAP{} gate, (c) controlled-Z gate \cite{nielsen_quantum_2010}.
 These symbols show clearly that \SWAP{} and controlled-Z are symmetric under interchange of the two qubits they act upon, whereas controlled-\NOT{} is not symmetric.}
 \label{fig:gate_examples}
\end{figure}

Where complicated quantum processes are built up from more basic transformations, a quantum circuit diagram can be much easier to understand than a corresponding algebraic representation.
For example, the quantum circuit in Figure \ref{fig:circuit_example} can be written algebraically as:
\begin{equation}\label{eq:circuit_example}
 (I\otimes I\otimes (U_4\circ U_3)) \circ W \circ (I\otimes V) \circ (U_1\otimes U_2\otimes I),
\end{equation}
where $I$ denotes the single-qubit identity transformation.
It is clearly much easier to see how the different transformations compose in the diagram.

Yet there are also some issues with quantum circuit notation.
Quantum circuits are not rigorously defined and there are no widely accepted rules for determining whether two circuits are equal: to test equality, circuits are usually translated back into matrices.
This problem could be resolved as shown in Section \ref{s:graphical_rigorous}; cf.\ also the set of generators and relations for quantum circuits representing Clifford unitaries given by Selinger \cite{selinger_generators_2013}.

Quantum circuit notation is also not as intuitive as it could be: for example, instead of the \SWAP{} gate symbol in Figure \ref{fig:gate_examples} b, it would be better to use a wire crossing.
That way, equalities such as:
\begin{equation}
 \SWAP{} \circ (U\otimes V) \circ \SWAP{} = V\otimes U
\end{equation}
for any single-qubit unitaries $U,V$ become intuitively obvious, cf.\ Figure \ref{fig:swap}.
This, again, is a problem that can be remedied.

Lastly, quantum circuits always distinguish strictly between the inputs and outputs of a gate and do not allow curved wires or cycles.
Due to map-state duality, this distinction is not a very natural one for quantum processes.
As demonstrated e.g.\ by atemporal diagrams (see Section \ref{s:atemporal}), it can be quite useful to drop, or at least loosen, the strict time ordering in diagrams.
Furthermore, cycles have a very natural interpretation in diagrams for quantum processes as representing the operation of tracing out subsystems.
Nevertheless, these generalisations are not allowed by quantum circuit diagrams.

\subsection{Stabilizer graphs}
\label{s:stabilizer_graphs}

The stabilizer graph notation represents pure qubit stabilizer states as decorated graphs \cite{elliott_graphical_2008}.
Stabilizer states are those quantum states that are simultaneous eigenstates of a group of Pauli products: tensor products of the Pauli matrices and the identity matrix (cf.\ Section \ref{s:stabilizer_QM}).
A special class of stabilizer states are the \emph{graph states}, whose entanglement structure is that of a finite simple graph, where the qubits represent the vertices and entanglement represents the edges.
Graph states thus have a straightforward diagrammatic representation by simply drawing the associated graph.

Any stabilizer state is related to some graph state via a local Clifford operation, i.e.\ an operation that decomposes into a tensor product of single-qubit Clifford unitaries \cite{van_den_nest_graphical_2004}.
Stabilizer graph notation extends the graph state notation to general stabilizer states by using decorations on the graph vertices to denote the unitary applied to the corresponding qubit.
Thus, vertices in stabilizer graphs can be empty or filled, have a minus sign or not, and they can have a self-loops or not.
An example stabilizer graph is shown in Figure \ref{fig:stabilizer_graph_example} a.

\begin{figure}
 \centering
 \input{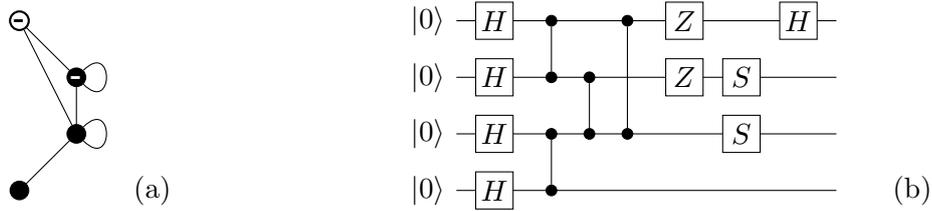}
 \caption{(a) A stabilizer graph, and (b) a quantum circuit for preparing the corresponding stabilizer state. The initial layer of Hadamard gates and the following controlled-Z gates prepare the graph state, thus the correspondence between controlled-Z gates in the circuit and edges in the graph. The final single-qubit gates correspond to the decorations: $Z$-gates to minus signs, $S$-gates to self-loops, and Hadamard gates to empty nodes.}
 \label{fig:stabilizer_graph_example}
\end{figure}

Stabilizer graphs are not unique, i.e.\ there may be multiple ways of representing the same state.
Nevertheless, the formalism includes a decision procedure for diagram equality, as well as algorithms for the transformation of stabilizer graphs under Clifford operations \cite{elliott_graphical_2008} and Pauli measurements \cite{elliott_graphical_2010}.

Stabilizer graphs are a more efficient notation for stabilizer states than the standard notation in terms of computational basis states.
Some symmetries of stabilizer states are easier to see in stabilizer graphs than in other formalisms.

Yet, unlike the other graphical languages introduced here, stabilizer graph notation represents quantum states rather than more general transformations.
Thus, most of the discussion in this chapter about making graphical languages rigorous does not apply to stabilizer graphs.
Furthermore, this is the least general notation introduced here, as it can only represent a fragment of pure qubit quantum theory.
Still, the stabilizer graph formalism provides some useful ideas for later work in this thesis, cf.\ Section \ref{s:ZX_graph_states}.

\subsection{Atemporal diagrams}
\label{s:atemporal}

Atemporal diagrams generalise circuit diagrams by dropping any notion of time ordering in order to explore map-state duality \cite{griffiths_atemporal_2006}.
They also allow arbitrary state spaces, rather than just qubits.

Diagrams consist of large labelled circles called \emph{centres}, which represent quantum states, transformations, or measurements, as well as smaller labelled nodes.
The latter, which are connected to the centres by directed edges, denote the Hilbert spaces involved in a process.
Edges can connect to centres anywhere, and centres can have any number of edges.
Some examples of atemporal diagrams are shown in Figure \ref{fig:atemporal_diagram_example}.
Two atemporal diagrams are equal whenever the same components are connected in the same way, irrespective of the actual layout of the diagram.
The direction of the edges, together with the decoration of the nodes -- ``open'' (i.e.\ empty) or ``closed'' (i.e.\ filled) -- indicates whether a process involves a Hilbert space or its dual space.
There is some redundancy in the notation, as edges are always directed towards open nodes and away from closed ones.

Disconnected diagrams can be put next to each other to denote the tensor product of the corresponding transformations.
An open node and a closed node with the same label, representing some Hilbert space and its dual, can be plugged together; this corresponds to an inner product or a (possibly partial) trace.
When diagram components are plugged together in this way, the two connected nodes are usually left out and the wire is labelled instead, cf.\ Figure \ref{fig:atemporal_diagram_example} b.
The adjoint of an atemporal diagram can be constructed by changing empty nodes to filled and filled ones to empty, flipping the direction of arrows, and adding ``$\dagger$'' symbols to the labels of boxes with the rule that two daggers on the same label cancel.
An example is given by Figure \ref{fig:atemporal_diagram_example} b and c.

\begin{figure}
 \centering
 \input{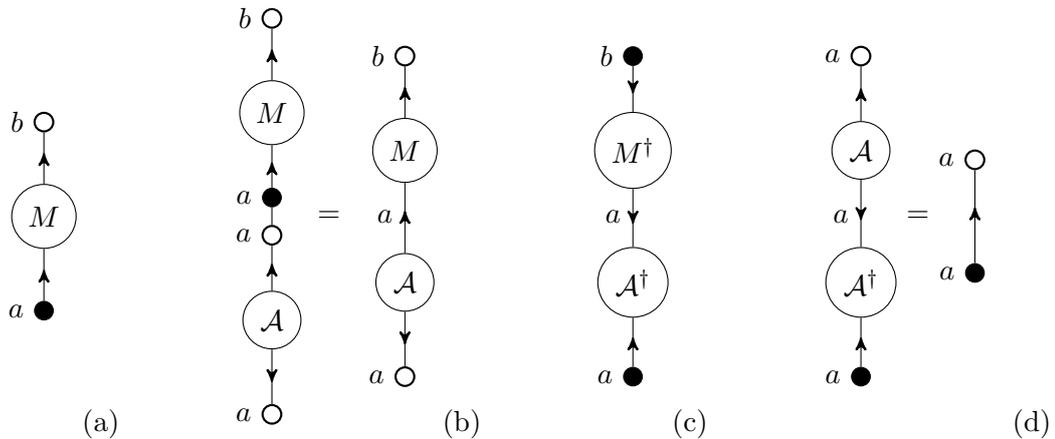}
 \caption{Examples of atemporal diagrams: (a) A transformation $M\in\HH_a^\dagger\otimes\HH_b$, i.e.\ a map from $\HH_a$ to $\HH_b$.
 (b) Applying a transposer to the wire labelled ``$a$'' yields a state $\mathcal{A}M$ in the space $\HH_a\otimes\HH_b$.
 (c) The adjoint of the state $\mathcal{A}M$, which is an element of the space $\HH_a^\dagger\otimes\HH_b^\dagger$.
 (d) The adjoint of the transposer $\mathcal{A}$ is its inverse \cite{griffiths_atemporal_2006}: plugging $\mathcal{A}$ and $\mathcal{A}^\dagger$ together yields the identity map on $\HH_a$, which is written as a line with no centres.
 Note that while all the diagrams shown here are line graphs, atemporal diagrams can have any structure and centres are allowed to have more than two connecting edges.}
 \label{fig:atemporal_diagram_example}
\end{figure}

The only distinction between inputs and outputs in atemporal diagrams is the direction of the arrows and the colour of the nodes.
An invertible ``transposer'' $\mathcal{A}\in\HH_a\otimes\HH_a$ maps closed nodes to open ones, cf.\ Figure \ref{fig:atemporal_diagram_example} b, c, and d.
The transposer is nominally a state, so it might seem strange that it should have an inverse.
Yet in the notation of atemporal diagrams, a state in $\HH_a\otimes\HH_a$ can also be thought of as a map from $\HH_a^\dagger\to\HH_a$, which can be invertible in the more usual sense.
This notion of invertibility also appears in the \emph{snake equations} in compact closed categories, see Section \ref{s:graphical_rigorous}.

Atemporal diagrams are useful for showing analogies between maps and states, but as a notation they are too general to be very useful for computations.

\subsection{Other graphical languages}

There are various other graphical languages for quantum information or computation, many of them inspired by Penrose's graphical notation for tensors \cite{penrose_applications_1971}.
In Penrose's notation, tensors are denoted by simple geometric symbols like circles or squares, and tensor indices by wires going into or out of the tensor symbol.
Outer products correspond to juxtaposition of tensor symbols, contractions to wire connections.
Further decorations on sets of wires are used to denote symmetrisation or anti-symmetrisation over the corresponding indices.
A simple example of this notation is shown in Figure \ref{fig:other_graphical_languages} a.

\begin{figure}
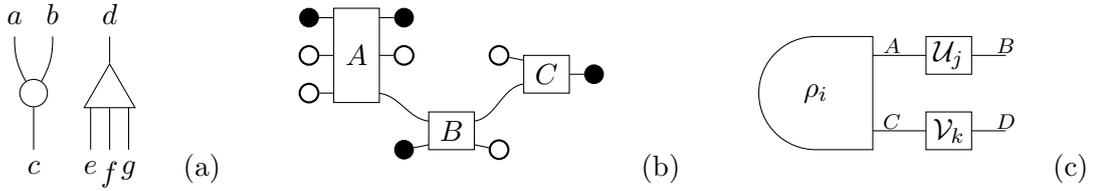

 \centering
 \input{tikz_files/penrose_example.tikz} $\;$ (a) $\qquad$
 \input{tikz_files/duotensor_example.tikz} $\;$ (b) $\qquad$
 \input{tikz_files/Pavia_example.tikz} $\;$ (c)
 \caption{(a) Penrose's graphical notation for the outer product $\theta^{ab}_c\chi^d_{efg}$, where $\theta^{ab}_c$ is denoted by a circle and $\chi^d_{efg}$ by a triangle. (b) The duotensor for a network consisting of three operations. (c) Diagram for the preparation of a joint state of two systems $A$ and $C$, followed by local transformations on the two subsystems, in the Pavia notation.}
 \label{fig:other_graphical_languages}
\end{figure}

Hardy's duotensor notation provides a unified graphical language for generalised probabilistic theories including quantum theory \cite{hardy_formalism-local_2013}.
Operations in those theories are denoted by boxes, which can be wired together to form networks.
Duotensors are the mathematical objects corresponding to certain operations, these are represented graphically as boxes with black or white nodes on all of the outputs.
Plugging duotensors together corresponds to summations.
In this way, duotensors can be used to derive probabilities for the corresponding operations.
An example duotensor network is shown in Figure \ref{fig:other_graphical_languages} b.

Chiribella et al.\ describe and analyse general probabilistic theories using a graphical language \cite{chiribella_probabilistic_2010}, which we call the Pavia notation.
In this language, which is modelled after quantum circuit notation, labelled boxes denote processes, called ``tests''.
Wires correspond to systems and are labelled with the system type.
Tests can be composed in parallel or in sequence, and there are tests without inputs, corresponding to preparations of systems, and tests with no outputs, corresponding to destructive measurements.
An example diagram in this graphical language is shown in Figure \ref{fig:other_graphical_languages} c.
Tests are probabilistic, i.e.\ they may have one of a set of different effects.
In addition to the effect on systems, any test also has a heralding output: this can be thought of as a display or light on a laboratory device, which signals which of the set of operations has actually happened.
These outputs are indicated by subscripts on the test labels.

Penrose's graphical notation is not a notation for quantum theory but for tensors.
The other two notations encompass quantum theory, but they are very general.
This makes them less useful for specific applications in quantum theory.

\subsection{The \ZX-calculus}

There are various graphical languages directly based on categorical quantum mechanics, including the \ZX-calculus \cite{coecke_interacting_2008}.
The \ZX-calculus is a formalism for pure state qubit quantum mechanics with post-selected measurements.
Diagrams consists of green and red nodes called \emph{spiders} with arbitrarily many inputs and outputs and attached \emph{phase labels}, plus yellow nodes with one input and output each.
The green and red nodes represent maps in the computational and Hadamard basis, respectively, and the yellow nodes represent Hadamard operators.
Nodes can be connected in any way and edges are allowed to cross and curve.

Ignoring normalisation, quantum circuits consisting of controlled-\NOT{}, Hadamard, and generalised phase gates -- cf.\ \eqref{eq:generalised_phase_gate} -- can straightforwardly be translated into the \ZX-calculus, see Figure \ref{fig:circuit-ZX}.
The \ZX-calculus is more versatile than quantum circuit notation though: most \ZX-calculus diagrams do not arise from quantum circuits in this way.

\begin{figure}
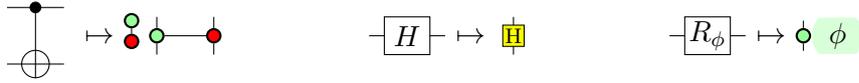

 \centering
 \input{tikz_files/circuit_cNOT.tikz} $\mapsto$ \innerprodgr{} \input{tikz_files/controlled-NOT.tikz} $\qquad\qquad$ \sqGate{$H$} $\mapsto$ \Hadamard{} $\qquad\qquad$ \sqGate{$R_\phi$} $\mapsto$ \phase{gn, label={[gphase]right:$\phi$}}
 \caption{The translations of controlled-\NOT{}, Hadamard, and generalised phase gates $R_\phi$ into the \ZX-calculus \cite{coecke_interacting_2008}. Note the change of orientation from left-to-right to bottom-to-top. In the \ZX-calculus representation of controlled-\NOT{}, the vertical wire through the green node (here: on the left) corresponds to the control qubit, the vertical wire through the red node (here: on the right) to the target qubit.}
 \label{fig:circuit-ZX}
\end{figure}

Furthermore, with the ability to represent states and post-selected measurements as well as unitary transformations, measurement-based quantum computations (MBQC) \cite{raussendorf_one-way_2001} can be translated into \ZX-calculus diagrams in a much more natural way than their translation into circuits \cite{duncan_rewriting_2010}.
Each qubit in a graph or cluster states -- the underlying resource for MBQC -- can be represented in the \ZX-calculus as a green node with an output.
Edges in the graph state are represented by yellow nodes connected to two qubits.
The measurements in the basis $\{\ket{0}+e^{i\phi}\ket{1}, \ket{0}-e^{i\phi}\ket{1}\}$ for some real $\phi$ required by MBQC algorithms are represented as green nodes with one input and phase labels $\phi$ or $(\phi+\pi)$.
Extra notation can be introduced to keep track of the propagating Pauli corrections resulting from the different measurement outcomes \cite{duncan_rewriting_2010}.

\begin{figure}
 \centering
 \input{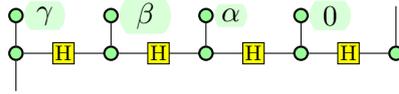}
 \caption{The \ZX-calculus representation of a MBQC pattern for a general single-qubit unitary $R_\alpha H  R_\beta H R_\gamma$, where $R_\phi$ are the generalised phase gates defined in \eqref{eq:generalised_phase_gate} and $H$ is the Hadamard gate.
 In the MBQC pattern, the input qubit on the left is entangled with the first qubit of a 4-qubit line graph.
 The first four qubits are then projected onto the states $(\ket{0}+e^{i\phi}\ket{1})$ with $\phi$ taking values $\gamma,\beta,\alpha,$ and $0$, respectively.
 For simplicity, it has been assumed here that all measurements give the desired outcomes so that corrections are not necessary.
 The \ZX-calculus notation can also be extended to keep track of the propagation of error corrections \cite{duncan_rewriting_2010}.}
 \label{fig:MBQC-ZX}
\end{figure}

A more detailed and rigorous introduction to the \ZX-calculus is given in Chapter \ref{s:ZX-calculus}.

\section{Making graphical languages rigorous}
\label{s:graphical_rigorous}

Graphical notations are often introduced as informal personal short-hands and used to develop an intuitive understanding of a problem that can then be confirmed using a more rigorous but less intuitive language.
This means doing the same work twice: once graphically, then again in the alternative formalism.
An alternative approach is to make the graphical languages themselves rigorous, so reliable results can be derived entirely graphically.

The graphical languages for quantum theory introduced in the previous section, with the exception of stabilizer graphs, have many properties in common: in all of these languages, processes are denoted by some kind of node or box, and systems are denoted by wires.
These languages can be made rigorous using category theory.
That approach was pioneered by Joyal and Street, who analysed a range of graphical notations from Feynman diagrams to Petri Nets and gave them rigorous underpinnings \cite{joyal_geometry_1991}.
Category theory is the natural formalism for making graphical languages rigorous, as \emph{monoidal categories} are the most general mathematical structures incorporating both parallel and sequential composition of transformations.
We introduce the concepts from category theory needed to make graphical languages rigorous.
We then explain how to apply the category theory to graphical languages like the ones considered in the previous section.
Further information can be found in \cite{coecke_categories_2010}, which is aimed at physicists.
The standard textbook is \cite{mac_lane_categories_1998}.
A soon-to-appear textbook will introduce category theory, graphical languages, and quantum theory side-by-side \cite{coecke_picturing_2015}.

\subsection{Basic category theory for graphical languages}
\label{s:category_theory}

A category is an abstract mathematical structure describing -- informally speaking -- a collection of processes and the way they compose.
Unlike in a group, where any two elements can be composed, generally not all processes in a category are composable: each process in a category has a specified ``input system'' and ``output system'', and two processes compose only if the input system of the one is the same as the output system of the other.
Formally, the ``systems'' are called \emph{objects} and the processes \emph{arrows}.

\begin{dfn}
 A \emph{category} $\mathcal{C}$ consists of:
 \begin{itemize}
  \item a collection of \emph{objects} $Ob(\mathcal{C})$,
  \item for any two objects $A,B\in Ob(\mathcal{C})$, a set of \emph{arrows} $\mathcal{C}(A,B)$,
  \item for each object $A\in Ob(\mathcal{C})$, an \emph{identity arrow} $1_A\in\mathcal{C}(A,A)$, and
  \item a \emph{sequential composition operation} for arrows:
   \begin{equation}
    (-\circ -): \mathcal{C}(B,C) \times \mathcal{C}(A,B) \to \mathcal{C}(A,C),
   \end{equation}
   where $A,B,C\in Ob(\mathcal{C})$,
 \end{itemize}
 satisfying the following axioms:
 \begin{itemize}
  \item Composition is associative, i.e.\ for any $f\in\mathcal{C}(A,B)$, $g\in\mathcal{C}(B,C)$, and $h\in\mathcal{C}(C,D)$:
   \begin{equation}
    h\circ(g\circ f) = (h\circ g) \circ f.
   \end{equation}
  \item The identity arrows are units for composition, i.e.\ for all $f\in\mathcal{C}(A,B)$:
   \begin{equation}
    1_B\circ f = f = f\circ 1_A.
   \end{equation}
 \end{itemize}
\end{dfn}

As the source and target objects are important, an arrow $f\in\mathcal{C}(A,B)$ is usually written as $f:A\to B$ or even $A\xrightarrow{f} B$.
Arrows are sometimes also called \emph{morphisms}; an arrow that has an inverse is called \emph{isomorphism}.

\begin{ex}
 Any physical theory could be formalised as a category, where the physical systems are the objects and their transformations are the arrows.
 In this case, the sequential composition operation corresponds to simply applying one transformation after the other, and the identity arrow corresponds to the transformation that leaves a system invariant.
\end{ex}

\begin{ex}
 A more mathematical example is $\Set$, the category whose objects are sets and whose arrows are functions. 
 Composition is sequential application of functions, i.e.\ given functions $f:A\to B$ and $g:B\to C$ for sets $A,B,C$, their composite is:
 \begin{equation}
  g\circ f: A\to C :: a\mapsto g(f(a)).
 \end{equation}
 The identity arrows are the identity functions, i.e.\ for $A\in Ob(\mathcal{C})$:
 \begin{equation}
  1_A: A\to A :: a\mapsto a.
 \end{equation}
\end{ex}

\begin{ex}
 The category $\Rel$ again has sets as objects but the arrows are relations.
 A relation $A\xrightarrow{R} B$ can be thought of as a subset of the Cartesian product $A\times B$.
 The sequential composition operation in $\Rel$ is that of relational composition, i.e.\ the composite of $R$ and a relation $B\xrightarrow{S} C$ is:
 \begin{equation}
  S\circ R = \{ (a,c) \mid \exists b\in B \text{ s.t. } (a,b)\in R \wedge (b,c) \in S \} \subseteq A\times C.
 \end{equation}
 Identity arrows are the identity functions, considered as relations:
 \begin{equation}
  1_A = \{(a,a) \mid a\in A \} \subseteq A\times A.
 \end{equation}
\end{ex}

\begin{ex}
 The category $\Hilb$ has complex Hilbert spaces as objects and bounded linear maps as arrows.
 The sequential composition operation is the composition of linear maps as functions.
 Identity arrows are the usual identity linear maps.
\end{ex}

The categories $\FRel$ and $\FHilb$ are defined by restricting the objects of $\Rel$ to finite sets and of $\Hilb$ to finite-dimensional Hilbert spaces, respectively.

Classical deterministic physics is modelled in the category $\Set$.
$\Hilb$ and $\FHilb$ are the settings for categorical quantum mechanics.
While $\Rel$ at first glance seems very similar to $\Set$ -- after all, they have the same objects -- it is actually more similar to $\Hilb$.
We see in Chapter \ref{s:spekkens} that a subcategory of $\FRel$ describes Spekkens' toy bit theory.

\begin{dfn}
 Let $\mathcal{C}$ and $\mathcal{D}$ be categories. $\mathcal{C}$ is a \emph{subcategory} of $\mathcal{D}$ if all objects and arrows of $\mathcal{C}$ are also objects and arrows of $\mathcal{D}$, with identities and composition of arrows being the same in both categories.
\end{dfn}

It can be interesting to relate categories to each other via transformations that act on categories.

\begin{dfn}\label{dfn:functor}
 Let $\mathcal{C}$ and $\mathcal{D}$ be categories. A map $F:\mathcal{C}\to\mathcal{D}$ is a \emph{functor} if it satisfies the following:
 \begin{itemize}
  \item $F$ assigns an object $FA\in Ob(\mathcal{D})$ to each object $A\in Ob(\mathcal{C})$,
  \item $F$ assigns an arrow $Ff\in\mathcal{D}(FA,FB)$ to each arrow $f\in\mathcal{C}(A,B)$,
  \item $F$ preserves composition of arrows:
   \begin{equation}
    F(f\circ g) = Ff \circ Fg
   \end{equation}
   for any composable $f,g$ in $\mathcal{C}$, and
  \item $F$ preserves identity arrows:
   \begin{equation}
    F1_A = 1_{FA}.
   \end{equation}
 \end{itemize}
\end{dfn}

\begin{ex}
 There exists a functor from the category of physical systems that evolve according to classical deterministic physics into the category $\Set$, sending each physical system to its set of states, and each transformation to a corresponding function between state sets.
\end{ex}

\begin{ex}
 The map from $\Set$ to $\Rel$ that does the obvious thing on objects and sends each function $f:A\to B$ to a relation $R_f\subseteq A\times B$ given by:
 \begin{equation}
  R_f = \{(a,f(a)) \mid a\in A\},
 \end{equation}
 is a functor.
\end{ex}

A basic category allows sequential composition of transformations, i.e.\ applying transformations one after the other.
There is no way of expressing the idea of putting two systems side by side and applying a transformation to the first ``at the same time'' as applying a transformation to the second.
To study this new type of composition, called \emph{parallel composition}, we add new structure to categories.

\begin{dfn}
 A \emph{strict monoidal category} is a category $\mathcal{C}$ together with a parallel composition operation for objects, denoted by $A\otimes B$, a \emph{unit object} $I$, and a parallel composition operation for arrows:
 \begin{equation}
  (-\otimes -) : \mathcal{C}(A,B)\times\mathcal{C}(C,D) \to \mathcal{C}(A\otimes C,B\otimes D),
 \end{equation}
 such that for any $A,B,C\in Ob(\mathcal{C})$ and any arrows $f,g,h,j$ that are composable in the required ways, the following hold.
 \begin{itemize}
  \item The parallel composition is associative on objects, and $I$ is a unit for it:
   \begin{gather}
    (A\otimes B)\otimes C = A\otimes (B\otimes C) \\
    A\otimes I = A = I\otimes A.
   \end{gather}
  \item The parallel composition is associative on arrows, and $1_I$ is a unit for it:
   \begin{gather}
    h\otimes(g\otimes f) = (h\otimes g) \otimes f \\
    f\otimes 1_I = f = 1_I \otimes f.
   \end{gather}
  \item Parallel and serial composition satisfy the \emph{interchange law}:
   \begin{equation}\label{eq:interchange}
    (g\circ f)\otimes(j\circ h) = (g\otimes j) \circ (f\otimes h).
   \end{equation}
 \end{itemize}
\end{dfn}

The parallel composition operation in a monoidal category is also called \emph{monoidal product}, hence the name.
The term ``strict'' in the above definition refers to the fact that the associative and unit laws for parallel composition are equalities.
In general monoidal categories, these only hold up to so called \emph{structural isomorphisms}, which satisfy a number of \emph{coherence equations}.
The coherence equations then imply the interchange law.
We ignore these intricacies here, which is justified as any monoidal category is equivalent -- in a rigorously-defined way, via functors that preserve the monoidal structure -- to some strict monoidal category \cite{mac_lane_categories_1998}, and graphical languages always yield strict monoidal categories.

\begin{ex}
 The category $\Set$ can be made into a monoidal category by using the Cartesian product of sets as the parallel composition on objects.
 The unit object is the one-element set.
 Parallel composition of functions corresponds to element-wise application: given functions $f:A\to B$ and $g:C\to D$, the parallel composite of $f$ and $g$ is:
 \begin{equation}
  f\otimes g: (A\otimes C)\to (B\otimes D) :: (a,c)\mapsto(f(a),g(c)).
 \end{equation}
 $\Rel$ can be made into a monoidal category in a similar way.
\end{ex}

\begin{ex}
 The category $\Hilb$ can be made into a monoidal category with the usual tensor product as the parallel composition operation.
 The unit object is the one-dimensional Hilbert space.
 The same holds for $\FHilb$.
\end{ex}

\begin{ex}\label{ex:Mat_CC}
 The strict monoidal category equivalent to $\FHilb$, denoted $\Mat_\CC$, has natural numbers as objects (which can be thought of as the dimension of the Hilbert space).
 Arrows in $\Mat_\CC(n,m)$ are complex matrices of size $m$ by $n$, with matrix multiplication as sequential composition and identity matrices as identity arrows.
 The parallel composition of objects is given by multiplication of numbers: $n\otimes m=nm$, with 1 as the unit object.
 Arrows compose in parallel by Kronecker product of matrices.

 An object in $\Mat_\CC$ can be thought of as a Hilbert space with a chosen basis, which then allows linear maps to be uniquely expressed as matrices in terms of those chosen bases.
\end{ex}

Strict monoidal categories are already almost sufficient for describing circuit diagrams with their rigid structure: all that is missing is a \SWAP-map that interacts with the other arrows in the intuitively expected way.

\begin{dfn}
 A \emph{strict symmetric monoidal category} is a strict monoidal category $\mathcal{C}$ with a \emph{swap arrow} $\sigma_{A,B}$ for any pair of objects $A,B\in Ob(\mathcal{C})$, which satisfies the following axioms.
 \begin{itemize}
  \item Swapping two systems and then swapping them again is equivalent to not doing anything:
   \begin{equation}
    \sigma_{B,A} \circ \sigma_{A,B} = 1_A\otimes 1_B.
   \end{equation}
  \item Swapping two objects and then applying two arrows in parallel is the same as interchanging the arrows and then swapping, i.e.\ for any $f:A\to A'$ and $g:B\to B'$:
   \begin{equation}
     (f\otimes g)\circ\sigma_{A,B} = \sigma_{B',A'}\circ (g\otimes f).
   \end{equation}
  \item Swapping an object with the unit object $I$ is the same as not doing anything:
   \begin{equation}
    \sigma_{A,I} = 1_A.
   \end{equation}
  \item Swapping an object with a composite object is the same as component-wise swapping:
   \begin{equation}
    (1_B\otimes\sigma_{A,C})\circ(\sigma_{A,B}\otimes 1_C) = \sigma_{A,B\otimes C}.
   \end{equation}
 \end{itemize}
\end{dfn}

Again, the strict symmetric monoidal category is actually a special case of a symmetric monoidal category, in which several of the axioms involve isomorphisms rather than being exact equalities.

\begin{ex}
 The category $\Set$ is symmetric with swap arrow:
 \begin{equation}
  \sigma_{A,B}: (A\otimes B)\to (B\otimes A) :: (a,b)\mapsto (b,a).
 \end{equation}
 The corresponding relation is a swap arrow for $\Rel$, and similarly for $\FRel$.
\end{ex}

\begin{ex}
 The category $\Hilb$ is symmetric with the swap arrow $\sigma_{\mathcal{H},\mathcal{H'}}$ for two Hilbert spaces $\mathcal{H},\mathcal{H}'$ being the unique linear map satisfying:
 \begin{equation}
  \ket{\phi}\otimes\ket{\psi}\mapsto\ket{\psi}\otimes\ket{\phi}
 \end{equation}
 for all $\ket{\phi}\in\mathcal{H}$, $\ket{\psi}\in\mathcal{H'}$.
 The swap arrow for $\FHilb$ is defined similarly.
\end{ex}

Circuit diagrams, including quantum circuits, can be modelled in strict symmetric monoidal categories.
\ZX-calculus diagrams on the other hand have components that cannot be expressed in general strict symmetric monoidal categories: curved wires with either two inputs or two outputs, and cycles.
These can be described category-theoretically as a \emph{compact structure} \cite{abramsky_categorical_2004,abramsky_categorical_2008}.

\begin{dfn}\label{dfn:compact_closed_category}
 A strict symmetric monoidal category $\mathcal{C}$ is called a \emph{compact closed category} if for every object $A\in Ob(\mathcal{C})$ there exists an object $A^*\in Ob(\mathcal{C})$ called the \emph{dual} of $A$ with arrows $\eta_A:I\to A^*\otimes A$ and $\epsilon_A:A\otimes A^*\to I$ such that:
 \begin{align}
  (\epsilon_A\otimes 1_A) \circ (1_A\otimes\eta_A) &= 1_A, \quad \text{and} \\
  (1_{A^*}\otimes\epsilon_A) \circ (\eta_A\otimes 1_{A^*}) &= 1_{A^*}.
 \end{align}
\end{dfn}

As before, if the compact structure is put on a general symmetric monoidal category, the equalities in the definition involve various isomorphisms, which are identities in the strict case.
The coherence theorems for the structural isomorphisms for compact closed categories were originally proved in \cite{kelly_coherence_1980}.

\begin{ex}
 The category $\Set$ is not compact closed because there is only a single function from any object $A$ to the one-element set: the function that maps every element of $A$ to the single element of the one-element set.
 It is therefore impossible to find arrows satisfying the equalities in Definition \ref{dfn:compact_closed_category}.
\end{ex}

\begin{ex}
 The category $\Rel$ on the other hand can be given a compact structure: take each set to be self-dual, i.e.\ $A^*=A$ for all $A\in Ob(\Rel)$.
 Denote the one-element set by $\{\bullet\}$.
 Consider the relations $\eta_A$ and $\epsilon_A$ as subsets of $\{\bullet\}\times(A\times A)$ and $(A\times A)\times\{\bullet\}$, respectively.
 Then: 
 \begin{align}
  \eta_A &= \{ (\bullet,(a,a)) \mid a\in A \}, \quad \text{and} \\
  \epsilon_A &= \{ ((a,a),\bullet) \mid a\in A \}.
 \end{align}
\end{ex}

\begin{ex}
 The category $\FHilb$ is compact closed.
 The dual of a Hilbert space $\mathcal{H}$ is taken to be the usual dual, i.e.\ the space of functions $\CC\to\mathcal{H}$.
 For finite-dimensional Hilbert spaces, this makes $\mathcal{H}^*$ isomorphic to $\mathcal{H}$.
 To define the arrows $\eta_\mathcal{H}$ and $\epsilon_\mathcal{H}$, pick an orthonormal basis $\{\ket{i}\}$ for $\mathcal{H}$ and, using the same notation for the corresponding basis of $\mathcal{H}^*$, let:
 \begin{align}
  \eta_\mathcal{H} &= \sum_i \ket{i}\otimes\ket{i}, \quad \text{and} \\
  \epsilon_\mathcal{H} &= \sum_i \bra{i}\otimes\bra{i}.
 \end{align}

 The category $\Hilb$ cannot be given a compact structure because the maps $\eta_\mathcal{H}$ and $\epsilon_\mathcal{H}$ as defined above are not bounded when $\mathcal{H}$ is infinite-dimensional.
\end{ex}

There is a final piece of category-theoretical structure that is useful for describing quantum theory: dagger functors, which are generalisations of the Hermitian adjoint of linear maps.

\begin{dfn}
 A \emph{dagger functor} on a category $\mathcal{C}$ is a functor $(-)^\dagger:\mathcal{C}\to\mathcal{C}$ which acts as the identity on objects, i.e.\ $A^\dagger = A$ for all $A\in Ob(\mathcal{C})$, and satisfies the following conditions on arrows.
 \begin{itemize}
  \item The dagger functor inverts the directions of arrows and of sequential composition:
   \begin{align}
    (f:A\to B)^\dagger &= (f^\dagger:B\to A), \quad \text{and} \\
    (f\circ g)^\dagger &= g^\dagger\circ f^\dagger.
   \end{align}
  \item The dagger functor is involutive, i.e.\ for all arrows $f$ in $\mathcal{C}$:
   \begin{equation}
    (f^\dagger)^\dagger = f.
   \end{equation}
 \end{itemize}
\end{dfn}

The first property, that of inverting the direction of arrows, is also referred to as \emph{contravariance} of the functor.
A compact closed category with a dagger functor that interacts nicely with the parallel composition, the swap arrow, and the compact structure, is called \emph{dagger compact closed}.
This type of category was first introduced in \cite{abramsky_categorical_2004} under the name ``strongly compact closed category''.

\begin{dfn}
 A \emph{dagger compact closed category} is a compact closed category $\mathcal{C}$ with a dagger functor $(-)^\dagger$ satisfying the following conditions.
 \begin{itemize}
  \item The dagger of the parallel composite of two arrows is the same as the parallel composite of the daggers of the two arrows:
   \begin{equation}
    (f\otimes g)^\dagger = f^\dagger \otimes g^\dagger.
   \end{equation}
  \item The dagger of the swap arrow is its inverse:
   \begin{equation}
    \sigma_{A,B}^\dagger = \sigma_{B,A}.
   \end{equation}
  \item The maps associated with the compact structure for an object and its dual object are related to each other via the dagger functor:
   \begin{equation}
    \epsilon_A^\dagger = \eta_{A^*}.
   \end{equation}
 \end{itemize}
\end{dfn}

\begin{ex}
 The category $\FHilb$ is dagger compact closed with Hermitian adjoint of linear maps as the dagger.
 Similarly $\Mat_\CC$, where the dagger is the usual Hermitian adjoint of complex matrices.
\end{ex}

\begin{ex}
 The category $\Rel$ is dagger compact closed with \emph{relational converse} as the dagger.
 For a relation $R\subseteq A\times B$, the relational converse is defined as:
 \begin{equation}
  R^\dagger = \{(b,a) \mid (a,b)\in R\} \subseteq B\times A.
 \end{equation}
\end{ex}

Dagger compact closed categories are the setting for categorical quantum mechanics, i.e.\ the analysis of quantum theory and similar theories as categories.
As laid out in the following sections, dagger compact closed categories are also the appropriate setting for formalising graphical languages based on string diagrams.

\subsection{String diagrams, algebraic equalities, and graph isomorphisms}

In formalising graphical languages, we focus here on languages that represent processes as \emph{string diagrams}: diagrams consisting of boxes, which denote maps, and wires, which denote systems -- or, equivalently, the identity maps on those systems.
Unlike mathematical graphs, wires in string diagrams do not need to be connected to boxes at both ends, or at all.
Quantum circuit notation, Penrose's notation, the Pavia notation, and the \ZX-calculus are examples of this type of graphical language.

There are two steps to the process of making a graphical language rigorous: firstly, one needs to give an explicit translation between diagrams and algebraic terms. Secondly, one needs to prove that two diagrams that seem intuitively equal translate to algebraic terms that are equal.

String diagrams can be translated into algebraic terms by cutting the diagrams up into segments containing exactly one map, using only horizontal and vertical cuts.
We assume this is always possible.
Each map is then represented by a symbol in the term language, and symbols are combined using -- in the quantum case -- tensor product and matrix multiplication.
The correspondence between cut direction and mode of composition depends on the direction of the diagrams; for quantum circuits, horizontal cuts correspond to tensor products and vertical ones to matrix multiplication.
For example, to produce \eqref{eq:circuit_example}, the circuit diagram has been cut up as shown in Figure \ref{fig:cut_example}.

\begin{figure}
 \centering
 \input{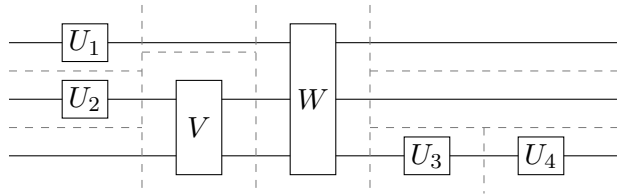}
 \caption{The cuts -- indicated by dashed lines -- that produce \eqref{eq:circuit_example} from Figure \ref{fig:circuit_example}.}
 \label{fig:cut_example}
\end{figure}

To make this correspondence rigorous, one needs to show that the terms resulting from different decompositions of the same diagram are all equal.
This is ensured, among other axioms, by the requirement that maps $f,g,h,j$ in any monoidal category, which are composable in the required way, satisfy the \emph{interchange law} \eqref{eq:interchange}:
\[
 (g\circ f)\otimes(j\circ h) = (g\otimes j) \circ (f\otimes h),
\]
where $\otimes$ denotes parallel composition and $\circ$ denotes sequential composition.
Given this equality, the diagrammatic notation is clearly more intuitive than the algebraic one: the interchange law is tautological in diagrams, as demonstrated in Figure \ref{fig:interchange}, whereas it is not at all obvious in the algebraic notation.

\begin{figure}
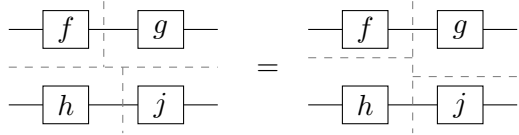

 \centering
 \input{tikz_files/circuit_interchange1.tikz} $\;$ = $\;$ \input{tikz_files/circuit_interchange2.tikz}
 \caption{The interchange law in quantum circuit notation. Dashed lines indicate where the diagrams are ``cut'' to produce the algebraic representation. Other than the different arrangements of cuts, the diagram on the left is identical to the one on the right.}
 \label{fig:interchange}
\end{figure}

The interchange law is not the only equality with this property: there are other category-theoretical structures whose defining equalities are very intuitive when represented diagrammatically.
E.g., equalities involving the swap-operation become very intuitive when the symbol of two crossing lines is used to represent \SWAP{}.
An example of this is shown in Figure \ref{fig:swap}.

\begin{figure}
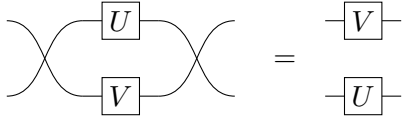

 \centering
 \input{tikz_files/circuit_swap1.tikz} $\;$ = $\;$ \input{tikz_files/circuit_swap2.tikz}
 \caption{In graphical notation, it is intuitively obvious that \SWAP{}$\,\circ(U\otimes V)\circ\,$\SWAP{} is equal to $V\otimes U$ if the symbol for \SWAP{} is a wire crossing.}
 \label{fig:swap}
\end{figure}

This notion of diagram equalities being intuitive even when two diagrams are not identical is covered by the idea of \emph{graph isomorphisms} between string diagrams.

\begin{dfn}
 Two string diagrams are \emph{equal up to graph isomorphism} if there exists a bijection between the two that keeps the inputs and outputs in the same order and assigns to each element of the first diagram an equivalent element of the same diagram such that all elements are connected up in the same way.
\end{dfn}

In particular, two diagrams are isomorphic if, keeping the inputs and outputs fixed, one can be transformed into the other by topological transformations such as crossing or uncrossing wires, stretching or shortening wires, moving nodes along wires or moving nodes around the diagram while keeping their connections the same \cite{coecke_generalised_2015}. 

Another example of diagram equalities corresponding to graph isomorphisms is given by certain diagrams involving the maps $\eta$ and $\epsilon$ representing a compact closed structure (cf.\ Definition \ref{dfn:compact_closed_category}).
Graphically, these maps are represented by curved wires with either two inputs (a  ``cap'') or two outputs (a ``cup'').
The names become clear when they are drawn in a graphical notation where sequential composition is represented vertically.
By the definition of compact structures, the cups and caps obey two equations, which -- when represented diagrammatically -- are usually called the \emph{snake equations}.
While the snake equations are not tautological in string diagrams, they do correspond to simple graph isomorphisms, as can be seen in Figure \ref{fig:snakes}.
On the other hand, the corresponding algebraic expressions are complicated, even more so when the relevant monoidal category is not strict.

There are no cups and caps in quantum circuit notation, but they play an important role in the \ZX-calculus as the (non-normalised) representations of the Bell state $\ket{00}+\ket{11}$ and its adjoint, cf. \eqref{eq:cup_Bell_state}.

\begin{figure}
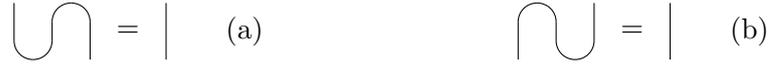

 \centering
 \input{tikz_files/snake1.tikz} $\quad$ (a) $\qquad\qquad\qquad\qquad$ \input{tikz_files/snake2.tikz} $\quad$ (b)
 \caption{The snake equations in a graphical language that is read from bottom to top, e.g.\ the \ZX-calculus.}
 \label{fig:snakes}
\end{figure}

\subsection{Graphical languages and algebraic reasoning in category theory}
\label{s:graphical_languages_algebraic}

As shown in the previous section, there are many equalities in category theory which are much more intuitive when written down diagrammatically.
Yet intuitiveness is not sufficient for the graphical language to replace the algebraic one.
A graphical language is useful only if all the intuitive diagram equalities, i.e.\ equalities that hold up to graph isomorphisms, correspond to true equations in the category.
Furthermore, ideally, all the categorical equations should be intuitive in the diagrammatic language.

In fact, for several different classes of categories, it is possible to define a corresponding graphical language in such a way that:
\begin{itemize}
 \item any equality following directly from the axioms of that category holds up to graph isomorphism in the graphical language, and
 \item if two diagrams are equal up to graph isomorphism, the corresponding algebraic terms are equal by the axioms of the category.
\end{itemize}
Thus the graphical language is entirely equivalent to algebraic reasoning for those categories \cite{selinger_dagger_2007}.

In particular, this result holds for \emph{dagger compact closed categories}.
The graphical language for a dagger category generally has boxes that are asymmetrical, cf.\ Figure \ref{fig:dagger_maps} a.
Taking the dagger corresponds to flipping the diagram upside-down and mirroring the boxes representing the maps, as shown in Figure \ref{fig:dagger_maps} b and Figure \ref{fig:dagger_diagrams}.

\begin{figure}
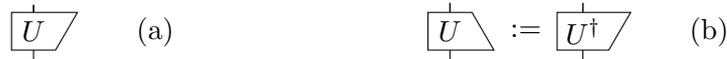

 \centering
 \input{tikz_files/dc_map.tikz} $\quad$ (a) $\qquad\qquad\qquad\qquad$ \input{tikz_files/dc_daggermap.tikz} $\quad$ (b)
 \caption{(a) Graphical representation of a map with one input and one output in a dagger category. (b) The dagger of a map is denoted by an upside-down version of the asymmetrical box.}
 \label{fig:dagger_maps}
\end{figure}

\begin{figure}
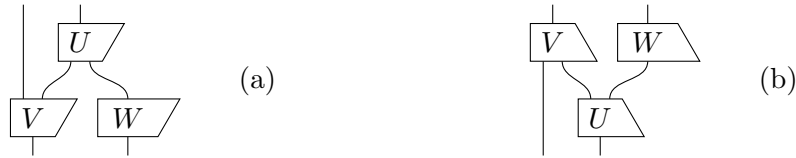

 \centering
 \input{tikz_files/dc_diagram.tikz} $\quad$ (a) $\qquad\qquad\qquad\qquad$ \input{tikz_files/dc_daggerdiagram.tikz} $\quad$ (b)
 \caption{(a) A more complicated diagram in a dagger monoidal category, and (b) its dagger.}
 \label{fig:dagger_diagrams}
\end{figure}

Graphical languages based on category theory are good candidates for intuitive, rigorous, high-level languages for quantum theory.
By \cite{selinger_dagger_2007}, two diagrams in appropriate graphical notations are equal up to isomorphism if and only if the maps they represent are equal up to the axioms of the category.
Yet this is a very general result: while quantum theory can be modelled as a dagger compact closed category, not all equalities between linear maps follow from the axioms of dagger compact closed categories.
In fact, the equations implied by the axioms of dagger compact closed categories are only those that involve the interplay of \SWAP{} maps, cups and caps, and general maps, or the relationship between a map and its dagger.

There are many different dagger compact closed categories which obey the same axioms but are not models of quantum mechanics.
An example is $\Rel$, the category of sets and relations where systems and transformations compose in parallel by Cartesian product. 
This category is in fact the setting for the categorical formulation of Spekkens' toy bit theory in Chapter \ref{s:spekkens}.

The underlying categorical structure is the same for quantum mechanics and for the toy theory, which is a local hidden variable theory.
Thus clearly graph isomorphisms are not sufficient to express all the equalities we are interested in.

\section{Graphical rewriting and properties of formal systems}
\label{s:graphical_rewriting_formal_systems}

String diagrams and graph isomorphisms are equivalent to algebraic reasoning for certain classes of categories, including the one modelling quantum theory.
Yet the equalities that follow directly from the categorical structure are only a small subset of all the equalities making up a theory.
For example, the two circuits in Figure \ref{fig:cNOT-Z} are not equal up to graph isomorphism even though they correspond to the same operator.

\begin{figure}
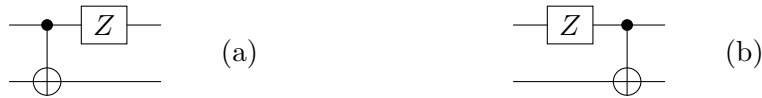

 \centering
 \input{tikz_files/cNOT-Z.tikz} $\quad$ (a) $\qquad\qquad\qquad\qquad$ \input{tikz_files/Z-cNOT.tikz} $\quad$ (b)
 \caption{(a) A controlled-\NOT{} gate, followed by a Z-gate on the control qubit, and (b) a Z-gate followed by a controlled-\NOT{} gate.}
 \label{fig:cNOT-Z}
\end{figure}

To resolve this problem, we introduce graphical rewriting.
The basic idea is the following: given two diagrams $D_1$ and $D_2$ that are equal, whenever a larger diagram contains $D_1$ as a subdiagram, this can be replaced by $D_2$ to get a new diagram equal to the original one.
There are many intricacies associated with the process of matching and replacing subdiagrams, which we ignore here.
Instead we assume that a naive ``cutting'' and ``pasting'' approach works.

Two diagrams cannot be equal unless they have matching inputs and outputs.
In the theories we are considering, there is only one type of basic system: qubits in the case of quantum theory, toy bits for Spekkens' toy theory; therefore wire types do not need to be tracked, all that matters is the number of inputs and outputs of a diagram.

\begin{ex}
 Given the following diagram equation:
 \begin{equation}\label{eq:cNOT-Z}
   \input{tikz_files/cNOT-Z.tikz} \quad = \quad \input{tikz_files/Z-cNOT.tikz} \; ,
 \end{equation}
 we can rewrite a more complicated circuit by first ``cutting out'' the diagram appearing on the left of \eqref{eq:cNOT-Z} and then ``pasting'' the right-hand side of \eqref{eq:cNOT-Z} in that place:
 \begin{equation}
  \input{tikz_files/circuit_cNOT-Z-cNOT.tikz} \quad \mapsto \quad \input{tikz_files/circuit_cNOT-Z-cNOT_cut.tikz} \quad \mapsto \quad \input{tikz_files/circuit_Z-cNOT-cNOT.tikz} \; .
 \end{equation}
\end{ex}

In this way, complicated diagram equalities can be derived from a small set of specified \emph{rewrite rules}.
Unlike most rewrite systems, we do not specify a direction for rewrite rules.
We shall not look at properties that are usually of interest in the context of rewrite systems -- like confluence or termination -- in this thesis.
Instead, we consider the rewrite rules as axioms in a formal system and look at some associated properties.

The properties of graphical languages when considered as formal systems all depend on the interpretation, i.e.\ the map from diagrams to the underlying theory.
All the graphical languages we are considering here have been constructed with a specific interpretation in mind; usually, this is takes the form of a specific map -- in fact, a functor -- from the graphical language to the matrix formulation of quantum theory.
Nevertheless, different interpretations could be chosen, including both interpretations in other theories and different maps from the graphical language into matrices.
We denote the standard interpretation functor for a graphical language by $\intf{-}$.

\begin{ex}
 The interpretation functor for a quantum circuit over the Clifford+T gate set is defined as follows:
 \begin{align}
  \intf{\sqGate{$H$}} &= \frac{1}{\sqrt{2}}\begin{pmatrix} 1 & 1 \\ 1 & -1 \end{pmatrix}, \\
  \intf{\sqGate{$S$}} &= \begin{pmatrix} 1 & 0 \\ 0 & i \end{pmatrix}, \\
  \intf{\sqGate{$T$}} &= \begin{pmatrix} 1 & 0 \\ 0 & e^{i\pi/4} \end{pmatrix}, \text{and} \\
  \intf{\input{tikz_files/circuit_cNOT.tikz}} &= \begin{pmatrix} 1 & 0 & 0 & 0 \\ 0 & 1 & 0 & 0 \\ 0 & 0 & 0 & 1 \\ 0 & 0 & 1 & 0 \end{pmatrix},
 \end{align}
 with:
 \begin{equation}
  \intf{\input{tikz_files/circuit_parallel.tikz}} = \intf{\input{tikz_files/circuit_U.tikz}} \otimes \intf{\input{tikz_files/circuit_V.tikz}}
 \end{equation}
 and:
 \begin{equation}
  \intf{\input{tikz_files/circuit_series.tikz}} = \intf{\input{tikz_files/circuit_V.tikz}} \circ \intf{\input{tikz_files/circuit_U.tikz}}
 \end{equation}
 for any circuits $U,V$ that can be composed in the required manner.
\end{ex}

\subsection{Universality}
\label{s:universality}

For a graphical language to be a viable alternative to conventional formalisms, it needs to be able to express any idea that can be expressed in the conventional formalism.
This notion is captured by the concept of \emph{universality}.
\begin{dfn}
 A graphical language is \emph{universal} under the interpretation $\intf{-}$ if for any process $P$ in the underlying theory, there exists a diagram $D$ such that:
 \begin{equation}
  \intf{D} = P.
 \end{equation}
\end{dfn}

\begin{ex}
 Quantum circuit notation with a gate set consisting of controlled-\NOT{} and arbitrary single-qubit gates is universal for unitary operations on qubits, as any unitary operator on qubits can be decomposed into single-qubit gates and controlled-\NOT{}s \cite{nielsen_quantum_2010}.
\end{ex}

A related notion that comes up in the context of gate sets in quantum computing is that of \emph{approximate universality}.
\begin{dfn}\label{dfn:approximately_universal}
 A gate set for quantum computation is called \emph{approximately universal} if any unitary operation can be approximated to arbitrary accuracy using only operations from the given set.
\end{dfn}
This is not in itself a property of graphical languages as most graphical formalisms do not have an intrinsic concept of approximation.
Nevertheless, it can be very illuminating to e.g.\ consider quantum circuits over approximately universal gate sets, even if any steps directly involving approximation have to be done by translating to matrices and then back again.

An example of such an approximately universal gate set is the Clifford+T group introduced in Section \ref{s:quantum_circuits}.

\subsection{Soundness}
\label{s:soundness}

The property of universality does not involve the rewrite rules associated with a graphical language in any way: it is a property of the notation only.
Other properties do involve the rewrite rules.
Possibly the most important property of a set of rewrite rules is that they allow only the derivation of true equalities, i.e.\ equalities that also hold in the conventional formalism.

\begin{dfn}
 Suppose $D_1=D_2$ is an equation in some graphical language, where $D_1$ and $D_2$ are diagrams. This equation is \emph{sound} under the interpretation $\intf{-}$ if:
 \begin{equation}
  \intf{D_1} = \intf{D_2}.
 \end{equation}
 A system of rewriting rules is sound for a given interpretation if all the individual rewrite rules are sound equations.
\end{dfn}

Note that an equation that is sound under some interpretation $\intf{-}$ may not be sound under alternative interpretations.

\begin{ex}
 The equality:
 \begin{equation}
  \input{tikz_files/Z-cNOT.tikz} \quad = \quad \input{tikz_files/cNOT-Z.tikz}
 \end{equation}
 is sound for quantum circuit diagrams with the usual interpretation, as:
 \begin{equation}
  C_X \circ (Z\otimes I) = (Z\otimes I) \circ C_X.
 \end{equation}
\end{ex}

\begin{ex}
 Introduce a new gate symbol \sqGate{$\diamond$}.
 With the usual interpretation for \sqGate{$Z$}, the equation:
 \begin{equation}
  \sqGate{$\diamond$}\!\sqGate{$Z$} \; = \; \sqGate{$Z$}\!\sqGate{$\diamond$}
 \end{equation}
 is sound if the interpretation of the diamond gate is:
 \begin{equation}
  \intf{\sqGate{$\diamond$}} = \begin{pmatrix} 1 & 0 \\ 0 & e^{i\pi/4} \end{pmatrix},
 \end{equation}
 but not if it is:
 \begin{equation}
  \intf{\sqGate{$\diamond$}} = \frac{1}{\sqrt{2}}\begin{pmatrix} 1 & 1 \\ 1 & -1 \end{pmatrix}.
 \end{equation}
\end{ex}

\subsection{Completeness}

A sound rewriting system ensures that only true equalities can be derived.
Relatedly, a graphical formalism is most useful if it allows \emph{all} true equalities to be derived graphically: while it is certainly possible to translate back to the conventional formalism to derive equalities, that is exactly what we are trying to avoid by introducing graphical rewrite rules in the first place.

\begin{dfn}
 A graphical rewrite system is \emph{complete} for an interpretation $\intf{-}$ if for any diagrams $D_1$ and $D_2$:
 \begin{equation}
  \intf{D_1} = \intf{D_2} \implies D_1 = D_2,
 \end{equation}
 i.e.\ any equality that can be derived using the conventional formalism can also be derived using the rewrite rules.
\end{dfn}

Like soundness, the property of completeness of a graphical rewrite system is specific to the interpretation used.
Unlike soundness, completeness depends on all the rewrite rules together and cannot be checked on a rule-by-rule basis.
It is therefore usually harder to check completeness than it is to check soundness.

It is possible to construct complete systems of rewrite rules for fragments of quantum circuits.
In \cite{selinger_generators_2013}, Selinger gives a complete set of rewrite rules for stabilizer quantum circuit diagrams built from Hadamard, phase, and controlled-\NOT{} gates.

We prove completeness results for fragments of the \ZX-calculus in Chapters \ref{ch:completeness} and \ref{ch:more_completeness}.

\section{Automated graphical reasoning}
\label{s:automated_graphical_reasoning}

While large diagrams are easier to understand than large matrices, they are still complicated and manipulating them is therefore error-prone.
This problem can be resolved by automating or semi-automating the graphical reasoning process.
The software system Quantomatic \cite{quantomatic} does just that for graphical languages based on string diagrams, including but not limited to the \ZX-calculus.

Using a specified set of rewrite rules, Quantomatic suggests possible rewrites for the current diagram that the user can then choose to apply, cf.\ Figure \ref{fig:quantomatic}.
Alternatively, with higher-level rewrite strategies, Quantomatic can rewrite diagrams automatically.
Soundness of the rewrite rules means Quantomatic can only derive true equalities.

The existence of a complete graphical language based on string diagrams for some physical theory means that this theory can be explored entirely using automated graphical reasoning.

\begin{figure}
 \centering
 \includegraphics[width=\textwidth]{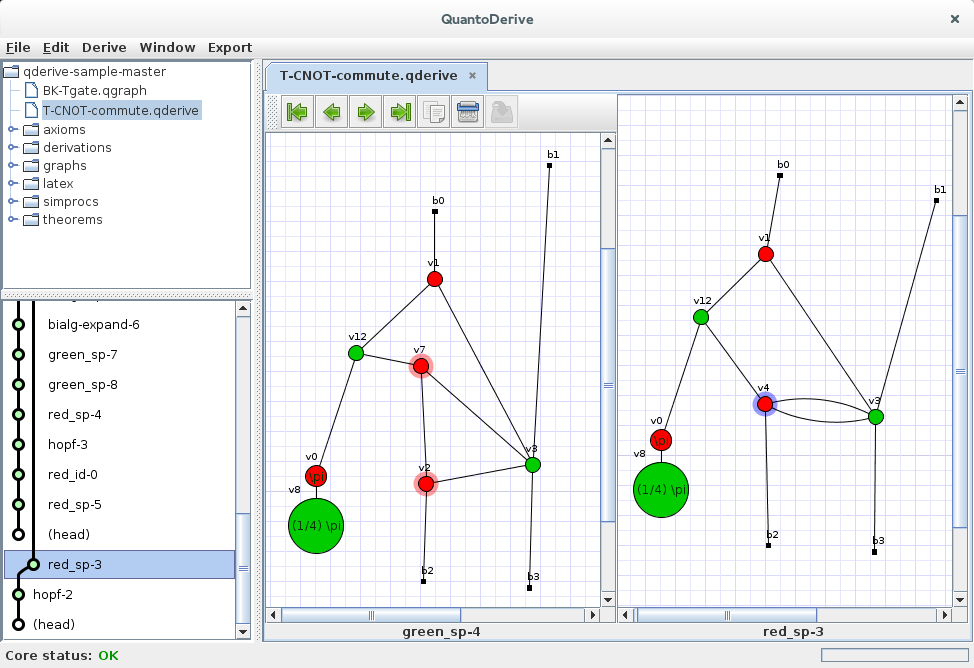}
 \caption{Screenshot of a \ZX-calculus rewrite step in \texttt{Quantomatic}: The red nodes v7 and v2 highlighted in the left diagram are merged into one highlighted red node labelled v4 in the right diagram.}
 \label{fig:quantomatic}
\end{figure}

\chapter{The \ZX-calculus}
\label{s:ZX-calculus}

Having given an introduction to rigorous graphical languages in general in the previous chapter, we now focus on the \ZX-calculus.
The \ZX-calculus is a graphical language for pure state qubit quantum mechanics with post-selected measurements that comes with a set of built-in rewrite rules.
It was first introduced by Coecke and Duncan \cite{coecke_interacting_2008,coecke_interacting_2011}, based on the categorical quantum mechanics approach pioneered by Abramsky and Coecke \cite{abramsky_categorical_2004}.

In this chapter, we explain the \ZX-calculus notation and the standard interpretation for diagrams, and show that the calculus is universal.
We furthermore give the rewrite rules of the \ZX-calculus and argue that they are sound.

The \ZX-calculus completeness results in Chapters \ref{ch:completeness} and \ref{ch:more_completeness} are not for full pure state qubit quantum mechanics but for pure state qubit stabilizer quantum mechanics and the single-qubit Clifford+T group.
These fragments of pure state quantum theory and their \ZX-calculus representations are introduced in Sections \ref{s:stabilizer_QM} and \ref{s:Clifford+T_group}, respectively.

\section{The \ZX-calculus notation}
\label{s:ZX_notation}

Diagrams in the \ZX-calculus are string diagrams, consisting of wires and labelled nodes.
Where quantum circuit diagrams are read from left to right, \ZX-calculus diagrams are read from bottom to top, i.e.\ sequential operations are stacked vertically while parallel operations are put side-by-side.
Thus wires that end at the bottom edge of a diagram are inputs, wires that end at the top of a diagram are outputs.

In this section, we give the basic elements of the \ZX-calculus and define the interpretation functor for \ZX-calculus diagrams.
We then introduce some common terminology for describing \ZX-calculus diagrams.
Finally, we prove that the \ZX-calculus as defined here is universal for pure state qubit quantum mechanics with post-selected measurements.

\subsection{Basic elements of \ZX-calculus diagrams}
\label{s:ZX_elements}

There are some slight variations in the generating elements used in different versions of the \ZX-calculus, here we use the following generators shown in Figure \ref{fig:ZX-generators}:
\begin{itemize}
 \item wires,
 \item green nodes with arbitrary numbers of inputs and outputs and a label $\alpha\in(-\pi,\pi]$ called the \emph{phase},
 \item red nodes with arbitrary numbers of inputs and outputs and a phase label $\beta\in(-\pi,\pi]$,
 \item yellow square nodes, which always have one input and one output, and
 \item black star-shaped nodes, which do not have any inputs or outputs. 
\end{itemize}
Yellow and black nodes do not have phase labels.
The arbitrary number of inputs and outputs for green and red nodes includes the possibility of having no inputs, no outputs, or no connecting wires at all.
The green and red nodes are usually called \emph{spiders} due to their many ``legs''.
A phase label of $0$ is usually left implicit, e.g.\ \joinnode{} is the same as \input{tikz_files/joinnode_0.tikz}.

\begin{figure}
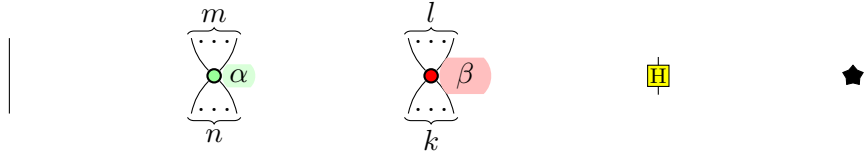

 \centering
 \input{tikz_files/wire.tikz} $\quad\quad\quad\quad\quad$ \input{tikz_files/green_spider.tikz} $\quad\quad\quad\quad$ \input{tikz_files/red_spider.tikz} $\quad\quad\quad\quad$ \Hadamard{} $\quad\quad\quad\quad\quad$ \halfscalar{}
 \caption{The basic elements of the \ZX-calculus.}
 \label{fig:ZX-generators}
\end{figure}

Diagrams are built from these basic components by either putting them side-by-side or by connecting some inputs of one component to some outputs of another.
Wires are allowed to cross and bend.

\subsection{How to interpret diagrams}
\label{s:ZX_interpretation}

\ZX-calculus diagrams can be considered as arrows in a dagger compact closed category, which is constructed as follows: The objects of this category are non-negative integers, corresponding to the number of input or output wires of a diagram.
Sequential composition is done by arranging the diagrams vertically and connecting the outputs of the first diagram to the inputs of the second diagram.
Parallel composition of objects is addition; parallel composition of arrows is performed by putting diagrams side-by-side.
The swap arrow is given by the diagram:
\begin{center}
 \input{tikz_files/swap.tikz}.
\end{center}
The compact structure on the object 1 consists of the arrows:
\begin{center}
  and \input{tikz_files/cap_diagram.tikz},
\end{center}
and the compact structure on 0 consists of two empty diagrams.
For objects $n$ with $n>1$, the compact structure consists of $n$ nested cups or caps, respectively.
The dagger functor flips diagrams upside-down and simultaneously maps any phase label $\phi$ to $-\phi$.
This is different to the general way of taking the dagger in graphical languages, cf. Section \ref{s:graphical_languages_algebraic}, because it is unnecessary to distinguish between inputs and outputs of spiders.
It is straightforward to check that these maps satisfy the definition of a dagger compact closed category given in Section \ref{s:graphical_rigorous}.

The interpretation of a \ZX-calculus diagram as a matrix is the image of this diagram under a functor from the category of \ZX-calculus diagrams into $\Mat_\CC$, the strict monoidal category corresponding to the category of finite-dimensional Hilbert spaces and linear maps defined in Example \ref{ex:Mat_CC}

The \ZX-calculus was developed with a specific interpretation functor in mind.
In the following, we denote this functor by $\intf{-}$ and sometimes call it the ``usual interpretation functor'' to distinguish it from other interpretation functors that will be constructed later.

\begin{dfn}\label{dfn:interpretation_functor}
 The usual interpretation functor for the \ZX-calculus is denoted by $\intf{-}$.
 For objects, $\intf{n}=2^n$, thus a diagram with $n$ inputs and $m$ outputs represents a $2^m$ by $2^n$ matrix.
 A diagram with no inputs or outputs denotes a 1 by 1 matrix, which is simply a complex number.
 The action of the interpretation functor on arrows is determined by its action on the generators of \ZX-calculus diagrams together with its behaviour under the different types of composition.
 The basic elements of \ZX-calculus diagrams are interpreted as follows, where for reasons of legibility the matrices are written in bra-ket notation:
 \begin{align}
  \left\llbracket \;\;\begin{tikzpicture}
	\begin{pgfonlayer}{nodelayer}
		\node [style=none] (0) at (0, 1) {};
		\node [style=none] (1) at (0, -1) {};
	\end{pgfonlayer}
	\begin{pgfonlayer}{edgelayer}
		\draw (0.center) to (1.center);
	\end{pgfonlayer}
\end{tikzpicture}\;\; \right\rrbracket &= \ket{0}\bra{0} + \ket{1}\bra{1}, \\
  \left\llbracket \input{tikz_files/green_spider.tikz} \right\rrbracket &= \ket{0}\t{m}\bra{0}\t{n} + e^{i\alpha}\ket{1}\t{m}\bra{1}\t{n}, \\
  \left\llbracket \input{tikz_files/red_spider.tikz} \right\rrbracket &= \ket{+}\t{l}\bra{+}\t{k} + e^{i\beta}\ket{-}\t{l}\bra{-}\t{k}, \\
  \left\llbracket \Hadamard \right\rrbracket &= \ket{+}\bra{0} + \ket{-}\bra{1}, \text{ and} \\
  \left\llbracket \halfscalar \right\rrbracket &= \frac{1}{2}.
 \end{align}
 Here, $\ket{\pm}=\frac{1}{\sqrt{2}}\left(\ket{0}\pm\ket{1}\right)$ and the zero-fold tensor product of any normalised bra or ket is taken to be 1.
 A wire crossing is a \SWAP{} operation:
 \begin{equation}
  \left\llbracket \input{tikz_files/swap.tikz} \right\rrbracket = \ket{00}\bra{00} + \ket{10}\bra{01} + \ket{01}\bra{10} + \ket{11}\bra{11},
 \end{equation}
 and cups and caps correspond to (non-normalised) Bell states and effects, respectively:
 \begin{equation}\label{eq:cup_Bell_state}
  \left\llbracket  \right\rrbracket = \ket{00} + \ket{11} \qquad\text{and}\qquad  \left\llbracket \input{tikz_files/cap_diagram.tikz} \right\rrbracket = \bra{00} + \bra{11}.
 \end{equation}
 Let:
 \begin{center}
  \gendiagram{$D$} $\quad$ and $\quad$ \gendiagram{$D'$}
 \end{center}
 denote two arbitrary diagrams. Then:
 \begin{equation}
  \left\llbracket \gendiagram{$D$} \; \gendiagram{$D'$} \right\rrbracket = \left\llbracket \gendiagram{$D$} \right\rrbracket \otimes \left\llbracket \gendiagram{$D'$} \right\rrbracket,
 \end{equation}
 and, assuming the number of outputs of $D$ is equal to the number of inputs of $D'$:
 \begin{equation}
  \left\llbracket \input{tikz_files/composite_diagram.tikz} \right\rrbracket = \left\llbracket \gendiagram{$D'$} \right\rrbracket \circ \left\llbracket \gendiagram{$D$} \right\rrbracket.
 \end{equation}
 For wires, \Hadamard, and \halfscalar{}, the dagger is identical to the original arrow.
 For green spiders, we have:
 \begin{equation}\label{eq:dagger}
  \left\llbracket \input{tikz_files/green_spider.tikz} \right\rrbracket^\dagger = \left\llbracket \input{tikz_files/green_spider_adjoint.tikz} \right\rrbracket,
 \end{equation}
 and similarly for red spiders.
\end{dfn}

A green or red spider with no legs and phase $\pi$ represents the 1 by 1 zero matrix:
\begin{equation}
 \left\llbracket \scalar{gn, label={[gphase]right:$\pi$}} \right\rrbracket = 0 = \left\llbracket \scalar{rn, label={[rphase]right:$\pi$}} \right\rrbracket.
\end{equation}
An empty diagram on the other hand represents the 1 by 1 identity matrix:
\begin{equation}
 \llbracket\quad\rrbracket = 1.
\end{equation}

For spiders with legs, there is no strict distinction between inputs and outputs as the former can be transformed into the latter, or conversely, by simply bending the wire, e.g.:
\begin{equation}
 \left\llbracket \input{tikz_files/green_spider_bend.tikz} \right\rrbracket = \left\llbracket \input{tikz_files/green_spider_bend2.tikz} \right\rrbracket.
\end{equation}

\subsection{Terminology for \ZX-calculus diagrams}
\label{s:ZX_terminology}

We have already introduced some terminology for specific elements of \ZX-calculus diagrams; here we define additional commonly-used terms.

\begin{dfn}
 A red or green spider with exactly one input and one output is called a \emph{phase shift}.
\end{dfn}

Phase shifts are the only unitary spiders, e.g.:
\begin{equation}
 \intf{ \phase{gn,label={[gphase]right:$\alpha$}} } = \begin{pmatrix} 1 & 0 \\ 0 & e^{i\alpha} \end{pmatrix}.
\end{equation}
The yellow nodes are also called \emph{Hadamard nodes}.

\begin{dfn}
 A \emph{state diagram} is a \ZX-calculus diagram with no inputs and a non-zero number of outputs.
\end{dfn}

This agrees with the usual definition of quantum states because diagrams with no inputs and some non-zero number $n$ of outputs represent linear maps from $\CC$ to $\left(\CC^2\right)\t{n}$, which are in one-to-one correspondence with (not necessarily normalised) pure quantum states.
Similarly, diagrams with a non-zero number of inputs and no outputs are in one-to-one correspondence with (not necessarily normalised) outcomes of pure projective measurements.

\begin{dfn}
 An \emph{effect} is a \ZX-calculus diagram with no outputs and a non-zero number of inputs.
\end{dfn}

\begin{dfn}
 A \ZX-calculus diagram with no inputs and no outputs is called a \emph{scalar} or a \emph{scalar diagram}.
\end{dfn}

The \ZX-calculus thus provides a unified notation in which states, operators, and measurement effects stand on equal footing.

The above definitions also apply to parts of larger diagrams, for example we often consider the \emph{scalar part} of a diagram, by which we mean any fragments of the diagram that are disconnected from any inputs or outputs of the diagram as a whole.

\begin{ex}
 The scalar part of the following diagram:
 \begin{equation}
  \innerprodgr{} \; \input{tikz_files/controlled-NOT_acute.tikz}
 \end{equation}
 is \innerprodgr.
\end{ex}

\subsection{Universality of the \ZX-calculus}
\label{s:ZX_universal}

In Section \ref{s:universality}, we gave the definition of universality for a graphical language, which requires that any process in the underlying theory can be represented graphically.
We show that the \ZX-calculus is universal by translating a universal set of quantum gates into the \ZX-calculus.

It is well-known that quantum circuits built from controlled-\NOT{} gates and single-qubit gates are universal for unitary operators on qubits \cite{nielsen_quantum_2010}.
Furthermore, any pure quantum state on $n$ qubits can be constructed by applying some unitary operator to $n$ copies of the state $\ket{0}$.
A similar result holds for pure measurement effects.

\begin{lem}
 The \ZX-calculus with the interpretation $\intf{-}$ as given in Definition \ref{dfn:interpretation_functor} is universal for pure state qubit quantum mechanics.
\end{lem}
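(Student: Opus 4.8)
The plan is to reduce universality to the representability of a small collection of building blocks, and then exhibit each block as a \ZX-calculus diagram. By the facts already quoted, circuits built from controlled-\NOT{} and arbitrary single-qubit gates are universal for unitaries, every pure state on $n$ qubits is $U\ket{0}\t{n}$ for some such unitary $U$, and every pure measurement effect is the dagger of a state. Hence it suffices to show that (i) the controlled-\NOT{} gate, (ii) an arbitrary single-qubit unitary, and (iii) the state $\ket{0}$ are each realizable as diagrams, together with (iv) arbitrary complex scalars, which are needed to correct normalization and global phase, and to observe that (v) the dagger functor of the \ZX-calculus supplies effects from states.

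Blocks (i) and (iii) are immediate: controlled-\NOT{} is the diagram of Figure \ref{fig:circuit-ZX}, and a red node with no inputs and one output has interpretation $\ket{+}+\ket{-}=\sqrt2\,\ket{0}$, i.e.\ the state $\ket{0}$ up to a scalar. For (ii) I would use the Euler decomposition of single-qubit unitaries. A green phase shift has interpretation $\ket{0}\bra{0}+e^{i\alpha}\ket{1}\bra{1}$, a $Z$-type rotation up to global phase, while a red phase shift is the corresponding $X$-type rotation up to global phase. Since every single-qubit unitary can be written, up to a global phase $e^{i\delta}$, as a product $Z_\alpha X_\beta Z_\gamma$ of such rotations, composing a green, then a red, then a green phase shift in sequence realizes any single-qubit unitary up to the scalar $e^{i\delta}$ and up to normalization.

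The remaining work, handled by (iv), is to produce the outstanding scalar factors exactly. A legless green spider with phase $\alpha$ has interpretation $1+e^{i\alpha}$, and the black star has interpretation $\tfrac12$; diagrams built by juxtaposing copies of these multiply their scalar values. I would check that products of such factors realize every complex number: the values $1+e^{i\alpha}=2\cos(\alpha/2)\,e^{i\alpha/2}$ cover all arguments in $(-\tfrac\pi2,\tfrac\pi2)$ and all moduli in $(0,2]$, the factor $\tfrac12$ rescales the modulus, and since $\tfrac12(1+e^{i\pi/2})^2=i$ one also obtains the phase $e^{i\pi/2}$ and hence, by squaring, $-1$; combining these covers every argument in $(-\pi,\pi]$ and (with powers of $\tfrac12$ and of $2$) every positive modulus, while $\alpha=\pi$ yields the scalar $0$. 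Multiplying the diagram of (ii) by an appropriate scalar then fixes its global phase and normalization exactly, and the same scalars normalize the states assembled from $\ket{0}\t{n}$ in (iii). Finally, effects are obtained from the corresponding state diagrams by applying the dagger functor of Definition \ref{dfn:interpretation_functor}, which flips the diagram upside down and negates the phases.

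The main obstacle is step (iv): the phase-shift construction of single-qubit unitaries and the circuit construction of states only determine the target map up to an undetermined scalar, so the argument genuinely depends on the \ZX-calculus being able to realize an \emph{arbitrary} complex number from the meagre generators $1+e^{i\alpha}$ and $\tfrac12$. Verifying that these generate all of $\CC$ — in particular producing exact global phases such as $-1$ and arbitrary moduli — is the one place where a little care is required; everything else is a direct translation of the cited universality facts.
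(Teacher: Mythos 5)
Your proposal is correct, and its skeleton is the paper's: translate controlled-\NOT{}, use the Euler decomposition of a single-qubit unitary into green--red--green phase shifts, build $\ket{0}$ from a legless red node, obtain effects by the dagger, and patch up the remaining scalar freedom. The one genuine divergence is the scalar step. The paper never generates arbitrary moduli: it exhibits a single small diagram whose interpretation is \emph{exactly} $e^{i\phi}$ (a green state with phase $\phi$ paired with a red $\pi$ effect, normalised by one star and one green--red inner product, the latter two having values $\tfrac12$ and $\sqrt2$), and since unitaries sandwiched between computational basis states and effects only ever call for scalars of the form $2^{-r/2}e^{i\phi}$, the stars and inner-product scalars supply all the needed normalisation directly. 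Your route instead proves the stronger statement that the legless spiders $1+e^{i\alpha}$ together with $\tfrac12$ generate all of $\CC$, which buys you exact representability of \emph{arbitrary} (unnormalised) linear maps, at the cost of the generation argument itself.

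On that argument, one joint is looser than your prose suggests: a single factor $1+e^{i\alpha}=2\cos(\alpha/2)e^{i\alpha/2}$ has its modulus and argument coupled, so ``all arguments'' and ``all moduli'' are not independently available from one factor, and fixing the modulus of $z=re^{i\theta}$ introduces a phase that must then be cancelled by a factor that itself carries a modulus. The patch is easy and worth writing down: conjugate pairs give exact positive reals, $(1+e^{i\alpha})(1+e^{-i\alpha})=4\cos^2(\alpha/2)\in(0,4]$, and together with powers of $\tfrac12$ these realise every positive real; then $z$ with $\theta\in(-\tfrac\pi2,\tfrac\pi2)$ is $(1+e^{i2\theta})$ times a positive real, the boundary and obtuse cases follow by multiplying by $i=\tfrac12(1+e^{i\pi/2})^2$ and $-1=i^2$, and $\alpha=\pi$ gives $0$. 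With that lemma made explicit, your proof is complete and, on scalars, slightly more general than the paper's.
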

\begin{proof}
 The normalised computational basis states are represented by the following \ZX-calculus diagrams:
 \begin{equation}
  \ket{0} = \intf{\halfscalar \; \innerprodgr \; \state{rn}} \quad \text{and} \quad \ket{1} = \intf{\halfscalar \; \innerprodgr \; \state{rn,label={[rphase]right:$\pi$}}},
 \end{equation}
 and the controlled-\NOT{} gate $C_X$ by:
 \begin{equation}\label{eq:cNOT_acute}
  C_X = \intf{\innerprodgr \; \input{tikz_files/controlled-NOT_acute.tikz}}.
 \end{equation}
 According to the Euler decomposition rule for unitary operators, any single-qubit unitary can be expressed, up to global phase, as a product of three rotations about two different axes.
 The red and green phase shifts are such rotations.
 Furthermore, a complex phase can be represented in the \ZX-calculus as:
 \begin{equation}
  e^{i\phi} = \intf{\halfscalar \; \innerprodgr \; \innerprod{gn,label={[gphase]right:$\phi$}}{rn,label={[rphase]right:$\pi$}}}.
 \end{equation}
 Thus, for any single-qubit unitary $U$, there exist angles $\alpha,\beta,\gamma,\phi\in(-\pi,\pi]$ such that:
 \begin{equation}
  U = \left\llbracket \halfscalar \; \innerprodgr \; \innerprod{gn,label={[gphase]right:$\phi$}}{rn,label={[rphase]right:$\pi$}} \; \input{tikz_files/Euler_dec_general.tikz}\right\rrbracket.
 \end{equation}
 Hence any pure quantum operator on qubits can be represented in the \ZX-calculus by expressing it as a unitary operator sandwiched between computational basis states and effects, and decomposing the unitary operator into a circuit made of controlled-\NOT{} and single-qubit gates.
\end{proof}

Alternatively, computations in the measurement-based quantum computing (MBQC) paradigm can also be straightforwardly translated into \ZX-calculus diagrams if an extra piece of notation is added to keep track of the propagating error corrections \cite{duncan_rewriting_2010}.

\section{Rewrite rules}
\label{s:rewrite_rules}

The \ZX-calculus, as introduced in the previous section, is a rigorous graphical language for a dagger compact closed category, cf.\ Section \ref{s:graphical_languages_algebraic}.
Therefore two \ZX-calculus diagrams that are equal up to graph isomorphism automatically represent the same transformation.
Nevertheless, as explained in Section \ref{s:graphical_rewriting_formal_systems}, not all equalities satisfied by linear maps between Hilbert spaces follow from the dagger compact closed structure.

It thus becomes necessary to introduce additional rewrite rules that allow the derivation of more specific equalities between \ZX-calculus diagrams.

We first introduce some notational conventions and a ``meta-rule''.
The explicit graphical rewrite rules of the \ZX-calculus are given in Section \ref{s:ZX_rules}.
Section \ref{s:ZX_derived} contains some additional rewrite rules which can be derived from the ones in the previous section but which are used commonly enough that they merit being stated in their own right.
Finally, we argue that the rewrite rules given here are sound.

\subsection{Meta-rules and notational conventions}
\label{s:ZX_conventions}

Due to the symmetry between red and green spiders, as well as that between diagrams and their Hermitian adjoints, we shorten the list of explicitly-stated rules by adopting the following principles:
\begin{itemize}
 \item Any rewrite rule can also be applied with the colours red and green swapped.
 \item Any rewrite rule can also be applied upside-down.
\end{itemize}
These conventions are consistent with the interpretation map, as shown in Section \ref{s:soundness}.

Furthermore, diagrams of the \ZX-calculus obey the following meta-rule:
\begin{quotation}
 ``Only the topology matters.''
\end{quotation}
This ``topology rule'' is the explicit statement of the fact that two diagrams are the same whenever they are equal up to graph isomorphism (cf.\ Section \ref{s:graphical_languages_algebraic}), together with the fact that inputs and outputs of spiders do not need to be distinguished.
The latter can be derived from the spider and cup rules below.

\begin{ex}
 One application of the topology rule is the fact that scalar subdiagrams can be placed anywhere in a larger diagram:
 \begin{equation}
  \halfscalar \; \state{gn} = \begin{tikzpicture}
	\begin{pgfonlayer}{nodelayer}
		\node [style=bscalar] (0) at (0, -0.75) {};
		\node [style=gn] (1) at (0, -0) {};
		\node [style=none] (2) at (0, 0.5) {};
	\end{pgfonlayer}
	\begin{pgfonlayer}{edgelayer}
		\draw (2.center) to (1);
	\end{pgfonlayer}
\end{tikzpicture} = \state{gn} \; \halfscalar.
 \end{equation}
\end{ex}

\begin{ex}
 Components can be moved around as long as the connections -- including connections to inputs and outputs -- remain the same:
 \begin{equation}
  \state{rn} \; \state{gn} = \begin{tikzpicture}
	\begin{pgfonlayer}{nodelayer}
		\node [style=gn] (0) at (-0.5, -0.5) {};
		\node [style=rn] (1) at (0.5, -0.5) {};
		\node [style=none] (2) at (0.5, 0.5) {};
		\node [style=none] (3) at (1.5, 0.5) {};
	\end{pgfonlayer}
	\begin{pgfonlayer}{edgelayer}
		\draw (2.center) to (1);
		\draw [in=-90, out=90, looseness=1.00] (0) to (3.center);
	\end{pgfonlayer}
\end{tikzpicture} \neq \state{gn} \; \state{rn}.
 \end{equation}
 For the first equality, the first output is connected to the red node and the second to the green node in both diagrams, this means they are equal independent of the exact placement of the red and green nodes.
 The third diagram on the other hand is not equal to the first two because in it the first output is connected to the green node and the second to the red node.
 This is consistent with the interpretation map:
 \begin{equation}
  \left\llbracket \state{rn} \; \state{gn} \right\rrbracket = 2\ket{0}\otimes\ket{+} = \left\llbracket  \right\rrbracket,
 \end{equation}
 whereas:
 \begin{equation}
  \left\llbracket \state{gn} \; \state{rn} \right\rrbracket = 2\ket{+}\otimes\ket{0}.
 \end{equation}
\end{ex}

As an extension of the topology meta rule, wires internal to a diagram can be drawn as horizontal lines if this does not introduce any ambiguity as to the interpretation of the diagram as a whole.
For example, consider the following equality:
\begin{equation}
 \left\llbracket \input{tikz_files/controlled-NOT_grave.tikz} \right\rrbracket = \left\llbracket \input{tikz_files/controlled-NOT_acute.tikz} \right\rrbracket = \left\llbracket \input{tikz_files/controlled-NOT_circonflexe.tikz} \right\rrbracket = \left\llbracket \input{tikz_files/controlled-NOT_vee.tikz} \right\rrbracket,
\end{equation}
i.e.\ it does not matter whether the wire connecting the two spiders is considered to be an input or output of the green spider, and an input or output of the red spider.
The diagram:
\begin{equation}
 \innerprodgr \; \input{tikz_files/controlled-NOT.tikz}
\end{equation}
therefore has a well defined interpretation -- choosing any orientation of the internal wire gives the same result -- and is hence allowed as a component of \ZX-calculus diagrams.
In fact, this diagram is often used instead of \eqref{eq:cNOT_acute} to represent the controlled-\NOT{} gate.

This bit of notational convention applies only to internal wires, i.e.\ wires that are connected to nodes on both ends.
For dangling wires it must always be clear whether the dangling end is an input or an output of the diagram, otherwise the domain and target of the diagram are not well-defined and it thus cannot be an arrow in the category of ZX-calculus diagrams.

\subsection{Explicit rewrite rules}
\label{s:ZX_rules}

There are different ways of stating the rewrite rules of the \ZX-calculus, and several new rules have been added since it was first introduced.
In this work, we use the following set of rules, where $\alpha,\beta\in(-\pi,\pi]$, addition of phases is modulo $2\pi$, and $n,m,k,l$ are non-negative integers:
\begin{itemize}
 \item the spider rule:
  \begin{equation}
   \input{tikz_files/spider.tikz},
  \end{equation}
 \item the loop rule:
  \begin{equation}
   \input{tikz_files/loop.tikz},
  \end{equation}
 \item the cup rule:
  \begin{equation}
   \input{tikz_files/cup.tikz},
  \end{equation}
 \item the bialgebra rule:
  \begin{equation}
   \input{tikz_files/bialgebra_scalars.tikz},
  \end{equation}
 \item the copy rule:
  \begin{equation}
   \input{tikz_files/copy_scalars.tikz}\;,
  \end{equation}
 \item the $\pi$-copy rule:
  \begin{equation}
   \input{tikz_files/pi-copy.tikz},
  \end{equation}
 \item the $\pi$-commutation rule:
  \begin{equation}
   \input{tikz_files/pi-commutation_scalars.tikz},
  \end{equation}
 \item the colour change rule:
  \begin{equation}
   \input{tikz_files/colour.tikz},
  \end{equation}
 \item the Euler decomposition rule:
  \begin{equation}\label{eq:Euler_dec}
   \input{tikz_files/Euler_dec_scalars.tikz},
  \end{equation}
 \item the star rule:
  \begin{equation}
   \halfscalar \; \scalar{gn} = \quad,
  \end{equation}
 \item the zero rule:
  \begin{equation}
   \input{tikz_files/zero.tikz}\;,
  \end{equation}
 \item and the zero scalar rule:
  \begin{equation}
   \scalar{gn, label={[gphase]right:$\pi$}} \; \scalar{gn, label={[gphase]right:$\alpha$}} = \scalar{gn, label={[gphase]right:$\pi$}}.
  \end{equation}
\end{itemize}

The rewrite rules for the \ZX-calculus are not directed, i.e.\ in all cases the left-hand side (LHS) can be used to replace the right-hand side (RHS), or the RHS can be used to replace the LHS.
Note that rules with varying numbers of inputs and/or outputs also hold when any of those numbers are zero.
\begin{ex}
 The $\pi$-copy rule with $m=0$ yields:
 \begin{equation}
  \begin{tikzpicture}
	\begin{pgfonlayer}{nodelayer}
		\node [style=rn] (0) at (0, 0.75) {};
		\node [label={[gphase]right:$\pi$}, style=gn] (1) at (0, -0) {};
		\node [style=none] (2) at (0, -0.75) {};
	\end{pgfonlayer}
	\begin{pgfonlayer}{edgelayer}
		\draw (0) to (2.center);
	\end{pgfonlayer}
\end{tikzpicture} \; = \; \effect{rn},
 \end{equation}
 and the colour change rule for $n=0=m$ is:
 \begin{equation}
  \scalar{rn, label={[rphase]right:$\alpha$}} \; = \; \scalar{gn, label={[gphase]right:$\alpha$}}.
 \end{equation}
\end{ex}

The first eight rewrite rules (from the spider rule up to and including the colour change rule) are part of the original formulation of the \ZX-calculus \cite{coecke_interacting_2008}, though some of them appear here with slight modifications.
A scalar-free version of the Euler decomposition rule was first introduced in \cite{duncan_graph_2009}, where it was shown to be independent of the previously-existing \ZX-calculus rewrite rules.
The zero rule was suggested by Kissinger \cite{kissinger_communication_2014} in the context of the scalar-free \ZX-calculus, where this rule is sufficient for the derivation of a normal form for zero diagrams.
The star rule and the zero scalar rule were introduced by the author \cite{backens_making_2015}.

\subsection{Derived rewrite rules}
\label{s:ZX_derived}

Some additional rules which are often included in lists of \ZX-calculus rewrite rules in their own right, can in fact be derived from the rules given in the previous section.
\begin{itemize}
 \item The \emph{identity rule}:
  \begin{equation}
   \input{tikz_files/identity.tikz}
  \end{equation}
  follows from the spider rule, the cup rule, the upside-down cup rule, and the topology meta-rule:
  \begin{equation}
   \input{tikz_files/identity_derivation.tikz}\;.
  \end{equation}
 \item The Hadamard node can be shown to be self-inverse using the colour change rule for $n=m=1$ and the above identity rule, as well as a colour-swapped version of the identity rule:
  \begin{equation}
   \input{tikz_files/Had_inv_derivation.tikz}\;.
  \end{equation}
 \item The \emph{Hopf rule}:
  \begin{equation}
   \input{tikz_files/Hopf_scalars.tikz}
  \end{equation}
  follows from the cup and spider rules and their upside-down and/or colour-changed equivalents, the bialgebra rule, and the copy rule, in combination with the topology meta-rule:
  \begin{equation}
   \input{tikz_files/Hopf_derivation_1.tikz} \;\; = \;\; \input{tikz_files/Hopf_derivation_1-5.tikz} \;\; = \;\; \input{tikz_files/Hopf_derivation_2.tikz} \;\; = \;\; \input{tikz_files/Hopf_derivation_3.tikz} \;\; = \;\; \input{tikz_files/Hopf_derivation_4.tikz} \;\; = \;\; \begin{tikzpicture}
	\begin{pgfonlayer}{nodelayer}
		\node [style=rn] (0) at (0.5, -0.5) {};
		\node [style=gn] (1) at (0.5, 0.5) {};
		\node [style=none] (2) at (0.5, 1.25) {};
		\node [style=none] (3) at (0.5, -1.25) {};
	\end{pgfonlayer}
	\begin{pgfonlayer}{edgelayer}
		\draw (2.center) to (1);
		\draw (0) to (3.center);
	\end{pgfonlayer}
\end{tikzpicture}\;.
  \end{equation}
\end{itemize}
Furthermore, while the star rule is defined using \scalar{gn}, this scalar does not actually appear very often in the other rewrite rules.
The following variant of the star rule is therefore useful.

\begin{lem}\label{lem:variant_star_rule}
 The star node is inverse to \innerprodgr{} \innerprodgr{} :
 \begin{equation}\label{eq:halfscalar_innerprodgr2}
  \halfscalar \; \innerprodgr \; \innerprodgr \; = \quad .
 \end{equation}
\end{lem}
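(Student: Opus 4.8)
The plan is to reduce the claim to the scalar identity $\innerprodgr \; \innerprodgr = \scalar{gn}$ and then finish with the star rule. Once this identity is in hand, prepending a star node and invoking the star rule, which equates $\halfscalar \; \scalar{gn}$ with the empty diagram, shows that $\halfscalar \; \innerprodgr \; \innerprodgr$ rewrites to $\halfscalar \; \scalar{gn}$ and hence to the empty diagram, which is exactly \eqref{eq:halfscalar_innerprodgr2}. So all the real work lies in showing that two copies of the green--red inner-product scalar equal a single legless green node.

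To prove $\innerprodgr \; \innerprodgr = \scalar{gn}$, I would start from the copy rule, taking care to note that its left-hand side already carries a built-in $\innerprodgr$ scalar factor (the green--red pair drawn alongside the copier). Read with this factor made explicit, the rule asserts that $\innerprodgr$ times a green state fed into a phase-free red node with one input and two outputs equals two disjoint green states. I would then cap both free outputs, on each side, with a single phase-free green node $\Gamma$ having two inputs and no outputs. On the right-hand side the two copied green states together with $\Gamma$ are all green and connected, so the spider rule fuses them into one green node with no legs, that is, $\scalar{gn}$. On the left-hand side one is left with $\innerprodgr$ times the diagram in which the red node's two outputs both run into $\Gamma$.

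The key simplification is then to collapse this remaining red--green two-cycle. By the cup rule (applied upside down via the meta-rule), $\Gamma$ is just a bare cap, so the two outputs of the red node are joined by a plain wire; this is precisely a self-loop on the red node, which the loop rule removes, leaving a red node with one input and no outputs. Composed with the green state below it, this is a second copy of $\innerprodgr$. Hence the capped left-hand side equals $\innerprodgr \; \innerprodgr$, and comparing the two sides yields $\innerprodgr \; \innerprodgr = \scalar{gn}$, from which the lemma follows as described above.

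I expect the main obstacle to be the scalar bookkeeping rather than the topology. The structural rules (copy, and likewise bialgebra and Hopf) are scalar-exact only because of the $\innerprodgr$ factors attached to them, so these factors must be tracked precisely: it is exactly the copy rule's single $\innerprodgr$ that, after capping, supplies the second inner-product scalar on the left. A more naive attempt --- capping so that both sides reduce symmetrically, or reaching for the Hopf rule, which carries $\innerprodgr \; \innerprodgr$ --- tends to produce only circular identities of the form $\innerprodgr \; \innerprodgr = \innerprodgr \; \innerprodgr$. The clean route depends on pairing the copy rule with the cup-and-loop collapse, so that the red node is genuinely eliminated and the two sides land on inequivalent scalar shapes.
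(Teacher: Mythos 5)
Your proof is correct, and it reaches the key identity $\innerprodgr\;\innerprodgr\;=\;\scalar{gn}$ by a genuinely different route than the paper. The paper works in the opposite direction: it starts from $\innerprodgr\;\innerprodgr$, uses the star rule to spawn $\halfscalar\;\scalar{gn}$, unfolds $\scalar{gn}$ into a green--red two-cycle via the loop rule (run in reverse) and the cup rule, pulls the cycle apart with the spider rule, and then invokes the \emph{Hopf rule} -- whose left-hand side carries exactly the two $\innerprodgr$ factors -- to disconnect the cycle, after which spider fusion and the star rule finish the job. You instead cap both sides of the \emph{copy rule} with a green two-input effect: on the right, spider fusion yields $\scalar{gn}$ directly; on the left, the upside-down cup rule turns the green cap into a bare cap, the resulting self-loop on the red copier is removed by the (colour-swapped) loop rule, and the surviving red effect on the green state is the second $\innerprodgr$, the first being the one built into the copy rule itself. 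Your route is arguably the more elementary one: the Hopf rule is itself derived in the paper from the copy and bialgebra rules, so you bypass both Hopf and bialgebra entirely and work only with basic rules (copy, cup, loop, spider, star). Your closing remark about scalar bookkeeping is also on target -- it is precisely the single $\innerprodgr$ attached to the copy rule that makes the two sides land on inequivalent scalar shapes, whereas the Hopf rule's symmetric $\innerprodgr\;\innerprodgr$ factor would, used naively, only return a tautology. One small point worth making explicit in a final write-up: the scalar you obtain on the left has the red node above the green one, so you should cite the topology meta-rule (or Lemma~\ref{lem:innerprod_wlog}) to identify it with $\innerprodgr$; this is immediate but deserves a word.
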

\begin{proof}
 Using the star rule, loop rule, cup rule, spider rule, and Hopf rule, we have:
 \begin{equation}
  \innerprodgr \; \innerprodgr \;
  = \; \halfscalar \; \innerprodgr \; \innerprodgr \; \scalar{gn} \;
  = \; \halfscalar \; \innerprodgr \; \innerprodgr \; \input{tikz_files/inverse_der4.tikz} \;
  = \; \halfscalar \; \innerprodgr \; \innerprodgr \; \input{tikz_files/inverse_der3.tikz} \;
  =  \; \halfscalar \; \innerprodgr \; \innerprodgr \; \input{tikz_files/inverse_der2.tikz} \;
  =  \; \halfscalar \; \begin{tikzpicture}
	\begin{pgfonlayer}{nodelayer}
		\node [style=gn] (0) at (-1, 0.5) {};
		\node [style=gn] (1) at (-1, 1.25) {};
		\node [style=rn] (2) at (-1, -0.5) {};
		\node [style=rn] (3) at (-1, -1.25) {};
	\end{pgfonlayer}
	\begin{pgfonlayer}{edgelayer}
		\draw (1) to (0);
		\draw (2) to (3);
	\end{pgfonlayer}
\end{tikzpicture} \;
  = \; \halfscalar \; \scalar{gn} \; \scalar{rn} \;
  = \; \scalar{gn}.
 \end{equation}
 Thus the desired equality follows directly from the star rule.
\end{proof}

Of course any equality derivable from the \ZX-calculus rewrite rules can potentially be used as a rewrite rule in its own right.
We derive further rewrite rules in later sections as and when they become relevant.

\subsection{Soundness of the rewrite rules}
\label{s:ZX_sound}

For most of the rewrite rules listed in Section \ref{s:rewrite_rules}, it is straightforward to check that they are sound under the interpretation given in Definition \ref{dfn:interpretation_functor} by computing the matrices corresponding to the two diagrams making up the equality.
Soundness of rules involving arbitrary number of inputs and outputs can be proved by induction over those numbers.
For soundness of the spider rule, see \cite{coecke_interacting_2011}.

The topology meta rule is sound because of the properties of dagger compact closed categories, as described in Section \ref{s:graphical_languages_algebraic}.

The convention that all rules hold with the colours red and green swapped is justified by the colour change rule.
The convention that any rewrite rule can be applied upside-down is justified by the fact that taking the adjoint of both sides of an equation preserves the equality, together with the fact that all rewrite rules continue to be sound when the signs of all angles are flipped.
This latter fact can easily be confirmed for all of the rewrite rules by checking their interpretations.

\section{Stabilizer quantum mechanics}
\label{s:stabilizer_QM}

Stabilizer quantum mechanics (QM) is an extensively studied part of quantum theory, first introduced in the context of error-correcting codes \cite{gottesman_stabilizer_1997}.
It can be operationally described as the fragment of qubit QM where the only allowed operations are preparations or measurements in the computational basis and unitary transformations belonging to the Clifford group \cite{nielsen_quantum_2010}.
While stabilizer quantum computation is significantly less powerful than general quantum computation -- it can be efficiently simulated on classical computers and is provably less powerful than even general classical computation \cite{aaronson_improved_2004} -- stabilizer QM is nevertheless of central importance in areas such as error-correcting codes \cite{gottesman_stabilizer_1997} or measurement-based quantum computation \cite{raussendorf_one-way_2001}, and it is non-local.
In the following, we introduce the operational formulation of pure state qubit stabilizer QM along with some of its properties, and then show how to adapt the \ZX-calculus to this subtheory.
We also describe the binary formalism for stabilizer quantum theory \cite{calderbank_quantum_1997}, which is both at the heart of the efficient simulation and has also been used to derive interesting results about stabilizer states in their own right \cite{van_den_nest_graphical_2004}.

\subsection{The Pauli group and the Clifford group}
\label{s:Pauli_Clifford}

The Pauli operators:
 \begin{equation}
  X = \begin{pmatrix}0&1\\1&0\end{pmatrix},\quad Y = \begin{pmatrix}0&-i\\i&0\end{pmatrix},\quad\text{and}\quad Z = \begin{pmatrix}1&0\\0&-1\end{pmatrix}
 \end{equation}
have a central role in quantum mechanics because, together with the identity, they form a basis for all single-qubit unitaries under linear combinations.
Under multiplication, this set of operators gives rise to the following group.

\begin{dfn}
 The \emph{Pauli group} $P_1$ is the closure of the set $\{I,X,Y,Z\}$ under multiplication.
 It consists of the identity and Pauli matrices with multiplicative factors $\{\pm 1,\pm i\}$.
 This definition generalises to multiple qubits as follows: The Pauli group on $n$ qubits, $P_n$, consists of all tensor products of Pauli and identity matrices with phase factors $\{\pm 1, \pm i\}$, i.e.:
 \begin{equation}
  P_n = \Big\{\alpha g_1\otimes g_2\otimes\ldots\otimes g_n \,\Big|\, \alpha\in\{\pm 1,\pm i\}\text{ and } g_k\in\{I,X,Y,Z\}\text{ for } k=1,\ldots,n\Big\}.
 \end{equation}
 Elements of $P_n$ are often called \emph{Pauli products}.
\end{dfn}

A closely related groups of operators is the Clifford group, whose elements map the Pauli group back to itself under conjugation.

\begin{dfn}\label{dfn:Clifford}
 The \emph{Clifford group} on $n$ qubits, denoted $\mathcal{C}_n$, is the group of operators which normalise the Pauli group, i.e.:
 \begin{equation}
  \mathcal{C}_n = \left\{ U \,\middle|\, \forall g\in P_n: UgU^\dagger\in P_n\right\}.
 \end{equation}
\end{dfn}

While all global phase operators, i.e.\ unitaries of the form $e^{i\phi}I$ for some $\phi\in(-\pi,\pi]$, map the Pauli group back to itself, the Clifford group is usually taken to contain only those global phase operators for which $\phi$ is an integer multiple of $\pi/4$.
This is because those are the only global phase operators that can arise from products of other Clifford unitaries.
We follow that convention here.

Therefore the Clifford group for any $n>1$ is generated by the global phase operator $\omega=e^{i\pi/4}I$, as well as two single-qubit operators and one two-qubit operator \cite{nielsen_quantum_2010}, namely the phase operator:
\begin{equation}
 S = \begin{pmatrix}1&0\\0&i\end{pmatrix},
\end{equation}
the Hadamard operator:
\begin{equation}
 H = \frac{1}{\sqrt{2}}\begin{pmatrix}1&1\\1&-1\end{pmatrix},
\end{equation}
and the controlled-\NOT{} operator:
\begin{equation}
 C_X = \begin{pmatrix}1&0&0&0\\0&1&0&0\\0&0&0&1\\0&0&1&0\end{pmatrix}.
\end{equation}
Ignoring global phases, the group $\mathcal{C}_1$ of single-qubit Clifford unitaries has 24 elements.
It is generated by the phase and Hadamard operators, or, alternatively, by $R_Z$ and $R_X$, where $R_Z = S$ and $R_X = HSH$.

\begin{dfn}\label{dfn:local_Clifford_group}
 The \emph{local Clifford group} on $n$ qubits, $\mathcal{C}_1\t{n}$, consists of all $n$-fold tensor products of single-qubit Clifford operators.
\end{dfn}

The set of quantum states that can be prepared by applying a Clifford unitary to a computational basis state are the stabilizer states.
As Pauli-X is a Clifford operator, it suffices to consider state preparations starting from the all-zero state $\ket{0}\t{n}$.

\begin{dfn}
 A pure $n$-qubit quantum state is called a \emph{stabilizer state} if it can be prepared by applying an $n$-qubit Clifford unitary to the state $\ket{0}\t{n}$.
\end{dfn}

The term ``stabilizer quantum mechanics'' originates from the following property.

\begin{dfn}
 A unitary operator $U$ is said to \emph{stabilize} a quantum state $\ket{\psi}$ if:
 \begin{equation}
  U\ket{\psi}=\ket{\psi}.
 \end{equation}
\end{dfn}

The unitaries stabilizing a given quantum state can easily be seen to form a group: the identity operator stabilizes all states, multiplying two stabilizers of the same state gives another stabilizer, and if some unitary $U$ stabilizes a state $\ket{\psi}$ then so does its inverse $U^\dagger$.

\begin{thm}
 For each $n$-qubit stabilizer state $\ket{\psi}$, there exists some Abelian subgroup $S\subseteq P_n$ such that $\ket{\psi}$ is the unique state stabilized by all elements of $S$.
\end{thm}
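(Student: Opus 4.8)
The plan is to reduce the general statement to the special case of the all-zero state and then transport it along the Clifford unitary that defines $\ket{\psi}$. By definition, $\ket{\psi} = U\ket{0}\t{n}$ for some $n$-qubit Clifford unitary $U$, so the whole argument rests on first handling $\ket{0}\t{n}$ and then conjugating.

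First I would establish the base case. Consider the subgroup $S_0 = \langle Z_1, \ldots, Z_n\rangle \subseteq P_n$ generated by the Pauli-$Z$ operators $Z_k$ acting on qubit $k$ (tensored with identities elsewhere). These operators pairwise commute, so $S_0$ is Abelian, and being stabilized by all of $S_0$ is equivalent to being stabilized by each generator. Each $Z_k$ fixes $\ket{0}\t{n}$, so $\ket{0}\t{n}$ is stabilized by $S_0$. The crucial point is uniqueness: $Z_1, \ldots, Z_n$ form a commuting family whose simultaneous eigenspaces are exactly the one-dimensional spaces spanned by the computational basis states, with $\ket{b_1\cdots b_n}$ carrying eigenvalue pattern $((-1)^{b_1}, \ldots, (-1)^{b_n})$. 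Hence the joint $+1$-eigenspace is spanned by $\ket{0}\t{n}$ alone, so $\ket{0}\t{n}$ is the unique state (up to a nonzero scalar) stabilized by $S_0$.

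Next I would transport this along $U$. Set $S = \{ UgU^\dagger \mid g \in S_0 \}$. Because $U$ is Clifford, $UgU^\dagger \in P_n$ for every $g \in P_n$ (Definition \ref{dfn:Clifford}), so $S \subseteq P_n$. Conjugation by a fixed unitary is a group isomorphism, hence $S \cong S_0$ is again an Abelian subgroup. For any $h = UgU^\dagger \in S$ we compute $h\ket{\psi} = UgU^\dagger U\ket{0}\t{n} = Ug\ket{0}\t{n} = U\ket{0}\t{n} = \ket{\psi}$, so $\ket{\psi}$ is stabilized by all of $S$. For uniqueness, suppose $\ket{\phi}$ is stabilized by every element of $S$; then for each $g \in S_0$ we have $g(U^\dagger\ket{\phi}) = U^\dagger(UgU^\dagger)\ket{\phi} = U^\dagger\ket{\phi}$, so $U^\dagger\ket{\phi}$ is stabilized by $S_0$. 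The base case forces $U^\dagger\ket{\phi} \propto \ket{0}\t{n}$, i.e.\ $\ket{\phi} \propto \ket{\psi}$, as required.

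I expect the main obstacle to be the base case, specifically the claim that the joint $+1$-eigenspace of $\{Z_1, \ldots, Z_n\}$ is one-dimensional; once that is in hand, everything else is a routine transport of properties along the unitary isomorphism induced by conjugation. Some care is also needed with the word \emph{unique}, which must be read projectively (up to a nonzero scalar), since any scalar multiple of a stabilized state is again stabilized by the same group.
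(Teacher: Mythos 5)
Your proposal is correct and follows essentially the same route as the paper's own proof: establish uniqueness for $\ket{0}\t{n}$ with respect to the group generated by $Z_1,\ldots,Z_n$, then conjugate the stabilizer group by the Clifford unitary $U$ and transport uniqueness along $U^\dagger$. The only cosmetic differences are that you phrase the base case via simultaneous eigenspaces where the paper compares coefficients componentwise, and you make explicit the (correct) projective reading of ``unique,'' which the paper leaves implicit.
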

\begin{proof}
 First, consider the state $\ket{0}\t{n}$ and the set:
 \begin{equation}
  S_{\ket{0}\t{n}} = \Big\{g_1\otimes g_2\otimes\ldots\otimes g_n \,\Big|\, g_k\in\{I,Z\}\text{ for } k=1,\ldots,n\Big\}.
 \end{equation}
 It is straightforward to check that $S_{\ket{0}\t{n}}$ is an Abelian subgroup of $P_n$ and that each $\sigma\in S_{\ket{0}\t{n}}$ satisfies $\sigma\ket{0}\t{n}=\ket{0}\t{n}$.
 What remains to be shown is that $\ket{0}\t{n}$ is the unique state stabilized by all elements of $S_{\ket{0}\t{n}}$.

 Denote by $Z_k$ the Pauli product that consists of a Pauli-Z operator on the $k$-th qubit and identities everywhere else, i.e.:
 \begin{equation}
  Z_k = \underbrace{I\otimes\ldots\otimes I}_{k-1} \otimes Z \otimes \underbrace{I\otimes\ldots\otimes I}_{n-k}.
 \end{equation}
 Then $Z_k\in S_{\ket{0}\t{n}}$ for $k=1,\ldots,n$.
 Let:
 \begin{equation}
  \ket{\psi} = \sum_{x_1,\ldots,x_n\in\{0,1\}} \psi_{x_1\ldots x_n} \ket{x_1\ldots x_n}
 \end{equation}
 be an $n$-qubit state, where $\psi_{x_1\ldots x_n}\in\CC$ for all $x_1,\ldots,x_n\in\{0,1\}$.
 Now:
 \begin{equation}
  Z_k \ket{\psi} = \sum_{x_1,\ldots,x_n\in\{0,1\}} (-1)^{x_k} \psi_{x_1\ldots x_n} \ket{x_1\ldots x_n}.
 \end{equation}
 Thus, by component-wise comparison, $Z_k$ stabilizes $\ket{\psi}$ if and only if $\psi_{x_1\ldots x_n}=0$ whenever $x_k=1$.
 Therefore, as $Z_k\in S_{\ket{0}\t{n}}$ for $k=1,\ldots,n$, the only state stabilized by all elements of $S_{\ket{0}\t{n}}$ is the all-zero state $\ket{0}\t{n}$.

 Next, consider some stabilizer state $\ket{\phi}=U\ket{0}\t{n}$, where $U\in\mathcal{C}_n$.
 Let:
 \begin{equation}
  S_{\ket{\phi}} = \Big\{ U \sigma U^\dagger \,\Big|\, \sigma\in S_{\ket{0}\t{n}} \Big\}.
 \end{equation}
 It is straightforward to check that this, too, is an Abelian subgroup of $P_n$.
 Now for any $U \sigma U^\dagger\in S_{\ket{\phi}}$:
 \begin{equation}
  \left(U \sigma U^\dagger\right)\ket{\phi} = U \sigma U^\dagger U\ket{0}\t{n} = U \sigma \ket{0}\t{n} = U \ket{0}\t{n} = \ket{\phi},
 \end{equation}
 so all elements of $S_{\ket{\phi}}$ stabilize $\ket{\phi}$.
 To prove uniqueness, suppose there exists some other state $\ket{\phi'}$ that is stabilized by all elements of $S_{\ket{\phi}}$.
 For each $\sigma\in S_{\ket{0}\t{n}}$ there exists $\tau\in S_{\ket{\phi}}$ such that $\sigma = U^\dagger \tau U$.
 Therefore, $U^\dagger\ket{\phi'}$ must be stabilized by all elements of $S_{\ket{0}\t{n}}$.
 But we showed above that $\ket{0}\t{n}$ is the unique state with that property.
 Hence, $U^\dagger\ket{\phi'}=\ket{0}\t{n}$, i.e. $\ket{\phi'}=U\ket{0}\t{n}=\ket{\phi}$.
 Thus $\ket{\phi}$ is the unique state stabilized by all elements of $S_{\ket{\phi}}$.
\end{proof}

The group of Pauli products stabilizing a given state is often called its \emph{stabilizer group}.

\emph{Stabilizer scalars} are those complex numbers that can arise as outcomes of a stabilizer computation, i.e.\ a computation consisting of the preparation of the state $\ket{0}\t{n}$, application of a Clifford unitary, and a computational basis measurement on all $n$ qubits.
It is straightforward to check that stabilizer scalars take values of the form $2^{-r/2}e^{i\phi}$, where $r$ is a non-negative integer and $\phi$ is an integer multiple of $\pi/4$.

\subsection{Graph states}
\label{s:graph_states}

An important subset of the stabilizer states are the graph states, which consist of a number of qubits entangled with each other according to the structure of a mathematical graph.

\begin{dfn}\label{dfn:graph_state_QM}
 A \emph{finite graph} is a pair $G=(V,E)$ where $V$ is a finite set of vertices and $E$ is a collection of edges, which are denoted by pairs of vertices.
 A graph is \emph{undirected} if its edges are unordered pairs of vertices.
 It is \emph{simple} if it has no self-loops and there is at most one edge connecting any two vertices.
\end{dfn}

In the following, all graphs are assumed to be finite, undirected, and simple.
For such graphs, the collection of edges is in fact a set (as opposed to, say, a multi-set) and each edge is an unordered set of size two (rather than a tuple).
For an $n$-vertex graph, we often take $V=\{1,2,\ldots,n\}$.

\begin{dfn}\label{dfn:graph_state}
 Given a graph $G=(V,E)$ with $n=\abs{V}$ vertices, the corresponding \emph{graph state} $\ket{G}$ is the $n$-qubit state prepared as follows:
 \begin{itemize}
  \item for each vertex $v\in V$, a qubit prepared in the state $\ket{+}=H\ket{0}$, and
  \item for each edge $e=\{v,w\}\in E$, a controlled-Z operator applied to the appropriate qubits.
 \end{itemize}
\end{dfn}

Controlled-Z operators commute, therefore the order in which they are applied does not matter in the above definition.

All graph states are pure stabilizer states, as $H$ and controlled-Z are Clifford unitaries.
On the other hand, it is easy to see that not all stabilizer states are graph states: for example, the state $\ket{0}$ is a stabilizer state but not a graph state.
Yet there exists an interesting relationship between arbitrary stabilizer states and graph states.
Consider the equivalence relation on stabilizer states given by the local Clifford group.

\begin{dfn}
 Two $n$-qubit stabilizer states $\ket{\psi}$ and $\ket{\phi}$ are \emph{equivalent under local Clifford operations} if there exists $U\in\mathcal{C}_1\t{n}$ such that $\ket{\psi}=U\ket{\phi}$.
\end{dfn}

\begin{thm}[\cite{van_den_nest_graphical_2004}]\label{thm:stabilizer_graph_state}
 Any pure stabilizer state is equivalent to some graph state under local Clifford operations, i.e.\ any $n$-qubit stabilizer state $\ket{\psi}$ can be written, not generally uniquely, as $U\ket{G}$, where $U\in\mathcal{C}_1\t{n}$ and $\ket{G}$ is an $n$-qubit graph state.
\end{thm}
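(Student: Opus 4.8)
The plan is to argue in the binary (symplectic) representation of stabilizer groups, in which the linear-algebraic structure of the problem becomes transparent. Each element of $P_n$ is, up to a scalar in $\{\pm 1,\pm i\}$, specified by a pair of binary vectors $(x,z)\in\mathbb{F}_2^n\times\mathbb{F}_2^n$ recording where the $X$- and $Z$-factors sit, with a $Y$ corresponding to a coinciding $x$- and $z$-entry. Choosing $n$ independent generators of the stabilizer group of a state $\ket{\psi}$ then gives a full-rank $n\times 2n$ matrix $M=(X\mid Z)$ over $\mathbb{F}_2$, and the requirement that the generators commute is exactly the symplectic condition $XZ^T=ZX^T$ (signs cause no trouble over $\mathbb{F}_2$). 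A graph state $\ket{G}$ has the distinguished generators $X_i\prod_{j\sim i}Z_j$, so its stabilizer matrix is $(I\mid\Theta)$ with $\Theta$ the symmetric, zero-diagonal adjacency matrix of $G$. The goal is therefore to transform $M$ into this shape using only operations that either fix $\ket{\psi}$ or apply a unitary from $\mathcal{C}_1\t{n}$.

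Two families of moves are available. Replacing a generator by its product with another leaves the stabilizer group, and hence $\ket{\psi}$, unchanged; this is an arbitrary left multiplication of $M$ by an element of $GL(n,\mathbb{F}_2)$. A single-qubit Clifford on qubit $i$ acts by a fixed invertible transformation on the column pair $(i,n+i)$: in particular the Hadamard swaps these two columns, and the phase gate $S$ adds column $i$ to column $n+i$. I would then run the reduction in four steps. First, apply Hadamards on a suitable set of qubits so that the resulting $X$-block is invertible. Second, left-multiply by $X^{-1}$ to reach $(I\mid\Theta)$ with $\Theta=X^{-1}Z$. Third, observe that isotropy is basis-independent, so the symplectic condition now reads $\Theta^T=\Theta$ and $\Theta$ is symmetric. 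Fourth, apply $S$ on each qubit $i$ with $\Theta_{ii}=1$; because the $X$-block is the identity, this toggles exactly the diagonal entry $\Theta_{ii}$ and disturbs nothing else, clearing the diagonal. A final layer of local Pauli operators, themselves in $\mathcal{C}_1\t{n}$, fixes the signs of the generators to $+1$ (applying $Z_i$ flips the sign of generator $i$ alone). What remains is the stabilizer matrix $(I\mid\Theta)$ of the graph state $\ket{G}$ with adjacency matrix $\Theta$; undoing the local Cliffords exhibits $\ket{\psi}=U\ket{G}$ with $U\in\mathcal{C}_1\t{n}$, as required.

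The main obstacle is the first step: making the $X$-block invertible. Equivalently, I must show that for any full-rank $M=(X\mid Z)$ obeying the symplectic condition there is a subset of qubits on which swapping the $x$- and $z$-columns yields an invertible $n\times n$ matrix. This is the statement that the row space of $M$ — a maximal isotropic (Lagrangian) subspace of $\mathbb{F}_2^{2n}$ under the symplectic form — becomes the graph of a symmetric linear map after a coordinate-wise swap of symplectic pairs. I would establish it by a rank-increasing argument: while $\operatorname{rank}X<n$, full rank of $M$ together with the symplectic condition guarantees a qubit on which a Hadamard, followed by renewed row reduction, strictly increases $\operatorname{rank}X$, and since the rank is bounded by $n$ the process halts with $X$ invertible. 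Since this is precisely the linear-algebra lemma underlying \cite{van_den_nest_graphical_2004}, one may alternatively invoke that result directly. The remaining verifications — symmetry of $\Theta$, the effect of $S$ on the diagonal, and the Pauli sign corrections — are routine computations over $\mathbb{F}_2$.
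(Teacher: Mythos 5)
Your argument is correct: the greedy rank-increasing step does go through (if $\operatorname{rank}X=k<n$, row-reduce so the bottom $n-k$ rows are pure-$Z$ with block $Z_2$ of full row rank; since $\operatorname{rowspace}(Z_2)=\operatorname{rowspace}(X_1)^{\perp}$, there must be a column $i$ where $Z_2$ is nonzero but $e_i\notin\operatorname{rowspace}(X_1)$, and a Hadamard there raises the rank to $k+1$), and the remaining steps (symmetry of $\Theta$, clearing the diagonal with $S$, fixing signs with local Paulis) are exactly as you say. This is essentially the same route as the paper, which does not prove the theorem itself but cites \cite{van_den_nest_graphical_2004} and notes explicitly that the proof there is carried out in the very check-matrix formalism of Section \ref{s:binary_stabilizer_formalism} that you reconstruct.
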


A single stabilizer state may well be equivalent to more than one graph state under local Clifford operations.
To organise these equivalence classes, we require the following definition and theorem.

\begin{dfn}\label{dfn:local_complementation}
 Let $G=(V,E)$ be a graph and let $v\in V$ be a vertex.
 The \emph{local complementation about $v$} is the operation that inverts the subgraph generated by the neighbourhood of $v$ (but not including $v$ itself).
 Formally, a local complementation about $v\in V$ sends $G$ to the graph:
 \begin{equation}\label{eq:graph_local_complementation}
  G\star v = \left(V,E \triangle \big\{\{b,c\}\big|\{b,v\},\{c,v\}\in E\wedge b\neq c\big\}\right),
 \end{equation}
 where $\triangle$ denotes the symmetric set difference, i.e.\ $A\triangle B$ contains all elements that are contained either in $A$ or in $B$ but not in both.
\end{dfn}

\begin{ex}
 Consider the line graph on four vertices.
 Applying local complementations about vertex 3 and then vertex 2 yields the following sequence of graphs:
 \begin{equation}
  \input{tikz_files/lc_example.tikz}
 \end{equation}
\end{ex}

\begin{thm}[\cite{van_den_nest_graphical_2004}]\label{thm:graph_states_LC}
 Two graph states on the same number of qubits are equivalent under local Clifford operations if and only if there is a sequence of local complementations that transforms one graph into the other.
\end{thm}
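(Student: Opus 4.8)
The plan is to prove the two implications separately, working throughout in the binary (symplectic) formalism for stabilizer states described in Section \ref{s:stabilizer_QM}, in which the stabilizer group of an $n$-qubit stabilizer state is encoded by an $n \times 2n$ \emph{check matrix} over $\mathbb{F}_2$. For a graph state $\ket{G}$ the stabilizer generators are $X_v \prod_{w \in N(v)} Z_w$, one for each vertex $v$ with $N(v)$ its neighbourhood, so $\ket{G}$ is represented by the check matrix $\left( I \mid \Gamma \right)$, where $\Gamma$ is the adjacency matrix of $G$. This standard form is the key simplification that makes both directions tractable.

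For the ``if'' direction, I would show that a single local complementation lifts to a local Clifford unitary. Concretely, one verifies that the operation $U_v = e^{-i\frac{\pi}{4} X_v} \prod_{w \in N(v)} e^{i\frac{\pi}{4} Z_w}$ — a tensor product of single-qubit Clifford gates (a $\sqrt{X}$-type gate on $v$ and a $\sqrt{Z}$-type gate on each neighbour) — sends $\ket{G}$ to $\ket{G \star v}$ up to a global phase. This is a direct computation: conjugate each stabilizer generator of $\ket{G}$ by $U_v$ and check, via the multiplication rules for Pauli products, that the resulting group is exactly the stabilizer group of the graph $G \star v$ defined by \eqref{eq:graph_local_complementation}. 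Composing such unitaries then lifts any sequence of local complementations to an element of $\mathcal{C}_1\t{n}$, establishing LC-equivalence.

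For the ``only if'' direction — the substantial part — I would pass fully to the binary picture. Single-qubit Clifford operations act on check matrices, modulo Pauli factors (which leave the state invariant up to phase), via $2n \times 2n$ block-diagonal symplectic matrices over $\mathbb{F}_2$, each qubit carrying an element of the single-qubit symplectic group $\mathrm{SL}(2,\mathbb{F}_2) \cong S_3$. Allowing in addition a change of stabilizer generators by some $R \in \mathrm{GL}(n,\mathbb{F}_2)$, LC-equivalence of $\ket{G}$ and $\ket{G'}$ amounts to the existence of such a block-diagonal $Q$ and invertible $R$ with $R \left( I \mid \Gamma \right) Q = \left( I \mid \Gamma' \right)$. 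The heart of the theorem is then the purely combinatorial lemma that such $Q, R$ exist if and only if $\Gamma'$ is reachable from $\Gamma$ by a sequence of local complementations.

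I expect that combinatorial lemma to be the main obstacle. The difficulty is that applying $Q$ mixes the $X$- and $Z$-blocks, so $\left( I \mid \Gamma \right) Q$ need not remain in standard form; one must re-reduce it using the row operations $R$, and these reductions depend on which of the six elements of $S_3$ acts on each qubit. The real work is to show that this re-reduction, as the per-qubit symplectic choices range over all block-diagonal $Q$, generates exactly the local-complementation moves and nothing further — both that every admissible transformation decomposes into local complementations and that none escapes the local-complementation orbit. This requires the careful case-by-case linear algebra of van den Nest, Dehaene, and De Moor \cite{van_den_nest_graphical_2004}; by contrast, the unitary-level verification of the ``if'' direction is routine.
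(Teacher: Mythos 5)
Your proposal is correct and takes essentially the same route as the paper, which does not prove Theorem \ref{thm:graph_states_LC} itself but imports it from \cite{van_den_nest_graphical_2004}, explicitly noting that it was proved there in the binary check-matrix formalism. Your outline --- lifting a single local complementation to the local Clifford unitary $e^{-i\pi X_v/4}\prod_{w\in N(v)}e^{i\pi Z_w/4}$ for the ``if'' direction, and reducing the ``only if'' direction to the combinatorial lemma about per-qubit symplectic ($\mathrm{SL}(2,\mathbb{F}_2)\cong S_3$) actions on $\left(I\mid\Gamma\right)$, with that lemma deferred to the case analysis of van den Nest, Dehaene, and De Moor --- reconstructs precisely the structure of that cited proof.
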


\subsection{The binary formalism for stabilizer quantum mechanics}
\label{s:binary_stabilizer_formalism}

As laid out in Section \ref{s:Pauli_Clifford}, any $n$-qubit stabilizer state corresponds to an Abelian subgroup of $P_n$, the Pauli group on $n$ qubits.
While the size of such a subgroup is exponential in $n$, it can be represented by a set of generators for the group, whose size is linear in $n$ \cite{gottesman_stabilizer_1997}.
In fact, a generating set for the stabilizer group of a pure $n$-qubit stabilizer state contains exactly $n$ independent Pauli products \cite{nielsen_quantum_2010}.
``Independent'' here means that no element can be removed from the set without making the generated group smaller.
Each generating set uniquely determines the subgroup, but there are different choices of generators for the same group.

\begin{ex}\label{ex:Bell_state_stabilizer}
 The Bell state $\frac{1}{\sqrt{2}}\left( \ket{00} + \ket{11} \right)$ is a stabilizer state with stabilizer group:
 \begin{equation}
  \{ I\otimes I,\; Z\otimes Z,\; X\otimes X,\; -Y\otimes Y \}.
 \end{equation}
 This group can be represented, for example, by the generating set $\avg{Z\otimes Z, X\otimes X}$, or by $\avg{Z\otimes Z, -Y\otimes Y}$.
\end{ex}

Any Pauli product with phase $\pm 1$ can be uniquely expressed as a binary vector using the following encoding \cite{calderbank_quantum_1997,nielsen_quantum_2010}.

\begin{dfn}
 Consider an $n$-qubit Pauli product $g = (-1)^a g_1\otimes g_2\otimes\ldots\otimes g_n$, where $a\in\{0,1\}$ and $g_1,\ldots,g_n\in P_1$.
 The $2n+1$ bit \emph{check vector} associated with $g$ is:
 \begin{equation}
  ( z_1,\ldots, z_n | x_1,\ldots, x_n | a ),
 \end{equation}
 where, for $i=1,\ldots,n$:
 \begin{equation}
  x_i = \begin{cases} 0 &\text{if } g_i \in\{I,Z\} \\ 1 &\text{if } g_i \in\{X,Y\},\end{cases} \qquad\text{and}\qquad z_i = \begin{cases} 0 &\text{if } g_i \in\{I,X\} \\ 1 &\text{if } g_i \in\{Y,Z\}.\end{cases}
 \end{equation}
\end{dfn}

The check vectors corresponding to all Pauli products in a generating set can be combined into the columns of a $(2n+1)$ by $n$ matrix.
Yet for many applications it is easier to ignore the phase bits associated with the Pauli operators and focus only on the bits determining the matrices.
In the context of local Clifford equivalence, ignoring the phases of the Pauli products is reasonable because the phase of any generator of a stabilizer subgroup can be changed by a local Clifford operation that leaves the phases of all other generators invariant \cite{van_den_nest_graphical_2004}.

\begin{dfn}
 The \emph{check matrix} for a pure stabilizer state is the $2n$ by $n$ matrix that results from combining the check vectors associated with the generators of a stabilizer subgroup into the columns of a matrix, but ignoring the last bits (i.e.\ ignoring the phases of the Pauli products).
\end{dfn}

As there are different generating sets for the same stabilizer subgroup, there are different check matrices associated with the same stabilizer state.

\begin{ex}
 The check matrices for the Bell state derived from the generating sets given in Example \ref{ex:Bell_state_stabilizer} are:
 \begin{equation}
  \left( \begin{array}{cc} 1 & 0 \\ 1 & 0 \\ \hline 0 & 1 \\ 0 & 1 \end{array} \right) \qquad\text{and}\qquad \left( \begin{array}{cc} 1 & 1 \\ 1 & 1 \\ \hline 0 & 1 \\ 0 & 1 \end{array} \right).
 \end{equation}
\end{ex}

The commutativity condition for stabilizer subgroups translates into a self-orthogonality condition on check matrices.

\begin{lem}[\cite{van_den_nest_graphical_2004}]\label{lem:check_matrix}
 Let $J$ be the $2n$ by $2n$ matrix that has $n$ by $n$ identity matrices in its off-diagonal quadrants and zeroes elsewhere, i.e.:
 \begin{equation}\label{eq:J_definition}
  J = \begin{pmatrix} 0 & I \\ I & 0 \end{pmatrix},
 \end{equation}
 where $I$ is the $n$ by $n$ identity matrix.
 Then any $2n$ by $n$ check matrix $S$ is self-orthogonal under the symplectic inner product, i.e.\ it satisfies:
 \begin{equation}
  S^T J S = 0.
 \end{equation}
 Conversely, any self-orthogonal $2n$ by $n$ matrix is a valid check matrix, i.e.\ it corresponds to a commuting stabilizer subgroup.
\end{lem}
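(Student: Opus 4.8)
The plan is to reduce both halves of the lemma to a single observation: the symplectic inner product $v^{T} J w$ of two check vectors records exactly whether the corresponding Pauli products commute. First I would treat one qubit at a time. Writing each single-qubit operator $g_i$ through its pair $(z_i,x_i)\in\mathbb{F}_2^2$ as in the definition of the check vector — so that $I,X,Y,Z$ correspond to $(0,0),(0,1),(1,1),(1,0)$ — a short case check over the non-identity operators shows that $g_i$ and $h_i$, encoded by $(z_i,x_i)$ and $(z_i',x_i')$, anticommute if and only if $z_i x_i' + x_i z_i' = 1$ in $\mathbb{F}_2$, and commute otherwise (in particular whenever one of them is the identity or the two are equal).

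Next I would lift this to $n$ qubits. For Pauli products $g=\bigotimes_i g_i$ and $h=\bigotimes_i h_i$ one has $gh = (-1)^{k} hg$, where $k$ is the number of tensor factors on which $g_i$ and $h_i$ anticommute, since each commuting factor contributes no sign and each anticommuting factor a factor of $-1$. Hence $g$ and $h$ commute precisely when $k$ is even, i.e.\ when $\sum_i (z_i x_i' + x_i z_i') = 0 \pmod{2}$. Writing $v=(z\,|\,x)$ and $w=(z'\,|\,x')$ for the two check vectors with their phase bits discarded, this sum equals $v^{T} J w$, because $J$ swaps the $z$- and $x$-blocks before the dot product is taken. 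Thus $v^{T} J w = 0$ if and only if the two Pauli products commute, and this dictionary between commutativity and symplectic orthogonality is the heart of the lemma.

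Both directions then follow immediately. For the forward direction, if $S$ is a check matrix its columns $v_1,\ldots,v_n$ are the check vectors of generators of an Abelian stabilizer subgroup, so any two generators commute and the off-diagonal entries $(S^{T} J S)_{jk} = v_j^{T} J v_k$ vanish; the diagonal entries $v_j^{T} J v_j = 2\,(z_j\cdot x_j)$ vanish automatically modulo $2$, so $S^{T} J S = 0$. Conversely, given a self-orthogonal $S$, the vanishing of each $v_j^{T} J v_k$ says that the Pauli products read off from the columns pairwise commute — and the commutation relation is insensitive to the $\pm 1$ phase choices, since those are central — so with any fixed phases the columns generate an Abelian, hence valid, stabilizer subgroup of $P_n$.

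I expect the only genuine work to be the single-qubit case check together with the bookkeeping that turns ``an even number of anticommuting positions'' into the bilinear form $v^{T} J w$; once that dictionary is established, neither direction needs anything further. The one point worth flagging is that self-orthogonality constrains commutativity alone, not the phases or the linear independence of the columns, which is why the converse correctly yields a commuting subgroup rather than picking out a particular stabilizer state.
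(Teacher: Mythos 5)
Your proof is correct: the paper imports this lemma from van den Nest et al.\ without giving a proof, and your argument --- the single-qubit case check showing that $z_i x_i' + x_i z_i'$ detects anticommutation, the lift $gh = (-1)^k hg$ with $k$ the number of anticommuting tensor factors, and the identification $v^T J w = z\cdot x' + x\cdot z'$, with the diagonal entries $v_j^T J v_j = 2\,(z_j\cdot x_j) \equiv 0 \pmod 2$ vanishing automatically --- is exactly the standard derivation behind the cited result. Your closing caveat is also apt: since the $\pm 1$ phases are central and $S^T J S = 0$ encodes commutativity alone, the converse yields an Abelian subgroup of $P_n$ irrespective of phase choices or linear independence of the columns, which is all the lemma as stated claims.
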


Graph states have particularly straightforward representations as check matrices, making use of the following matrix encoding for finite simple graphs.

\begin{dfn}
 A graph $G$ with $n$ vertices can be described by a symmetric $n$ by $n$ matrix $\theta$ with binary entries such that $\theta_{ij}=1$ if and only if there is an edge connecting vertices $i$ and $j$.
 This matrix is known as the \emph{adjacency matrix}.
\end{dfn}

The adjacency matrix can be used to construct a generating set for the stabilizer subgroup of a graph state.

\begin{prop}[\cite{van_den_nest_graphical_2004}]
 Let $G=(V,E)$ be a graph with adjacency matrix $\theta$, and let $\ket{G}$ be the graph state corresponding to $G$ according to Definition \ref{dfn:graph_state}.
 Then the stabilizer group of $\ket{G}$ is generated by the following $n$ Pauli products:
 \begin{equation}
  X_v\otimes\bigotimes_{u\in V} Z_u^{\theta_{uv}} \quad\text{for all } v\in V.
 \end{equation}
 Here, subscripts indicate to which qubit the operator is applied.
\end{prop}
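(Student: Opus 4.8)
The plan is to realise $\ket{G}$ as the image of the fully unentangled state $\ket{+}\t{n}$ under the entangling Clifford unitary $U_G = \prod_{\{u,w\}\in E} C_Z^{uw}$, where $C_Z^{uw}$ denotes a controlled-$Z$ on qubits $u$ and $w$, and then to push the obvious stabilizers of $\ket{+}\t{n}$ through this unitary. Since $X\ket{+}=\ket{+}$, the product state $\ket{+}\t{n}$ is stabilized by each single-qubit operator $X_v$. By Definition \ref{dfn:graph_state} we have $\ket{G}=U_G\ket{+}\t{n}$, so $\ket{G}$ is stabilized by $U_G X_v U_G^\dagger$ for every $v\in V$.

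The heart of the argument is the evaluation of $U_G X_v U_G^\dagger$. Because the controlled-$Z$ operators all commute and each squares to the identity, $U_G$ is Hermitian and self-inverse, so I only need the conjugation rule for a single $C_Z^{uw}$: it maps $X_u\mapsto X_u Z_w$ and $X_w\mapsto Z_u X_w$ under conjugation, fixes all $Z$-type operators, and commutes with any $X_v$ for $v\notin\{u,w\}$. Consequently only the factors $C_Z^{vw}$ with $w$ a neighbour of $v$ act nontrivially on $X_v$, and each such factor attaches exactly one $Z_w$. Collecting these contributions yields
\begin{equation}
 U_G X_v U_G^\dagger = X_v\otimes\bigotimes_{u\in V} Z_u^{\theta_{uv}} =: K_v,
\end{equation}
which is precisely the claimed generator. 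Hence every $K_v$ lies in the stabilizer group $S$ of $\ket{G}$, so $\avg{K_1,\ldots,K_n}\subseteq S$.

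It remains to show that the $K_v$ generate all of $S$. First, they commute pairwise: for $v\neq v'$ the only possible anticommutations come from $X_v$ (in $K_v$) against a $Z_v$ (appearing in $K_{v'}$ iff $v\sim v'$) and from $X_{v'}$ (in $K_{v'}$) against a $Z_{v'}$ (appearing in $K_v$ iff $v'\sim v$); by symmetry of $\theta$ these two either both occur or both fail, so their signs always cancel and $K_v K_{v'}=K_{v'}K_v$. Second, they are independent: the $X$-part of the check matrix built from the $K_v$ is exactly the $n\times n$ identity, which has full rank, so no $K_v$ is a product of the others. Since $\ket{G}$ is a pure stabilizer state, being prepared from $\ket{0}\t{n}$ by Hadamard and controlled-$Z$ gates, its stabilizer group has order $2^n$ and admits exactly $n$ independent generators; a set of $n$ independent commuting Pauli products (none of whose products can equal $-I$, as each stabilizes the nonzero vector $\ket{G}$) generates a subgroup of order $2^n$. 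Thus $\avg{K_1,\ldots,K_n}$ already has order $2^n$ and must coincide with $S$.

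I expect the conjugation bookkeeping in the second paragraph to be the main obstacle: one must argue cleanly that the commuting product $U_G$ reduces to the factors on edges incident to $v$, that each contributes a single $Z_w$, and that no residual phase survives. The closing independence-and-counting step is then routine given the structure theorem for stabilizer groups recalled earlier in this section.
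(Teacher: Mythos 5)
Your proof is correct and complete. Note that the paper does not prove this proposition at all: it is imported verbatim from van den Nest et al.\ (the citation attached to the statement), so there is no in-paper argument to compare against. Your route is the standard one, and it meshes exactly with the conjugation machinery the paper does establish in its earlier theorem on stabilizer groups (where the stabilizer of $U\ket{0}\t{n}$ is obtained as $U S_{\ket{0}\t{n}} U^\dagger$): writing $\ket{G} = U_G H\t{n}\ket{0}\t{n}$, your generators are precisely $U_G H\t{n} Z_v H\t{n} U_G^\dagger = U_G X_v U_G^\dagger$, so your argument can be read as an instance of that theorem specialised to graph states. The details you flag as delicate are in fact all in order: the conjugation rule $C_Z (X_u\otimes I) C_Z = X_u\otimes Z_w$ is phase-free (direct check on the computational basis), the $C_Z$ factors on edges not incident to $v$ commute with the evolving operator because $C_Z$ is diagonal and fixes all $Z$-type factors, and simplicity of $G$ (so $\theta_{vv}=0$) guarantees no stray $Z_v$ appears on qubit $v$ itself. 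Your closing counting step correctly invokes the fact, recalled in this very section of the paper with a citation to Nielsen--Chuang, that the full stabilizer group of a pure $n$-qubit stabilizer state has order $2^n$; combined with the full-rank $X$-block of your check matrix, which forces the $2^n$ subset-products $\prod_{v\in T}K_v$ to be pairwise distinct, this yields $\avg{K_1,\ldots,K_n}=S$ as claimed.
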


These generators yield the following check matrix.

\begin{lem}[\cite{van_den_nest_graphical_2004}]
 Let $G$ be a graph with adjacency matrix $\theta$, and let $\ket{G}$ be the associated graph state. Then:
 \begin{equation}\label{eq:graph_check_matrix}
  \left( \begin{array}{cc} \theta \\ \hline I \end{array} \right)
 \end{equation}
 is a check matrix for $\ket{G}$, where $I$ is the $n$ by $n$ identity matrix.
\end{lem}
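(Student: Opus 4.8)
The plan is to read the check vectors directly off the $n$ generators furnished by the preceding proposition and observe that, stacked as columns, they assemble into exactly the claimed matrix. First I would fix the generators $g_v = X_v \otimes \bigotimes_{u \in V} Z_u^{\theta_{uv}}$ and determine, for each fixed $v$, which single-qubit Pauli operator sits on each qubit $i$. Because the graph is simple, $\theta_{vv} = 0$, so qubit $v$ carries $X$ (and crucially not $Y$), a qubit $i \neq v$ carries $Z$ precisely when $\theta_{iv} = 1$, and every other qubit carries $I$.

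Next I would apply the definition of the check vector. The condition $x_i = 1 \iff g_i \in \{X,Y\}$ singles out qubit $v$ alone, giving $x_i = \delta_{iv}$; the condition $z_i = 1 \iff g_i \in \{Y,Z\}$ singles out the neighbours of $v$, giving $z_i = \theta_{iv}$. Dropping the (irrelevant) phase bit, the check vector of $g_v$ is therefore $(\theta_{1v}, \ldots, \theta_{nv} \mid \delta_{1v}, \ldots, \delta_{nv})$: the $v$-th column of $\theta$ stacked on top of the $v$-th standard basis vector. Letting $v$ range over $1, \ldots, n$, the top block of the resulting $2n$ by $n$ matrix has $(i,v)$ entry $\theta_{iv}$ and hence equals $\theta$, while the bottom block has $(i,v)$ entry $\delta_{iv}$ and hence equals $I$, which is precisely \eqref{eq:graph_check_matrix}.

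There is no real obstacle here; the result is a bookkeeping exercise in the binary encoding, and the one point demanding attention is the simple-graph hypothesis. Without the absence of self-loops, qubit $v$ would carry $X \cdot Z = -iY$, which contributes simultaneously to the $x$- and $z$-parts of the check vector and would spoil the clean block structure. As a consistency check I would also verify self-orthogonality against Lemma \ref{lem:check_matrix}: writing $S$ for the matrix \eqref{eq:graph_check_matrix} and $J$ as in \eqref{eq:J_definition}, one computes $S^T J S = \theta^T + \theta = 2\theta = 0$ modulo $2$ using the symmetry of $\theta$, so the converse part of that lemma already confirms the matrix is a valid check matrix; the column-by-column identification above is what pins it down to $\ket{G}$ specifically.
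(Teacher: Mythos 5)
Your proof is correct and is precisely the intended argument: the paper offers no proof of its own (the lemma is cited from van den Nest et al.), but it is set up so that the result follows by reading the check vectors off the generators $X_v\otimes\bigotimes_{u\in V} Z_u^{\theta_{uv}}$ from the preceding proposition and stacking them as columns, exactly as you do, with the correct ordering $(z\mid x)$ and with $\theta_{vv}=0$ guaranteeing $x=\delta_{\cdot v}$, $z=\theta_{\cdot v}$. One minor quibble with your aside: if $\theta_{vv}=1$ were allowed, the column would still formally read $\left(\theta_{\cdot v}\,\middle|\,e_v\right)$, so the block structure would survive; the genuine obstruction is that $XZ=-iY$ carries the phase $-i\notin\{\pm 1\}$, which the paper's check-vector encoding cannot represent --- but since all graphs in the paper are assumed simple, this does not affect your argument.
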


In the check matrix formalism, Clifford operations are represented by binary $2n$ by $2n$ matrices that multiply the check matrices from the left.
The definition of the Clifford group as the normaliser of the Pauli group (cf.\ Definition \ref{dfn:Clifford}) translates into the binary formalism as the following condition.

\begin{lem}[\cite{van_den_nest_graphical_2004}]
 A binary $2n$ by $2n$ matrix $Q$ corresponds to a Clifford operation if and only if it preserves the symplectic inner product, i.e.\ it satisfies:
 \begin{equation}
  Q^T J Q = J,
 \end{equation}
 where $J$ is defined in \eqref{eq:J_definition}.
\end{lem}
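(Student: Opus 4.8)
The plan is to prove both implications by passing through the translation between Pauli products and their length-$2n$ binary vectors (the check vectors with the phase bit dropped), using as the central tool the fact that commutation of Pauli products is governed by the symplectic inner product. The first step I would carry out is to establish this tool as a lemma: for two $n$-qubit Pauli products $g,g'$ with associated binary vectors $v,v'\in\mathbb{F}_2^{2n}$, one has $gg' = (-1)^{v^T J v'}\, g'g$, so that $g$ and $g'$ commute precisely when $v^T J v' = 0$ and anticommute when $v^T J v' = 1$. This reduces to the single-qubit case, where $X$ and $Z$ anticommute while every Pauli commutes with itself and with $I$, combined with the facts that the binary encoding is additive under Pauli multiplication and that tensor factors on distinct qubits commute.

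For the forward implication, suppose $Q$ is the binary matrix corresponding to a Clifford operation $U$, i.e.\ $Q$ encodes the conjugation action $g\mapsto UgU^\dagger$ on check vectors (acting on columns from the left, as in the statement). Since $U(g_1 g_2)U^\dagger = (Ug_1 U^\dagger)(Ug_2 U^\dagger)$ and Pauli multiplication corresponds to addition of binary vectors mod $2$, the map $Q$ is $\mathbb{F}_2$-linear; it is invertible because $U^\dagger$ is again Clifford and induces the inverse map. Conjugation by a unitary preserves commutation relations, so $Ug_1 U^\dagger$ and $Ug_2 U^\dagger$ commute if and only if $g_1,g_2$ do. By the lemma this says $(Qv_1)^T J (Qv_2) = v_1^T J v_2$ for all $v_1,v_2\in\mathbb{F}_2^{2n}$, which is exactly $Q^T J Q = J$.

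For the converse I would show that any $Q$ satisfying $Q^T J Q = J$ --- that is, any element of the symplectic group $\mathrm{Sp}(2n,\mathbb{F}_2)$ --- is realised by some Clifford operation. The assignment $U\mapsto Q_U$ is a group homomorphism from the Clifford group into $\mathrm{Sp}(2n,\mathbb{F}_2)$, so it suffices to prove surjectivity. The generators of the Clifford group --- the Hadamard, phase, and controlled-\NOT{} gates (cf.\ Section \ref{s:Pauli_Clifford}) --- map to explicit symplectic matrices: $H$ swaps the $z$- and $x$-blocks on its qubit, $S$ is the transvection adding $x_i$ to $z_i$, and $C_X$ acts as a fixed pair of row operations mixing the control and target blocks. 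One then checks that these elementary transformations generate all of $\mathrm{Sp}(2n,\mathbb{F}_2)$, so that every symplectic $Q$ equals $Q_U$ for some Clifford $U$.

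The main obstacle is this last claim in the converse direction: that the elementary symplectic transformations arising from $H$, $S$, and $C_X$ generate the whole group $\mathrm{Sp}(2n,\mathbb{F}_2)$. This is a purely combinatorial statement about symplectic matrices over $\mathbb{F}_2$, which I would settle either by a Gaussian-elimination-style argument reducing an arbitrary symplectic matrix to the identity using only these generators, or by comparing the order of the generated subgroup with the known order of $\mathrm{Sp}(2n,\mathbb{F}_2)$. It carries the genuine content of the theorem, whereas the forward implication and the commutation lemma are essentially bookkeeping.
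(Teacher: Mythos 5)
The paper offers no proof of this lemma: it is quoted directly from Van den Nest, Dehaene and De Moor \cite{van_den_nest_graphical_2004}, so the only comparison possible is against the standard argument underlying that citation --- which is exactly what you have reconstructed, and your reconstruction is correct. Your commutation lemma, the $\mathbb{F}_2$-linearity of the induced action (legitimate because the check-vector encoding discards phases, and conjugation of a product factors over the factors), and the tacit but valid observation that a $\{0,1\}$-valued form is preserved as soon as its vanishing is preserved (anticommutation is preserved along with commutation) together give the forward direction; the converse via surjectivity of $U\mapsto Q_U$ onto $\mathrm{Sp}(2n,\mathbb{F}_2)$ is likewise the standard route, and your symplectic images are right: $H$ swaps the $z$- and $x$-coordinates on its qubit, $S$ effects $z_i\mapsto z_i\oplus x_i$, and controlled-\NOT{} mixes the control and target blocks as you describe. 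On the one step you flag as open --- that these elementary matrices generate $\mathrm{Sp}(2n,\mathbb{F}_2)$ --- prefer your Gaussian-elimination option: the order-comparison alternative requires knowing the order of the \emph{generated} subgroup, which in practice is established by an orbit-transitivity induction that amounts to the same elimination anyway. The elimination argument is carried out in the literature, for instance in Dehaene and De Moor's treatment of the Clifford group over $GF(2)$ and, in circuit form, in the canonical decomposition of stabilizer circuits of Aaronson and Gottesman \cite{aaronson_improved_2004}.
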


\begin{lem}[\cite{van_den_nest_graphical_2004}]\label{lem:binary_operation}
 A local Clifford operation is represented by a binary $2n$ by $2n$ matrix of the form:
 \begin{equation}
  Q = \begin{pmatrix} A & B \\ C & D \end{pmatrix},
 \end{equation}
 where each $n$ by $n$ submatrix $A,B,C,D$ is diagonal.
\end{lem}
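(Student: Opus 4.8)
The plan is to exploit the defining feature of a local Clifford operation --- that it factorises as a tensor product $U = U_1 \otimes \cdots \otimes U_n$ with each $U_i \in \mathcal{C}_1$ (Definition \ref{dfn:local_Clifford_group}) --- and to track how conjugation of a Pauli product by such a $U$ acts on the binary check vector. The guiding observation is that conjugation acts \emph{qubit-wise}: for any Pauli product $g = g_1 \otimes \cdots \otimes g_n$ one has $U g U^\dagger = (U_1 g_1 U_1^\dagger) \otimes \cdots \otimes (U_n g_n U_n^\dagger)$, so the transformed operator on qubit $i$ depends only on the single-qubit Clifford $U_i$ and on the original factor $g_i$, and is independent of what happens on the other qubits.

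First I would handle the single-qubit case. Since $U_i \in \mathcal{C}_1$ normalises the Pauli group, conjugation by $U_i$ sends each of $X, Z$ (and hence $Y$) to a Pauli operator, up to a sign which the check-matrix formalism discards. Because conjugation is a group homomorphism and the assignment of a single-qubit Pauli to its binary encoding $(z_i, x_i) \in \mathbb{F}_2^2$ is itself a homomorphism (ignoring phase), the induced map on $(z_i, x_i)$ is $\mathbb{F}_2$-linear, given by a fixed $2$ by $2$ binary matrix $M_i = \begin{pmatrix} a_i & b_i \\ c_i & d_i \end{pmatrix}$ depending only on $U_i$. For instance, $H$ induces $\begin{pmatrix} 0 & 1 \\ 1 & 0 \end{pmatrix}$ (swapping $X$ and $Z$) and $S$ induces $\begin{pmatrix} 1 & 1 \\ 0 & 1 \end{pmatrix}$ (sending $X \mapsto Y$, $Z \mapsto Z$); since $H$ and $S$ generate $\mathcal{C}_1$, every single-qubit Clifford yields such an $M_i$.

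Next I would assemble the tensor product. Writing the check vector in the fixed ordering $(z_1, \ldots, z_n \mid x_1, \ldots, x_n)$ used throughout Section \ref{s:binary_stabilizer_formalism}, qubit-wise independence means the new values satisfy $z_i' = a_i z_i + b_i x_i$ and $x_i' = c_i z_i + d_i x_i$, with no contribution from $z_j$ or $x_j$ for $j \neq i$. Comparing with the block decomposition $Q = \begin{pmatrix} A & B \\ C & D \end{pmatrix}$ acting as $z' = Az + Bx$, $x' = Cz + Dx$, one reads off $A_{ij} = a_i \delta_{ij}$, $B_{ij} = b_i \delta_{ij}$, $C_{ij} = c_i \delta_{ij}$, and $D_{ij} = d_i \delta_{ij}$, so each of $A, B, C, D$ is diagonal, as claimed. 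The resulting $Q$ automatically satisfies the symplectic condition $Q^T J Q = J$, since each $M_i$ preserves the symplectic form on its qubit pair.

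The computation itself is routine; the only real care needed is the bookkeeping, namely matching the qubit-wise action to the ``all $z$-bits, then all $x$-bits'' interleaving of the check vector, which is exactly what forces the stated block structure (with diagonal blocks) rather than a structure that is block-diagonal in the qubit pairs. I would also remark that the converse holds --- any $Q$ of this form satisfying $Q^T J Q = J$ arises from a local Clifford --- so that the block form in fact characterises local Clifford operations, although only the stated direction is needed here.
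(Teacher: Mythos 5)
Your proof is correct. The paper offers no proof of this lemma at all---it is imported directly from van den Nest et al.\ \cite{van_den_nest_graphical_2004}---and your argument (the qubit-wise action of conjugation under the tensor-product factorisation, the $\mathbb{F}_2$-linearity of the induced map on each $(z_i,x_i)$ pair with the $H$ and $S$ generator matrices computed correctly, and the reassembly into diagonal blocks forced by the $(z_1,\ldots,z_n\mid x_1,\ldots,x_n)$ ordering of the check vector) is precisely the standard derivation underlying that citation, so there is nothing to fault.
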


It is this check matrix formalism that was used to prove Theorems \ref{thm:stabilizer_graph_state} and \ref{thm:graph_states_LC} in \cite{van_den_nest_graphical_2004}.
We use the same formalism to prove corresponding results for Spekkens' toy bit theory in Section \ref{s:binary}.

\subsection{Stabilizer quantum mechanics in the \ZX-calculus}
\label{s:stabilizer_ZX}

Operationally, the theory of stabilizer QM is defined as pure state qubit QM with the following restrictions: state preparations and measurements have to be in the computational basis, and unitary operations are required to be in the Clifford group.
This group is generated by the single-qubit operators $S$ and $H$, together with the two-qubit controlled-\NOT{} operator.
In the \ZX-calculus, computational basis states and effects are denoted by:
\begin{equation}
 \ket{0} = \left\llbracket\halfscalar\;\innerprodgr\;\state{rn}\right\rrbracket, \quad
 \ket{1} = \left\llbracket\halfscalar\;\innerprodgr\;\state{rn, label={[rphase]right:$\pi$}}\right\rrbracket, \quad
 \bra{0} = \left\llbracket\halfscalar\;\innerprodgr\;\effect{rn}\right\rrbracket, \quad \text{and} \quad
 \bra{1} = \left\llbracket\halfscalar\;\innerprodgr\;\effect{rn, label={[rphase]right:$\pi$}}\right\rrbracket.
\end{equation}
The Clifford group generators can be translated into the \ZX-calculus as follows:
\begin{equation}\label{eq:stabilizer_translations}
 S = \left\llbracket \phase{gn, label={[gphase]right:$\pi/2$}}\right\rrbracket, \quad
 H = \left\llbracket \Hadamard \right\rrbracket, \quad \text{and} \quad
 C_X = \left\llbracket \innerprodgr \; \input{tikz_files/controlled-NOT.tikz} \right\rrbracket.
\end{equation}

\begin{lem}\label{lem:stabilizer-ZX}
 Any stabilizer state or operation with post-selected measurements can be represented by a \ZX-calculus diagram in which all phase angles are integer multiples of $\pi/2$.
\end{lem}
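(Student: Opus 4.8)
The plan is to refine the universality argument from Section~\ref{s:ZX_universal} by tracking exactly which phase angles occur, and to show that restricting to the stabilizer fragment forces every angle into $\frac{\pi}{2}\ZZ$. Operationally, any stabilizer state or operation with post-selected measurements is obtained by sequential and parallel composition from a fixed finite list of generators: preparations of the computational basis states $\ket{0}$ and $\ket{1}$, post-selected computational-basis effects $\bra{0}$ and $\bra{1}$, the single-qubit Clifford generators $S$ and $H$, the two-qubit generator $C_X$, and the global phase $\omega=e^{i\pi/4}I$, which together generate the whole Clifford group (cf.\ Section~\ref{s:Pauli_Clifford}). First I would record a $\ZX$ representation of each generator and verify that its phase labels are integer multiples of $\pi/2$.

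For all but one generator this is immediate from the translations already listed. The basis states and effects use only the phases $0$ and $\pi$; the generators $S$, $H$, and $C_X$ are translated in \eqref{eq:stabilizer_translations} via a green $\pi/2$ phase shift, a Hadamard node, and a pair of phase-free spiders, respectively; and the scalar corrections appearing in those translations (namely \halfscalar and \innerprodgr) carry no phase label at all. Hence the only potentially problematic generator is $\omega=e^{i\pi/4}I$: its naive representation, via the complex-phase diagram used in the universality proof, would require a green node carrying the angle $\pi/4\notin\frac{\pi}{2}\ZZ$.

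The key step is therefore to represent $e^{i\pi/4}$ using only $\pi/2$-phases. I would use the legless green spider, which satisfies
\begin{equation}
 \intf{\scalar{gn, label={[gphase]right:$\pi/2$}}} = 1 + e^{i\pi/2} = 1+i = \sqrt{2}\,e^{i\pi/4},
\end{equation}
together with the normalisation scalar $\intf{\halfscalar \; \innerprodgr} = \tfrac12\cdot\sqrt{2} = \tfrac{1}{\sqrt{2}}$, so that
\begin{equation}
 \intf{\halfscalar \; \innerprodgr \; \scalar{gn, label={[gphase]right:$\pi/2$}}} = \tfrac{1}{\sqrt{2}}\cdot\sqrt{2}\,e^{i\pi/4} = e^{i\pi/4} = \omega.
\end{equation}
This exhibits the exceptional generator with phases in $\frac{\pi}{2}\ZZ$. (Alternatively, one may observe that $(HS)^3 = e^{i\pi/4}I$ as matrices, so $\omega$ is already expressible as three copies of the phase-free/$\pi/2$-phase diagram for $HS$; I would present whichever is cleaner.)

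Finally, I would note that both modes of composition in the $\ZX$-calculus — stacking diagrams vertically for sequential composition and placing them side by side for parallel composition — merely connect or juxtapose the component diagrams and never introduce or alter a phase label. Consequently, if every generator occurring in a decomposition of a given stabilizer state or operation has all its phases in $\frac{\pi}{2}\ZZ$, then so does the resulting composite diagram, which establishes the claim. The main obstacle is the single global-phase generator $\omega$, which is the only place where the naive translation leaves the $\pi/2$-grid; once it is disposed of as above, the remainder of the argument is routine bookkeeping over the generators and their translations.
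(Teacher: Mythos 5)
Your proposal is correct and follows essentially the same route as the paper: decompose the stabilizer process into a circuit over $S$, $H$, $C_X$, computational basis preparations and post-selected effects, translate each generator via \eqref{eq:stabilizer_translations}, and observe that composition never introduces new phase labels. Your explicit treatment of the global phase $\omega=e^{i\pi/4}I$ is a detail the paper's proof leaves implicit (since $\omega=(HS)^3$ it is already generated by $S$ and $H$, as noted in Section~\ref{s:Pauli_Clifford}), and both of your constructions for it --- the scalar diagram $\halfscalar\;\innerprodgr\;\scalar{gn, label={[gphase]right:$\pi/2$}}$ with interpretation $\tfrac{1}{2}\cdot\sqrt{2}\cdot(1+i)=e^{i\pi/4}$, and the identity $(HS)^3=e^{i\pi/4}I$ --- check out.
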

\begin{proof}
 Given a stabilizer state or operation, find a circuit representation in terms of $S$, $H$, $C_X$, computational basis states and post-selected computational basis measurements.
 Translate the circuit into the \ZX-calculus using the above representation.
 The result is a \ZX-calculus diagram in which all phase angles are integer multiples of $\pi/2$.
\end{proof}

In fact, the converse is also true. Note first that, rather than defining the \ZX-calculus in terms of phased spiders with arbitrary numbers of legs, we can also define it in terms of four types of basic spiders with small fixed numbers of inputs and outputs and phase 0, together with phase shifts and Hadamard nodes.

\begin{lem}\label{lem:basic_elements}
 Any \ZX-calculus diagram can be written as a combination of four basic spiders:
 \begin{equation}\label{eq:basic_spiders}
  \splitnode{}, \quad\quad \effect{gn}, \quad\quad \joinnode{}, \quad\text{and}\quad \state{gn},
 \end{equation}
 together with phase shifts \phase{gn, label={[gphase]right:$\alpha$}} and \phase{rn, label={[rphase]right:$\beta$}} for $\alpha,\beta\in(-\pi,\pi]$, \Hadamard, and \halfscalar.
\end{lem}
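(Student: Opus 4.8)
The plan is to argue by structural induction over the generators of a \ZX-calculus diagram, reducing each generator to the listed building blocks. The only substantive work is the case of green spiders; the remaining generators are handled by the meta-rules and the colour change rule, and the structural maps of the ambient dagger compact closed category (wire crossings, cups, and caps) are available as glue. First I would dispose of the easy generators: the star node \halfscalar{} and the Hadamard node \Hadamard{} already appear in the permitted list, so nothing is required for them; and a red spider of phase $\beta$ is rewritten by the colour change rule as a green spider of the same arity and phase with a Hadamard node on every leg, so red spiders reduce to green spiders together with \Hadamard{}. Red phase shifts are likewise already permitted. It therefore remains to express an arbitrary green spider using the four phase-$0$ basic spiders and the green phase shift \phase{gn, label={[gphase]right:$\alpha$}}.

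Next I would peel off the phase. By the spider rule, a green spider with $n$ inputs, $m$ outputs, and phase $\alpha$ equals the corresponding phase-$0$ green spider with a single green phase shift of phase $\alpha$ fused onto one of its legs, so it suffices to construct the phase-$0$ green spiders. For $n,m\ge 1$ I would build such a spider as a binary tree: funnel the $n$ inputs down to one wire using $n-1$ copies of the join \joinnode{}, then fan that wire out to $m$ outputs using $m-1$ copies of the split \splitnode{}. Repeated application of the spider rule shows, by an easy induction on $n$ and then on $m$, that the resulting diagram is exactly the phase-$0$ $(n,m)$ green spider. Note that the compact structure is itself covered this way: the cup is the phase-$0$ green spider with no inputs and two outputs, obtained by composing the state \state{gn} with a split \splitnode{}, and the cap is its dagger; the swap may be taken from the ambient symmetric structure or built from cups and caps.

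Finally I would cover the degenerate arities, which is where the bookkeeping lives. For $n=0$ I start the tree from the state \state{gn} rather than a join, and for $m=0$ I cap the tree with the effect \effect{gn} rather than a split; the legless green scalar \scalar{gn} is obtained by composing \state{gn} with \effect{gn}, which the spider rule fuses into a single $(0,0)$ green node, and the $(1,1)$ phase-$0$ spider is simply the identity wire by the identity rule of Section \ref{s:ZX_derived}. The main obstacle is not any deep difficulty but making the induction airtight across all of these boundary cases: one must check that the spider rule, in the zero-leg forms noted immediately after its statement, genuinely fuses each binary tree into the intended spider, and that the scalar and empty-diagram cases behave as claimed, so that every degenerate green spider is accounted for by the four generators.
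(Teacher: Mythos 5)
Your proposal is correct and takes essentially the same route as the paper: the paper's proof likewise peels off nonzero phases with the spider rule, ``pulls apart'' phase-zero green spiders into the four generators, writes legless spiders as a state--phase--effect composite, and reduces red spiders to green ones via the colour change rule, which introduces a Hadamard node on each leg. Your explicit binary-tree construction and the boundary-case bookkeeping (cups as \state{gn} composed with \splitnode{}, the $(1,1)$ spider as the identity wire, the legless scalar as \state{gn} composed with \effect{gn}) simply spell out in detail what the paper dispatches as a one-line application of the spider law.
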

\begin{proof}
 If a red or green spider has a non-zero phase, it can be decomposed into a phase 0 spider and a single-qubit phase operator using the spider law.
 Furthermore, again using the spider law, any green spider with phase 0 can be ``pulled apart'' into a diagram composed of the four elements given above.
 Spiders with no inputs or outputs can be rewritten into a state composed with a phase shift, composed with an effect: for any $\alpha\in(-\pi,\pi]$:
 \begin{equation}
  \scalar{gn, label={[gphase]right:$\alpha$}} = \begin{tikzpicture}
	\begin{pgfonlayer}{nodelayer}
		\node [style=gn] (0) at (0, 1) {};
		\node [label={[gphase]right:$\alpha$}, style=gn] (1) at (0, -0) {};
		\node [style=gn] (2) at (0, -1) {};
	\end{pgfonlayer}
	\begin{pgfonlayer}{edgelayer}
		\draw (0) to (2);
	\end{pgfonlayer}
\end{tikzpicture}.
 \end{equation}
 Any red spider can be turned into a green spider using the colour change law, introducing a Hadamard node on each leg.
 Thus any red spider can be written as a combination of Hadamard nodes, phase shifts, and the basic green spiders.

 Therefore, any diagram in the \ZX-calculus can be written as a combination of the four spiders given in \eqref{eq:basic_spiders}, Hadamard nodes, \halfscalar, and phase shifts.
\end{proof}

In the above decomposition, the red phase shifts could be removed and replaced with green phase shifts and Hadamard nodes without changing the result.
Nevertheless, as we often decompose single-qubit operators into red and green phase shifts rather than into green phase shifts and Hadamards, we include red phase shifts here.

\begin{lem}\label{lem:ZX-stabilizer}
 Any \ZX-calculus diagram in which all phase angles are integer multiples of $\pi/2$ represents a (not necessarily normalised) stabilizer operation with post-selected measurements.
\end{lem}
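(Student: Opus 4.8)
The plan is to establish this as the converse of Lemma \ref{lem:stabilizer-ZX} by reducing an arbitrary diagram whose phases are all integer multiples of $\pi/2$ to a composite of stabilizer building blocks, via the generator decomposition of Lemma \ref{lem:basic_elements}. First I would record the closure properties to be used: the class of (non-normalised) stabilizer operations with post-selected measurements is closed under sequential composition, parallel composition, and taking adjoints. This holds because any such operation is, by definition, a circuit of computational-basis preparations, post-selected computational-basis measurements, and Clifford unitaries, and concatenating, juxtaposing, or reversing such circuits again yields a circuit of the same form (the adjoint of a preparation being a post-selected measurement and vice versa, and the adjoint of a Clifford being Clifford). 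Consequently it suffices to exhibit each generator furnished by Lemma \ref{lem:basic_elements} as a stabilizer operation, since every diagram is assembled from these generators together with identity wires, wire crossings, cups, and caps.

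Next I would handle the generators one at a time by computing interpretations. Under the hypothesis each phase shift is $R_{k\pi/2}=\mathrm{diag}(1,i^{k})$, a power of $S$ and hence Clifford; the Hadamard node is $H$, and a wire crossing is \SWAP{}, both Clifford. For the four basic green spiders I would read off the interpretations directly: the split node \splitnode{} is $\ket{00}\bra{0}+\ket{11}\bra{1}$, which is exactly $C_X$ applied to the input qubit together with an ancilla prepared in $\ket{0}$; the join node \joinnode{} is its adjoint, realised by $C_X$ followed by a post-selected $\bra{0}$ measurement on the ancilla; the green state \state{gn} is $\ket{0}+\ket{1}=\sqrt{2}\,H\ket{0}$, i.e.\ a $\ket{0}$-preparation, a Hadamard, and a stabilizer scalar $\sqrt{2}$; and the green effect \effect{gn} is its adjoint. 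The cups and caps arising from the topology meta-rule are the non-normalised Bell state $\ket{00}+\ket{11}$ and its adjoint, which are $\sqrt{2}$ times a Bell preparation ($C_X$ after $H$ on $\ket{00}$) and a post-selected Bell measurement, respectively, and hence stabilizer.

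Finally I would dispose of the scalar generator \halfscalar, whose interpretation is $\tfrac12$. I would write $\tfrac12=\braket{0}{+}^{2}$, noting that $\braket{0}{+}=\tfrac{1}{\sqrt2}$ is exactly the amplitude obtained by preparing $\ket{+}=H\ket{0}$ and post-selecting $\bra{0}$; taking two such amplitudes in parallel realises $\tfrac12$ as the outcome of a genuine stabilizer computation, so \halfscalar{} too is a stabilizer operation. Assembling the pieces, Lemma \ref{lem:basic_elements} presents the given diagram as a composite of the above generators under sequential composition, parallel composition, and wire bending, each generator is a stabilizer operation, and the closure properties then yield the claim.

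I expect the only genuine subtlety to be the bookkeeping of scalars and the justification that cups, caps, and the \halfscalar{} node are stabilizer operations in the non-normalised sense, together with being explicit that the operation class is closed under the two compositions and under adjoints. The green-spider computations — especially recognising the split and join nodes as $C_X$ dressed with an ancilla preparation or a post-selected measurement — form the computational crux but are routine once the interpretations from Definition \ref{dfn:interpretation_functor} are written out.
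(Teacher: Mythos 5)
Your proposal is correct and takes essentially the same route as the paper's proof: both invoke Lemma~\ref{lem:basic_elements} to reduce the diagram to the four basic spiders together with phase shifts, Hadamard nodes, and \halfscalar, and then exhibit each generator as Clifford unitaries dressed with computational-basis preparations and post-selected measurements (e.g.\ \splitnode{} as $C_X\circ(I\otimes\ket{0})$ and \state{gn} as $\sqrt{2}\,H\ket{0}$). Your explicit handling of cups, caps, the scalar $\tfrac12$, and the closure of the operation class under composition and adjoints merely spells out what the paper leaves implicit, just as the paper's opening remark that the $\pi/2$-fragment is closed under the rewrite rules makes explicit what you leave implicit.
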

\begin{proof}
 Firstly, note that the class of \ZX-calculus diagram in which all phase angles are integer multiples of $\pi/2$ is closed under the rewrite rules.

 Secondly, by Lemma \ref{lem:basic_elements}, any \ZX-calculus diagram can be decomposed into four basic spiders plus phase shifts and Hadamard nodes.
 For each of these diagram generators, we exhibit a decomposition of the corresponding operator into the Clifford generators, computational basis states, and computational basis effects.
 In addition to the translations for \phase{gn, label={[gphase]right:$\pi/2$}} and \Hadamard{} given in \eqref{eq:stabilizer_translations}, this gives the following decompositions:
 \begin{align}
  \left\llbracket \state{gn} \right\rrbracket = \left\llbracket \begin{tikzpicture}
	\begin{pgfonlayer}{nodelayer}
		\node [style=rn] (0) at (0, -0.5) {};
		\node [style=Hadamard] (1) at (0, 0.25) {};
		\node [style=none] (2) at (0, 0.75) {};
	\end{pgfonlayer}
	\begin{pgfonlayer}{edgelayer}
		\draw (2.center) to (0);
	\end{pgfonlayer}
\end{tikzpicture} \right\rrbracket &= \sqrt{2}\, H \ket{0} \\
  \left\llbracket \effect{gn} \right\rrbracket = \left\llbracket \begin{tikzpicture}
	\begin{pgfonlayer}{nodelayer}
		\node [style=rn] (0) at (0, 0.5) {};
		\node [style=Hadamard] (1) at (0, -0.25) {};
		\node [style=none] (2) at (0, -0.75) {};
	\end{pgfonlayer}
	\begin{pgfonlayer}{edgelayer}
		\draw (2.center) to (0);
	\end{pgfonlayer}
\end{tikzpicture} \right\rrbracket &= \sqrt{2} \bra{0} H \\
  \left\llbracket \splitnode \right\rrbracket = \left\llbracket \input{tikz_files/split_generators.tikz} \right\rrbracket &= C_X \circ (I\otimes\ket{0}) \\
  \left\llbracket \joinnode\right\rrbracket = \left\llbracket \input{tikz_files/join_generators.tikz} \right\rrbracket &= (I\otimes\bra{0}) \circ C_X
 \end{align}
 Thus any \ZX-calculus diagram in which all phase angles are integer multiples of $\pi/2$ can be translated into a stabilizer operation with preparation of states in the computational basis and post-selected computational basis measurements.
\end{proof}

It is straightforward to normalise diagrams in the stabilizer \ZX-calculus by adding copies of \halfscalar{} and/or \innerprodgr{} as needed.
Thus, combining Lemmas \ref{lem:stabilizer-ZX} and \ref{lem:ZX-stabilizer}, we have the following:

\begin{thm}\label{thm:ZX_stabilizer}
 The \ZX-calculus for stabilizer quantum mechanics with post-selected measurements consists exactly of those diagrams in which all phase angles are integer multiples of $\pi/2$.
\end{thm}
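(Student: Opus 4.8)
The plan is to obtain the theorem as the conjunction of the two preceding lemmas, which between them already establish both directions of the claimed equality of diagram classes. Write $\mathcal{S}$ for the class of \ZX-calculus diagrams representing (not necessarily normalised) stabilizer operations with post-selected measurements, and $\mathcal{P}$ for the class of diagrams all of whose phase labels are integer multiples of $\pi/2$. The statement to prove is precisely $\mathcal{S}=\mathcal{P}$, and I would argue the two inclusions in turn.

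For $\mathcal{S}\subseteq\mathcal{P}$ I would simply invoke Lemma \ref{lem:stabilizer-ZX}: any stabilizer operation admits a circuit decomposition over $S$, $H$, $C_X$, computational-basis preparations and post-selected computational-basis measurements, and the translations \eqref{eq:stabilizer_translations} of these generators use only the phases $0$, $\pi/2$ and $\pi$, so the resulting diagram lies in $\mathcal{P}$. For $\mathcal{P}\subseteq\mathcal{S}$ I would invoke Lemma \ref{lem:ZX-stabilizer}: by Lemma \ref{lem:basic_elements} every diagram decomposes into the four basic spiders, phase shifts, Hadamard nodes and copies of \halfscalar{}, and the explicit decompositions in the proof of Lemma \ref{lem:ZX-stabilizer} realise each such generator as a stabilizer operation, whence every $\mathcal{P}$-diagram represents a stabilizer operation. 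Combining the two inclusions yields $\mathcal{S}=\mathcal{P}$.

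The one point that needs genuine care, and which I expect to be the sole obstacle, is the scalar bookkeeping hidden in the word ``exactly''. The decompositions used for $\mathcal{P}\subseteq\mathcal{S}$ carry extra scalar factors (for instance $\intf{\state{gn}}=\sqrt{2}\,H\ket{0}$), so a priori a $\mathcal{P}$-diagram represents a stabilizer operation only up to a scalar. I would close this by checking that every scalar needed to rescale or normalise such a diagram is itself expressible within $\mathcal{P}$. The magnitude corrections are powers of $\sqrt{2}$, supplied by the phase-free scalars \halfscalar{} (interpreted as $\tfrac{1}{2}$) and \innerprodgr{} (interpreted as $\sqrt{2}$, cf.\ Lemma \ref{lem:variant_star_rule}); the phase corrections are integer multiples of $\pi/4$, and these too lie in $\mathcal{P}$, since for example a green and a red $\pi/2$ node composed into a scalar evaluates to $2e^{i\pi/4}$. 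Hence the stabilizer scalars $2^{-r/2}e^{i\phi}$ (with $\phi$ a multiple of $\pi/4$) that can appear are precisely those representable by $\mathcal{P}$-diagrams, so normalisation never forces a phase outside $\{0,\pm\pi/2,\pi\}$, and the equality $\mathcal{S}=\mathcal{P}$ holds on the nose rather than merely up to a scalar.

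Since the substantive work is already carried out in Lemmas \ref{lem:stabilizer-ZX}, \ref{lem:basic_elements} and \ref{lem:ZX-stabilizer}, I anticipate no difficulty beyond this scalar accounting; the theorem is essentially a repackaging of those results together with the observation that the stabilizer scalars live inside the phase-$\tfrac{\pi}{2}$ fragment.
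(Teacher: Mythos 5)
Your proposal is correct and follows essentially the same route as the paper, which proves the theorem by combining Lemmas \ref{lem:stabilizer-ZX} and \ref{lem:ZX-stabilizer} with the remark that diagrams can be (re)normalised by inserting copies of \halfscalar{} and \innerprodgr{}. Your extra scalar accounting (noting that the needed magnitudes are powers of $\sqrt{2}$ and that phase corrections such as $2e^{i\pi/4}$, realised by a green--red $\pi/2$ scalar pair, stay within the $\pi/2$ fragment) simply spells out in more detail what the paper dismisses in one sentence as ``straightforward''.
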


By the spider law, the set of all phase shifts for the \ZX-calculus or for some fragment of the \ZX-calculus form a group called the \emph{phase group}.
The group operation is given by the merging of spiders, \phase{gn} is the group identity, and the Hermitian adjoint of a phase shift is its group inverse.

\begin{lem}[\cite{coecke_phase_2011}]
 The phase group for the stabilizer \ZX-calculus is isomorphic to the cyclic group $\ZZ_4$.
\end{lem}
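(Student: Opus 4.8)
The plan is to read off the group directly from the characterisation of stabilizer diagrams and the spider rule. First I would recall, via Theorem~\ref{thm:ZX_stabilizer}, that the stabilizer \ZX-calculus consists precisely of those diagrams in which every phase label is an integer multiple of $\pi/2$. A phase shift is a one-input, one-output spider, so a \emph{stabilizer} green phase shift carries a label in $\{0,\pi/2,\pi,-\pi/2\}$ once we take representatives in $(-\pi,\pi]$. Hence, as a set, the green stabilizer phase group has exactly four elements:
\begin{equation}
 \phase{gn}, \quad \phase{gn, label={[gphase]right:$\pi/2$}}, \quad \phase{gn, label={[gphase]right:$\pi$}}, \quad \phase{gn, label={[gphase]right:$-\pi/2$}}.
\end{equation}

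Next I would identify the group operation. The operation is the merging of spiders, and by the spider rule two green one-input, one-output spiders with phases $\alpha$ and $\beta$ composed along their shared wire merge into a single green one-input, one-output spider with phase $\alpha+\beta$, taken modulo $2\pi$. This shows simultaneously that the four stabilizer phase shifts are closed under composition (since a sum of multiples of $\pi/2$ is again a multiple of $\pi/2$) and that composition is precisely addition of phases modulo $2\pi$. The identity element is \phase{gn} (phase $0$), and the group inverse of a phase shift is its Hermitian adjoint, which by \eqref{eq:dagger} negates the phase, matching additive inverse modulo $2\pi$.

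Then I would exhibit the isomorphism explicitly. Writing each element as $\phase{gn, label={[gphase]right:$k\pi/2$}}$ with $k\in\{0,1,2,3\}$ (reducing the label $-\pi/2$ to the representative $3\pi/2$, i.e.\ $k=3$), define a map to $\ZZ_4$ by sending this element to the residue $k$. Because composition adds phases modulo $2\pi$ and $k\pi/2 \equiv k'\pi/2 \pmod{2\pi}$ exactly when $k\equiv k' \pmod 4$, addition of phases corresponds to addition of residues modulo $4$; the map is therefore a group homomorphism, and it is visibly a bijection. Since the generator $\phase{gn, label={[gphase]right:$\pi/2$}}$ has additive order exactly $4$, the group is cyclic of order $4$, establishing the claimed isomorphism with $\ZZ_4$.

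There is essentially no serious obstacle here: the statement reduces to a routine verification once Theorem~\ref{thm:ZX_stabilizer} and the spider rule are in hand. The only point deserving a remark is the choice of colour. One could run the identical argument for red phase shifts; by the colour change rule the red phase group is related to the green one by conjugation with Hadamard nodes and is therefore isomorphic to it, so the phase group of the stabilizer \ZX-calculus is well defined up to isomorphism regardless of which colour one picks.
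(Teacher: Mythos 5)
Your proof is correct, but note that the paper itself does not prove this lemma: it is stated as a citation to \cite{coecke_phase_2011}, with the paper supplying only the definition of the phase group (merging of spiders as the operation, the trivial phase shift as identity, Hermitian adjoint as inverse). Your verification is exactly the routine argument one would give: Theorem~\ref{thm:ZX_stabilizer} pins the carrier set down to the four phase shifts with labels in $\{0,\pi/2,\pi,-\pi/2\}$, the spider rule makes the operation addition of phases modulo $2\pi$, and the $\pi/2$ shift generates with order four. The one step you wave through is ``visibly a bijection'': you should justify why the four phase shifts are pairwise \emph{distinct} as group elements, i.e.\ why the rewrite rules cannot identify any two of them. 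This follows from soundness of the interpretation $\intf{-}$: the four shifts denote the pairwise distinct matrices $\mathrm{diag}(1,i^k)$ for $k=0,1,2,3$, so no sound derivation can equate two of them, and your map to $\ZZ_4$ is indeed injective. With that sentence added your argument is complete, and your closing remark that the red phase group is isomorphic to the green one via conjugation by Hadamard nodes (colour change rule) is also correct and consistent with the paper's usage.
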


The idea of phase groups is relevant also in the construction of a graphical calculus for Spekkens' toy theory in Chapter \ref{s:spekkens}.

\section{The Clifford+T group}
\label{s:Clifford+T_group}

The Clifford group, i.e.\ the group of unitary stabilizer operations, is significantly less powerful than the group of general unitary operations on qubits.
In particular, the number of distinct Clifford unitaries acting on a fixed number of qubits is finite.
Hence stabilizer quantum mechanics encompasses only a small fragment of all pure quantum operations on qubits.
Nevertheless it is possible to construct an \emph{approximately universal} set of operations as defined in Section \ref{s:universality} by adding an appropriate non-stabilizer gate to the Clifford group \cite{nielsen_quantum_2010}.

\begin{dfn}
 The Clifford+T group is the group of unitary operations generated by:
 \begin{equation}
  T = \begin{pmatrix}1&0\\0&e^{i\pi/4}\end{pmatrix}
 \end{equation}
 and the Clifford unitaries.
\end{dfn}

\begin{prop}[\cite{boykin_universal_1999}]
 The Clifford+T group is approximately universal.
\end{prop}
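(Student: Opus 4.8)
The plan is to reduce to the single-qubit case and then show that the subgroup of $SU(2)$ generated by the images of $H$, $S$, and $T$ is dense. As recalled in Section \ref{s:quantum_circuits}, any unitary on $n$ qubits can be built exactly from $C_X$ together with arbitrary single-qubit gates, and $C_X$ is itself a Clifford generator. Hence it suffices to show that every single-qubit unitary can be approximated to arbitrary accuracy by products of $H$, $S$, and $T$. Since global phases are physically irrelevant (and in any case $\omega = e^{i\pi/4}I$ lies in the Clifford group), it is enough to prove that the group generated by these gates is dense in $PSU(2)\cong SO(3)$, i.e.\ that the reachable rotations of the Bloch sphere are dense in the rotation group.

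First I would identify the generators as Bloch-sphere rotations. Up to a global phase $e^{i\pi/8}$, the $T$-gate equals $R_Z(\pi/4)$, a rotation by $\pi/4$ about the $z$-axis, and conjugating by the Hadamard gives $HTH = R_X(\pi/4)$ up to phase, a rotation by $\pi/4$ about the $x$-axis. Composing these yields a single rotation $R = R_Z(\pi/4)R_X(\pi/4)$ about some axis $\hat n$ by an angle $\theta$. Applying the composition law for rotations about the orthogonal axes $z$ and $x$ (whose scalar parts multiply as $\cos(\alpha/2)\cos(\beta/2)$ when the axes are perpendicular), one computes $\cos(\theta/2) = \cos^2(\pi/8)$.

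The crux of the argument is the number-theoretic claim that $\theta$ is an \emph{irrational} multiple of $2\pi$; granting this, the cyclic group $\{R^k \mid k\in\ZZ\}$ is dense in the one-parameter group of rotations about $\hat n$. To prove irrationality I would compute $2\cos\theta = 2\bigl(2\cos^2(\theta/2)-1\bigr) = \sqrt2 - \tfrac12$, whose minimal polynomial is $4x^2 + 4x - 7$; in particular $2\cos\theta$ is \emph{not} an algebraic integer. On the other hand, if $\theta$ were a rational multiple of $2\pi$ then $e^{i\theta}$ would be a root of unity, so $2\cos\theta = e^{i\theta}+e^{-i\theta}$ would be a sum of roots of unity and hence an algebraic integer, a contradiction. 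This rules out rationality, and it is the step I expect to be the main obstacle, since it is precisely where the special arithmetic of the $\pi/4$ phase enters.

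Finally I would manufacture a second rotation axis. Conjugating $R$ by $H$ gives a rotation $HRH$ about a reflected axis $\hat m$ through the same angle $\theta$, and one checks $\hat m \neq \pm\hat n$, so the two axes are non-parallel. By the same density argument, rotations about $\hat m$ by arbitrary angles are also approximable. Since every element of $SO(3)$ can be written as a product of rotations about two fixed non-parallel axes, composing approximations about $\hat n$ and $\hat m$ approximates an arbitrary rotation to any desired accuracy; lifting back to $SU(2)$ and discarding the global phase yields approximate universality for single-qubit unitaries, and together with the exact $C_X$ this establishes approximate universality of the Clifford$+T$ group in the sense of Definition \ref{dfn:approximately_universal}.
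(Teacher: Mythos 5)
The paper itself gives no proof of this proposition: it is imported verbatim from Boykin et al.\ \cite{boykin_universal_1999}, so there is no in-paper argument to compare against. Your reconstruction is correct and follows the same overall route as that reference (and Nielsen--Chuang): reduce to one qubit via the exact $C_X$ plus single-qubit universality, observe that $T$ and $HTH$ are, up to phase, $R_Z(\pi/4)$ and $R_X(\pi/4)$, and show their composite is a rotation by an angle $\theta$ that is an irrational multiple of $2\pi$, then use a second, conjugated axis. Your numbers check out: for perpendicular axes the scalar parts multiply, giving $\cos(\theta/2)=\cos^2(\pi/8)=(2+\sqrt2)/4$, hence $2\cos\theta=\sqrt2-\tfrac12$ with minimal polynomial $4x^2+4x-7$, which is indeed not an algebraic integer, while $2\cos\theta$ would have to be one if $\theta$ were a rational multiple of $2\pi$. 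Where you genuinely diverge from \cite{boykin_universal_1999} is precisely at this step: Boykin et al.\ prove irrationality by an explicit induction on the arithmetic form of $\cos(k\theta)$ (tracking numerators of the shape $a+b\sqrt2$ over powers of $4$ and showing they never close up), whereas your Niven-style algebraic-integer argument gets the same conclusion in three lines and makes transparent why the $\pi/4$ phase is special. One small caution on the last step: since $\hat n\propto(\cos\tfrac{\pi}{8},\sin\tfrac{\pi}{8},\cos\tfrac{\pi}{8})$ and $\hat m\propto(\cos\tfrac{\pi}{8},-\sin\tfrac{\pi}{8},\cos\tfrac{\pi}{8})$ are non-parallel but \emph{not} orthogonal, a fixed three-factor Euler decomposition $R_{\hat n}(\alpha)R_{\hat m}(\beta)R_{\hat n}(\gamma)$ need not reach all of $SO(3)$; you need products with arbitrarily many alternating factors, or the observation that the closure of the generated group is a closed subgroup of $SO(3)$ containing two distinct one-parameter subgroups and hence is all of $SO(3)$. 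As you phrased it (``a product of rotations about two fixed non-parallel axes'', with no bound on the number of factors) the claim is true, but it is worth making the mechanism explicit.
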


The \ZX-calculus representations of the Clifford gates are given in \eqref{eq:stabilizer_translations}.
The $T$ gate is represented by:
\begin{equation}
 T=\left\llbracket\phase{gn, label={[gphase]right:$\pi/4$}}\right\rrbracket.
\end{equation}
Similar to the stabilizer \ZX-calculus introduced in the previous section, we find that the notion of a Clifford+T \ZX-calculus is well-defined.

\begin{thm}
 The \ZX-calculus for the Clifford+T group with state preparation and post-selected measurements in the computational basis consists exactly of those diagrams in which all phase angles are integer multiples of $\pi/4$.
\end{thm}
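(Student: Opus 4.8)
The plan is to mirror the proof of Theorem \ref{thm:ZX_stabilizer} for the stabilizer fragment, establishing the two inclusions by a pair of lemmas directly analogous to Lemmas \ref{lem:stabilizer-ZX} and \ref{lem:ZX-stabilizer}, with $\pi/4$ in place of $\pi/2$ throughout.

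For the first inclusion, that every Clifford+T operation with computational-basis preparations and post-selected computational-basis measurements is representable by a $\pi/4$ diagram, I would take an arbitrary such operation, express it as a circuit over the generators $S$, $H$, $C_X$, and $T$ together with the computational basis states $\ket{0},\ket{1}$ and effects $\bra{0},\bra{1}$, and translate it gate-by-gate. The translations of $S$, $H$, $C_X$ are those of \eqref{eq:stabilizer_translations}, and $T$ is the green $\pi/4$ phase shift. Since $S$ carries phase $\pi/2 = 2\cdot(\pi/4)$, the Hadamard and controlled-\NOT{} translations carry no phase labels, and the basis states and effects involve only the phases $0$ and $\pi$, every spider in the resulting diagram carries a phase that is an integer multiple of $\pi/4$.

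For the converse inclusion I would proceed exactly as in Lemma \ref{lem:ZX-stabilizer}. First, I would verify that the class of $\pi/4$ diagrams is closed under the rewrite rules of Section \ref{s:ZX_rules}: the phase additions in the spider rule and the phase negations in the $\pi$-commutation rule and the dagger conventions preserve membership, since sums and negatives of integer multiples of $\pi/4$ are again such multiples, and the only rule that manufactures fresh phase labels is the Euler decomposition rule \eqref{eq:Euler_dec}, whose angles $\pm\pi/2$ are multiples of $\pi/4$. Second, I would apply Lemma \ref{lem:basic_elements} to decompose any $\pi/4$ diagram into the four phase-$0$ basic spiders, Hadamard nodes, the scalar \halfscalar{}, and phase shifts. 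The basic spiders and Hadamard nodes are precisely the Clifford data already decomposed into $S$, $H$, $C_X$, $\ket{0}$, and $\bra{0}$ in the proof of Lemma \ref{lem:ZX-stabilizer}. The one genuinely new case is a phase shift whose angle is a multiple of $\pi/4$: a green one equals a power $T^k$ of the $T$-gate by the spider rule, while a red one equals $H T^k H$ by the colour change rule, so each such node decomposes into Clifford+T generators.

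Combining the two lemmas, and normalising scalars by adjoining copies of \halfscalar{} and \innerprodgr{} as in the stabilizer case, yields the claimed equality between the class of Clifford+T diagrams and the class of diagrams all of whose phases are integer multiples of $\pi/4$. I expect the only step requiring real attention to be the closure check under the rewrite rules, and within that the Euler decomposition rule, since it is the unique rule introducing angles not already present; every remaining rule preserves the fragment for the same additive and colour-change reasons exploited in the stabilizer argument, so I anticipate no serious obstacle beyond careful bookkeeping.
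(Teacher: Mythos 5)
Your proposal is correct and is precisely the argument the paper intends: its entire proof is the remark that the result is ``analogous to that of Theorem \ref{thm:ZX_stabilizer}'', and you have carried out that analogy faithfully, including the one genuinely new case (a green phase shift $k\pi/4$ decomposing as $T^k$ and a red one as $HT^kH$ via the colour change rule). Nothing is missing.
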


The proof is analogous to that of Theorem \ref{thm:ZX_stabilizer}.

In Section \ref{s:Clifford+T}, we present results specific to the \emph{scalar-free single-qubit Clifford+T} group, which is the subgroup of unitary Clifford+T operations acting on one qubit only, and where equality is taken to be up to some non-zero scalar factor.
In the \ZX-calculus, the scalar-free single-qubit Clifford+T group is represented by those diagrams in which all phase angles are integer multiples of $\pi/4$ and which have the structure of a \emph{line graph}, i.e.\ any node in the diagram has exactly one input and one output, and there are no cycles.

\chapter{The \ZX-calculus and completeness}
\label{ch:completeness}

In the previous chapter, we introduced the \ZX-calculus notation and rewrite rules, and argued that this graphical calculus for pure state qubit quantum mechanics with post-selected measurements is universal and sound.
We now look at the completeness properties of the \ZX-calculus, i.e.\ the question of whether any equality that can be derived using matrices can also be derived graphically.

As shown by Schr\"{o}der and Zamdzhiev \cite{schroeder_incomplete_2014}, the full \ZX-calculus is incomplete.
We recap this result in Section \ref{s:incompleteness}.
Next, we argue that the incompleteness of the universal \ZX-calculus does not preclude completeness for fragments of the \ZX-calculus where the phase angles are restricted.
The \ZX-calculus exhibits map-state duality, and therefore general results about the calculus can be derived by considering only state diagrams.
For the remainder of the chapter, we focus on the stabilizer \ZX-calculus as introduced in Section \ref{s:stabilizer_ZX}, showing first that there exists a non-unique normal form for stabilizer state diagrams, and then that it is possible to derive, using the rewrite rules given in the previous chapter, all equalities between stabilizer state diagrams that are true up to non-zero scalar factor.

\section{Incompleteness of the universal \ZX-calculus}
\label{s:incompleteness}

A graphical language is incomplete under some interpretation $\intf{-}$ if there exist two diagrams $D_1$ and $D_2$ that represent the same process under this interpretation -- i.e.\ $\intf{D_1} = \intf{D_2}$ -- but it is not possible to transform one diagram into the other using the rewrite rules.
As impossibility is hard to prove directly, incompleteness results are usually derived by constructing an alternative interpretation functor $\intf{-}'$ for which the rewrite rules are all sound, but under which $D_1$ and $D_2$ have distinct interpretations.
If $\intf{D_1}'\neq\intf{D_2}'$ and the rewrite rules are sound for $\intf{-}'$, it follows that it cannot be possible to rewrite $D_1$ into $D_2$, or conversely, as a sound interpretation means that any pair of diagrams that are equal according to the rewrite rules have to be mapped to equal processes in the underlying theory.
This approach was used, for example, to show that the Euler decomposition of the Hadamard node is independent of the previously existing rules of the \ZX-calculus \cite{duncan_graph_2009,duncan_pivoting_2013}.

Let $\intf{-}$ denote the usual interpretation functor for \ZX-calculus diagrams as given in Definition \ref{dfn:interpretation_functor}.
The rewrite rules introduced in Section \ref{s:rewrite_rules} are all sound with respect to this interpretation, i.e.\ they translate to true equalities between matrices.
A whole family of alternative interpretation functors can be defined by multiplying all phases in a \ZX-calculus diagram by some integer with respect to the usual interpretation.

\begin{dfn}
 Let $j$ be an integer and define the linear map $\llbracket - \rrbracket_j$ from \ZX-calculus diagrams to matrices as follows:
 \begin{itemize}
  \item the interpretations of wires (including wire crossings, caps, and cups), \Hadamard, and \halfscalar{} under $\llbracket - \rrbracket_j$ are the same as under $\llbracket - \rrbracket$, and
  \item phases of spiders are multiplied by $j$ as compared to $\llbracket-\rrbracket$:
   \begin{align}
    \left\llbracket \input{tikz_files/green_spider.tikz} \right\rrbracket_j &:= \left\llbracket \input{tikz_files/green_spider_j.tikz} \right\rrbracket \\
    \left\llbracket \input{tikz_files/red_spider.tikz} \right\rrbracket_j &:= \left\llbracket \input{tikz_files/red_spider_j.tikz} \right\rrbracket
   \end{align}
 \end{itemize}
\end{dfn}

This family of interpretation functors was first introduced in \cite{duncan_graph_2009}.
The functor $\intf{-}_0$ is used to show that the Euler decomposition rule is independent of the other rules (excluding the zero rule and zero scalar rule, which were only introduced later) \cite{duncan_pivoting_2013}.

\begin{lem}\label{lem:alternative_interpretations}
 For odd $j$, the rewrite rules in Section \ref{s:rewrite_rules} are sound under $\llbracket - \rrbracket_j$.
\end{lem}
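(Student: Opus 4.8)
The plan is to verify soundness rule by rule, i.e.\ to check that applying $\llbracket-\rrbracket_j$ to both sides of each rewrite rule in Section \ref{s:rewrite_rules} yields equal matrices. Since soundness under the usual interpretation $\llbracket-\rrbracket=\llbracket-\rrbracket_1$ is already established in Section \ref{s:ZX_sound}, the task reduces to controlling how multiplying every phase by an odd integer $j$ affects each rule. The rules fall naturally into three groups.

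First I would dispose of the rules whose equality is \emph{natural} in their phase parameters, namely the spider, loop, and colour change rules (together with the phase-carrying part of the $\pi$-commutation and zero scalar rules). For these, $\llbracket-\rrbracket_j$ simply sends the instance with free phase $\alpha$ (and $\beta$) to the instance with phases $j\alpha$ (and $j\beta$). For the spider rule this uses $j(\alpha+\beta)=j\alpha+j\beta$; for the loop and colour change rules the phase is merely carried across unchanged. In each case the image is again a valid instance of the \emph{same} rule evaluated under $\llbracket-\rrbracket$, hence sound; note that this part works for every $j$, not just odd ones.

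Next come the rules involving the fixed phase $\pi$: the $\pi$-copy rule, the zero rule, and the $\pi$-dependence of the $\pi$-commutation and zero scalar rules. Here the hypothesis that $j$ is odd enters for the first time, through the arithmetic fact $j\pi\equiv\pi\pmod{2\pi}$. Consequently a green spider with phase $\pi$ is mapped by $\llbracket-\rrbracket_j$ to one with phase $j\pi$, which has the \emph{same} interpretation as phase $\pi$. Combining this with the naturality argument above (applied to the free phase $\alpha$ where present), each of these rules is again carried to a sound instance. The purely phase-free rules --- the cup, bialgebra, copy, and star rules --- are trivially sound, since $\llbracket-\rrbracket_j$ agrees with $\llbracket-\rrbracket$ on every generator appearing in them.

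The main obstacle is the Euler decomposition rule \eqref{eq:Euler_dec}, the only rule built from genuine $\pm\pi/2$ phases. Here I would split on the residue of $j$ modulo $4$. The left-hand side is a single Hadamard node, whose interpretation is unaffected by $\llbracket-\rrbracket_j$. On the right-hand side, if $j\equiv 1\pmod 4$ then $j\pi/2\equiv\pi/2\pmod{2\pi}$, so every phase is unchanged and the rule reduces to the original Euler rule. If instead $j\equiv 3\pmod 4$ then $j\pi/2\equiv-\pi/2\pmod{2\pi}$, so every phase has its sign flipped; the resulting equation is exactly the sign-flipped Euler rule, which is sound by the fact --- noted in Section \ref{s:ZX_sound} --- that every rewrite rule remains sound when the signs of all angles are reversed (the Hadamard, being real, is its own sign-flip). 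Since $j$ odd forces $j\equiv 1$ or $j\equiv 3\pmod 4$, both cases are covered and the Euler rule is sound under $\llbracket-\rrbracket_j$ for all odd $j$, completing the verification.
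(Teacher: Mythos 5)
Your proof is correct and follows essentially the same route as the paper's: handle the phase-free and phase-natural rules for arbitrary $j$, use $j\pi\equiv\pi\pmod{2\pi}$ for the rules involving fixed $\pi$ phases, and split the Euler decomposition rule on $j\bmod 4$. The only cosmetic difference is in the $j\equiv 3\pmod 4$ case, where the paper justifies the sign-flipped Euler rule via the Hermitian adjoint together with the topology meta-rule, while you invoke the equivalent sign-flip soundness fact already recorded in Section \ref{s:ZX_sound} --- which is legitimate and non-circular, since that fact is verified directly against the standard interpretation.
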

\begin{proof}
 Rewrite rules involving only zero phases are clearly sound under the new interpretation map.
 Rules that only involve one non-zero phase which is the same on both sides of the equality are also sound.
 The spider rule is sound, as $j\alpha+j\beta = j(\alpha+\beta)$.
 With $j$ odd, we have $j\pi \equiv \pi \mod 2\pi$, so the $\pi$-copy, $\pi$-commutation, zero, and zero scalar rules hold under the new interpretation.
 This leaves only the Euler decomposition rule, for which there are two cases.

 If $j=4l+1$ for some integer $l$, then:
 \begin{align}
  j\pi/2 &\equiv \pi/2\mod 2\pi, \text{ and} \\
  j(-\pi/2) &\equiv -\pi/2\mod 2\pi,
 \end{align}
 thus the Euler decomposition rule remains unchanged.
 Otherwise, if $j=4l+3$ for some integer $l$, then:
 \begin{align}
  j\pi/2 &\equiv -\pi/2\mod 2\pi, \text{ and} \\
  j(-\pi/2) &\equiv \pi/2\mod 2\pi.
 \end{align}
 The Euler decomposition rule with the signs of all phases flipped is just the Hermitian adjoint of the original Euler decomposition rule.
 This is because scalar diagrams can be rotated upside-down using the topology meta rule without changing their value.
 Thus the Euler decomposition rule is sound under the map $\llbracket - \rrbracket_j$, completing the proof.
\end{proof}

As all the rewrite rules are sound under $\llbracket - \rrbracket_j$ for any odd integer $j$, any diagram equality derivable in the \ZX-calculus must hold not just under the normal interpretation map but also under $\llbracket - \rrbracket_j$ for odd $j$.
Conversely, any diagram equality that does not hold under $\llbracket - \rrbracket_j$ for some odd integer $j$ cannot be derivable using the rules of the \ZX-calculus.
This idea forms the basis of the following argument.

\begin{thm}[\cite{schroeder_incomplete_2014}]\label{thm:incompleteness}
 The exactly universal \ZX-calculus is incomplete.
\end{thm}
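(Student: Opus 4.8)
The plan is to argue by contraposition against the soundness of the alternative interpretation functors. By Lemma \ref{lem:alternative_interpretations}, for every odd integer $j$ all rewrite rules of the \ZX-calculus are sound under $\intf{-}_j$; hence if two diagrams can be connected by a sequence of rewrites, they must satisfy $\intf{D_1}_j = \intf{D_2}_j$ for every odd $j$ simultaneously. Consequently, to prove incompleteness it suffices to exhibit a single pair of diagrams $D_1, D_2$ such that
\[
 \intf{D_1} = \intf{D_2} \qquad\text{but}\qquad \intf{D_1}_j \neq \intf{D_2}_j
\]
for some odd $j$: the first equation says the equality $D_1 = D_2$ is true under the standard interpretation of Definition \ref{dfn:interpretation_functor}, while the second, via soundness of $\intf{-}_j$, shows it cannot be derived from the rewrite rules.

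Because $\intf{D}_j$ is by definition just $\intf{D'}$, where $D'$ is obtained from $D$ by multiplying every spider phase by $j$, the task reduces to finding a \emph{phase-sensitive coincidence}: an equality of matrices that happens to hold for one assignment of phase angles but fails when all those angles are scaled by $j$. Taking $j=3$ is the simplest useful choice, since $j = \pm 1$ leave the diagrams unchanged. The coincidence cannot live inside the stabilizer fragment: there all phases are multiples of $\pi/2$, and $3\cdot\tfrac{\pi}{2} \equiv -\tfrac{\pi}{2} \pmod{2\pi}$ merely sends each diagram to its adjoint, preserving every equality (consistent with the stabilizer completeness proved later in the thesis). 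The counterexample must therefore use genuinely non-stabilizer angles, e.g.\ odd multiples of $\pi/4$, for which $3\cdot\tfrac{\pi}{4} = \tfrac{3\pi}{4}$ is a truly different rotation. Following \cite{schroeder_incomplete_2014}, I would exhibit two explicit single-qubit diagrams -- line graphs of alternating green and red phase shifts, so that map-state duality lets us equally well regard them as states -- whose phases are chosen so that the two alternating products of $Z$- and $X$-rotations coincide as $2\times 2$ matrices, and then check by direct computation that tripling all the phases destroys the coincidence.

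The main obstacle is the construction itself: one must find a nontrivial relation among non-stabilizer single-qubit unitaries that is \emph{accidental}, i.e.\ not an instance of a phase-covariant identity, so that scaling the phases breaks it. Once such $D_1, D_2$ are in hand, the remaining work is routine: compute $\intf{D_1}$ and $\intf{D_2}$ and verify they agree, then re-evaluate with every phase tripled to obtain $\intf{D_1}_3$ and $\intf{D_2}_3$ and verify they differ. The inequality $\intf{D_1}_3 \neq \intf{D_2}_3$ together with soundness of $\intf{-}_3$ rules out any rewriting of $D_1$ into $D_2$, while $\intf{D_1} = \intf{D_2}$ shows the equality is nonetheless true, establishing that the exactly universal \ZX-calculus is incomplete.
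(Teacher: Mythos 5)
Your proposal follows exactly the paper's strategy: invoke Lemma \ref{lem:alternative_interpretations} for soundness of $\intf{-}_j$ with odd $j$, take $j=3$, and exhibit diagrams with $\intf{D_1}=\intf{D_2}$ but $\intf{D_1}_3\neq\intf{D_2}_3$, so that soundness of $\intf{-}_3$ blocks any derivation. The one step you leave open --- producing the ``phase-sensitive coincidence'' --- is where the paper does something more systematic than hunting for an accidental relation: it invokes the Euler angle decomposition, which \emph{guarantees} that the five-node diagram \eqref{eq:incompleteness_example}, whose phases are all multiples of $\pi/3$, satisfies the matrix equality \eqref{eq:Euler_dec_general} for suitable $\alpha,\beta,\gamma,\phi$, and these come out as the irrational multiples of $\pi$ in \eqref{eq:incompleteness_alpha}--\eqref{eq:incompleteness_phi}. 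Tripling then collapses the $\pi/3$-multiple phases ($3\cdot\pi/3\equiv\pi$, $3\cdot 2\pi/3\equiv 0 \bmod 2\pi$) so that side becomes a scaled identity, while the tripled irrational angles visibly do not give one, and incompleteness follows exactly as you describe.

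One caution about your aside: odd multiples of $\pi/4$ cannot by themselves furnish the counterexample for single-qubit line diagrams, since the thesis later proves the single-qubit Clifford+T fragment complete (Section \ref{s:Clifford+T_completeness}), so every true equality among such diagrams is derivable from the rules and hence automatically preserved under $\intf{-}_3$. The coincidence genuinely requires angles outside the $\pi/4$ lattice, which is precisely what the Euler angles of a unitary built from $\pi/3$-multiple phases provide; since you defer the construction to the cited source, this does not undermine your plan, but the $\pi/4$ example should be dropped.
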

\begin{proof}
 It is well-known that any single-qubit unitary operator can be decomposed into three single-qubit rotations up to a global phase \cite{nielsen_quantum_2010}.
 This decomposition is called \emph{Euler angle decomposition} by analogy with the decomposition of an arbitrary rotation in three-dimensional space into rotations about two fixed axes.
 The Euler decomposition of the Hadamard \eqref{eq:Euler_dec} is one example of this decomposition for single-qubit unitaries.

 Expressed in terms of \ZX-calculus diagrams, this result takes the following form. For any unitary \genunitary{$U$}, angles $\phi,\alpha,\beta,\gamma\in(-\pi,\pi]$ can be chosen such that:
 \begin{equation}\label{eq:Euler_dec_general}
  \left\llbracket \; \innerprod{gn}{rn}\;\; \genunitary{$U$} \; \right\rrbracket = \left\llbracket \innerprod{gn,label={[gphase]right:$\phi$}}{rn,label={[rphase]right:$\pi$}} \; \input{tikz_files/Euler_dec_general.tikz}\right\rrbracket.
 \end{equation}
 There are double square brackets in this equation because, while we are representing the operators graphically, this is a result that holds for matrices.
 In fact, we show that the corresponding diagram equality does not always follow from the current set of rewrite rules of the \ZX-calculus.

 To do this, consider the following single-qubit unitary:
 \begin{equation}\label{eq:incompleteness_example}
  \genunitary{$U$}\; = \; \input{tikz_files/incompleteness_ex.tikz}
 \end{equation}
 It is straightforward to check that \eqref{eq:Euler_dec_general} holds for this \genunitary{$U$} if:
 \begin{align}
  \alpha = \gamma &= -\arccos\left(\frac{5}{2\sqrt{13}}\right) \approx 0.2561 \pi \label{eq:incompleteness_alpha} \\
  \beta &= -2\arcsin\left(\frac{\sqrt{3}}{4}\right) \approx -0.2851 \pi \label{eq:incompleteness_beta} \\
  \phi &= \arcsin\left(\frac{\sqrt{3}}{4}\right) -\alpha \approx 0.3987 \pi. \label{eq:incompleteness_phi}
 \end{align}
 Now consider:
 \begin{equation}
  \left\llbracket \; \innerprod{gn}{rn}\;\; \genunitary{$U$} \; \right\rrbracket_3 \quad\text{and}\quad \left\llbracket \input{tikz_files/Euler_dec_general.tikz}\right\rrbracket_3
 \end{equation}
 for the \genunitary{$U$} and $\alpha,\beta,\gamma,\phi$ defined above.
 The first diagram is just a scaled version of the identity:
 \begin{equation}
  \left\llbracket \; \innerprod{gn}{rn}\;\; \genunitary{$U$} \; \right\rrbracket_3 = \sqrt{2} \left(\ket{0}\bra{0} + \ket{1}\bra{1}\right).
 \end{equation}
 On the other hand, for the given values of $\alpha,\beta,\gamma$, and $\phi$, the second diagram is clearly not a scaled identity.

 But, as shown in Lemma \ref{lem:alternative_interpretations}, the rewrite rules given in Section \ref{s:rewrite_rules} are sound under the interpretation map $\llbracket - \rrbracket_3$, i.e.\ any equality derived using those rules must be true under the interpretation $\llbracket - \rrbracket_3$.
 Thus it is impossible to rewrite:
 \begin{center}
  \innerprod{gn}{rn} \input{tikz_files/incompleteness_ex.tikz} $\quad$ into $\quad$ \input{tikz_files/Euler_dec_general.tikz}
 \end{center}
 with the given values of $\alpha,\beta,\gamma$, and $\phi$.

 Therefore, the exactly universal \ZX-calculus with the rewrite rules given in Section \ref{s:rewrite_rules} is incomplete.
\end{proof}

We have shown that there are pairs of diagrams that represent the same matrix but cannot be rewritten into each other using the current rewrite rules of the \ZX-calculus.
The current rule set being incomplete does not mean the \ZX-calculus can never be complete.
For example, the Euler decomposition of the Hadamard operator was not contained in the original set of rewrite rules for the \ZX-calculus but added later after it was found to not be derivable from the other rewrite rules \cite{duncan_graph_2009}.
Thus the obvious approach to solving the incompleteness problem is to add more rewrite rules.

Yet, as Schr\"{o}der and Zamdzhiev argue, it may not be possible to complete the exactly universal \ZX-calculus by adding a finite (or even finitely generated) set of new rewrite rules \cite{schroeder_incomplete_2014}.
Thus the \ZX-calculus for universal pure state qubit quantum mechanics may always be incomplete.

\section{Completeness results are possible for fragments of the \ZX-calculus}
\label{s:possible_completeness}

There is another approach that can lead to completeness results for the \ZX-calculus: rather than trying to complete the exactly universal calculus, one can consider fragments of the \ZX-calculus, corresponding to more restricted theories.
The incompleteness proof relies centrally on the fact that arbitrary single qubit unitaries -- or, in \ZX-calculus terms, arbitrary phase angles -- are allowed.
In a fragment of quantum theory which imposes restrictions on the phase angles, e.g.\ stabilizer quantum mechanics (cf.\ Section \ref{s:stabilizer_ZX}), this does not hold.
Therefore the incompleteness proof does not preclude completeness results for fragments of the pure state qubit quantum theory.

For example, consider the following result for single-qubit unitaries within stabilizer quantum mechanics.

\begin{lem}\label{lem:single-qubit_Clifford_normal_form}
 Assuming that non-zero stabilizer scalars have inverses, any unitary single-qubit Clifford operator can be written uniquely in one of the forms:
 \begin{equation}\label{eq:local_Clifford_normal_form}
  \genscalar{$r$} \; \input{tikz_files/single_Clifford_normal_form_1.tikz} \qquad \text{or} \qquad \genscalar{$s$} \; \input{tikz_files/single_Clifford_normal_form_2.tikz},
 \end{equation}
 where $\alpha,\beta,\gamma\in\{0,\pi/2,\pi,-\pi/2\}$ and \genscalar{$r$}, \genscalar{$s$} are (not necessarily connected) scalar diagrams with modulus $1$.
\end{lem}
\begin{proof}
 Any single-qubit Clifford operator can be written entirely in terms of green and red phase shifts by replacing Hadamard nodes using the Euler decomposition rule.
 The two copies of \innerprodgr{} required to match the left-hand side of the Euler decomposition rule can be created together with a copy of \halfscalar{} using the variant star rule \eqref{eq:halfscalar_innerprodgr2}.

 Furthermore, each single-qubit unitary Clifford operator must have a representation with no more than three red or green nodes in the non-scalar part, as given any diagram with at least four nodes in the non-scalar part, at least one of the following applies:
 \begin{itemize}
  \item There are two adjacent nodes of the same colour, in which case they can be merged by the spider rule.
  \item There is a node with a phase that is a multiple of $2\pi$, in which case it can be removed by the identity rule.
  \item There is a node with a phase of $\pi$, in which case it can be moved past a node of the other colour using the $\pi$-commutation rule and then merged with another node of the same colour.
  The scalar \innerprodgr{}, which is required to match the left-hand side of the $\pi$-commutation rule, can be created using the variant star rule as above.
  As the $\pi$-commutation rule also holds upside-down and/or with colours flipped, it is never necessary to apply it backwards, thus we do not need to worry about matching the scalar on its right-hand side in our diagram.
  \item If none of the above cases apply, the nodes in the diagram must all have phases in the set $\{\pm\pi/2\}$ and their colours must alternate.
  Now assume that there is a scalar diagram \genscalar{$s$} which is inverse to \innerprod{gn,label={[gphase]right:$-\pi/2$}}{rn,label={[rphase]right:$-\pi/2$}}, i.e.\ it satisfies:
  \begin{equation}\label{eq:inverse_for_omega_dagger}
   \genscalar{$s$} \; \innerprod{gn,label={[gphase]right:$-\pi/2$}}{rn,label={[rphase]right:$-\pi/2$}} = \qquad,
  \end{equation}
  Then by the Euler decomposition and the self-inverse property of the Hadamard operator, together with the colour change rule, we have:
  \begin{equation}\label{eq:Euler_colour-swap}
   \input{tikz_files/Euler_der_1.tikz} \;
   = \; \genscalar{$s$} \; \innerprod{gn,label={[gphase]right:$-\pi/2$}}{rn,label={[rphase]right:$-\pi/2$}} \; \input{tikz_files/Euler_der_1.tikz} \;
   = \; \genscalar{$s$} \; \innerprodgr \; \innerprodgr \; \Hadamard \;
   = \; \genscalar{$s$} \; \innerprodgr \; \innerprodgr \; \begin{tikzpicture}
	\begin{pgfonlayer}{nodelayer}
		\node [style=Hadamard] (0) at (-1.5, 1) {};
		\node [style=Hadamard] (1) at (-1.5, -0) {};
		\node [style=Hadamard] (2) at (-1.5, -1) {};
		\node [style=none] (3) at (-1.5, 1.5) {};
		\node [style=none] (4) at (-1.5, -1.5) {};
	\end{pgfonlayer}
	\begin{pgfonlayer}{edgelayer}
		\draw (3.center) to (4.center);
	\end{pgfonlayer}
\end{tikzpicture} \;
   = \; \input{tikz_files/Euler_der_3.tikz} \;
   = \; \input{tikz_files/Euler_der_4.tikz} .
  \end{equation}
  We show in Lemma \ref{lem:omega_inverses} that the above assumption is satisfied, i.e.\ there exists a scalar diagram \genscalar{$s$} such that \eqref{eq:inverse_for_omega_dagger} is true.

  By pre- and post-composing the first and last diagrams in \eqref{eq:Euler_colour-swap} with \phase{gn,label={[gphase]right:$\pi$}} or \phase{rn,label={[rphase]right:$\pi$}} and using the spider and $\pi$-commutation rules, similar results can be derived for any combination of plus and minus signs in the phases.
  Hence if there is a sequence of four nodes of alternating colours, all of which have phases in the set $\{\pm\pi/2\}$, we can change the colours of three of them, and thus get two adjacent nodes of the same colour, which can be merged.
 \end{itemize}
 In each of the cases listed above, the number of nodes in the non-scalar part of the diagram can be reduced by applying suitable rewrite rules.
 The strategy works until there are no more than three nodes left.
 Having reduced all diagrams to at most three nodes, it is straightforward -- albeit somewhat tedious -- to check that the given expressions indeed give a unique representation for each Clifford operator.
 The condition on the modulus of the scalar subdiagram is imposed by the unitarity of the Clifford operator as a whole.
\end{proof}

From this lemma it follows immediately that the Euler decomposition of any single-qubit Clifford operator involves only phase angles that are integer multiples of $\pi/2$.
Thus, Theorem \ref{thm:incompleteness} does not apply to the \ZX-calculus restricted to stabilizer quantum mechanics.
In fact, the above lemma implies that -- under the assumption of non-zero stabilizer scalars being invertible -- the \ZX-calculus is complete for unitary single-qubit Clifford operators.
Similar arguments can be made for other fragments of the \ZX-calculus, e.g.\ the single-qubit Clifford+T group, see Section \ref{s:Clifford+T_completeness}.

In the proof of Lemma \ref{lem:single-qubit_Clifford_normal_form}, we have ignored scalar subdiagrams for simplicity; assuming only that every scalar appearing in a rewrite rule has an inverse so that rewrite rules can be applied whenever their non-scalar part matches.
In Chapter \ref{ch:more_completeness}, we show that the assumption of non-zero scalars being invertible is justified.

We continue under that assumption for the rest of this chapter to show that the \ZX-calculus is complete for pure state stabilizer quantum mechanics with post-selected measurements and ignoring scalars.
This work was first published in \cite{backens_zx-calculus_2013}.

\section{Map-state duality in the \ZX-calculus}
\label{s:Choi-Jamiolkowski}

Ignoring scalars in the stabilizer completeness proof simplifies matters significantly.
There is another useful simplifying assumption: the fact that arguments about arbitrary processes in the calculus can be made by only considering states.
This assumption is justified by map-state duality.

Map-state duality, also known as the Choi-Jamio{\l}\-kowski isomorphism, relates quantum states and linear operators:

\begin{thm}[Map-state duality or Choi-Jamio{\l}kowski isomorphism]\label{thm:Choi-Jamiolkowski}
 For any pair of positive integers $n$ and $m$, there exists a bijection between the linear operators from $n$ to $m$ qubits and the states on $n+m$ qubits.
\end{thm}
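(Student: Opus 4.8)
The plan is to exhibit the bijection concretely using the compact closed structure of the category of \ZX-calculus diagrams (equivalently $\FHilb$ or $\Mat_\CC$), since this is the map that will later let arguments about arbitrary diagrams be reduced to arguments about state diagrams. A linear operator from $n$ to $m$ qubits is an arrow $M\colon n\to m$, i.e.\ a $2^m$ by $2^n$ matrix $M=\sum_{i,j}M_{ij}\ket{i}\bra{j}$, whereas a state on $n+m$ qubits is an arrow $\ket{\Phi}\colon 0\to n+m$. Both of these form complex vector spaces of dimension $2^{n+m}$, so a bijection exists trivially on cardinality grounds; the point is to produce the \emph{canonical} one, obtained by ``bending'' the $n$ input wires of $M$ down into outputs using the cups of the compact structure.

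Concretely, I would first define the forward map
\[
 \Lambda(M) = (1_n\otimes M)\circ\eta_n,
\]
where $\eta_n\colon 0\to n+n$ is the compact-structure unit on the object $n$, i.e.\ the $n$ nested cups interpreted via \eqref{eq:cup_Bell_state} as $\sum_k\ket{k}\otimes\ket{k}$. Unwinding the definitions gives $\Lambda(M)=\sum_{i,j}M_{ij}\ket{j}\otimes\ket{i}$, a state on $n+m$ qubits. I would then define the backward map
\[
 \Gamma(\ket{\Phi}) = (\epsilon_n\otimes 1_m)\circ(1_n\otimes\ket{\Phi}),
\]
where $\epsilon_n\colon n+n\to 0$ is the compact-structure counit ($n$ nested caps). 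The key step is to check that $\Gamma$ and $\Lambda$ are mutually inverse. Using the interchange law to slide $M$ past the cap onto a disjoint set of wires, the composite $\Gamma\circ\Lambda$ reduces to
\[
 \Gamma(\Lambda(M)) = M\circ\bigl((\epsilon_n\otimes 1_n)\circ(1_n\otimes\eta_n)\bigr) = M,
\]
the inner bracket collapsing to $1_n$ by the first snake equation of Definition \ref{dfn:compact_closed_category} (graphically, straightening a bent wire, cf.\ Figure \ref{fig:snakes}). The reverse composite $\Lambda\circ\Gamma=\mathrm{id}$ follows in exactly the same way from the second snake equation. Since a function with a two-sided inverse is a bijection, injectivity and surjectivity both follow at once, and linearity of $\Lambda$ and $\Gamma$ is immediate from their definition as pre- and post-composition with fixed arrows.

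I expect the main obstacle to be purely combinatorial rather than conceptual: with $n$ wires being bent simultaneously, one must keep careful track of the ordering of the tensor factors so that the $n$-fold nested $\eta_n$ and $\epsilon_n$ line up correctly in the composites, and confirm that these nested cups and caps genuinely satisfy the snake equations for the single object $n$ (which they do, since the compact structure on $n$ is assembled from $n$ copies of the structure on the object $1$ via the symmetric monoidal swap axioms). Once the wires are matched up correctly the snake equations apply verbatim and the mutual-inverse computation above goes through unchanged.
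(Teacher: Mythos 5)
Your proof is correct and takes essentially the same route as the paper: the paper simply asserts that the isomorphism is ``bending around'' the inputs and justifies it by the topology meta-rule, which is exactly your compact-structure maps $\Lambda$ and $\Gamma$ together with the snake equations of Definition \ref{dfn:compact_closed_category} made explicit. The one small caveat, which you already flag yourself, is that the $n$-fold \emph{nested} cups interpret as $\sum_k \ket{k}\otimes\ket{k'}$ with the bit order of $k'$ reversed rather than literally $\sum_k\ket{k}\otimes\ket{k}$; this changes your explicit coefficient formula for $\Lambda(M)$ but not the snake-equation computation, so the bijection claim is unaffected.
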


In the \ZX-calculus, states are diagrams with no inputs.
Therefore the Choi-Jamio{\l}kowski isomorphism as a transformation consists of just ``bending around'' the inputs of the operator so that they become outputs.
This process can also be thought of as composing the operator with an appropriate entangled state.
In the reverse direction, we bend around some of the outputs to become inputs, or alternatively compose the diagram with the appropriate effect.

The isomorphism implies that for any operator $A$ from $n$ to $m$ qubits and for any $n+m$-qubit state $B$:
\begin{equation}\label{eq:Choi-Jamiolkowski}
 \input{tikz_files/choi-jamiolkowski.tikz}.
\end{equation}
This follows directly from the rule that \emph{only the topology matters}, which allows us to ``yank straight'' any inputs and outputs.

In the remainder of this chapter, we thus focus without loss of generality on state diagrams only.
Any results we derive about stabilizer state diagrams can also be applied to non-scalar stabilizer diagrams with arbitrary numbers of inputs and outputs.

\section{A normal form for stabilizer state diagrams}
\label{s:ZX_graph_states}

Even with the phase angles restricted to integer multiples of $\pi/2$, there is an infinite number of \ZX-calculus diagrams for any given number of inputs and outputs, and diagrams can get arbitrarily large.
To prove completeness, we therefore start by showing that any stabilizer diagram can be brought into a normal form, called ``GS-LC form'', which is based on graph states and local Clifford operators.
The term ``normal form'' is used very loosely here: the GS-LC form is not unique, i.e.\ two  GS-LC diagrams may be equal without being identical.
We also do not prove any confluence or termination properties for the rewrite system.
Instead we exhibit an algorithm that can be used to bring any stabilizer \ZX-calculus diagram into GS-LC form.
As GS-LC diagrams are fairly compact -- a normalised GS-LC diagram with $n$ inputs and $m$ outputs contains at most $O\left((n+m)^2\right)$ nodes, as shown in Lemma \ref{lem:GS-LC_size} -- this nevertheless simplifies the derivation of equalities between diagrams.

Similar to the proof of Lemma \ref{lem:single-qubit_Clifford_normal_form} above, this normal form proof relies on the following assumption:
\begin{quotation}
 \noindent Any stabilizer scalar diagram that appears as part of a non-zero rewrite rule has an inverse, i.e.\ for any:
 \begin{equation}
  \genscalar{$s$} \in \left\{ \innerprod{gn}{rn}, \; \innerprod{gn,label={[gphase]right:$\pi$}}{rn}, \; \innerprod{gn,label={[gphase]right:$\pi$}}{rn,label={[rphase]right:$\pi/2$}}, \; \innerprod{gn,label={[gphase]right:$\pi$}}{rn,label={[rphase]right:$\pi$}}, \; \innerprod{gn,label={[gphase]right:$\pi$}}{rn,label={[rphase]right:$-\pi/2$}}, \; \innerprod{gn,label={[gphase]right:$-\pi/2$}}{rn,label={[rphase]right:$-\pi/2$}} \right\},
 \end{equation}
 there exists a stabilizer scalar diagram \genscalar{$r$} such that:
 \begin{equation}
  \genscalar{$r$} \; \genscalar{$s$} = \qquad.
 \end{equation}
\end{quotation}
This assumption is necessary to ensure that rewrite rules can be applied whenever their non-scalar part matches.
We show in Section \ref{s:scalar_completeness} that the assumption is justified.
\begin{ex}
 In \eqref{eq:Euler_colour-swap} above, the non-scalar part of the right-hand side of the Euler decomposition rule matches the first diagram.
 Yet without an inverse to \innerprod{gn,label={[gphase]right:$-\pi/2$}}{rn,label={[rphase]right:$-\pi/2$}}, the rewrite rule could not be applied because the scalar part does not match.
\end{ex}

\subsection{Graph states and local Clifford operators}

Graph states as defined in Definition \ref{dfn:graph_state_QM} can be represented in the graphical calculus in an especially elegant way.

\begin{prop}\label{prop:graph_state}
 A graph state $\ket{G}$, where $G=(E,V)$ is an $n$-vertex graph, is represented in the graphical calculus as follows:
 \begin{itemize}
  \item for each vertex $v\in V$, a green node with one output and a normalising factor \halfscalar{} \innerprodgr{}, and
  \item for each edge $\{u,v\}\in E$, a Hadamard node connected to the green nodes representing vertices $u$ and $v$, as well as a normalising factor \innerprodgr{}.
 \end{itemize}
\end{prop}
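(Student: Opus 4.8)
The plan is to prove the proposition semantically: I would compute the interpretation $\intf{-}$ of the proposed diagram and check that it equals the state $\ket{G}=\bigl(\prod_{\{u,v\}\in E}C_Z^{uv}\bigr)\ket{+}\t{n}$ of Definition~\ref{dfn:graph_state}. First I fix the shape of the diagram as it is forced by the description: for vertex $v$ the green node carries $1+\deg(v)$ legs, namely one external output together with one internal leg for each incident edge, and each edge $\{u,v\}$ carries a single Hadamard node joining the corresponding internal legs of $u$ and $v$. There are then $n$ per-vertex scalar factors $\halfscalar\,\innerprodgr$ and $\abs{E}$ per-edge scalar factors $\innerprodgr$ to keep track of.

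Next I would evaluate the building blocks from Definition~\ref{dfn:interpretation_functor}. A phase-$0$ green spider all of whose legs are outputs interprets as $\sum_{s\in\{0,1\}}\ket{s}\t{(1+\deg v)}$, so each green spider simply copies one computational basis value $s_v$ along all of its legs. The per-vertex scalar evaluates to $\intf{\halfscalar\,\innerprodgr}=\tfrac12\cdot\sqrt2=\tfrac1{\sqrt2}$, and the per-edge scalar evaluates to $\intf{\innerprodgr}=\sqrt2$.

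The key local computation is the contribution of a single Hadamard edge. Since $\intf{\Hadamard}=H$ and, by \eqref{eq:cup_Bell_state}, joining two output legs amounts to composing with the (unnormalised) Bell effect $\bra{00}+\bra{11}$, an edge $\{u,v\}$ on which the two spiders carry the values $s_u,s_v$ contributes the factor $\matrixel{s_v}{H}{s_u}=\tfrac1{\sqrt2}(-1)^{s_us_v}$, and the per-edge factor $\innerprodgr=\sqrt2$ cancels the $\tfrac1{\sqrt2}$. Summing over all assignments $\vec s\in\{0,1\}^n$ and collecting the scalars ($2^{-n/2}$ from the vertices, $2^{\abs{E}/2}$ from the edges, and $2^{-\abs{E}/2}$ from the Hadamards, whose product is $2^{-n/2}$) yields $2^{-n/2}\sum_{\vec s}\bigl(\prod_{\{u,v\}\in E}(-1)^{s_us_v}\bigr)\ket{\vec s}$, which is exactly $\bigl(\prod_{\{u,v\}\in E}C_Z^{uv}\bigr)\ket{+}\t{n}=\ket{G}$.

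The only real work here is the scalar bookkeeping, and that is where I expect any slip to occur; everything else is a direct application of the interpretation functor. An alternative, more rewriting-flavoured route would start from the established representations $\ket{+}=\intf{\halfscalar\,\innerprodgr\,\state{gn}}$ and the controlled-\NOT{} of \eqref{eq:cNOT_acute}, use $C_Z=(I\otimes H)C_X(I\otimes H)$ together with the colour change rule to obtain the symmetric ``two green spiders joined by a Hadamard'' gadget for $C_Z$, and then apply the spider rule to fuse each gadget spider into the spider of its vertex; this makes the claimed diagram shape manifest but requires the same normalisation tracking, so I would prefer the direct semantic computation above.
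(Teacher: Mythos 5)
Your direct semantic computation is correct: the spider interpretation, the per-edge factor $\matrixel{s_u}{H}{s_v}=\tfrac{1}{\sqrt2}(-1)^{s_us_v}$ obtained from the cap of \eqref{eq:cup_Bell_state}, the scalar values $\intf{\halfscalar\,\innerprodgr}=\tfrac{1}{\sqrt2}$ and $\intf{\innerprodgr}=\sqrt2$, and the final bookkeeping $2^{-n/2}\cdot 2^{\abs{E}/2}\cdot 2^{-\abs{E}/2}=2^{-n/2}$ all check out, yielding exactly $\bigl(\prod_{\{u,v\}\in E}C_Z^{uv}\bigr)\ket{+}\t{n}$. The paper, however, takes what you relegate to your closing remark as the ``alternative route'': its proof is a two-line observation that $\intf{\halfscalar\,\innerprodgr\,\state{gn}}=\ket{+}$ and that $\innerprodgr$ together with the two-green-spiders-joined-by-a-Hadamard gadget interprets as $C_Z$, so the stated diagram is just the direct translation of Definition~\ref{dfn:graph_state} (with the fusion of the gadget spiders into the vertex spiders left implicit, justified by the spider rule). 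The trade-off is as you anticipate: the paper's argument is short and modular because it reuses already-verified interpretations of the building blocks, but it silently relies on spider fusion and on the fact that the per-edge $\innerprodgr$ factors compose multiplicatively; your global computation is longer and forces you to carry all scalars explicitly, but it is entirely self-contained, needs no rewrite rules at all, and makes the normalisation --- the part most likely to go wrong in this scalar-sensitive version of the calculus --- an explicit, checkable identity rather than a by-product.
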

\begin{proof}[Proof]
 As:
 \begin{equation}
  \intf{\halfscalar \; \innerprodgr \; \state{gn}} = \ket{+} \qquad\text{and}\qquad \intf{\innerprodgr \; \input{tikz_files/controlled-Z.tikz}} = C_Z,
 \end{equation}
 this is just the translation into the \ZX-calculus of Definition \ref{dfn:graph_state}.
\end{proof}

Graph states were first introduced into the scalar-free \ZX-calculus in \cite{duncan_graph_2009}.
The proof that this definition agrees with the definition of graph states in terms of their stabilizers translates straightforwardly from the argument given there to the scaled \ZX-calculus.

\begin{ex}
 Let $G$ be the graph:
 \begin{equation}
  \input{tikz_files/graph_example.tikz}
 \end{equation}
 The stabilizer group of the corresponding 4-qubit graph state is generated by the operators:
\begin{align}
 & X\otimes Z\otimes Z\otimes Z, \nonumber \\
 & Z\otimes X\otimes Z\otimes Z, \nonumber \\
 & Z\otimes Z\otimes X\otimes I, \text{ and} \nonumber \\
 & Z\otimes Z\otimes I\otimes X.
\end{align}
 By Proposition \ref{prop:graph_state}, the corresponding diagram in the \ZX-calculus is:
 \begin{equation}
  \innerprod{gn}{rn} \; \input{tikz_files/graph_state_example.tikz}
 \end{equation}
 where the vertices are rearranged so that the qubits are in the correct order.
 We verify that this is an eigenstate of the operator $X\otimes Z\otimes Z\otimes Z$.
 Indeed, ignoring the scalar \innerprodgr{} as that remains unchanged throughout:
 \begin{equation}
  \input{tikz_files/graph_state_stabilizer_example.tikz}
 \end{equation}
 using the $\pi$-copy law and the spider rule in the first step, the colour change law and the self-inverse property of Hadamard nodes in the second step, and the spider rule again in the third step.

 The same process can be applied to the other Pauli operators given above.
\end{ex}

The stabilizer graph formalism \cite{elliott_graphical_2008} (cf.\ also Section \ref{s:stabilizer_graphs}) provides a notation for arbitrary stabilizer states in terms of graph states and local Clifford operators.
Translating this idea into the \ZX-calculus yields the following definition.

\begin{dfn}\label{dfn:GS-LC}
 A diagram in the stabilizer \ZX-calculus is called a \emph{GS-LC diagram} if it consists of a graph state diagram with arbitrary single-qubit Clifford operators applied to each output, together with an arbitrary scalar.
 Following \cite{anders_fast_2005}, we call the Clifford operator associated with one of the qubits in the graph state its \emph{vertex operator}.
\end{dfn}

As a consequence of Lemma \ref{lem:single-qubit_Clifford_normal_form}, whenever a  single-qubit Clifford operator appears in the following arguments, we can assume without loss of generality that the operator is one of the diagrams given in \eqref{eq:local_Clifford_normal_form}.

Any graph state or GS-LC diagram can be decomposed into two parts: the non-scalar part, containing any nodes that are in some way connected to an output, and the normalising factor, containing any nodes that are disconnected from all outputs.
We use a white ellipse labelled with the name of the graph as short-hand notation for the non-scalar part of a graph diagram, and a white diamond as short-hand for a scalar.
For example, \gengraph{$G$} denotes the non-scalar part of the diagram representing $\ket{G}$.
The non-scalar part of an $n$-qubit GS-LC diagram with underlying graph $G$ is denoted by:
\begin{equation}
 \input{tikz_files/GS-LC_diagram.tikz},
\end{equation}
where $U_1,\ldots,U_n\in\mathcal{C}_1$.

Note that GS-LC diagrams do not need to be normalised, and they may be zero.
Nevertheless, it is straightforward to see the following.

\begin{lem}\label{lem:GS-LC_zero}
 A GS-LC diagram is zero if and only if its scalar part is zero.
\end{lem}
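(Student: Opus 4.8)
The plan is to exploit the fact that the interpretation functor $\intf{-}$ is monoidal, so the interpretation of a GS-LC diagram $D$ factors through its decomposition into scalar part and non-scalar part. First I would record that, since the scalar part $s$ has no inputs or outputs and is disconnected from everything else, its interpretation is a single complex number, $\intf{s} = c \in \CC$, and tensoring with a $1$-by-$1$ matrix is just scalar multiplication. Writing $D_0$ for the non-scalar part, this gives
\[
 \intf{D} = c \cdot \intf{D_0}.
\]
The forward direction is then immediate: if the scalar part is zero, then $c = 0$ and hence $\intf{D} = 0$.

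For the converse, the crux is to show that the non-scalar part $D_0$ is never zero, for then $\intf{D} = c\cdot\intf{D_0}$ can vanish only when $c=0$, i.e.\ when the scalar part is zero. By Definition \ref{dfn:GS-LC} together with Lemma \ref{lem:single-qubit_Clifford_normal_form}, $D_0$ consists of the non-scalar part of a graph state diagram with single-qubit Clifford vertex operators $U_1,\ldots,U_n$ applied to its outputs, so that
\[
 \intf{D_0} = (U_1\otimes\cdots\otimes U_n)\,\ket{\tilde{G}},
\]
where $\ket{\tilde{G}}$ is the interpretation of the underlying (unnormalised) graph state diagram. By Proposition \ref{prop:graph_state} this diagram is exactly the translation of Definition \ref{dfn:graph_state}, built from unnormalised green spiders and Hadamard-mediated edges; hence $\ket{\tilde{G}}$ differs from the normalised graph state $\ket{G}$ only by a non-zero scalar factor absorbed elsewhere into the normalising pieces. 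Since graph states are genuine pure quantum states they are non-zero vectors, so $\ket{\tilde{G}}\neq 0$.

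Finally I would invoke invertibility: the vertex operators are single-qubit Clifford unitaries, and an invertible linear map sends non-zero vectors to non-zero vectors, so $\intf{D_0}\neq 0$. I expect the main obstacle to be the bookkeeping in this last step rather than any deep difficulty: one must check that the normalising scalars which Proposition \ref{prop:graph_state} attaches to each vertex and edge are precisely the ones collected into the scalar part, so that passing from the normalised graph state $\ket{G}$ to the diagram $\ket{\tilde{G}}$ only ever multiplies by a non-zero constant and never collapses the state. Once that is verified, non-vanishing of $\intf{D_0}$ follows and the equivalence is established.
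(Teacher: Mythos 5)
Your proposal is correct and takes essentially the same route as the paper: the paper's proof likewise observes that the non-scalar part is an invertible map applied to a non-zero vector, viewing it as a (scaled) unitary --- the controlled-Z gates for the edges followed by the single-qubit Clifford vertex operators --- acting on the state $\state{gn}\cdots\state{gn}$, so the diagram vanishes iff its scalar part does. The only cosmetic difference is that by folding the edges into the unitary rather than stopping at the graph state $\ket{\tilde{G}}$, the paper sidesteps the normalisation bookkeeping you flag (which is in any case harmless, since dropping the factors of $\innerprodgr$ into the scalar part only rescales by non-zero constants).
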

\begin{proof}
 The non-scalar part of a GS-LC diagram consists of a unitary operator -- controlled-Z gates to create the edges, then single-qubit Clifford operators -- applied to the state $\state{gn}\ldots\state{gn}$, which cannot be zero.
 Thus the desired result follows.
\end{proof}

Van den Nest et al. showed, using the binary formalism for stabilizer quantum mechanics, that two graph states are related to each other by local Clifford operations if and only if the underlying graphs are related by a sequence of local complementations \cite{van_den_nest_graphical_2004} (cf.\ Section \ref{s:binary_stabilizer_formalism}).
Duncan and Perdrix translated this result into the scalar-free \ZX-calculus \cite{duncan_graph_2009}, and it carries over straightforwardly to the scaled \ZX-calculus.

\begin{thm}\label{thm:Van_den_Nest}
 Let $G=(V,E)$ be a graph with adjacency matrix $\theta$ and let $G\star v$ be the graph that results from applying a local complementation about some $v\in V$ (cf. Definition \ref{dfn:local_complementation}).
 Then the equality:
 \begin{equation}
  \ket{G\star v} = R_{X,v}\otimes\bigotimes_{u\in V}R_{Z,u}^{-\theta_{uv}}\ket{G}
 \end{equation}
 holds in the \ZX-calculus, i.e.\ we have:
 \begin{equation}
  \input{tikz_files/local_complementation_1.tikz} \; = \; \genscalar{$s$} \; \input{tikz_files/local_complementation_2.tikz}
 \end{equation}
 where $\alpha_k = -\theta_{kv}\pi/2$ for $k\in V\setminus\{v\}$ and \genscalar{$s$} is a scalar diagram satisfying:
 \begin{equation}
  \left\llbracket \genscalar{$s$} \right\rrbracket = \sqrt{2^{\abs{E}-\abs{E'}}},
 \end{equation}
 where $E'$ is the set of edges of $G\star v$ and $\abs{-}$ denotes the number of elements of a set.
 An explicit form for \genscalar{$s$} can be found using Lemma \ref{lem:modulus_nf}.
\end{thm}

\begin{lem}\label{lem:modulus_nf}
 Let $r$ be an integer.
 Then the scalar $\sqrt{2^r}$ can be represented by one of the following \ZX-calculus diagrams: 
 \begin{mitem}
  \item for $r>0$, $r$ copies of \innerprodgr,
  \item for $r=0$, the empty diagram,
  \item for $r<0$ and $r$ even, $\abs{r}/2$ copies of \halfscalar, and
  \item for $r<0$ and $r$ odd, one copy of \innerprodgr{} and $(1-r)/2$ copies of \halfscalar.
 \end{mitem}
\end{lem}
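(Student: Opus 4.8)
The plan is to reduce the claim to two base evaluations of the interpretation functor, followed by a short arithmetical case analysis; since the diagrams listed are disconnected collections of fixed scalar pieces, no graph rewriting is involved at all.

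First I would evaluate the two scalar building blocks under $\intf{-}$. From Definition \ref{dfn:interpretation_functor} one reads off $\intf{\halfscalar} = \frac{1}{2}$ immediately. For \innerprodgr, I would use the interpretation of the green and red spiders: the one-legged phase-$0$ red node is the state $\ket{+}+\ket{-} = \sqrt{2}\ket{0}$ and the one-legged phase-$0$ green node is the effect $\bra{0}+\bra{1} = \sqrt{2}\bra{+}$. Composing them along the connecting wire gives $\intf{\innerprodgr} = 2\braket{+}{0} = \sqrt{2}$.

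Next I would invoke the fact that $\intf{-}$ sends side-by-side placement to the tensor product, which for $1$ by $1$ matrices is ordinary multiplication of complex numbers. Consequently a diagram consisting of $l$ copies of \innerprodgr alongside $k$ copies of \halfscalar is interpreted as $(\sqrt{2})^l \cdot (\tfrac{1}{2})^k = 2^{l/2 - k}$, and it therefore suffices to check, in each of the four cases, that the prescribed $l$ and $k$ satisfy $l/2 - k = r/2$.

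Finally I would dispatch the four cases by elementary arithmetic: $r>0$ gives $2^{r/2} = \sqrt{2^r}$ from $r$ copies of \innerprodgr; $r=0$ gives $1$ from the empty diagram; $r<0$ even gives $2^{-\abs{r}/2} = \sqrt{2^r}$ from $\abs{r}/2$ copies of \halfscalar; and $r<0$ odd gives $2^{1/2 - (1-r)/2} = 2^{r/2} = \sqrt{2^r}$. There is no real obstacle here — the computation is entirely routine. The only points requiring care are purely bookkeeping: verifying that the stated number of tiles is a non-negative integer in each branch (in particular, for $r<0$ odd the quantity $1-r$ is even and positive, so $(1-r)/2 \in \NN$), and that all the diagrams used lie within the stabilizer fragment, which is immediate as every node carries phase $0$.
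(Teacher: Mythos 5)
Your proposal is correct and follows exactly the route the paper takes: its proof consists of the single remark that one checks the interpretations of the given diagrams, which is precisely the computation you carry out, namely $\intf{\halfscalar}=\tfrac{1}{2}$, $\intf{\innerprodgr}=\sqrt{2}$, multiplicativity of $\intf{-}$ on disconnected scalar pieces, and the four-way arithmetic check. The only difference is that you make the routine verification explicit (including the bookkeeping that $(1-r)/2\in\NN$ for odd $r<0$), which is a faithful expansion rather than a new argument.
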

\begin{proof}
 It is straightforward to check the interpretations of the given diagrams.
\end{proof}

In fact, any diagram consisting solely of copies of \innerprodgr{} and \halfscalar{} can be brought into the normal form given in Lemma \ref{lem:modulus_nf} by using the variant star rule proved in Lemma \ref{lem:variant_star_rule}.

While a general stabilizer diagram with a fixed number of inputs and outputs can contain arbitrarily many nodes, GS-LC diagrams have a more manageable size.

\begin{lem}\label{lem:GS-LC_size}
 The non-scalar part of a GS-LC diagram on $n$ qubits contains at most $(n^2+7n)/2$ nodes.
\end{lem}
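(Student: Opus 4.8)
The plan is to count the nodes in the non-scalar part of a GS-LC diagram on $n$ qubits by separately bounding the vertex nodes, the edge (Hadamard) nodes, and the vertex-operator nodes. By Proposition \ref{prop:graph_state}, the graph state part of the diagram consists of exactly $n$ green vertex nodes together with one Hadamard node for each edge of the underlying graph $G$. Since $G$ is a simple graph on $n$ vertices, the number of edges is at most $\binom{n}{2} = n(n-1)/2$, so the graph state contributes at most $n + n(n-1)/2$ nodes to the non-scalar part.

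Next I would bound the contribution of the vertex operators. By Definition \ref{dfn:GS-LC}, each of the $n$ outputs carries a single-qubit Clifford vertex operator, and by Lemma \ref{lem:single-qubit_Clifford_normal_form} (using the standing assumption that non-zero stabilizer scalars are invertible) each such operator can be written, up to scalars, in one of the two normal forms in \eqref{eq:local_Clifford_normal_form}. The first form has two nodes (one green, one red phase shift) and the second has three nodes, so the non-scalar part of any single-qubit vertex operator contains at most three nodes. This gives at most $3n$ nodes from the vertex operators across all $n$ qubits.

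Summing these contributions yields a bound of
\begin{equation*}
 n + \frac{n(n-1)}{2} + 3n = \frac{2n + n^2 - n + 6n}{2} = \frac{n^2 + 7n}{2},
\end{equation*}
which is exactly the claimed bound. I would present the counting as three clearly separated tallies and then combine them in a single displayed computation.

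I do not expect any serious obstacle here, since the result is essentially a bookkeeping exercise once the structural facts are in place. The one point that merits care is making sure the vertex operators are genuinely in the reduced normal form of \eqref{eq:local_Clifford_normal_form} rather than in some unreduced presentation with more nodes; this is precisely what Lemma \ref{lem:single-qubit_Clifford_normal_form} guarantees, so I would invoke it explicitly to justify the ``at most three nodes per vertex operator'' step. A secondary subtlety is that the normalising scalar factors (the \halfscalar{} and \innerprodgr{} subdiagrams) are by definition part of the scalar, not the non-scalar part, so they do not count toward the bound; I would note this so the reader does not conflate the two. With the simple-graph edge bound $n(n-1)/2$ being the dominant quadratic term, the final estimate follows immediately.
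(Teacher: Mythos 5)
Your proof is correct and follows essentially the same route as the paper: counting $n$ vertex nodes, at most $n(n-1)/2$ edge Hadamard nodes, and at most three nodes per vertex operator via the normal form of Lemma \ref{lem:single-qubit_Clifford_normal_form}, then summing. Your explicit remarks about the normal form and about scalar factors being excluded only make the bookkeeping more transparent than the paper's version.
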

\begin{proof}
 Ignoring scalars, the graph state underlying a GS-LC diagram contains one node for each vertex and one node for each edge.
 The number of vertices is $n$.
 As the graph is simple, i.e.\ there are no self-loops and at most one edge between any pair of vertices, the maximum number of edges is equal to $n(n-1)/2$.
 The non-scalar parts of the single-qubit Clifford operators can be written using at most three nodes per qubit.
 Thus the total number of nodes is:
 \begin{equation}
  n + n(n-1)/2 + 3n = \frac{1}{2}(n^2 + 7n). \qedhere
 \end{equation}
\end{proof}

We are ignoring the scalar part of the GS-LC diagram here as that can be arbitrarily large for a non-normalised operator.

\subsection{Equivalence transformations of GS-LC diagrams}
\label{s:equivalence_GS-LC}

Consider the non-normalised $n$-qubit GS-LC diagram:
\begin{equation}
 \input{tikz_files/GS-LC_diagram.tikz}
\end{equation}
where $G=(V,E)$ is a graph with adjacency matrix $\theta$, and $U_v\in\mathcal{C}_1$ for $v\in V$.
It is useful to set out explicitly three equivalence transformations of GS-LC diagrams, i.e.\ operations that take a GS-LC diagram to an equal but generally not identical GS-LC diagram.
These operations are later used to prove that any stabilizer state diagram can be brought into GS-LC form, and that the scalar-free stabilizer \ZX-calculus is complete.

\textbf{Local complementation about a qubit $v$}: Let $G\star v$ denote the graph that results from $G$ through application of the graph-theoretical local complementation about some vertex $v\in V$.
Then by Theorem \ref{thm:Van_den_Nest}:
\begin{equation}
 \input{tikz_files/GS-LC_diagram.tikz} = \; \genscalar{$s$} \; \input{tikz_files/GS-LC_diagram_lc.tikz},
\end{equation}
where $\alpha_u = \theta_{uv}\pi/2$ for $u\in V\setminus\{v\}$ and $s=\sqrt{2^{\abs{E'}-\abs{E}}}$ with $E'$ the set of edges of $G\star v$.
In the following, when we say ``local complementation'', we usually mean this transformation, which consists of a graph operation, a change to the local Clifford operators, and a change to the scalar part of the diagram.

\textbf{Fixpoint operation on a qubit $v$}: Let $v\in V$, then:
\begin{equation}
 \input{tikz_files/GS-LC_diagram_fp.tikz},
\end{equation}
where $\alpha_u = \theta_{uv}\pi$ for $u\in V\setminus\{v\}$.
This equality holds by the definition of graph states, or, alternatively, by a double local complementation about $v$.
Note that as the number of edges in the graph does not change, the normalisation does not change either.

\textbf{Local complementation along an edge $\{v,w\}$}: Let $v,w\in V$ such that $\{v,w\}\in E$. Then:
\begin{equation}
 \input{tikz_files/GS-LC_diagram.tikz} = \; \genscalar{$s$} \; \input{tikz_files/GS-LC_diagram_pivot.tikz},
\end{equation}
where:
\begin{equation}
 U_j' = \begin{cases} U_j\circ R_Z\circ R_X^{-1}\circ R_Z & \text{if }j\in\{v,w\} \\ U_j\circ Z & \text{if } j\in V\setminus\{v,w\}\wedge(\{j,v\}\in E \vee \{j,w\}\in E) \\ U_j & \text{otherwise,} \end{cases}
\end{equation}
and $G' = (V,E')$ satisfies the following properties:
\begin{itemize}
 \item $G' = ((G\star v)\star w)\star v = ((G\star w)\star v)\star w$;
 \item $\{v,w\}\in E'$;
 \item for $j\in V\setminus\{v,w\}$, $\{j,v\}\in E'\Leftrightarrow \{j,w\}\in E$ and $\{j,w\}\in E'\Leftrightarrow \{j,v\}\in E$, i.e.\ a vertex $j$ is adjacent to $v$ in $G'$ if and only if $j$ was adjacent to $w$ in $G$ and correspondingly with $v$ and $w$ exchanged;
 \item for $p,q\in V\setminus\{v,w\}$, let $P$ be the intersection of $p$'s neighbourhood with $\{v,w\}$, i.e.\ $v\in P$ if $\{p,v\}\in E$ and $w\in P$ if $\{p,w\}\in E$, and define $Q$ correspondingly.
 Then the edge $\{p,q\}$ is toggled if and only if $P,Q$ and $\emptyset$ are pairwise distinct.
\end{itemize}
The scalar factor, as usual, takes the value $s=\sqrt{2^{\abs{E'}-\abs{E}}}$.

This is an equivalence transformation because it consists of three subsequent local complementations on qubits, but it is worth a separate mention because of non-obvious properties like the symmetry under interchange of $v$ and $w$.

\subsection{Any stabilizer state diagram is equal to some GS-LC diagram}
\label{s:GS-LC}

From the binary formalism for stabilizer quantum mechanics, we know that any stabilizer state is local Clifford-equivalent to some graph state \cite{van_den_nest_graphical_2004} (cf.\ Theorem \ref{thm:stabilizer_graph_state}).
In the following, we show that a corresponding statement holds in the \ZX-calculus: any stabilizer state diagram is equal to some GS-LC diagram.
The proof of this result is strongly inspired by Anders and Briegel's proof that stabilizer quantum mechanics can be simulated efficiently on classical computers using a representation based on graph states and local Clifford operators \cite{anders_fast_2005}.

\begin{thm}\label{thm:ZX_GS-LC}
 Any stabilizer state diagram is equal to some GS-LC diagram within the \ZX-calculus.
\end{thm}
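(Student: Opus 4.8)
The plan is to construct the GS-LC diagram incrementally, maintaining an equal GS-LC form at every stage, following closely the strategy of Anders and Briegel's classical simulation of stabilizer circuits \cite{anders_fast_2005}. By map-state duality (Section \ref{s:Choi-Jamiolkowski}) it suffices to treat state diagrams. By Lemma \ref{lem:basic_elements} together with Theorem \ref{thm:ZX_stabilizer}, any stabilizer state diagram decomposes into green states \state{gn}, green effects \effect{gn}, the split and join spiders \splitnode{} and \joinnode{}, Hadamard nodes, green and red phase shifts with angles in $\{0,\pm\pi/2,\pi\}$, and star scalars. Reading the diagram from bottom to top, and recalling from the proof of Lemma \ref{lem:ZX-stabilizer} that the split and join spiders are respectively $C_X\circ(I\otimes\ket 0)$ and $(I\otimes\bra 0)\circ C_X$, this exhibits the state as the result of applying to copies of $\ket +$ a sequence of the Clifford generators $H$, $S$, $C_X$, qubit preparations, and post-selected computational-basis effects. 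I would prove the theorem by induction on the number of such operations.

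The base case is the empty diagram (or a single \state{gn}), which is trivially GS-LC. For the inductive step, suppose the diagram built so far equals a GS-LC diagram with underlying graph $G=(V,E)$ and vertex operators $U_v\in\mathcal{C}_1$. Preparing a fresh qubit adds an isolated vertex with vertex operator $I$ (or $H$, for a $\ket 0$), which is again GS-LC. Applying a single-qubit Clifford $H$ or $S$ to output $v$ simply pre-composes $U_v$ with the gate; the result is still a single-qubit Clifford, which by Lemma \ref{lem:single-qubit_Clifford_normal_form} can be returned to one of the canonical forms of \eqref{eq:local_Clifford_normal_form}, at the cost of a unit-modulus scalar. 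These cases are immediate.

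The substantial cases are the entangling gate $C_X$ (equivalently $C_Z$) between two outputs $v,w$, and the post-selected effects. For $C_Z$ the difficulty is that the vertex operators $U_v,U_w$ need not commute with $C_Z$, so it cannot in general be absorbed as a graph edge. The resolution --- the technical core of the proof --- is to use the three equivalence transformations of Section \ref{s:equivalence_GS-LC} (local complementation about a vertex, the fixpoint operation, and local complementation along an edge, all justified by Theorem \ref{thm:Van_den_Nest}) to bring $U_v$ and $U_w$ into forms in which the controlled gate merges into the graph: concretely, one reduces the vertex operators, exploiting the freedom afforded by the stabilizer of the graph state, until they become diagonal (or Hadamard-conjugate-diagonal) on the relevant legs, whereupon $C_Z$ toggles the edge $\{v,w\}$ by the definition of graph states. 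This requires a finite but intricate case analysis over the possible pairs of vertex operators, mirroring the update rules of \cite{anders_fast_2005}. The post-selected effects are handled analogously: a single-qubit Pauli effect on a GS-LC diagram is first simplified by local complementation so that the measured qubit carries a controllable vertex operator, after which the effect either deletes the vertex or propagates local Cliffords onto its neighbours, again yielding a GS-LC diagram.

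Throughout, every equivalence transformation alters the scalar part by a factor $\sqrt{2^{\abs{E'}-\abs{E}}}$, which by Lemma \ref{lem:modulus_nf} is representable explicitly as copies of \innerprodgr{} and \halfscalar; these are accumulated into the scalar diamond, using the standing assumption that the relevant stabilizer scalars are invertible so that the rewrite rules apply whenever their non-scalar parts match. The main obstacle I expect is the entangling-gate case: organising the case analysis on vertex operators so that every pair can provably be reduced to an edge-absorbing configuration, and verifying that the local complementations used to do so interact correctly with the neighbourhoods of $v$ and $w$. This is where the bulk of the diagrammatic bookkeeping lies, and where the precise forms of the equivalence transformations in Section \ref{s:equivalence_GS-LC} are indispensable.
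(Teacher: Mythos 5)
Your proposal is correct and takes essentially the same route as the paper's own proof: the paper likewise argues by map-state duality and induction on the basic components (Lemmas \ref{lem:basic_state} through \ref{lem:join}), clearing vertex operators with the swapping-partner local-complementation technique of Section \ref{s:equivalence_GS-LC} in the spirit of \cite{anders_fast_2005}, with scalars tracked under the same invertibility assumption. The only cosmetic difference is that you package \splitnode{} and \joinnode{} as $C_X$ together with a fresh preparation or a post-selected effect, whereas the paper treats the four basic spiders directly; the resulting case analyses (including the corner cases where an operand vertex lacks a non-operand neighbour) coincide.
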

\begin{proof}
 For clarity, the proof has been split into various lemmas, which are stated and proved after the theorem.

 By Lemma \ref{lem:basic_elements}, any \ZX-calculus diagram can be written in terms of four basic spiders together with phase shifts, Hadamard nodes, and star nodes.
 Recall that a \ZX-calculus diagram represents a quantum state if and only if it has no inputs.
 Of the basic elements given in Lemma \ref{lem:basic_elements}, \state{gn} is the only one with no inputs.
 Thus any diagram representing a state must contain at least one such component (or a cup, which can be replaced by spiders).
 Clearly \state{gn} is a GS-LC diagram.
 We can now proceed by induction: If, for each of the basic components, applying it to a GS-LC diagram yields a diagram that can be rewritten to some GS-LC diagram, then any stabilizer state diagram can be rewritten to some GS-LC diagram.
 Lemmas \ref{lem:basic_state}, \ref{lem:LC}, \ref{lem:halfscalar}, \ref{lem:measurement}, \ref{lem:split} and \ref{lem:join} show these inductive steps.
 Therefore any stabilizer state diagram can be decomposed into the basic elements and then converted, step by step, into a GS-LC diagram.
\end{proof}

\begin{lem}\label{lem:basic_state}
 A stabilizer state diagram which consists of a GS-LC diagram and \state{gn} is equal to some GS-LC diagram within the \ZX-calculus.
\end{lem}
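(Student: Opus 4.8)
The plan is to recognise that a bare green state $\state{gn}$, once equipped with the standard normalising factor $\halfscalar\;\innerprodgr$, is exactly the graph-state representation of a single \emph{isolated} vertex in the sense of Proposition \ref{prop:graph_state}. Adjoining $\state{gn}$ to an $n$-qubit GS-LC diagram should therefore correspond to enlarging the underlying graph $G=(V,E)$ by one new vertex $v_{n+1}$ carrying no incident edges, assigning it the identity vertex operator (which lies in $\mathcal{C}_1$), and adjusting only the scalar part.

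First I would write the given GS-LC diagram as a graph-state diagram for $G$ with vertex operators $U_1,\ldots,U_n\in\mathcal{C}_1$ together with a scalar, and observe that the diagram under consideration is just this tensored with $\state{gn}$. The target is the GS-LC diagram on the graph $G'=(V\cup\{v_{n+1}\},E)$, carrying the same vertex operators on the old vertices, the identity on $v_{n+1}$, and a suitably modified scalar. Since no edges touch $v_{n+1}$, no Hadamard nodes are introduced, so the only genuine content is the scalar bookkeeping needed to turn the bare $\state{gn}$ into a properly normalised isolated graph-state vertex.

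This bookkeeping is dispatched by the variant star rule of Lemma \ref{lem:variant_star_rule}, which gives $\halfscalar\;\innerprodgr\;\innerprodgr = \;$ (the empty diagram). I would insert this empty diagram beside $\state{gn}$ and then, using the topology meta-rule to move scalar subdiagrams freely, regroup the factors: one copy of $\halfscalar\;\innerprodgr$ combines with $\state{gn}$ to form the isolated-vertex graph-state component demanded by Proposition \ref{prop:graph_state}, while the remaining $\innerprodgr$ is absorbed into the overall scalar. Concretely this is the derivation
\[ \state{gn} \;=\; \innerprodgr\;\halfscalar\;\innerprodgr\;\state{gn}, \]
in which the leftmost $\innerprodgr$ is read as part of the scalar and $\halfscalar\;\innerprodgr\;\state{gn}$ is the new graph-state vertex. (One checks directly that both sides represent $\ket{0}+\ket{1}$.)

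The resulting diagram is then manifestly of GS-LC form on $n+1$ qubits, with graph $G'$, vertex operators $U_1,\ldots,U_n,\mathrm{id}$, and the enlarged scalar; if desired the identity vertex operator can be put in one of the canonical shapes of Lemma \ref{lem:single-qubit_Clifford_normal_form}, though this is not required by Definition \ref{dfn:GS-LC}. I expect essentially no obstacle here: this is the trivial base step of the induction underlying Theorem \ref{thm:ZX_GS-LC}, and the sole subtlety --- keeping the scalar factor correct while recognising $\state{gn}$ as a normalised graph-state vertex --- is precisely what the variant star rule is designed to handle.
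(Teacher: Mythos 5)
Your proposal is correct and follows essentially the same route as the paper: the paper's proof simply observes that adding a vertex to a graph yields another graph, so adjoining \state{gn} gives a (not necessarily normalised) graph state diagram, hence a GS-LC diagram since the scalar part is arbitrary. Your explicit scalar bookkeeping via the variant star rule, $\state{gn} = \innerprodgr\;\halfscalar\;\innerprodgr\;\state{gn}$, just spells out what the paper leaves implicit in the phrase ``not necessarily normalised.''
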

\begin{proof}
 Adding a vertex to a graph yields another graph, so adding \state{gn} to a graph state diagram yields another (not necessarily normalised) graph state diagram.
 The same holds for GS-LC diagrams.
\end{proof}

\begin{lem}\label{lem:LC}
 A stabilizer state diagram which consists of a basic single-qubit Clifford unitary, e.g.\ a phase shift or a Hadamard operation, applied to some GS-LC diagram is equal to a GS-LC diagram within the \ZX-calculus.
\end{lem}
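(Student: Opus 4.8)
The plan is to absorb the applied single-qubit Clifford unitary into the vertex operator of the output it acts on, leaving the underlying graph state and all the other vertex operators unchanged. Write the given GS-LC diagram as a graph state $\ket{G}$ with vertex operators $U_1,\ldots,U_n\in\mathcal{C}_1$ together with some scalar, and suppose the basic single-qubit Clifford unitary $C$ (a phase shift or a Hadamard) is applied to the output carrying vertex operator $U_v$. By the topology meta-rule, stacking $C$ on top of $U_v$ simply produces the single-qubit operator $C\circ U_v$ sitting on that one output, while everything else in the diagram is untouched.

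The key step is that $\mathcal{C}_1$ is a group, so $C\circ U_v$ is again a single-qubit Clifford operator, and since both factors have phases that are integer multiples of $\pi/2$ we remain within the stabilizer fragment of the \ZX-calculus. I would then invoke Lemma \ref{lem:single-qubit_Clifford_normal_form} to rewrite $C\circ U_v$, using only the \ZX-calculus rules, into one of the two normal forms of \eqref{eq:local_Clifford_normal_form}. This rewriting may introduce a scalar subdiagram; because a GS-LC diagram is permitted to carry an arbitrary scalar, that factor is simply merged into the existing scalar part, invoking the standing invertibility assumption on scalars so that the relevant rules can be applied whenever their non-scalar parts match.

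After this rewriting the modified output once again carries a single vertex operator in normal form, the graph $G$ and the remaining $U_u$ for $u\neq v$ are unchanged, and the scalar part has only been multiplied by the absorbed factor. Hence the whole diagram is again a GS-LC diagram, completing the inductive step. There is no substantive obstacle here: the real work has already been discharged in Lemma \ref{lem:single-qubit_Clifford_normal_form}, and the present lemma merely records that composing a basic Clifford with a vertex operator keeps us inside $\mathcal{C}_1$, and therefore inside the class of admissible vertex operators, with the only genuine bookkeeping being the accumulation of scalars.
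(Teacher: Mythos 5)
Your proposal is correct and matches the paper's (one-line) proof in substance: since vertex operators in Definition \ref{dfn:GS-LC} may be \emph{arbitrary} single-qubit Clifford operators and $\mathcal{C}_1$ is closed under composition, stacking the new gate onto the existing vertex operator immediately yields another GS-LC diagram by definition. Your extra step of rewriting $C\circ U_v$ into the normal form of Lemma \ref{lem:single-qubit_Clifford_normal_form} (and absorbing the resulting scalar) is harmless but not needed for this lemma, since GS-LC form does not require normalised vertex operators.
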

\begin{proof}
 This follows directly from Definition \ref{dfn:GS-LC}, the definition of GS-LC diagrams.
\end{proof}

\begin{lem}\label{lem:halfscalar}
 A stabilizer state diagram which consists of a star node applied to some GS-LC diagram is equal to a GS-LC diagram within the \ZX-calculus.
\end{lem}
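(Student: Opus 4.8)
The plan is to observe that a star node is nothing but a scalar diagram, so attaching one to a GS-LC diagram only modifies the scalar part and leaves the underlying graph state and vertex operators untouched. Concretely, the star node \halfscalar{} has no inputs and no outputs, representing the number $\frac{1}{2}$ under $\intf{-}$. Since it carries no legs, the only way it can be ``applied'' to a GS-LC diagram is by parallel composition, i.e.\ by placing it alongside the diagram so that it becomes a disconnected component.

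First I would invoke Definition \ref{dfn:GS-LC}, which specifies that a GS-LC diagram consists of a graph state diagram with arbitrary single-qubit Clifford vertex operators on each output, \emph{together with an arbitrary scalar}. By the topology meta-rule, any subdiagram disconnected from all outputs belongs to the scalar part and may be placed anywhere in the diagram. Hence adjoining the star node simply replaces the existing scalar \genscalar{$s$} by the enlarged scalar \genscalar{$s$}\,\halfscalar{}, while the non-scalar part -- the graph $G$ and the vertex operators $U_1,\ldots,U_n$ -- is unchanged. The enlarged scalar is again a perfectly valid scalar subdiagram, so by Definition \ref{dfn:GS-LC} the result is once more a GS-LC diagram, and the proof is complete.

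There is essentially no obstacle to overcome: the statement follows immediately from the definition of GS-LC diagrams, in direct parallel with Lemma \ref{lem:LC} (where a single-qubit Clifford unitary was absorbed into the vertex operators, here a scalar is absorbed into the scalar part). If one prefers the scalar part to be kept in a canonical shape, the augmented scalar can additionally be rewritten into the normal form of Lemma \ref{lem:modulus_nf} using the variant star rule \eqref{eq:halfscalar_innerprodgr2}, but this tidying-up step is not needed for the statement as given.
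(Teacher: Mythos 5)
Your proposal is correct and matches the paper's own (one-line) argument: since Definition \ref{dfn:GS-LC} permits an arbitrary scalar part, the star node is simply absorbed there and the result is again a GS-LC diagram. The extra remarks about the topology meta-rule and the optional normalisation via Lemma \ref{lem:modulus_nf} are sound but not needed.
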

\begin{proof}
 GS-LC diagrams do not need to be normalised, so adding a star node to a GS-LC diagram yields another GS-LC diagram.
\end{proof}

\begin{lem}\label{lem:measurement}
 A stabilizer state diagram which consists of \effect{gn} applied to some GS-LC diagram is equal to a GS-LC diagram within the \ZX-calculus.
\end{lem}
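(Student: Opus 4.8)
The plan is to read the green effect $\left\llbracket\effect{gn}\right\rrbracket = \sqrt 2\,\bra+$ as the post-selected outcome of an $X$-measurement on the relevant qubit, say qubit $v$, and to reproduce, entirely within the rewrite system, the way Anders and Briegel \cite{anders_fast_2005} update a graph-state/local-Clifford description under a single-qubit Pauli measurement. The argument splits into a purely local reduction of the vertex operator on $v$, followed by a case analysis on the resulting measurement basis.

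First I would reduce the vertex operator. The effect \effect{gn} is composed above the vertex operator $U_v\in\mathcal C_1$, and since this is a single-qubit operation touching qubit $v$ only, the underlying graph and all the other vertex operators are untouched. By Lemma \ref{lem:single-qubit_Clifford_normal_form} the composite $\bra+U_v$ is a single-qubit stabilizer effect, and using the spider rule, the colour change rule, the $\pi$-commutation rule and the identity rule it can be rewritten, up to an invertible scalar, as one of three things: a green effect with phase in $\{0,\pi\}$ (an $X$-measurement outcome), a green effect preceded by a $\pm\pi/2$ green phase shift (a $Y$-measurement outcome), or a red effect with phase in $\{0,\pi\}$ (a $Z$-measurement outcome). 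Since GS-LC diagrams may carry an arbitrary scalar, the scalar produced at this stage is harmless.

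It then remains to treat the three Pauli cases. The $Z$-case (a red effect on the green vertex $v$) is handled directly: the copy and $\pi$-copy rules propagate the red effect through $v$ onto each incident Hadamard edge, the colour change rule turns these into green effects that merge into the neighbouring vertices by the spider rule, and any resulting phase-$\pi$ corrections are absorbed into the neighbours' vertex operators; vertex $v$ is thereby disconnected and reduced to a scalar, leaving a GS-LC diagram on the remaining qubits. The $Y$-case reduces to the $Z$-case by a single local complementation about $v$ (Theorem \ref{thm:Van_den_Nest}, in the packaged form of Section \ref{s:equivalence_GS-LC}), which rewrites the $\pm\pi/2$ content so that, after absorbing single-qubit Cliffords, the green effect becomes a red effect and the $Z$-case applies.

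The main obstacle is the $X$-case. If $v$ is isolated, the green effect merges into $v$ by the spider rule and deletes it, contributing only a scalar, so the result is trivially GS-LC. Otherwise I would select a neighbour $b$ of $v$ and apply a local complementation (about $b$, or about $v$ and then $b$) from Section \ref{s:equivalence_GS-LC}. The delicate point is that these equivalence transformations simultaneously alter the underlying graph, every affected vertex operator, and the scalar at once; one must check that after the transformation the effect on $v$ has been converted into a $Z$-measurement, so that the previous case closes the argument, while the updated vertex operators on the neighbours remain single-qubit Cliffords. Verifying this bookkeeping — confirming it matches the graph-update rules of \cite{anders_fast_2005} and that every step is justified purely by the rewrite rules together with the invertibility-of-scalars assumption — is where the real work lies.
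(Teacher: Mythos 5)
Your proposal is correct, and it is at bottom the same Anders--Briegel strategy \cite{anders_fast_2005} as the paper's proof: isolate the easy case where the effective effect on the operand vertex is a computational-basis (red) effect, remove that vertex by the copy and $\pi$-copy rules, and use the local-complementation equivalence transformations to reduce every other case to it. The organisation differs, though. The paper performs no $X$/$Y$/$Z$ classification of the composite effect $\bra{+}U_v$; instead it notes that a local complementation about the operand vertex pre-composes its vertex operator with a red $\pi/2$ phase shift, a local complementation about a fixed neighbouring ``swapping partner'' pre-composes it with a green $-\pi/2$ phase shift, and these two operators generate all of $\mathcal{C}_1$ --- so the vertex operator can always be driven to the Hadamard node (possibly followed by a red $\pi$), after which the green effect becomes a red one. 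That uniform argument requires one invariant your version does not: local complementations about the operand vertex or its swapping partner never delete the edge between them, so the operand vertex retains a neighbour and the sequence of complementations can continue. Your version trades this for a Pauli case analysis with a bounded procedure (at most two complementations: one about a neighbour $b$ to turn an $X$-effect into a $Y$-effect, one about $v$ to turn a $Y$-effect into a $Z$-effect), which works because the edge $\{v,b\}$ is untouched by a complementation about $b$. Finally, the ``bookkeeping'' you flag as the real remaining work is essentially already discharged: the transformations of Section~\ref{s:equivalence_GS-LC} are derived once and for all from Theorem~\ref{thm:Van_den_Nest}, and they manifestly send GS-LC diagrams to GS-LC diagrams, since they only pre-compose vertex operators with single-qubit Clifford phase shifts, toggle graph edges, and adjust the scalar; GS-LC form (unlike rGS-LC form) places no restriction on which element of $\mathcal{C}_1$ appears as a vertex operator, so no further verification against \cite{anders_fast_2005} is needed.
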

\begin{proof}
 Call the vertex of the GS-LC diagram to which the post-selected measurement \effect{gn} is applied the \emph{operand vertex}.
 There are two cases.

 \textit{The operand vertex has no neighbours}: There are six single-qubit pure stabilizer states.
 In each, case the measurement effect combines with the state into a scalar.

 \textit{The operand vertex has at least one neighbour}: Computational basis measurements on graph states are straightforward.

 In the \ZX-calculus, the computational basis states are denoted (somewhat counter-intuitively) by red effects: \effect{rn} represents $\bra{0}$ and \effect{rn,label={[rphase]right:$\pi$}} represents $\bra{1}$, in either case up to some scalar factor.
 By repeated application of the copy rule:
 \begin{equation}
  \input{tikz_files/measurement_lemma_1a.tikz} \;
  = \; \genscalar{$s$} \; \input{tikz_files/measurement_lemma_1b.tikz} \;
  = \; \genscalar{$s$} \; \input{tikz_files/measurement_lemma_1c.tikz}
 \end{equation}
 where \genscalar{$s$} consists of $n-1$ copies of \innerprodgr{} \halfscalar.
 Using the $\pi$-copy rule, the same holds for \effect{rn,label={[rphase]right:$\pi$}}.
 Thus if the vertex operator of the operand vertex is:
 \begin{equation}
  \Hadamard{} \qquad \text{or} \qquad \begin{tikzpicture}
	\begin{pgfonlayer}{nodelayer}
		\node [style=Hadamard] (0) at (0.0, 0.4) {};
		\node [style=none] (1) at (0.0, 1.0) {};
		\node [style=none] (2) at (0.0, -1.0) {};
		\node [style=rn,label={[rphase]right:$\pi$}] (3) at (0.0, -0.5) {};
	\end{pgfonlayer}
	\begin{pgfonlayer}{edgelayer}
		\draw (0.center) to (1);
		\draw (2.center) to (1);
	\end{pgfonlayer}
\end{tikzpicture}
,
 \end{equation}
 the measured vertex is simply removed from the graph state.

 Otherwise, we can pick one neighbour of the operand vertex; following \cite{anders_fast_2005}, this neighbour is called the \emph{swapping partner}.
 A local complementation about the operand vertex adds \phase{rn,label={[rphase]right:$\pi/2$}} to its vertex operator.
 A local complementation about the swapping partner adds \phase{gn,label={[gphase]right:$-\pi/2$}} to the vertex operator on the operand vertex.
 Now these two single-qubit operators together generate all of $\mathcal{C}_1$.
 Note that local complementations about the operand vertex or its swapping partner do not remove the edge between these two vertices.
 Therefore, after each local complementation, the operand vertex still has a neighbour, enabling further local complementations.
 Hence it is always possible to change the vertex operator on the operand vertex to \Hadamard{} using local complementations.
 Then, the measurement is straightforward as above.
\end{proof}

\begin{lem}\label{lem:split}
 A stabilizer state diagram which consists of \splitnode applied to some GS-LC diagram is equal to a GS-LC diagram within the \ZX-calculus.
\end{lem}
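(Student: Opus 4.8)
The plan is to handle \splitnode by the same strategy as the post-selected measurement \effect{gn} in Lemma~\ref{lem:measurement}: fix the \emph{operand vertex} $v$ (the vertex of the GS-LC diagram to whose output the split node is attached) together with its vertex operator $U_v$, and argue by cases on the neighbourhood of $v$. Since \splitnode is the green spider with one input and two outputs, attaching it to output $v$ turns an $n$-qubit GS-LC diagram into an $(n{+}1)$-qubit diagram, and the goal is to rewrite the result back into GS-LC form, i.e.\ a graph state with single-qubit Clifford vertex operators and a scalar (Definition~\ref{dfn:GS-LC}). Throughout, the scalar corrections produced by the intermediate rewrites are absorbed into the (unnormalised) scalar part, using the standing assumption that the relevant stabilizer scalars are invertible.

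\emph{The operand vertex has no neighbours.} Then the operand qubit is disconnected from the rest of the diagram and carries one of the six single-qubit pure stabilizer states, namely $U_v$ applied to \state{gn} up to a scalar. Applying \splitnode to each of these finitely many states yields a two-qubit stabilizer state, and each such state can be brought into GS-LC form by a direct, finite sequence of rewrites; for instance $\ket{00}+\ket{11}$ is the two-vertex graph state with a single Hadamard edge and a Hadamard vertex operator on one vertex. Tensoring with the untouched remainder of the GS-LC diagram then gives a GS-LC diagram.

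\emph{The operand vertex has at least one neighbour.} Here I would first pick a neighbour as a \emph{swapping partner} and, exactly as in Lemma~\ref{lem:measurement}, use local complementations about $v$ and its swapping partner (the equivalence transformations of Section~\ref{s:equivalence_GS-LC}, justified by Theorem~\ref{thm:Van_den_Nest}) to change the vertex operator on $v$ at will, since those two local complementations generate all of $\mathcal{C}_1$ and never delete the edge between $v$ and the partner. This lets me reduce to the case $U_v = I$, so that \splitnode is applied directly to the bare green graph-state vertex. The spider rule then merges the split node into that vertex, leaving a single green node with two outputs and Hadamard edges to the neighbour set $N$. It remains to recognise this as a GS-LC diagram on $n{+}1$ vertices: the correct target is the graph obtained by adjoining a new pendant vertex $v'$ adjacent to $v$ alone (with $v$ keeping all its edges to $N$) and equipping $v'$ with a Hadamard vertex operator, together with a scalar. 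This is precisely the graph-state image of the comultiplication $\ket{x}\mapsto\ket{xx}$: computing with the stabilizer generators shows that it sends $X_v\prod_{u\in N}Z_u$ to $X_vX_{v'}\prod_{u\in N}Z_u$ and introduces $Z_vZ_{v'}$, which after a Hadamard on $v'$ become exactly the generators of that graph.

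The main obstacle is this final step of the neighbour case: deriving the equality ``green node with two outputs $=$ pendant Hadamard-vertex adjacent to $v$'' purely from the rewrite rules rather than merely verifying it at the level of matrices, while keeping track of the vertex operator left on $v$ and of the scalar factor as the graph grows. Because the whole construction is assembled from local complementations (already established as equivalence transformations) together with the spider, colour-change and Hadamard rules, I expect this to come down to a short explicit derivation; the care needed is mainly bookkeeping, ensuring that the reduction of $U_v$ to the identity and the subsequent pendant-vertex rewrite are each realised by sound, invertible rewrite steps, so that the conclusion holds as a genuine \ZX-calculus equality.
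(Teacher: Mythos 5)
Your proposal matches the paper's proof essentially step for step: the same two-case split on the operand vertex's neighbourhood, the same use of a swapping partner with local complementations about $v$ and that partner (as in Lemma~\ref{lem:measurement}) to trivialise the vertex operator, and the same final identification of the bare split with a new pendant vertex carrying a Hadamard vertex operator. The one step you flag as an obstacle is exactly as short as you anticipate: the pendant vertex has a Hadamard on each of its two legs, so the colour-change rule turns it into a phase-free red node, which the identity rule then merges away, yielding the two-output green spider directly and with no scalar correction needed.
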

\begin{proof}
 As before, call the vertex we are acting upon the operand vertex.
 Again, there are two cases.

 \textit{The operand vertex has no neighbours}: In this case, the part of the diagram representing the non-operand qubits does not change, hence if it was in GS-LC form originally, it remains that way.
 The overall diagram will be in GS-LC form if and only if \splitnode{} applied to the operand vertex can be transformed into a GS-LC diagram.
 Now, up to scalar factor, the six single-qubit stabilizer states can be written as:
 \begin{equation}\label{eq:single_qubit_states}
  \state{gn}, \quad\quad \state{gn, label={[gphase]right:$\pi/2$}}, \quad\quad \state{gn, label={[gphase]right:$\pi$}}, \quad\quad \state{gn, label={[gphase]right:$-\pi/2$}}, \quad\quad \state{rn}, \quad\text{and}\quad \state{rn, label={[rphase]right:$\pi$}}.
 \end{equation}
 By the spider law, the identity law, and the self-inverse property of the Hadamard operator:
 \begin{equation}
  \input{tikz_files/split_lemma_1.tikz}
 \end{equation}
 for $\alpha\in\{0,\pi/2,\pi,-\pi/2\}$.
 Using the copy law and the $\pi$-copy law, for $\beta\in\{0,\pi\}$:
 \begin{equation}
  \input{tikz_files/split_lemma_2a.tikz} \;
  = \; \innerprodgr \; \halfscalar \; \state{rn, label={[rphase]right:$\beta$}} \; \state{rn, label={[rphase]right:$\beta$}} \;
  = \; \innerprodgr \; \halfscalar \; \begin{tikzpicture}
	\begin{pgfonlayer}{nodelayer}
		\node [style=gn] (0) at (0, -1.5) {};
		\node [style=Hadamard] (1) at (0, -0.5) {};
		\node [label={[rphase]right:$\beta$}, style=rn] (2) at (0, 0.5) {};
		\node [style=none] (3) at (0, 1.25) {};
	\end{pgfonlayer}
	\begin{pgfonlayer}{edgelayer}
		\draw (3.center) to (0);
	\end{pgfonlayer}
\end{tikzpicture} \; .
 \end{equation}
 In each of these cases, the right hand side of the equation can easily be seen to be a GS-LC diagram.

 \textit{The operand vertex has at least one neighbour}: Note that:
 \begin{equation}
  \input{tikz_files/split_lemma_3.tikz},
 \end{equation}
 so if the vertex operator on the operand vertex is trivial, we just add a new vertex and edge to the graph.
 Now, as described in the proof for Lemma \ref{lem:measurement}, we can use local complementations about the operand vertex and a swapping partner to change the vertex operator on the operand vertex to the identity.
 Thus whenever we apply \splitnode{} to a GS-LC diagram, the result is equal to some GS-LC diagram.
\end{proof}

\begin{lem}\label{lem:join}
 A stabilizer state diagram which consists of \joinnode{} applied to some GS-LC diagram is equal to a GS-LC diagram within the \ZX-calculus.
\end{lem}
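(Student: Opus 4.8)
The plan is to reduce \joinnode{} to operations already shown to preserve GS-LC form, rather than to analyse the two-qubit merge from scratch. The starting observation is a decomposition of the join into a split followed by a cap. Writing the green two-input/one-output node as $J=\ket0\bra{00}+\ket1\bra{11}$ and the green one-input/two-output node as $S=\ket{00}\bra0+\ket{11}\bra1$, a direct check of interpretations gives $J_{v,w}=C_{v',w}\circ S_v$, where $C=\bra{00}+\bra{11}$ is a cap (the upside-down cup of the cup rule, equivalently a green node with two inputs and no outputs) and $v'$ is the auxiliary output produced by the split. Diagrammatically this is just the spider rule together with the topology meta-rule. Hence applying \joinnode{} to two outputs $v,w$ of a GS-LC diagram is the same as first applying \splitnode{} to $v$ and then capping the new output $v'$ against $w$.

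First I would dispatch the split: by Lemma~\ref{lem:split}, applying \splitnode{} to $v$ yields a GS-LC diagram, and the construction there can be arranged so that the fresh output $v'$ carries a trivial vertex operator, i.e.\ $v'$ is a bare graph vertex. It then remains to show that capping two outputs of a GS-LC diagram, one of which ($v'$) is bare, again gives a GS-LC diagram. Because only one operand vertex now carries a non-trivial vertex operator, this step is structurally parallel to the post-selection argument of Lemma~\ref{lem:measurement}: I would use local complementations about $w$ and a swapping partner (Theorem~\ref{thm:Van_den_Nest}) to reduce $w$'s vertex operator to the identity as well, handling separately the degenerate case in which $w$ or $v'$ has no neighbours. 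In that case the two operands reduce to single-qubit stabilizer states and the cap evaluates, via the copy and $\pi$-copy rules, to a scalar or a simple GS-LC diagram, exactly as in the no-neighbour cases of Lemmas~\ref{lem:split} and~\ref{lem:measurement}.

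With both operands bare, the cap and the two green graph vertices fuse into a single green node by the spider rule; since the cap has no output, this node is contracted, and the result is a graph-state diagram whose neighbourhood is inherited from $v$ and $w$. The work left is to restore simple-graph form: a neighbour shared by $v$ and $w$ now sits at the end of two parallel Hadamard edges, which I would cancel using the colour change rule together with the Hopf and bialgebra rules, and a pre-existing edge between $v$ and $w$ contributes a Hadamard self-loop, which I would absorb into a vertex phase via the loop and colour change rules. Normalising factors are collected using the star rule and Lemma~\ref{lem:modulus_nf}, which is legitimate under the standing invertibility assumption for scalars.

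The main obstacle is this final clean-up rather than the reduction itself: after contraction the diagram is genuinely not in graph-state form, and turning the doubled Hadamard edges and the self-loop back into a simple graph is the delicate part, since the bialgebra/Hopf manipulations must be shown to terminate in a single local-Clifford-decorated simple graph regardless of the overlap pattern of the two neighbourhoods and of whether $v$ and $w$ were adjacent. This is where the join lemma does strictly more than the single-operand measurement and split lemmas, and it is the step I would write out in full detail.
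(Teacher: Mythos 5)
Your decomposition of the join is correct and non-circular: the green $2\to 1$ spider does factor as a split followed by a cap, by the spider and cup rules, and Lemma~\ref{lem:split} is proved independently of Lemma~\ref{lem:join}, so the reduction is legitimate in principle. However, the pivotal claim that the split ``can be arranged so that the fresh output $v'$ carries a trivial vertex operator'' is false. Since $C_Z(\ket{x}\otimes\ket{+})=\ket{x}\otimes H\ket{x}$, the split node satisfies $S=(I\otimes H)\,C_Z\,(I\otimes\ket{+})$, so the identity used in the proof of Lemma~\ref{lem:split} necessarily leaves the new vertex $v'$ with vertex operator $H$ and a single edge to $v$; stripping that $H$ by local complementations would disturb $v$, the isolation of $v'$, and neighbouring vertex operators. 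Ironically, keeping the $H$ is what rescues your plan: $H$ is symmetric, so it slides through the cap, leaving the chain green($v$)--$H$--green($v'$)--$H$--green($w$) in which $v'$ is a phase-$0$ degree-two spider; the identity rule and $H\circ H=I$ then fuse $v$ and $w$ directly, which is exactly the merge step of the paper's main case. With a genuinely bare $v'$, by contrast, fusing $v'$, the cap, and $w$ yields an outputless internal green node of degree $1+\abs{N(w)}$ that spider fusion alone cannot eliminate, so the arrangement you ask for would make the clean-up harder, not easier.

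The second gap is that you defer precisely the substance of the lemma. The paper's proof is a four-case analysis --- operands adjacent only to each other; one operand isolated; both operands with non-operand neighbours; one operand whose sole neighbour is the other --- and your sketch breaks down in exactly the cases that force this split. In particular, the swapping-partner step for $w$ fails when $w$'s only neighbour is $v$: the local complementations then act about $v$, creating edges from $v'$ into $N(v)$ and altering $v'$'s operator, so the argument ``structurally parallel to Lemma~\ref{lem:measurement}'' does not go through unmodified; the paper handles this with a separate case using the decomposition \eqref{eq:LC} together with the spider and Hopf rules. Two smaller points: the plain loop rule removes only non-Hadamard self-loops, so absorbing a Hadamard self-loop into a $\pi$ phase (with accompanying scalar factors) is a derived identity needing its own proof, not an instance of ``the loop and colour change rules''; and you yourself flag the termination of the Hopf/bialgebra clean-up across all overlap patterns as the step still to be written. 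As it stands, the proposal establishes a valid (and genuinely different) reduction, but the configurations it leaves open are the ones the paper's proof actually consists of.
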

\begin{proof}
 As usual, call the qubits to which \joinnode{} is applied the operand qubits.
 This time there are two of them, and there are four cases to consider.

 \textit{Operand vertices are connected only to each other}: Since neither operand vertex is connected to any other vertex, we can neglect all non-operand vertices.
 Now, for any $U,V\in\mathcal{C}_1$:
 \begin{equation}
  \input{tikz_files/join_lemma_1.tikz}
 \end{equation}
 where $W$ is again in $\mathcal{C}_1$.
 From Lemma \ref{lem:single-qubit_Clifford_normal_form}, it is straightforward to see that any single-qubit Clifford unitary $W$ can be written as:
 \begin{equation}\label{eq:LC}
  \genunitary{$W$} \; = \; \genscalar{$s$} \; \input{tikz_files/join_lemma_2.tikz}
 \end{equation}
 for some scalar $s$ with $\abs{s}=1$ and angles $\alpha,\beta,\gamma\in\{0,\pi/2,\pi,-\pi/2\}$.
 Thus, using the spider law and the Hopf law:
 \begin{equation}
  \input{tikz_files/join_lemma_1.tikz} \;
  = \; \genscalar{$s$} \; \input{tikz_files/join_lemma_3a.tikz} \;
  = \; \genscalar{$s$} \; \input{tikz_files/join_lemma_3b.tikz} \;
  = \; \genscalar{$s$} \; \halfscalar \; \begin{tikzpicture}
	\begin{pgfonlayer}{nodelayer}
		\node [style=none] (0) at (0, 1.25) {};
		\node [style=gn] (1) at (0, -0.5) {};
		\node [label={[rphase]right:$\beta$}, style=rn] (2) at (0, -1.5) {};
		\node [label={[gphase]right:$\alpha+\gamma$}, style=gn] (3) at (0, 0.5) {};
	\end{pgfonlayer}
	\begin{pgfonlayer}{edgelayer}
		\draw (0.center) to (1);
	\end{pgfonlayer}
\end{tikzpicture}
 \end{equation}
 Hence, the diagram can always be brought into GS-LC form.

 \textit{One operand vertex has no neighbours}: If one of the operand vertices has no neighbours, it must be in one of the six single-qubit states given in \eqref{eq:single_qubit_states}.
 Now for $\alpha\in\{0,\pi/2,\pi,-\pi/2\}$ and $\beta\in\{0,\pi\}$:
 \begin{equation}
  \input{tikz_files/join_lemma_4a.tikz} \; = \; \phase{gn, label={[gphase]right:$\alpha$}}
 \end{equation}
 and:
 \begin{equation}
  \input{tikz_files/join_lemma_4b.tikz} \; = \; \input{tikz_files/join_lemma_4c.tikz} \; = \; \innerprodgr \; \halfscalar \; \input{tikz_files/join_lemma_4d.tikz}.
 \end{equation}
 Thus by Lemmas \ref{lem:basic_state}, \ref{lem:LC}, \ref{lem:halfscalar} and \ref{lem:measurement}, no matter what the properties of the other operand vertex are, the resulting state is always equal to a GS-LC diagram.

 Having resolved the case where the two operand vertices are connected only to each other and the case where one of the operand vertices has no neighbours, we are left with the cases where both operand vertices have neighbours and at least one of the operand vertices has a non-operand neighbour.

 \textit{Both operand vertices have non-operand neighbours}: Denote the two operand vertices by $a$ and $b$. Pick one of $a$'s non-operand neighbours to be a swapping partner.
 As laid out in the proof of Lemma \ref{lem:measurement}, we can use local complementations about $a$ and its swapping partner to change the vertex operator of $a$ to the identity.
 We can then do the same for $b$, picking a new swapping partner from among its neighbours.
 If $a$ is connected to $b$ or to $b$'s swapping partner, these operations may result in adding some phase operators of the form \phase{gn,label={[gphase]right:$-\pi/2$}} to $a$'s vertex operator.
 This is not a problem, as green phase operators commute with \joinnode.
 Once the vertex operators for both operand vertices are identities or green phase operators, we can move the green phases through the spider and then merge the vertices.
 Note that:
 \begin{equation}
  \input{tikz_files/join_lemma_5a.tikz}
 \end{equation}
 and:
 \begin{equation}
  \input{tikz_files/join_lemma_5b.tikz}
 \end{equation}
 so we can remove any double edges or self-loops resulting from the merging.
 The result is a GS-LC diagram on $n-1$ qubits, where $n$ is the number of qubits in the original GS-LC diagram.

 \textit{One operand vertex is connected only to the other, but the latter has a non-operand neighbour}: We can change the vertex operator of the second operand vertex to the identity as in the previous case.
 In the process, the first operand vertex may acquire one or more non-operand neighbours; in that case we proceed as above.
 Else, by \eqref{eq:LC}, for any vertex operator $V$ on the first operand qubit:
 \begin{equation}
  \input{tikz_files/join_lemma_6a.tikz}
  = \; \input{tikz_files/join_lemma_6b.tikz}
  = \; \genscalar{$s$} \input{tikz_files/join_lemma_6c.tikz}
  = \; \genscalar{$s$} \; \input{tikz_files/join_lemma_6d.tikz} \;
  = \; \genscalar{$s$} \; \halfscalar \; \input{tikz_files/join_lemma_6e.tikz}
 \end{equation}
 where $W=V\circ H$ is decomposed as in \eqref{eq:LC} and we have used the spider law and the Hopf law.
 Again, the resulting diagram is a GS-LC diagram.

 The four cases we have considered cover all the possible configurations of the graph underlying the original GS-LC diagram, hence the proof is complete.
\end{proof}

\section{Completeness for the scalar-free stabilizer \ZX-calculus}
\label{s:scalar-free_equalities}

Having shown that any stabilizer state diagram can be brought into GS-LC form, we now look at equalities between such diagrams.
While there are significantly fewer GS-LC diagrams than general stabilizer state diagrams, it is possible to reduce the number of diagrams needing to be considered even further by moving to a ``reduced GS-LC form'' instead.
We give the equivalence operations of reduced GS-LC diagrams and then show how to use these operations to derive any sound equalities between reduced GS-LC diagrams.
This means the \ZX-calculus is complete for scalar-free stabilizer diagrams.
Finally, we give an example application of this result in Section \ref{s:example}.

Throughout this section, we continue to assume that any non-zero stabilizer scalar is invertible.
Additionally, we now drop all scalars from diagrams during derivations to save space.
The equalities derived here are thus only true up to some scalar factor.
By inspection of the applied rewrite rules, it is straightforward to put the scalars back in.

\subsection{Reduced GS-LC diagrams}
\label{s:reduced_GS-LC}

Following \cite{elliott_graphical_2008}, we define a more restricted form of GS-LC diagrams.
The resulting diagrams are still not unique, but the number of equivalent diagrams is significantly smaller, making it easier to derive equalities between them.

\begin{dfn}\label{dfn:rGS-LC}
 A \emph{stabilizer state diagram in reduced GS-LC (or rGS-LC) form} is a diagram in GS-LC form satisfying the following additional constraints.
 \begin{menum}
  \item\label{it:rGS-LC:VO} All vertex operators belong to the set:
   \begin{equation}\label{eq:reduced_vertex_operators}
    R = \left\{ \; , \; \phase{gn, label={[gphase]right:$\pi/2$}}, \; \phase{gn, label={[gphase]right:$\pi$}}, \; \phase{gn, label={[gphase]right:$-\pi/2$}}, \; \begin{tikzpicture}
	\begin{pgfonlayer}{nodelayer}
		\node [style=none] (0) at (1, -1.2) {};
		\node [style=none] (1) at (1, 1.2) {};
		\node [label={[rphase]right:$\pi/2$}, style=rn] (2) at (1, 0.6) {};
		\node [label={[gphase]right:$\pi/2$}, style=gn] (3) at (1, -0.6) {};
	\end{pgfonlayer}
	\begin{pgfonlayer}{edgelayer}
		\draw (1.center) to (0.center);
	\end{pgfonlayer}
\end{tikzpicture}, \;\; \input{tikz_files/rGS-LC-Cliffords_2.tikz} \right\}.
   \end{equation}
  \item\label{it:rGS-LC:H} Two adjacent vertices must not both have vertex operators that include red nodes.
 \end{menum}
\end{dfn}

\begin{thm}\label{thm:ZX_rGS-LC}
 Any stabilizer state diagram is equal to some rGS-LC diagram within the \ZX-calculus.
\end{thm}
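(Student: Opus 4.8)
The plan is to reduce to the normal form already available and then impose the two extra constraints of Definition \ref{dfn:rGS-LC}. By Theorem \ref{thm:ZX_GS-LC} any stabilizer state diagram is equal to some GS-LC diagram, so it suffices to show that every GS-LC diagram can be rewritten into one satisfying conditions \ref{it:rGS-LC:VO} and \ref{it:rGS-LC:H}. Throughout I would work up to scalar factor, as permitted in this section, restoring the scalars afterwards by inspection of the rules applied. The whole argument mirrors the vertex-operator reduction of Anders and Briegel \cite{anders_fast_2005} and the stabilizer-graph reduction of Elliott et al.\ \cite{elliott_graphical_2008}, translated into the \ZX-calculus by means of the equivalence transformations of Section \ref{s:equivalence_GS-LC}.

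First I would establish condition \ref{it:rGS-LC:VO}, that every vertex operator lies in the six-element set $R$ of \eqref{eq:reduced_vertex_operators}. Each vertex operator is a single-qubit Clifford, and by Lemma \ref{lem:single-qubit_Clifford_normal_form} it has a normal form built from at most three green and red phase shifts with angles in $\{0,\pm\pi/2,\pi\}$. The six forms in $R$ are exactly the reduced vertex operators of the stabilizer-graph calculus, and the content of this step is that any single-qubit Clifford can be brought to one of them by operations preserving GS-LC form. The tools are the spider, identity and colour-change rules, together with local complementation about the vertex (Theorem \ref{thm:Van_den_Nest}): as recalled in Section \ref{s:equivalence_GS-LC}, a local complementation about a vertex $v$ with at least one neighbour pre-composes its vertex operator with a red $-\pi/2$ phase — while complementing the graph and adding green $\pi/2$ phases to the neighbours — which is precisely what is needed to cancel a red phase that rewriting alone cannot remove. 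Since $\mathcal{C}_1$ has only $24$ elements, checking that each reduces into $R$ is a finite case analysis. Vertices with no neighbours must be treated separately, since local complementation is then unavailable, but such a vertex carries one of the six single-qubit stabilizer states and can be rewritten directly.

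Next I would impose condition \ref{it:rGS-LC:H}, that no edge joins two vertices whose vertex operators both contain a red node. The mechanism is local complementation along an edge $\{v,w\}$ from Section \ref{s:equivalence_GS-LC}, the pivot obtained from the threefold local complementation $((G\star v)\star w)\star v$: applied to an offending pair it alters the two vertex operators by the factor $R_Z\circ R_X^{-1}\circ R_Z$, and so, after re-running the reduction of the previous paragraph, can be arranged to strip the red node from at least one of $v$ or $w$.

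The delicate point — and the step I expect to be the main obstacle — is termination. Both reductions are coupled to their surroundings: reducing one vertex operator via local complementation adds green phases to the neighbours, which may push them out of $R$, and a pivot re-introduces green-and-red factors on $v$, $w$ and their neighbours, so repairing one bad edge can create others and re-reducing vertex operators can in turn toggle red-ness. I therefore expect the crux of the write-up to be the identification of a suitable monovariant — for instance a lexicographic measure combining the number of vertices whose operator contains a red node with the number of offending edges — that strictly decreases under the combined pivot-and-reduce step, so that the procedure halts in a diagram meeting both constraints. Once such a measure is in place the theorem follows, since every individual move is an equality of \ZX-calculus diagrams.
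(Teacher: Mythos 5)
Your overall route is the paper's: pass to GS-LC form via Theorem \ref{thm:ZX_GS-LC}, normalise vertex operators using Lemma \ref{lem:single-qubit_Clifford_normal_form}, use local complementations about vertices to enforce condition \ref{it:rGS-LC:VO}, and local complementations along edges for condition \ref{it:rGS-LC:H}. But the step you flag as ``the main obstacle'' -- termination -- is exactly the step you do not supply, and a proof proposal that defers the crux to an unspecified lexicographic monovariant is incomplete. Moreover, your prediction of what makes termination hard is wrong on both counts, so the measure you gesture at is unnecessary. For condition \ref{it:rGS-LC:VO}, the paper's observation is that once a vertex operator is written in one of the normal forms of Lemma \ref{lem:single-qubit_Clifford_normal_form} (each containing at most two red nodes), pre-multiplication by green phase operators \emph{does not change the number of red nodes} in its normal form -- it permutes the red-containing operators among themselves and fixes the pure green phases. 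So the green $\pi/2$ phases sprayed onto neighbours by a local complementation never push anyone's red count up; the total red count is at most $2n$ and each local complementation about a chosen vertex strictly lowers it, giving termination in at most $2n$ steps with no interaction to control.

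For condition \ref{it:rGS-LC:H}, your claim that the pivot ``re-introduces green-and-red factors on $v$, $w$ and their neighbours'' so that one can only ``strip the red node from at least one of $v$ or $w$'' after re-running the reduction is not what happens. The explicit computation \eqref{eq:pi_2+api_2+lc-edge} shows that a local complementation along the edge $\{u,v\}$, followed by a fixpoint operation on $u$ when $a=1$ and on $v$ when $b=1$, turns \emph{both} vertex operators into pure green phases simultaneously, while the only side effect on other qubits is pre-multiplication by powers of the green $\pi$ phase -- which maps the set $R$ of \eqref{eq:reduced_vertex_operators} to itself and creates no new red nodes. Hence each such operation removes a red-containing \emph{pair}, never creates a new offending edge, and the process halts after at most $n/2$ steps. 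Your skeleton is sound, but without these two concrete invariance facts (green pre-multiplication preserves red-node count; the pivot-plus-fixpoint clears both endpoints and emits only green $\pi$ phases) the argument as written does not close, and the coupled ``pivot-and-reduce'' loop you envisage never actually arises.
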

\begin{proof}
 Note that any single-qubit state can be brought into rGS-LC form: for $a\in\{0,1\}$ and $\alpha\in\{0,\pi/2,\pi,-\pi/2\}$:
 \begin{equation}
  \input{tikz_files/single-qubit_rGS-LC-states.tikz}.
 \end{equation}
 By Theorem \ref{thm:ZX_GS-LC}, any stabilizer state diagram is equal to some GS-LC diagram within the \ZX-calculus.
 Lemma \ref{lem:single-qubit_Clifford_normal_form} shows that each vertex operator in the GS-LC diagram can be brought into one of the forms:
 \begin{equation}
  \input{tikz_files/single-qubit_Cliffords_normal_form.tikz}.
 \end{equation}
 Note that the cases $\beta=0$ and $\gamma=0$ of the above normal forms correspond exactly to the elements of  $R$, the restricted set of vertex operators given in \eqref{eq:reduced_vertex_operators}.
 A local complementation about a vertex $v$ pre-multiplies the vertex operator of $v$ with \phase{rn,label={[rphase]right:$-\pi/2$}}, so the vertex operator on any vertex with at least one neighbour can be brought into one of the forms \eqref{eq:reduced_vertex_operators} by some number of local complementations about this vertex.
 The other effects of local complementations are to toggle some of the edges in the graph state and to pre-multiply the vertex operators of neighbouring vertices by \phase{gn,label={[gphase]right:$\pi/2$}}.
 The set $R$ is not mapped to itself under repeated pre-multiplication with green phase operators: this operation sends the set $\{$\phase{gn,label={[gphase]right:$\alpha$}}$\}$ to itself, but it maps:
 \begin{equation}
  \input{tikz_files/rGS-LC_Cliffords_orbit.tikz}
 \end{equation}
 The normal form of a vertex operator contains at most two red nodes.
 Once a vertex operator is in one of the forms in $R$, pre-multiplication by green phase operators does not change the number of red nodes it contains when expressed in normal form.
 Thus the process of removing red nodes from the vertex operators by applying local complementations must terminate after at most $2n$ steps for an $n$-qubit diagram, at which point all vertex operators are elements of the set $R$.

 With all vertex operators in $R$, suppose there are two adjacent qubits $u$ and $v$ which both have red nodes in their vertex operators, i.e.\ there is a subdiagram of the form:
  \begin{equation}\label{eq:neighbouring_red_nodes}
   \input{tikz_files/adjacent_red_nodes.tikz}
  \end{equation}
 for $a,b\in\{\pm 1\}$.
 A local complementation along the edge $\{u,v\}$ maps the vertex operator of $u$ to:
 \begin{equation}\label{eq:pi_2+api_2+lc-edge}
  \input{tikz_files/removing_adjacent_reds.tikz},
 \end{equation}
 and similarly for $v$.
 If $a=1$, we apply a fixpoint operation to $u$ and if $b=1$, we apply one to $v$.
 After this, the vertex operators on both $u$ and $v$ are green phase operators.
 Vertex operators of qubits adjacent to $u$ or $v$ are pre-multiplied with some power of \phase{gn,label={[gphase]right:$\pi$}}.
 Thus each such operation removes the red nodes from a pair of adjacent qubits and leaves all vertex operators in the set $R$.
 Hence after at most $n/2$ such operations, it becomes impossible to find a subdiagram as in \eqref{eq:neighbouring_red_nodes}.
 Thus, the diagram is in reduced GS-LC form.
\end{proof}

\subsection{Equivalence transformations of rGS-LC diagrams}
\label{s:equivalence_rGS-LC}

It is obvious that local complementations and applications of the fixpoint rule do not in general take rGS-LC diagrams to rGS-LC diagrams.
Still, rGS-LC forms are not necessarily unique, as the following two propositions show.
The propositions are adapted from similar results in \cite{elliott_graphical_2008}.

\begin{prop}\label{prop:rGS-LC_transformation1}
 Suppose a rGS-LC diagram contains a pair of neighbouring qubits $p$ and $q$ in the following configuration:
 \begin{equation}
  \input{tikz_files/rGS-LC_transformation_1.tikz}
 \end{equation}
 where $a\in\{\pm 1\}$ and $b\in\{0,1\}$.
 Then a local complementation about $q$ followed by a local complementation about $p$ yields a diagram which can be brought into rGS-LC form by at most two applications of the fixpoint rule.
\end{prop}

\begin{proof}
 Consider first the effect of the two local complementations on the vertex operators of $p$ and $q$.
 For $p$ we have:
 \begin{equation}
  \input{tikz_files/rGS-LC_transformation_1_proof_1.tikz},
 \end{equation}
 and for $q$:
 \begin{equation}
  \input{tikz_files/rGS-LC_transformation_1_proof_2.tikz}.
 \end{equation}
 If $a=+1$, we apply a fixpoint operation to $p$ and if $b=0$, we apply a fixpoint operation to $q$; then the vertex operators of $p$ and $q$ are in $R$.
 The fixpoint operations add \phase{gn,label={[gphase]right:$\pi$}} to neighbouring qubits, which maps the set $R$ to itself.
 As fixpoint operations do not change any edges, we do not have to worry about them when considering whether the rest of the diagram satisfies Definition \ref{dfn:rGS-LC}.

 We now need to check that the two local complementations map all vertex operators to allowed ones.
 Vertices not adjacent to $p$ or $q$ can safely be ignored because their vertex operators remain unchanged.
 As the local complementations and fixpoint operations add only green phase operators to vertices other than $p$ and $q$, any vertex operator on another qubit that started out as a green phase remains a green phase.
 It remains to check the effect of the transformation on qubits whose vertex operator contains a red node and which are adjacent to $p$ or $q$.
 By Definition \ref{dfn:rGS-LC}, such qubits cannot be adjacent to $p$.
 So suppose $w$ is a qubit in the original graph state with a red node in its vertex operator and suppose the initial diagram contains an edge $\{w,q\}$.
 Then the local complementation about $q$ adds a phase factor \phase{gn,label={[gphase]right:$\pi/2$}} to the vertex operator of $w$ and it creates an edge between $w$ and $p$.
 The complementation about $p$ adds another \phase{gn,label={[gphase]right:$\pi/2$}} to $w$ and removes the edge between $w$ and $q$.
 Thus the vertex operator of $w$ remains in the set $R$, i.e. the transformation preserves property 1 of the definition of rGS-LC diagrams.

 Suppose there are two qubits $v,w$ in the original graph state, both of which have red nodes in their vertex operators and are adjacent to $q$.
 Since the original diagram is in rGS-LC form, there is no edge between $v$ and $w$.
 Now the local complementation about $q$ adds an edge between $v$ and $w$ and creates edges $\{p,v\}$ and $\{p,w\}$.
 The local complementation about $p$ removes the edge $\{v,w\}$, so once again $v$ and $w$ are not adjacent.
 Edges involving any qubits that are not adjacent to $p$ or $q$ remain unchanged.
 Thus the transformation preserves property 2 of Definition \ref{dfn:rGS-LC}.
 Hence, the resulting diagram is in rGS-LC form.
\end{proof}

\begin{prop}\label{prop:rGS-LC_transformation2}
 Suppose a rGS-LC diagram contains a pair of neighbouring qubits $p$ and $q$ in the following configuration:
 \begin{equation}
  \input{tikz_files/rGS-LC_transformation_2.tikz}
 \end{equation}
 where $a,b\in\{\pm 1\}$.
 Then a local complementation along the edge $\{p,q\}$ yields a diagram which can be brought into rGS-LC form by at most two applications of the fixpoint rule.
\end{prop}

\begin{proof}
 After the local complementation along the edge, the vertex operator of $p$ is given by \eqref{eq:pi_2+api_2+lc-edge}.
 For the vertex operator of $q$, we have:
 \begin{equation}
  \input{tikz_files/rGS-LC_transformation_2_proof.tikz}
 \end{equation}
 Thus if $a=1$, we apply a fixpoint operation to $p$ and if $b=-1$, we apply a fixpoint operation to $q$.
 From the properties of local complementations along edges (see Section \ref{s:equivalence_GS-LC}) it follows that this transformation preserves the two properties of rGS-LC states. 
\end{proof}

These two types of equivalence operation suffice to derive all equalities between rGS-LC diagrams, as shown in the next section.

\subsection{Comparing rGS-LC diagrams}
\label{s:equality_testing}

In Theorem \ref{thm:ZX_rGS-LC}, we have proved that any stabilizer state diagram is equal to some rGS-LC diagram.
Thus, the \ZX-calculus is complete for stabilizer states if, given two rGS-LC diagrams representing the same state, we can show that they are equal using the rules of the \ZX-calculus.
To do this, we again follow \cite{elliott_graphical_2008}.

\begin{dfn}
 A pair of rGS-LC diagrams on the same number of qubit is called \emph{simplified} if there are no pairs of qubits $p,q$ such that $p$ has a red node in its vertex operator in the first diagram but not in the second, $q$ has a red node in the second diagram but not in the first, and $p$ and $q$ are adjacent in at least one of the diagrams.
\end{dfn}

\begin{prop}\label{prop:simplified}
 Any pair of rGS-LC diagrams on $n$ qubits is equal to a simplified pair.
\end{prop}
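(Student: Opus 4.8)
The plan is to introduce a monovariant that measures how far a pair of rGS-LC diagrams is from being simplified, and to show that the equivalence operations of Propositions \ref{prop:rGS-LC_transformation1} and \ref{prop:rGS-LC_transformation2} can always be used to decrease it. Write $(D_1,D_2)$ for the pair. For $i\in\{1,2\}$ let $A_i$ be the set of qubits whose vertex operator in $D_i$ contains a red node; by \eqref{eq:reduced_vertex_operators} these are exactly the qubits carrying one of the two vertex operators containing a red node (red $\pi/2$ over green $\pm\pi/2$), while every other vertex operator is a pure green phase from $\{0,\pi/2,\pi,-\pi/2\}$. Set $N=\abs{A_1\triangle A_2}$, the number of qubits carrying a red node in exactly one of the two diagrams, where $\triangle$ denotes symmetric set difference. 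I claim that whenever the pair is not simplified, a single application of Proposition \ref{prop:rGS-LC_transformation1} or \ref{prop:rGS-LC_transformation2} to one of the diagrams strictly decreases $N$ while keeping both diagrams in rGS-LC form and equal to the originals.

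First I would unpack the failure of simplification. If $(D_1,D_2)$ is not simplified there exist qubits $p,q$ with $p\in A_1\setminus A_2$, $q\in A_2\setminus A_1$, and $p,q$ adjacent in $D_1$ or in $D_2$. Suppose they are adjacent in $D_1$; the case of adjacency only in $D_2$ is symmetric, exchanging the roles of the two diagrams and of $p$ and $q$. In $D_1$ the qubit $p$ carries a red-node vertex operator, that is red $\pi/2$ over green $a\pi/2$ with $a\in\{\pm1\}$, while $q$ (being outside $A_1$) carries a pure green phase. This is precisely the hypothesis of Proposition \ref{prop:rGS-LC_transformation1} when the phase of $q$ is $0$ or $\pi$, and of Proposition \ref{prop:rGS-LC_transformation2} when it is $\pm\pi/2$, so exactly one of the two operations applies to the edge $\{p,q\}$ and returns an rGS-LC diagram $D_1'$ equal to $D_1$.

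Next I would read off the effect of that operation on red-node membership. Inspecting the vertex-operator computations in the two propositions, after the combined local-complementation and fixpoint operation the operand $p$ ends up with a pure green phase and the operand $q$ ends up with a red-node operator, so the red node is effectively moved from $p$ to $q$. The only other changes to vertex operators are additions of green phases (green $\pi/2$ from complementations, green $\pi$ from fixpoints) to neighbouring qubits; since pre-composition with a green phase sends pure green phases to pure green phases and red-node operators to red-node operators -- exactly the content of the rGS-LC-preservation checks already carried out in the proofs of Propositions \ref{prop:rGS-LC_transformation1} and \ref{prop:rGS-LC_transformation2} -- no qubit other than $p$ and $q$ changes red-node status. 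Hence $A_1'=A_1\triangle\{p,q\}$, and since $p,q\in A_1\triangle A_2$ we obtain $A_1'\triangle A_2=(A_1\triangle A_2)\setminus\{p,q\}$, so $N$ drops by exactly $2$.

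Finally I would close by induction on $N$. Each configuration violating simplification lets us reduce the non-negative integer $N$ by $2$ through an equivalence transformation, so after finitely many steps we reach a pair admitting no such configuration, i.e.\ a simplified pair, still equal component-wise to the original $(D_1,D_2)$. The main obstacle, and the only place needing care, is the bookkeeping claim that the chosen operation moves the red node from $p$ to $q$ and leaves every other qubit's red-node status untouched; this rests entirely on the explicit vertex-operator identities and the ``green stays green, red stays red'' observation already established inside Propositions \ref{prop:rGS-LC_transformation1} and \ref{prop:rGS-LC_transformation2}, so it requires no computation beyond invoking them.
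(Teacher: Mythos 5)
Your proof is correct and follows essentially the same route as the paper's: the paper likewise applies the appropriate transformation from Propositions \ref{prop:rGS-LC_transformation1} and \ref{prop:rGS-LC_transformation2} in the diagram where $p$ and $q$ are adjacent, observes that each application pairs up red nodes (your $N=\abs{A_1\triangle A_2}$ dropping by $2$), and concludes termination in fewer than $n$ steps. Your explicit monovariant and the check that the red node moves from $p$ to $q$ while green-phase pre-multiplications leave all other qubits' red-node status unchanged simply spell out the bookkeeping the paper leaves implicit.
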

\begin{proof}
 Suppose there exists a pair of qubits $p,q$ such that $p$ has a red node in its vertex operator in the first diagram but not in the second, $q$ has a red node in the second diagram but not in the first, and $p$ and $q$ are adjacent in at least one of the diagrams.
 Then in the diagram in which they are adjacent, we can apply the appropriate one of the equivalence transformations given in Section \ref{s:equivalence_rGS-LC}.
 The equivalence rules do not change the total number of red nodes among the vertex operators.
 Each such application pairs up red nodes between the two diagrams.
 Paired up qubits do not participate further in these transformations, therefore in less than $n$ steps the pair of diagrams is simplified.
\end{proof}

\begin{lem}\label{lem:unpaired_red_node}
 Consider a simplified pair of rGS-LC diagrams and suppose there exists an unpaired red node, i.e.\ there is a qubit $p$ which has a red node in its vertex operator in one of the diagrams, but not in the other.
 Then the two diagrams are not equal.
\end{lem}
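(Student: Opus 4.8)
The plan is to prove the statement by contradiction: I would assume the two diagrams are equal and deduce that $p$ must carry a red node in both of them or in neither. By soundness and the scalar-free convention of this section, equality of the diagrams means the two represented stabilizer states agree up to a non-zero scalar, so it suffices to exhibit a single effect under which the two states behave incompatibly. Assume without loss of generality that $p$ has a red node in the first diagram $D_1$ and a green (phase-only) vertex operator in the second diagram $D_2$. First I would record the structural consequence of the hypotheses: because $p$ is red in $D_1$, the second defining property of reduced GS-LC form (Definition~\ref{dfn:rGS-LC}) forces every neighbour of $p$ in $D_1$ to be green in $D_1$; and because the pair is \emph{simplified} while $p$ is an \emph{unpaired} red node, no such neighbour can be red-in-$D_2$-only either. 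Hence every $D_1$-neighbour of $p$ is green in \emph{both} diagrams. This is precisely the handle the two hypotheses are designed to provide.

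Next I would characterize the red node algebraically. Using the description of the admissible vertex operators in \eqref{eq:reduced_vertex_operators}, a vertex operator $U_p$ contains a red node exactly when $U_p Z_p U_p^\dagger \in \{\pm X_p, \pm Y_p\}$ (equivalently, it moves the computational axis off itself, with $U_p X_p U_p^\dagger = \pm Z_p$), whereas a green phase operator satisfies $U_p Z_p U_p^\dagger = Z_p$. Diagrammatically, this says that a computational-basis (red) effect passes unchanged through a green vertex operator but is converted into an equatorial effect by a red one. I would then feed the single red effect $\bra{0}$ onto the output wire of $p$ and compare the two resulting $(n-1)$-qubit diagrams. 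In $D_2$, since $\bra{0}\,U_p$ is again a computational-basis effect, the \textbf{copy rule} applies at the green spider of $p$: the effect copies onto all incident edges, $p$ is deleted, and its former neighbours are simply \emph{disconnected} through $p$, with no change to their mutual adjacencies. In $D_1$, by contrast, $\bra{0}\,U_p$ is an equatorial effect, which cannot be copied but merely fuses into the green spider of $p$; via the local-complementation machinery of Section~\ref{s:equivalence_GS-LC} this equals a graph-state diagram on the remaining qubits in which the mutual adjacencies among $p$'s neighbours are \emph{toggled} (local complementation at $p$, then deletion of $p$).

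Finally I would conclude by showing that these two $(n-1)$-qubit stabilizer states cannot be proportional, contradicting $\bra{0}_p\ket{\psi_1} = c\,\bra{0}_p\ket{\psi_2}$ for the common effect. When $p$ has at least two neighbours in $D_1$ this is a genuine difference in the two-qubit $Z$-correlations among those (green, hence cleanly handled) neighbours, detectable by a further computational-basis effect. The cases where $p$ has zero or one neighbour I would dispatch separately: an isolated or pendant red $p$ contributes a single-qubit factor that is a pole state after the red vertex operator, which no green vertex operator can reproduce from the corresponding configuration in $D_2$. I expect the \textbf{main obstacle} to be exactly this last step — proving rigorously that the ``disconnect'' outcome of $D_2$ and the ``toggle-and-delete'' outcome of $D_1$ are inequivalent stabilizer states for every admissible neighbourhood, together with the organization of the low-degree cases. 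The reason there is real hope of pushing it through is the structural fact isolated at the outset: all of $p$'s relevant neighbours are green in both diagrams, so the red effect acts on them strictly through the copy rule, without introducing the uncontrolled equatorial contributions that would otherwise obstruct the comparison.
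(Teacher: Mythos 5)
Your structural set-up is the same as the paper's (every $D_1$-neighbour of $p$ is forced to carry a green-phase vertex operator in \emph{both} diagrams, by property 2 of Definition~\ref{dfn:rGS-LC} in $D_1$ and by simplification in $D_2$), and probing $p$ is the right instinct, but the middle step of your argument is factually wrong. For the red-node vertex operators in $R$ one has $U_p = R_X(\pi/2)\,R_Z(\pm\pi/2)$, and a direct computation gives $\bra{0}\,U_p \propto \bra{0}R_X(\pi/2)R_Z(\pm\pi/2) \propto \bra{\pm}$: an $X$-eigenbasis effect, not a $Y$-type one. The rule ``local complementation about $p$, then delete $p$'' that you invoke is the $Y$-measurement rule; an $X$-measurement on a graph-state qubit instead follows the pivot rule — a local complementation along an edge $\{p,b\}$ for a chosen swapping partner $b$, cf.\ Section~\ref{s:equivalence_GS-LC} — so your description of the $D_1$ residual as ``the mutual adjacencies among $p$'s neighbours toggled'' is not the state you actually obtain, and the comparison you build on it does not go through as stated.

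More fundamentally, composing with the non-invertible effect $\bra{0}_p$ yields only a one-way implication: if the two residual $(n-1)$-qubit states differ, then $D_1 \neq D_2$, but if they happen to agree you learn nothing. You are therefore forced to prove, in full generality (arbitrary graphs $G_1 \neq G_2$, arbitrary phases, $N_1 \neq N_2$ allowed), that the ``delete-$p$'' residual of $D_2$ and the pivot residual of $D_1$ are never proportional — a comparison problem of essentially the same type and difficulty as the lemma itself, which your sketch leaves open (as you acknowledge). The paper sidesteps both issues by composing with the \emph{invertible} operator $U = \bigl(\bigotimes_{v\in N_1} C_{X,v\to p}\bigr)\circ R_{Z,p}$, so that $U\circ D_1 = U\circ D_2$ holds if and only if $D_1 = D_2$; the CNOTs from all of $N_1$ into $p$ disentangle $p$ in $D_1$, leaving it in a computational-basis (pole) state, while a case analysis on the phase $\beta$ of $p$'s vertex operator in $D_2$, using local complementations, shows $p$ there either stays entangled or factors out in an equatorial state — configurations already shown incompatible with a pole-state factor. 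Note finally that the case where $p$ is isolated in one diagram but has neighbours in the other requires the local-complementation classification (Theorem~\ref{thm:graph_states_LC}: local complementations never create or destroy neighbours), which your low-degree case split does not supply.
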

\begin{proof}
 Let $D_1$ be the diagram in which $p$ has the red node, $D_2$ the other diagram.
 There are multiple cases:

 \emph{In either diagram, $p$ has no neighbours}: In this case, the overall quantum state factorises and the two diagrams are equal only if the two states of $p$ are the same.
 But:
 \begin{equation}
  \input{tikz_files/unpaired_red_node_case_disconnected.tikz}
 \end{equation}
 for $\alpha\in\{0,\pi/2,\pi,-\pi/2\}$ and $b\in\{\pm 1\}$, so the diagrams must be unequal.

 \emph{$p$ is isolated in one of the diagrams but not in the other}: Two graph states are equivalent under local Clifford operations only if one graph can be transformed into the other via a sequence of graph-theoretical local complementations.
 A local complementation never turns a vertex with neighbours into a vertex without neighbours, or conversely.
 Thus the two diagrams cannot be equal.

 \emph{$p$ has neighbours in both diagrams}: Let $N_1$ be the set of all qubits that are adjacent to $p$ in $D_1$, and define $N_2$ similarly.
 The vertex operators of any qubit in $N_1$ must be green phases in both diagrams.
 In $D_1$, this is because of the definition of rGS-LC states, in $D_2$ it is because the pair of diagrams is simplified.
 To both diagrams apply the operation:
 \begin{equation}
  U = \left(\bigotimes_{v\in N_1} C_{X,v\to p}\right)\circ R_{Z,p},
 \end{equation}
 where $R_{Z,p}$ denotes \phase{gn,label={[gphase]right:$\pi/2$}} on $p$, and $C_{X,v\to p}$ is a controlled-\NOT{} operation with control $v$ and target $p$.
 The controlled-\NOT{} gates with different controls and the same target commute, so this is well-defined.
 Furthermore, $U$ is invertible, so (in a slight abuse of notation) $U\circ D_1 = U\circ D_2 \Leftrightarrow D_1 = D_2$.
 We show that, no matter what the properties of $D_1$ and $D_2$ are (beyond the existence of an unpaired red node on $p$):
 \begin{itemize}
  \item in $U\circ D_1$, qubit $p$ is in state \state{rn} or \state{rn,label={[rphase]right:$\pi$}};
  \item in $U\circ D_2$, $p$ is either entangled with other qubits, or in one of the states \state{gn,label={[gphase]right:$\phi$}}, where $\phi\in\{0,\pi/2,\pi,-\pi/2\}$.
 \end{itemize}
 By the arguments used in the first two cases, this implies that $U\circ D_1\neq U\circ D_2$ and therefore $D_1\neq D_2$.

 Let $n=\abs{N_1}$, $m=\abs{N_1\cap N_2}$, and suppose the qubits are arranged in such a way that the first $m$ elements of $N_1$ are those which are also elements of $N_2$, if there are any.
 Consider first the effect on diagram $D_1$.
 The local Clifford operator on $p$ combines with the $R_Z$ from $U$ to:
 \begin{equation}
  R_Z\circ R_X\circ R_Z^{\pm 1} = H\circ Z^{a},
 \end{equation}
 where $a = (1\mp 1)/2$.
 Thus $U\circ D_1$ is equal to:
 \begin{equation}
  \input{tikz_files/unpaired_red_node_D1a.tikz} = \!\!\! \input{tikz_files/unpaired_red_node_D1b.tikz} = \input{tikz_files/unpaired_red_node_D1c.tikz}.
 \end{equation}
 Here, $\alpha_k\in\{0,\pi/2,\pi,-\pi/2\}$ for $k=1,\ldots,n$ and we have used the fact that green nodes can be moved past each other.
 Note that at the end, qubit $p$ is isolated and in the state \state{rn,label={[rphase]right:$a\pi$}}.

 Next consider $D_2$.
 As $N_2$ is not in general equal to $N_1$, there may be qubits adjacent to $p$ which do not have controlled-\NOT{} gates applied to them, qubits which have controlled-\NOT{} gates applied to them but are not adjacent to $p$, and qubits which are adjacent to $p$ and have controlled-\NOT{} gates applied to them.
 In the following diagram, $\beta$ and $\gamma_1,\ldots,\gamma_n$ are multiples of $\pi/2$ as usual.
 The phase $\beta$ is the combination of the original local Clifford operator on $p$ and the $R_Z$ part of $U$.
 As before, we do not care about edges that do not involve $p$.
 This time we also neglect edges between $p$ and vertices not in $N_1$:
 \begin{equation}
  \input{tikz_files/unpaired_red_node_D2.tikz}
 \end{equation}
 We distinguish different cases, depending on the value of $\beta$.

 If $\beta=\pi/2$, apply a local complementation and a fixpoint operation about $p$.
 This does not change the edges incident on $p$:
 \begin{center}
  \input{tikz_files/unpaired_red_node_D2_case_pi_2_1.tikz}
 \end{center}
 \begin{center}
  \input{tikz_files/unpaired_red_node_D2_case_pi_2_2.tikz}
 \end{center}
 \begin{equation}
  \input{tikz_files/unpaired_red_node_D2_case_pi_2_3.tikz}
 \end{equation}
 where $\gamma_k'=\gamma_k-\pi/2$ for $k=1,\ldots,m$.
 Now if $N_1=N_2$, $p$ has no more neighbours and is in the state \state{rn,label={[rphase]right:$-\pi/2$}}.
 This is not the same as the state $p$ has in diagram 1, so the diagrams are not equal.
 Else, after the application of $U$, $p$ still has some neighbours in diagram 2.
 Local complementations do not change this fact.
 Thus the two diagrams cannot be equal.
 The case $\beta=-\pi/2$ is entirely analogous, except that there is no fixpoint operation at the beginning.

 If $\beta=0$, there are two subcases.
 Firstly, suppose there exists $v\in N_2$ such that $v\notin N_1$.
 Apply a local complementation about this $v$.
 This operation changes the vertex operator on $p$ to \phase{gn,label={[gphase]right:$\pi/2$}}.
 It also changes the edges incident on $p$, but the important thing is that $p$ still has at least one neighbour.
 Thus we can proceed as in the case $\beta=\pi/2$.
 Secondly, suppose there is no $v\in N_2$ which is not in $N_1$.
 Since $N_2\neq\emptyset$ -- $N_2=\emptyset$ corresponds to the case ``$p$ has no neighbours in $D_2$'', which was considered above -- we must then be able to find $v\in N_1\cap N_2$.
 The diagram looks as follows, where now $m>0$ and, again, we are ignoring edges that do not involve $p$:
 \begin{equation}
  \input{tikz_files/unpaired_red_node_D2_case_0_1.tikz}.
 \end{equation}
 To show that the two diagrams are unequal it suffices to show that in diagram 2 the state of $p$ either factors out, but is not \state{rn} or \state{rn,label={[rphase]right:$\pi$}}, or that it remains entangled with other qubits.
 We are thus justified in ignoring large portions of the above diagram to focus only on $p$, $v$ and the controlled-Z between the two.
 In particular, we ignore for the moment the controlled-Z gates between $p$ and qubits other than $v$, as well as the last Hadamard gate on $p$.
 Then:
 \begin{align}
  \input{tikz_files/unpaired_red_node_D2_case_0_2_a.tikz} \;
  &= \; \input{tikz_files/unpaired_red_node_D2_case_0_2_b.tikz} \;
  = \; \input{tikz_files/unpaired_red_node_D2_case_0_2_c.tikz} \nonumber \\
  &= \; \input{tikz_files/unpaired_red_node_D2_case_0_2_d.tikz} \;
  = \; \input{tikz_files/unpaired_red_node_D2_case_0_2_e.tikz},
 \end{align}
 where for the second equality we have applied a local complementation and a fixpoint operation to $v$ and used the Euler decomposition, the third equality follows by a local complementation on $p$, and the last one comes from the merging of $p$ with the green node in the bottom left.
 Note that, in the end, $p$ and $v$ are still connected by an edge.
 None of the operations we ignored in picking out this part of the diagram can change that.
 Thus, as before, the state of $p$ cannot be the same as in diagram 1.
 The two diagrams are unequal.

 The case $\beta=\pi$ is analogous to $\beta=0$, except we start with a fixpoint operation on the same qubit as the first local complementation.

 We have shown that a simplified pair of rGS-LC diagrams are not equal if there are any unpaired red nodes.
\end{proof}

\begin{thm}\label{thm:rGS-LC_equality}
 The two diagrams making up a simplified pair of rGS-LC diagram are equal, i.e.\ they correspond to the same quantum state, if and only if they are identical.
\end{thm}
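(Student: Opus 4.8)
The plan is to prove Theorem~\ref{thm:rGS-LC_equality} by showing both directions, with the ``only if'' direction being the substantive one (the ``if'' direction is immediate, since identical diagrams trivially represent the same state). For the forward direction, I would argue contrapositively: given a simplified pair of rGS-LC diagrams $D_1$ and $D_2$ representing the same state, I would show that they must in fact be identical. The first reduction uses Lemma~\ref{lem:unpaired_red_node}: since the diagrams are equal, there can be no unpaired red nodes, so \emph{every} qubit has a red node in its vertex operator in $D_1$ if and only if it does in $D_2$. Thus the two diagrams agree on which vertex operators contain red nodes, and (by the structure of $R$ in \eqref{eq:reduced_vertex_operators}) the vertex operators on each qubit are either both green phases or both of the two ``red'' forms.

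Next I would reduce the comparison to graph states with matching local Clifford decorations. Having fixed the ``type'' (green-only vs.\ red-containing) of each vertex operator across the two diagrams, I would peel off a fixed local Clifford operator from each qubit so that both diagrams reduce to underlying graph states $\ket{G_1}$ and $\ket{G_2}$ decorated by \emph{green} phase operators only. Since the two diagrams represent the same overall state and the peeled-off operators are identical and invertible, $\ket{G_1}$ and $\ket{G_2}$ (with their green-phase decorations) must represent the same state. By Theorem~\ref{thm:Van_den_Nest} and the correspondence with the binary formalism (Theorems~\ref{thm:stabilizer_graph_state} and~\ref{thm:graph_states_LC}), two graph states decorated only by diagonal (green-phase) local Cliffords can be equal only if the underlying graphs are identical and the green phases coincide qubit-by-qubit: a nontrivial local complementation always changes at least one vertex operator to include a red ($R_X$-type) node, which would reintroduce an unpaired red node and contradict simplification. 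This is the step I expect to be the main obstacle, since it requires carefully translating ``equal decorated graph states with only diagonal local Cliffords'' into a rigidity statement about the adjacency matrices and phases, drawing on the self-orthogonality and uniqueness properties of check matrices (Lemma~\ref{lem:check_matrix}).

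With the underlying graphs forced to be equal and the green phases forced to agree, the remaining task is to pin down the red-containing vertex operators. Here I would use the constraint~\ref{it:rGS-LC:H} of Definition~\ref{dfn:rGS-LC} together with the simplified-pair hypothesis: the qubits carrying red nodes are the same in both diagrams, they are pairwise non-adjacent, and on each such qubit the vertex operator is determined (up to the finite choices in $R$) by its action on the common graph state. A short case analysis over the elements of $R$ containing red nodes, exploiting the fixed neighbourhood structure, would then show that the two red vertex operators on each such qubit must coincide. Combining the equality of graphs, of green phases, and of red vertex operators yields that $D_1$ and $D_2$ are identical diagrams, completing the forward direction and hence the theorem. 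Throughout, I would continue under the standing assumption that non-zero stabilizer scalars are invertible, so that the scalar parts can be handled separately and do not obstruct the structural comparison.
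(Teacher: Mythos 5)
Your skeleton matches the paper's proof up to its midpoint: the ``if'' direction is immediate, Lemma~\ref{lem:unpaired_red_node} forces all red nodes to be paired, and peeling off the now-identical red $\pi/2$ shifts (the paper's operator $U=\bigotimes_{v\in V'}R_{X,v}^{-1}$) reduces both diagrams to graph states $\ket{G_1}$, $\ket{G_2}$ decorated by green phases only. From there, however, the paper does something more elementary and self-contained than what you propose: it additionally composes with $W=\bigotimes_{\{u,w\}\in E_1}C_{Z,uw}$, which commutes past the green decorations and completely disentangles the first diagram, so that $(W\circ U)\circ D_1$ is a product of single-qubit green-phase states while $(W\circ U)\circ D_2$ is the graph state on $(V,E_1\triangle E_2)$ with green decorations. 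A decorated graph state is a product of single-qubit states only if it has no edges, forcing $E_1=E_2$; the green phases then compare qubit-by-qubit, and invertibility of $W\circ U$ transfers the conclusion back to $D_1$ and $D_2$.

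The genuine gap in your version is the rigidity step, and the justification you sketch does not work as stated, for two reasons. First, Theorem~\ref{thm:graph_states_LC} only asserts that LC-equivalent graph states are related by \emph{some} sequence of local complementations; it says nothing about whether the particular diagonal local Clifford $L$ with $L\ket{G_1}=\ket{G_2}$ decomposes into the $R_X$-type operators those complementations induce. Indeed, two local complementations about the same vertex return the graph to itself while the induced local operator is a nontrivial Pauli containing a red $\pi$, so the slogan ``a nontrivial local complementation introduces a red node'' cannot by itself refute the coexistence of $G_1\neq G_2$ with a diagonal $L$. Second, Lemma~\ref{lem:check_matrix} and the binary formalism discard the phases of the Pauli products: the check-matrix computation does force $\theta_1=\theta_2$ and kills the $S$-components of $L$ (a diagonal local Clifford acts as $\theta\mapsto\theta+B$ with $B$ diagonal, and the zero diagonal of adjacency matrices forces $B=0$), but it cannot distinguish $\ket{G}$ from $Z_v\ket{G}$, which have identical check matrices. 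To pin down the $\pi$-components of the green phases you need a supplementary argument outside the check-matrix formalism---for instance, that the only diagonal element of a graph state's stabilizer group is the identity (the $X$-support of $\prod_{v\in S}g_v$ is exactly $S$), or a direct amplitude computation using the full computational-basis support of graph states. With such a supplement your route closes; note also that your final ``case analysis over red-containing elements of $R$'' is redundant, since after peeling the red parts are identical by construction and only their attached green phases $\pm\pi/2$ remain to be compared, which is already part of the phase-rigidity step.
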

\begin{proof}
 By Lemma \ref{lem:unpaired_red_node}, the diagrams are unequal if there are any unpaired red nodes.
 We can therefore assume that all red nodes are paired up.

 Let the diagrams be $D_1$ and $D_2$.
 Suppose the graph underlying $D_1$ is $G_1=(V,E_1)$ and that underlying $D_2$ is $G_2=(V,E_2)$.
 For simplicity, suppose $V=\{1,2,\ldots,n\}$.
 We can draw the two diagrams as:
 \begin{equation}
  \input{tikz_files/rGS-LC_equality_1.tikz},
 \end{equation}
 where, for all $v\in V$, $\alpha_v\in\{0,\pi/2\}$ and:
 \begin{equation}
  \beta_v,\gamma_v\in \begin{cases} \{\pm\pi/2\} & \text{if }\alpha_v = \pi/2 \\ \{0,\pi/2,\pi,-\pi/2\} & \text{otherwise.} \end{cases}
 \end{equation}
 Let $V'=\{v\in V|\alpha_v=\pi/2\}$ and define the operators:
 \begin{equation}
  U = \bigotimes_{v\in V'} R_{X,v}^{-1} \quad\text{and}\quad W = \bigotimes_{\{u,w\}\in E_1} C_{Z,uw},
 \end{equation}
 where $C_{Z,uw}$ denotes a controlled-Z operator applied to qubits $u$ and $w$.
 Controlled-Z operators commute, therefore both $U$ and $W$ are invertible and we have $(W\circ U)\circ D_1=(W\circ U)\circ D_2$ if and only if $D_1=D_2$.
 Now in $U\circ D_1$ and $U\circ D_2$, all vertex operators are green nodes, which can be moved past controlled-Z operators.
 Thus $(W\circ U)\circ D_1$ is equal to \state{gn,label={[gphase]right:$\beta_1$}}~$\ldots$~\state{gn,label={[gphase]right:$\beta_n$}}.
 Now $(W\circ U)\circ D_2$ can be rewritten as follows:
 \begin{equation}
  \input{tikz_files/rGS-LC_equality_2.tikz}
 \end{equation}
 Here, the white ellipse labelled $G_1$ denotes the graph state $G_1$ with an additional input for each vertex. $E_1\triangle E_2$ is the symmetric difference of the two sets $E_1$ and $E_2$, i.e.\ the graph $(V,E_1\triangle E_2)$ contains all edges which are contained in either $G_1$ or $G_2$, but not in both.
 Clearly this is equal to a product of single-qubit states only if $E_1\triangle E_2=\emptyset$.
 That condition is satisfied if and only if $E_1=E_2$, i.e.\ $G_1=G_2$.

 Assuming that the underlying graphs are equal, we have $(W\circ U)\circ D_1=(W\circ U)\circ D_2$ if and only if $\beta_v=\gamma_v$ for all $v\in V$.
 Thus $(W\circ U)\circ D_1=(W\circ U)\circ D_2$ if and only if $D_1$ and $D_2$ are identical.
 By unitarity of $(W\circ U)$, this implies that the diagrams $D_1$ and $D_2$ are equal if and only if they are identical, as required.
\end{proof}

We have shown that if two stabilizer state diagrams in the \ZX-calculus are brought into rGS-LC form and then simplified, the result are two identical diagrams whenever the original diagrams represented the same state.
Thus, by inverting some of the rewrite steps, one of the original diagrams can be transformed into the other.
Therefore any equality between two stabilizer state diagrams that holds up to a non-zero scalar factor can be derived from the rewrite rules of the \ZX-calculus.

\subsection{Example: Two circuit decompositions for controlled-Z}
\label{s:example}

In quantum circuit notation, there are two common ways of writing the controlled-Z operator in terms of controlled-\NOT{} gates and different types of single-qubit gates, see Figure \ref{fig:circuit_CZ}.

\begin{figure}
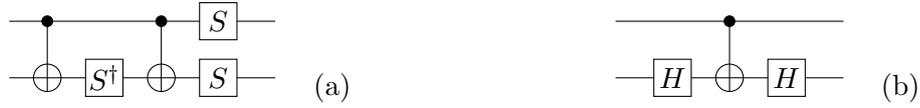

 \centering
 \input{tikz_files/circuit_CZ_2.tikz} $\qquad\qquad\qquad\qquad$ \input{tikz_files/circuit_CZ_1.tikz}
 \caption{Two quantum circuit decompositions for the controlled-Z operator in terms of controlled-\NOT{} and single-qubit gates: (a) using phase gates and (b) using Hadamard gates.}
 \label{fig:circuit_CZ}
\end{figure}

The two quantum circuit diagrams translate straightforwardly to the following \ZX-calculus diagrams:
\begin{equation}\label{eq:example_diagrams}
 \innerprodgr \; \innerprodgr \; \input{tikz_files/CZ_a.tikz} \quad\text{and}\quad \innerprodgr \; \input{tikz_files/CZ_b.tikz}.
\end{equation}
Since these two diagrams have been constructed to represent the same operator, we expect them to be equal.
To confirm this, we follow the steps given in the preceding sections.
We ignore scalars in the rewrite process because the circuits are unitary and thus their normalisation is fixed.
Furthermore, complex phases are physically irrelevant.

To bring the diagrams into GS-LC form, they first need to be mapped to the corresponding state diagrams via the Choi-Jamio{\l}kowski isomorphism.
It is useful to note that:
\begin{equation}
 \input{tikz_files/cup_graph_state.tikz},
\end{equation}
and to convert the elements of the diagrams into those given in Lemma \ref{lem:basic_elements} before transforming the diagram to a state.
Thus, the first diagram becomes:
\begin{equation}
 \input{tikz_files/CZ_a_der_1.tikz},
\end{equation}
where the grey box in the last section of the equality encloses the GS-LC part of the diagram.
The operators still outside the grey box are applied to the GS-LC state, one by one, using local complementations on the graph to change the vertex operators as necessary, until the whole diagram is in GS-LC form:
\begin{center}
 \input{tikz_files/CZ_a_der_2a.tikz}
\end{center}
\begin{equation}
 \input{tikz_files/CZ_a_der_2b.tikz}.
\end{equation}
Lastly, the vertex operators are decomposed into red and green phase operators only, so the diagram can be brought into rGS-LC form:
\begin{equation}\label{eq:diagram1}
 \input{tikz_files/CZ_a_der_3.tikz}.
\end{equation}

Similarly, the second diagram becomes:
\begin{equation}
 \input{tikz_files/CZ_b_der_1.tikz},
\end{equation}
which turns into:
\begin{equation}\label{eq:diagram2}
 \input{tikz_files/CZ_b_der_2.tikz}.
\end{equation}

The last parts of \eqref{eq:diagram1} and \eqref{eq:diagram2} form a pair of rGS-LC diagrams, which we now simplify.
Numbering the qubits from left to right, we find that both diagrams have red nodes in the vertex operator of qubit 2, and that there are further red nodes in the vertex operator of qubit 4 in the first diagram and qubit 1 in the second diagram.
Qubits 1 and 4 are connected in the first diagram, so we can apply the rGS-LC transformation given in Proposition \ref{prop:rGS-LC_transformation2} to transfer the red node from one to the other.
First, apply a local complementation about the edge $\{1,4\}$:
\begin{equation}
 \input{tikz_files/CZ_pair-up_1.tikz}.
\end{equation}
Then rewrite the vertex operators into standard form and apply a fixpoint operation about qubit 4, to get a diagram that is once again in rGS-LC form:
\begin{equation}\label{eq:diagram1_simplified}
 \input{tikz_files/CZ_pair-up_2.tikz}.
\end{equation}
The pair of diagrams given by the last parts of \eqref{eq:diagram2} and \eqref{eq:diagram1_simplified} is now simplified.
In fact, these two diagrams are identical -- as expected, considering we started with two different circuit representations of the same quantum-mechanical operator.

By taking the sequence of diagrams derived here and bending outputs in those diagrams back into inputs, we can now derive a sequence of rewrites which show directly that the two diagrams given in \eqref{eq:example_diagrams} are equal up to scalar factor.

\chapter{Expanding \ZX-calculus completeness}
\label{ch:more_completeness}

In the previous chapter, we showed that it is possible to derive completeness results for fragments of the \ZX-calculus where the phase angles are restricted, despite the universal \ZX-calculus being incomplete.
We proved in particular that the \ZX-calculus is complete for pure state stabilizer quantum mechanics with post-selected measurements if equality is taken to be up to some non-zero scalar factor and stabilizer scalars appearing in the rewrite rules are assumed to be invertible.

In this chapter, we expand those completeness results.
First, we prove that the \ZX-calculus is complete for non-zero stabilizer scalars, and show that the assumption of invertibility for non-zero scalars was justified.
Building up on the scalar-free completeness result and the completeness result for scalars, we then show that the \ZX-calculus is complete for general pure state stabilizer quantum mechanics with post-selected measurements, including zero operators.
Finally, we show that the \ZX-calculus is also complete for the single-qubit Clifford+T group.

\section{Completeness for non-zero stabilizer scalars}
\label{s:scalar_completeness}

In the latter sections of Chapter \ref{ch:completeness}, we considered only stabilizer state diagrams.
By map-state duality, any results for state diagrams automatically extend to arbitrary diagrams with at least one input or output.
We now change the focus to scalar diagrams, i.e.\ diagrams or subdiagrams with neither inputs nor outputs.

Using the result that any stabilizer state diagram can be brought into GS-LC form (cf.\ Section \ref{s:GS-LC}), we show that stabilizer scalar diagrams can be decomposed into small disconnected segments.
We then construct a normal form for non-zero stabilizer scalar diagrams and show that it is unique.
This implies that the \ZX-calculus is complete for non-zero stabilizer scalars.
Furthermore, we show that the non-zero stabilizer scalars in the \ZX-calculus form a group, thus justifying the scalar invertibility assumption used for most of Chapter \ref{ch:completeness}.

\subsection{Decomposing scalar diagrams}
\label{s:decompose_scalars}

The process used to bring stabilizer state diagrams into GS-LC form in Theorem \ref{thm:ZX_GS-LC} can also be used to simplify stabilizer scalar diagrams.

\begin{cor}\label{cor:decompose_scalars}
 Any stabilizer scalar diagram can be decomposed into disconnected segments containing at most two nodes and one edge each.
\end{cor}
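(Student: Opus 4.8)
The plan is to re-run the reduction to GS-LC form from Theorem \ref{thm:ZX_GS-LC}, but applied to a closed (scalar) diagram and with careful bookkeeping of the scalar subdiagrams created along the way. First I would decompose the given stabilizer scalar diagram into basic elements using Lemma \ref{lem:basic_elements}, replacing any cups by spiders. Since the diagram has no inputs, building it up must start from state nodes \state{gn}, and since it also has no outputs, every wire is eventually closed off. I would then process the basic elements one at a time, exactly as in the proof of Theorem \ref{thm:ZX_GS-LC}, maintaining the invariant that the partial diagram is a GS-LC diagram whose scalar part is a disjoint union of segments each containing at most two nodes and one edge.

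The base case is a single \state{gn}, whose scalar part is empty. For the inductive step I would invoke Lemmas \ref{lem:basic_state}, \ref{lem:LC}, \ref{lem:halfscalar}, \ref{lem:measurement}, \ref{lem:split} and \ref{lem:join}, which guarantee that applying any basic element to a GS-LC diagram again yields a GS-LC diagram. The new content is to check that every scalar these steps introduce is elementary: the normalising factors from Proposition \ref{prop:graph_state} and from local complementations are copies of \innerprodgr{} and \halfscalar{} (cf.\ Lemma \ref{lem:modulus_nf}); the modulus-one scalars from the single-qubit Clifford normal form (Lemma \ref{lem:single-qubit_Clifford_normal_form}) are stabilizer phases $i^k$, representable as \innerprod{gn}{rn}-type segments together with normalising factors; and the scalars produced when an isolated operand vertex is absorbed in Lemma \ref{lem:measurement} or \ref{lem:join} are inner products of two single-qubit stabilizer objects. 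Each such inner product is itself a two-node closed diagram, which I would reduce to a single segment of the required shape by the spider, colour-change and copy rules. Because scalar subdiagrams are disconnected from the rest of the diagram by the topology meta-rule, these newly created segments simply join the existing disjoint union without merging into any connected piece.

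Once all basic elements are processed the diagram has no outputs, so the non-scalar part of the resulting GS-LC diagram is a graph state on zero vertices, i.e.\ empty. What remains is exactly the scalar part, which by the invariant is a disjoint union of segments with at most two nodes and one edge, as required.

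The main obstacle is the scalar bookkeeping of the second paragraph: the supporting lemmas of Theorem \ref{thm:ZX_GS-LC} were largely stated only up to scalar factors, so the real work is to reinspect each of them and confirm that no step can create a connected scalar segment with three or more nodes, reducing the single-qubit inner products that do appear to elementary form by a short finite case check.
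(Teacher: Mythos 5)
Your proposal is correct and takes essentially the same route as the paper's proof: both re-run the GS-LC reduction of Theorem \ref{thm:ZX_GS-LC} via the decomposition of Lemma \ref{lem:basic_elements} and verify that every scalar split off in the proofs of Lemmas \ref{lem:measurement} through \ref{lem:join} is a segment of at most two nodes, finishing with the single-node form of single-qubit stabilizer states from \eqref{eq:single_qubit_states}. The only difference is organisational --- the paper peels one \effect{gn} off each connected component and brings the remaining single-qubit state into GS-LC form, whereas you run the induction globally down to the empty diagram --- but the essential scalar bookkeeping is identical.
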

\begin{proof}
 Take any connected component of the scalar diagram which contains more than two nodes.
 Rewrite it into the inner product between some (possibly complicated) state diagram and \effect{gn}.
 This can be done for any connected scalar diagram by decomposing it into basic spiders as in Lemma \ref{lem:basic_elements}.
 Since \effect{gn} is the only basic spider with no outputs, the scalar must contain at least one copy of \effect{gn} (or a cap, which can be rewritten into spiders).

 The remainder of the scalar diagram, which represents a single-qubit state, can then be brought into GS-LC form.
 Any scalar subdiagrams that ``split off'' the main part of the diagram in this rewriting process  consist of disconnected segments containing at most two nodes each, as can easily be checked by looking at the basic rewrite rules and the proofs of Lemmas \ref{lem:measurement} through \ref{lem:join}.
 The non-scalar part of a single-qubit GS-LC diagram can be rewritten to consist of only a single node as in \eqref{eq:single_qubit_states}.
 This node combines with the green effect into a two-node diagram.
\end{proof}

Decomposing scalars into two-node segments is not just a step towards a normal form, it also allows zero diagrams -- i.e. diagrams that are interpreted as a zero matrix under the usual interpretation functor -- to be recognised without needing to construct the interpretation matrix explicitly.

\begin{cor}\label{cor:recognize_zero}
 A stabilizer diagram in GS-LC form and in which all scalar subdiagrams are decomposed as in Corollary \ref{cor:decompose_scalars} is zero if and only if it explicitly contains at least one of the following as a subdiagram:
 \begin{equation}\label{eq:zero_scalars}
  \scalar{gn, label={[gphase]right:$\pi$}}, \quad\quad \scalar{rn, label={[rphase]right:$\pi$}}, \quad\quad \innerprod{gn, label={[gphase]right:$\pi/2$}}{rn, label={[rphase]right:$-\pi/2$}}, \quad \text{ or } \quad \innerprod{gn, label={[gphase]right:$-\pi/2$}}{rn, label={[rphase]right:$\pi/2$}}.
 \end{equation}
\end{cor}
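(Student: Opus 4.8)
The plan is to reduce the claim to a finite case check on the decomposed scalar segments. By Lemma \ref{lem:GS-LC_zero}, a GS-LC diagram is zero exactly when its scalar part is zero, so it suffices to analyse the scalar part on its own. After the decomposition of Corollary \ref{cor:decompose_scalars}, this scalar part is a juxtaposition of disconnected segments, each with at most two nodes and one edge; under the interpretation functor juxtaposition of scalars is multiplication, so the value of the scalar part is the product of the values of the individual segments. Since a product of complex numbers vanishes if and only if one of its factors vanishes, the whole diagram is zero if and only if at least one segment evaluates to zero. The ``if'' direction is then immediate: each of the four listed subdiagrams sits in the scalar part and, as computed below, evaluates to $0$, contributing a zero factor. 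It remains to determine which segments \emph{can} evaluate to zero, which settles the ``only if'' direction.

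First I would enumerate the possible segment types. A one-node segment is an isolated green or red spider carrying a stabilizer phase in $\{0,\pi/2,\pi,-\pi/2\}$, or a star node; a two-node segment consists of one green and one red spider joined by a single edge (two adjacent spiders of equal colour merge under the spider rule, and any Hadamard nodes are removed by the colour-change rule in the course of the decomposition). The star node has value $\tfrac12$ and so is never zero. For a single spider with no legs the value is $1+e^{i\alpha}$ by Definition \ref{dfn:interpretation_functor}, which vanishes exactly when $\alpha=\pi$; by the colour-change rule the same holds for a red spider. This accounts for the first two entries $\scalar{gn, label={[gphase]right:$\pi$}}$ and $\scalar{rn, label={[rphase]right:$\pi$}}$ of the list.

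For a two-node green--red segment with green phase $\alpha$ and red phase $\beta$, I would obtain directly from Definition \ref{dfn:interpretation_functor} by pairing the red state $\ket{+}+e^{i\beta}\ket{-}$ with the green effect $\bra{0}+e^{i\alpha}\bra{1}$ the value
\begin{equation}
 \left\llbracket \innerprod{gn, label={[gphase]right:$\alpha$}}{rn, label={[rphase]right:$\beta$}} \right\rrbracket = \tfrac{1}{\sqrt 2}\left(1 + e^{i\alpha} + e^{i\beta} - e^{i(\alpha+\beta)}\right).
\end{equation}
Writing $a=e^{i\alpha}$ and $b=e^{i\beta}$, the bracket equals $2-(a-1)(b-1)$, so the segment is zero precisely when $(a-1)(b-1)=2$. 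Running through the stabilizer-phase combinations (equivalently, solving this equation for $a,b$ on the unit circle), the only solutions are $(\alpha,\beta)=(\pi/2,-\pi/2)$ and $(\alpha,\beta)=(-\pi/2,\pi/2)$, giving exactly the remaining two entries of the list. Combining the two directions then yields the corollary.

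The main obstacle is not any single computation but the bookkeeping needed to be sure the enumeration is exhaustive: I must verify that the decomposition of Corollary \ref{cor:decompose_scalars} genuinely leaves only star nodes, single spiders, and green--red spider pairs (in particular that same-colour pairs and Hadamard-bearing segments do not survive), and then that the finite value check above covers all stabilizer-phase segments. Both are routine once the segment types are pinned down, so the only genuine care required is in justifying the classification of surviving segments.
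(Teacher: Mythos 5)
Your proposal is correct and follows the paper's own proof essentially step for step: reduce to the scalar part via Lemma \ref{lem:GS-LC_zero}, use that juxtaposed scalar diagrams multiply under the interpretation so the diagram is zero iff some disconnected segment is zero, and then settle the claim by a finite check of the connected scalar diagrams with at most two nodes produced by Corollary \ref{cor:decompose_scalars}. The only difference is that where the paper declares this final case check ``straightforward'', you carry it out explicitly --- classifying the surviving segment types (star nodes, single spiders, mixed-colour pairs, with same-colour pairs merged by the spider rule) and computing the value $\tfrac{1}{\sqrt 2}\bigl(1+e^{i\alpha}+e^{i\beta}-e^{i(\alpha+\beta)}\bigr)=\tfrac{1}{\sqrt 2}\bigl(2-(e^{i\alpha}-1)(e^{i\beta}-1)\bigr)$ for the green--red pairs --- and these computations, including the conclusion that $(\alpha,\beta)\in\{(\pi/2,-\pi/2),(-\pi/2,\pi/2)\}$ are the only zero pairs, are correct.
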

\begin{proof}
 By Lemma \ref{lem:GS-LC_zero}, a diagram in GS-LC form is zero if and only if its scalar part is zero.
 Now, for any two scalar diagrams \genscalar{$s$} and \genscalar{$r$}:
 \begin{equation}
  \left\llbracket \genscalar{$s$} \; \genscalar{$r$} \right\rrbracket = \left\llbracket \genscalar{$s$} \right\rrbracket \left\llbracket \genscalar{$r$} \right\rrbracket,
 \end{equation}
 i.e.\ putting two scalar diagrams next to each other corresponds to taking the product of their values.
 Thus a scalar diagram is zero if and only if at least one of the disconnected components is zero.
 It is straightforward to check that out of all connected scalar diagrams containing at most two nodes, the diagrams given in \eqref{eq:zero_scalars} are exactly the ones that are zero.
 The result follows.
\end{proof}

Being able to recognise zero diagrams allows us to first ignore zero diagrams and derive a normal form for non-zero stabilizer scalars.
A normal form for zero diagrams is then derived in Section \ref{s:zero_nf}

\subsection{A unique normal form for non-zero stabilizer scalars}
\label{s:scalar_nf}

In this section, whenever we talk about scalars, we mean non-zero stabilizer scalars.
Using Corollary \ref{cor:decompose_scalars}, it is straightforward to show the following.

\begin{prop}\label{prop:scalar_interpretation}
 Let $D$ be a non-zero stabilizer scalar diagram.
 Then:
 \begin{equation}\label{eq:scalar_values}
  \left\llbracket D \right\rrbracket \in \left\{ \sqrt{2^r}e^{is\pi/4} \right|\left.\vphantom{e^{is\pi/4}\sqrt{2^r}} r,s\in\ZZ\right\}.
 \end{equation}
\end{prop}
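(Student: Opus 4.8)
The plan is to reduce the statement to a finite computation using the decomposition of scalar diagrams provided by Corollary \ref{cor:decompose_scalars}. First I would apply that corollary to write the non-zero scalar diagram $D$ as a disjoint juxtaposition of connected segments $D_1,\ldots,D_k$, each containing at most two nodes and at most one edge. As noted in the proof of Corollary \ref{cor:recognize_zero}, placing scalar diagrams side by side multiplies their interpretations, so $\left\llbracket D\right\rrbracket = \prod_i \left\llbracket D_i\right\rrbracket$. Since $\left\llbracket D\right\rrbracket\neq 0$, every factor $\left\llbracket D_i\right\rrbracket$ is non-zero; equivalently, by Lemma \ref{lem:GS-LC_zero} and Corollary \ref{cor:recognize_zero}, none of the $D_i$ is one of the zero scalars. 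It therefore suffices to prove that every non-zero connected stabilizer scalar diagram on at most two nodes has interpretation in the set $S := \{\sqrt{2^r}e^{is\pi/4}\mid r,s\in\ZZ\}$.

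I would then observe that $S$ is closed under multiplication, since $\sqrt{2^{r_1}}e^{is_1\pi/4}\cdot\sqrt{2^{r_2}}e^{is_2\pi/4} = \sqrt{2^{r_1+r_2}}e^{i(s_1+s_2)\pi/4}$, so once each factor is shown to lie in $S$ the whole product $\left\llbracket D\right\rrbracket$ lies in $S$ as well. This isolates the entire content of the proposition into the single segment-level claim.

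The core of the argument is that finite enumeration. With the ``at most two nodes, at most one edge'' bound, the connected pieces fall into only a few families: the empty diagram (value $1$); a single star node \halfscalar{} (value $\tfrac12=\sqrt{2^{-2}}$); a single legless green or red spider with a stabilizer phase $\alpha$, whose value is $1+e^{i\alpha}$ and equals $2$, $\sqrt2\,e^{i\pi/4}$ or $\sqrt2\,e^{-i\pi/4}$ in the non-zero cases $\alpha\in\{0,\pi/2,-\pi/2\}$ (the case $\alpha=\pi$ being excluded as zero); and the genuine two-node diagrams. Two legless spiders of the same colour joined by a single edge merge under the spider rule into a legless spider with summed phase, which is one of the single-node cases, so the only new family is a green and a red spider joined by one edge, with both phases multiples of $\pi/2$. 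I would compute this value directly as $\tfrac{1}{\sqrt{2}}\bigl(1+e^{i\beta}+e^{i\alpha}-e^{i(\alpha+\beta)}\bigr)$ and check on the finitely many phase choices that each non-zero outcome again lies in $S$ (and, reassuringly, that the zero outcomes are exactly the zero scalars of Corollary \ref{cor:recognize_zero}). Combining the families completes the segment-level claim and hence the proposition.

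I do not expect a genuine obstacle: the work is a bounded case analysis. The only point requiring care is exhaustiveness of the two-node enumeration, and here the ``one edge per segment'' bound does the work for me, since a Hadamard node has one input and one output and so cannot be part of a connected scalar segment restricted to a single edge; Hadamards only enter earlier, via the colour-change rule during the decomposition, and leave the families above unchanged.
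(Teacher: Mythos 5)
Your proposal is correct and follows the paper's own proof exactly: the paper likewise invokes Corollary \ref{cor:decompose_scalars} to reduce to disconnected segments of at most two nodes, notes that interpretations of disconnected scalars multiply, and then declares the segment-level membership in $\left\{\sqrt{2^r}e^{is\pi/4}\,\middle|\,r,s\in\ZZ\right\}$ ``straightforward to check''. The only difference is that you carry out that finite enumeration explicitly (single nodes, the star, and the mixed-colour two-node inner product $\tfrac{1}{\sqrt{2}}\bigl(1+e^{i\alpha}+e^{i\beta}-e^{i(\alpha+\beta)}\bigr)$), which the paper leaves implicit; your computations are accurate.
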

\begin{proof}
 By Corollary \ref{cor:decompose_scalars}, any scalar diagram can be decomposed into disconnected segments containing at most two nodes each.
 It is straightforward to check that each non-zero stabilizer scalar diagram $D$ consisting of at most two nodes satisfies \eqref{eq:scalar_values}.
 The values of disconnected scalar diagrams multiply, therefore the value of any scalar diagram $D$ satisfies \eqref{eq:scalar_values}.
\end{proof}

We define a normal form for scalar diagrams as follows: Pick one representative diagram for each $s\in\{0,1,\ldots,7\}$ from \eqref{eq:scalar_values}.
Combine these with the smallest number of copies of \halfscalar{} and/or \innerprodgr{} required to achieve the correct normalisation.

The simplest representative for $s=0$ is the empty diagram.

\begin{lem}
 The set:
 \begin{equation}\label{eq:complex_phases}
  \left\{ \innerprod{gn,label={[gphase]right:$\pi/2$}}{rn,label={[rphase]right:$\pi/2$}}, \quad \innerprod{gn,label={[gphase]right:$\pi/2$}}{rn,label={[rphase]right:$\pi$}}, \quad \innerprod{gn,label={[gphase]right:$\pi/2$}}{rn,label={[rphase]right:$\pi$}} \innerprod{gn,label={[gphase]right:$\pi/2$}}{rn,label={[rphase]right:$\pi/2$}}, \quad \innerprod{gn,label={[gphase]right:$\pi$}}{rn,label={[rphase]right:$\pi$}}, \quad \innerprod{gn,label={[gphase]right:$-\pi/2$}}{rn,label={[rphase]right:$\pi$}} \innerprod{gn,label={[gphase]right:$-\pi/2$}}{rn,label={[rphase]right:$-\pi/2$}}, \quad \innerprod{gn,label={[gphase]right:$-\pi/2$}}{rn,label={[rphase]right:$\pi$}}, \quad \innerprod{gn,label={[gphase]right:$-\pi/2$}}{rn,label={[rphase]right:$-\pi/2$}} \right\}
 \end{equation}
 contains one diagram for each complex phase $e^{is\pi/4}$ with $s\in\{1,\ldots,7\}$.
\end{lem}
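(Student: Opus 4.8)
The plan is to compute the interpretation $\intf{-}$ of each diagram in the list directly from Definition~\ref{dfn:interpretation_functor} and read off its complex phase.

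First I would record the value of a single inner-product scalar. A green spider with one output and phase $\alpha$ represents $\ket{0}+e^{i\alpha}\ket{1}$, while a red spider with one input and phase $\beta$ represents $\bra{+}+e^{i\beta}\bra{-}$. Composing the effect after the state and using $\braket{+}{0}=\braket{+}{1}=\braket{-}{0}=\frac{1}{\sqrt 2}$ together with $\braket{-}{1}=-\frac{1}{\sqrt 2}$ yields the general formula
\begin{equation}
 \intf{\innerprod{gn,label={[gphase]right:$\alpha$}}{rn,label={[rphase]right:$\beta$}}} = \frac{1}{\sqrt 2}\left( 1 + e^{i\alpha} + e^{i\beta} - e^{i(\alpha+\beta)} \right).
\end{equation}
This single identity does all the real work; everything after it is substitution.

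Next I would substitute the relevant pairs $(\alpha,\beta)$ to obtain the seven individual values. For the single-factor diagrams this is immediate: e.g.\ $(\pi/2,\pi/2)$ gives $2e^{i\pi/4}$, $(\pi/2,\pi)$ gives $\sqrt{2}\,e^{i\pi/2}$, $(\pi,\pi)$ gives $\sqrt{2}\,e^{i\pi}$, $(-\pi/2,\pi)$ gives $\sqrt{2}\,e^{i3\pi/2}$, and $(-\pi/2,-\pi/2)$ gives $2e^{i7\pi/4}$. For the two-factor diagrams I would use the fact, already noted in the proof of Corollary~\ref{cor:recognize_zero}, that the interpretations of disconnected scalar diagrams multiply; their values are then simply the products of the single-factor values just computed, producing $2\sqrt{2}\,e^{i3\pi/4}$ and $2\sqrt{2}\,e^{i5\pi/4}$ for the two composite entries.

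Finally I would extract the argument of each of the seven values and check that, as $s$ ranges over $\{1,\dots,7\}$, each phase $e^{is\pi/4}$ arises from exactly one diagram in the list, giving the claimed bijection. There is no genuine obstacle here; the proof is routine arithmetic from the interpretation functor. The only point requiring care is bookkeeping: the seven diagrams have differing moduli ($\sqrt{2}$, $2$, and $2\sqrt{2}$), so I must discard the modulus and track only the argument modulo $2\pi$, confirming that the map from diagrams to phases hits each of the seven nontrivial eighth-roots of unity once. That normalisation bookkeeping is the error-prone part of an otherwise mechanical computation.
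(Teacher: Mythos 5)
Your proposal is correct and takes essentially the same route as the paper, whose entire proof reads ``Apply the interpretation map to each diagram in turn, ignoring normalisation''---exactly the computation you carry out. Your general formula $\frac{1}{\sqrt{2}}\bigl(1+e^{i\alpha}+e^{i\beta}-e^{i(\alpha+\beta)}\bigr)$ together with multiplicativity of disconnected scalars simply makes the bookkeeping explicit, and the resulting phases $s=1,\ldots,7$ (in the order $1,2,3,4,5,6,7$ matching the listed diagrams) check out.
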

\begin{proof}
 Apply the interpretation map to each diagram in turn, ignoring normalisation.
\end{proof}

A normal form for scalars consisting only of \innerprodgr{} and \halfscalar{} is derived in Lemma \ref{lem:modulus_nf}.

We now state and prove the normal form theorem for non-zero scalars.
For ease of understanding, some parts of the proof are given as separate lemmas, which are stated after the main theorem.

\begin{thm}\label{thm:scalar_nf}
 Any non-zero stabilizer scalar diagram in the \ZX-calculus can be uniquely represented as a combination of one of the diagrams in \eqref{eq:complex_phases}, or the empty diagram, with one of the diagrams listed in Lemma \ref{lem:modulus_nf}.
\end{thm}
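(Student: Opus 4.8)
The plan is to prove existence and uniqueness separately, building entirely on the decomposition of scalar diagrams from Corollary \ref{cor:decompose_scalars} and the value constraint of Proposition \ref{prop:scalar_interpretation}. For existence, I would first decompose an arbitrary non-zero scalar diagram $D$ into disconnected one- and two-node segments using Corollary \ref{cor:decompose_scalars}. Since $D$ is non-zero, Lemma \ref{lem:GS-LC_zero} together with Corollary \ref{cor:recognize_zero} guarantees that none of these segments is one of the zero scalars, so each segment is one of the finitely many non-zero connected stabilizer scalars on at most two nodes. The next step is a finite case check: each such segment is rewritten, using the spider, colour-change, copy and star rules, into a product of one phase diagram from \eqref{eq:complex_phases} (or the empty diagram) together with some number of copies of \innerprodgr{} and \halfscalar{}. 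Collecting these over all segments expresses $D$ as a product of phase representatives and modulus factors.

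It then remains to merge the two kinds of factor into the canonical shape. The modulus factors are handled by Lemma \ref{lem:modulus_nf} and the variant star rule \eqref{eq:halfscalar_innerprodgr2} of Lemma \ref{lem:variant_star_rule}, which collapses any string of copies of \innerprodgr{} and \halfscalar{} to the unique representative of $\sqrt{2^r}$. For the phase factors I would use that the complex phases $e^{is\pi/4}$ form the cyclic group $\ZZ_8$, so that putting two phase representatives side by side (which multiplies their interpretations) yields a diagram equal to a single representative times possibly an extra factor of the form $\sqrt{2^k}$; verifying that this reduction is derivable amounts to a finite check on the generators of $\ZZ_8$. Any spurious modulus factors arising in this way are absorbed into the final modulus normal form, yielding exactly the form in the statement.

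For uniqueness, I would argue at the level of interpretations. Each normal form diagram has interpretation $\sqrt{2^r}e^{is\pi/4}$ with $r\in\ZZ$ and $s\in\{0,1,\dots,7\}$, and the assignment $(r,s)\mapsto\sqrt{2^r}e^{is\pi/4}$ is injective because the modulus pins down $r$ while the argument pins down $s$. Hence two distinct normal form diagrams have distinct interpretations, and since all rewrite rules are sound (Section \ref{s:ZX_sound}) they cannot be derivably equal. Combined with existence, this shows every non-zero stabilizer scalar diagram is equal to exactly one normal form diagram.

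The main obstacle will be the phase-merging step. While the identity $e^{is_1\pi/4}e^{is_2\pi/4}=e^{i(s_1+s_2)\pi/4}$ is immediate at the matrix level, I must show that the corresponding diagram equality is derivable purely from the rewrite rules and, crucially, keep careful track of every $\sqrt{2}$ factor that appears when two disconnected phase scalars are combined, since these must be folded correctly into the modulus part. The remainder of the argument is essentially bookkeeping over finitely many small diagrams.
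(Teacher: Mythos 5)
Your proposal is correct and takes essentially the same route as the paper: the same decomposition into one- and two-node segments via Corollary \ref{cor:decompose_scalars}, the same pairwise merging of phase scalars with explicit bookkeeping of \innerprodgr{} and \halfscalar{} factors (the finite derivability check you defer is exactly what the paper carries out via Lemmas \ref{lem:innerprod_wlog}--\ref{lem:omega_inverses} and the equalities \eqref{eq:scalar_pi_2_equality}, \eqref{eq:omega-dagger_squared}, \eqref{eq:minus_omega}), and the same modulus normalisation through Lemma \ref{lem:modulus_nf}. Your uniqueness argument via injectivity of $(r,s)\mapsto\sqrt{2^r}e^{is\pi/4}$ together with soundness is precisely the observation the paper compresses into its final sentence.
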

\begin{proof}
 By Corollary \ref{cor:decompose_scalars}, we only need to consider diagrams made up of disconnected segments of at most two nodes each.
 Using Lemma \ref{lem:y-states}, any disconnected single node \scalar{gn, label={[gphase]right:$\alpha$}} or \scalar{rn, label={[rphase]right:$\beta$}} can be rewritten into a diagram that is made up of copies of \halfscalar{} and components consisting of exactly one red and one green node.
 Given this, Lemmas \ref{lem:innerprod_wlog} and \ref{lem:overlap_with_ket_zero}, and the fact that we are considering non-zero diagrams only, we can restrict our attention without loss of generality to diagrams built only from \halfscalar{}, \innerprodgr{}, and the following two-node diagrams:
 \begin{equation}\label{eq:basic_overlaps}
  \innerprod{gn,label={[gphase]right:$\pi/2$}}{rn,label={[rphase]right:$\pi/2$}}, \quad
  \innerprod{gn,label={[gphase]right:$\pi/2$}}{rn,label={[rphase]right:$\pi$}}, \quad
  \innerprod{gn,label={[gphase]right:$\pi$}}{rn,label={[rphase]right:$\pi$}}, \quad
  \innerprod{gn,label={[gphase]right:$-\pi/2$}}{rn,label={[rphase]right:$\pi$}}, \; \text{ and } \;
  \innerprod{gn,label={[gphase]right:$-\pi/2$}}{rn,label={[rphase]right:$-\pi/2$}}.
 \end{equation}
 These diagrams can easily be seen to also be in the set \eqref{eq:complex_phases}.
 Thus all that remains to show is that a diagram consisting of several of these elements can be rewritten to a diagram that consists of only one of the diagrams given in \eqref{eq:complex_phases} (or the empty diagram) plus any number of copies of \innerprodgr{} and \halfscalar.
 This rewriting can be done in a step-by-step fashion, so it suffices to look at pairs of diagrams from \eqref{eq:basic_overlaps}.

 The combination of any two diagrams that both contain \state{rn, label={[rphase]right:$\pi$}} can be simplified using Lemma \ref{lem:pi_multiplication}. The products:
 \begin{equation}
  \innerprod{gn,label={[gphase]right:$\pi/2$}}{rn,label={[rphase]right:$\pi/2$}} \innerprod{gn,label={[gphase]right:$-\pi/2$}}{rn,label={[rphase]right:$-\pi/2$}}, \qquad \innerprod{gn,label={[gphase]right:$\pi/2$}}{rn,label={[rphase]right:$\pi$}} \innerprod{gn,label={[gphase]right:$\pi/2$}}{rn,label={[rphase]right:$\pi/2$}}, \quad \text{and} \quad \innerprod{gn,label={[gphase]right:$-\pi/2$}}{rn,label={[rphase]right:$\pi$}} \innerprod{gn,label={[gphase]right:$-\pi/2$}}{rn,label={[rphase]right:$-\pi/2$}}
 \end{equation}
 are straightforward as well: the first diagram is shown to be simplifiable in Lemma \ref{lem:omega_inverses}, the latter two are elements of \eqref{eq:complex_phases}, so do not need to be simplified.

 For other combinations, note first that using the spider rule, the $\pi$-copy rule, and the $\pi$-commutation rule, we have:
 \begin{equation}\label{eq:scalar_pi_2_equality}
  \input{tikz_files/multiplication4_der2.0.tikz}.
 \end{equation}
 Thus:
 \begin{equation}\label{eq:omega-dagger_squared}
  \innerprod{gn,label={[gphase]right:$-\pi/2$}}{rn,label={[rphase]right:$-\pi/2$}} \; \innerprod{gn,label={[gphase]right:$-\pi/2$}}{rn,label={[rphase]right:$-\pi/2$}} \;
  = \; \halfscalar \; \innerprod{gn}{rn} \; \innerprod{gn,label={[gphase]right:$-\pi/2$}}{rn,label={[rphase]right:$\pi$}} \; \innerprod{gn,label={[gphase]right:$\pi/2$}}{rn,label={[rphase]right:$\pi/2$}} \; \innerprod{gn,label={[gphase]right:$-\pi/2$}}{rn,label={[rphase]right:$-\pi/2$}} \;
  = \; \innerprod{gn}{rn} \; \innerprod{gn}{rn} \; \innerprod{gn}{rn} \; \innerprod{gn,label={[gphase]right:$-\pi/2$}}{rn,label={[rphase]right:$\pi$}},
 \end{equation}
 where the last equality is by Lemma \ref{lem:omega_inverses}.
 Furthermore, using \eqref{eq:scalar_pi_2_equality} and Lemma \ref{lem:pi_multiplication}, we get:
 \begin{equation}\label{eq:minus_omega}
  \innerprod{gn,label={[gphase]right:$\pi$}}{rn,label={[rphase]right:$\pi$}} \; \innerprod{gn,label={[gphase]right:$\pi/2$}}{rn,label={[rphase]right:$\pi/2$}} \;
  = \; \halfscalar \; \innerprod{gn}{rn} \; \innerprod{gn,label={[gphase]right:$-\pi/2$}}{rn,label={[rphase]right:$\pi$}} \; \innerprod{gn,label={[gphase]right:$-\pi/2$}}{rn,label={[rphase]right:$\pi$}} \; \innerprod{gn,label={[gphase]right:$\pi/2$}}{rn,label={[rphase]right:$\pi/2$}} \;
  = \; \innerprod{gn,label={[gphase]right:$-\pi/2$}}{rn,label={[rphase]right:$\pi$}} \; \innerprod{gn,label={[gphase]right:$-\pi/2$}}{rn,label={[rphase]right:$-\pi/2$}}.
 \end{equation}
 Equalities similar to \eqref{eq:scalar_pi_2_equality}, \eqref{eq:omega-dagger_squared}, and \eqref{eq:minus_omega} can be derived with all the signs flipped.

 This covers all the combinations of two diagrams from \eqref{eq:basic_overlaps}.
 More complicated diagrams can be dealt with step-by-step.
 Once the subdiagrams that involve complex phases are brought into one of the forms in \eqref{eq:complex_phases}, the real parts of the diagram can be brought into the normal form given in Lemma \ref{lem:modulus_nf}.

 The resulting normal form for complex non-zero scalars can easily be seen to be unique.
\end{proof}

\begin{lem}\label{lem:innerprod_wlog}
 The inner product between a red and a green node with phase angles $\alpha$ and $\beta$ is defined only by the set $\{\alpha,\beta\}$, it does not matter which label is assigned to the green and which to the red node, or whether it is a green state and red effect, or conversely. Diagrammatically:
 \begin{equation}
  \innerprod{gn, label={[gphase]right:$\alpha$}}{rn, label={[rphase]right:$\beta$}} \; = \; \innerprod{rn, label={[rphase]right:$\beta$}}{gn, label={[gphase]right:$\alpha$}} \; = \; \innerprod{gn, label={[gphase]right:$\beta$}}{rn, label={[rphase]right:$\alpha$}} \; = \; \innerprod{rn, label={[rphase]right:$\alpha$}}{gn, label={[gphase]right:$\beta$}}.
 \end{equation}
\end{lem}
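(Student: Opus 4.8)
The plan is to reduce all four scalar diagrams to a common shape using only the colour change rule and the topology meta-rule, rather than appealing to their (manifestly symmetric) interpretations: since this lemma is used \emph{inside} the completeness argument, the equalities must be derived purely from the rewrite rules and not merely from soundness.

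First I would dispose of the two ``vertical flips''. Every diagram in the statement is a scalar, having neither inputs nor outputs, so the topology meta-rule permits rotating it by $180^\circ$ in the plane. This is a graph isomorphism and hence preserves the phase labels; it is \emph{not} the dagger functor, which would negate them. This is exactly the move already used in the proof of Lemma \ref{lem:alternative_interpretations}. A $180^\circ$ rotation interchanges the upper and lower node, so it yields $\innerprod{gn, label={[gphase]right:$\alpha$}}{rn, label={[rphase]right:$\beta$}} = \innerprod{rn, label={[rphase]right:$\beta$}}{gn, label={[gphase]right:$\alpha$}}$ and $\innerprod{gn, label={[gphase]right:$\beta$}}{rn, label={[rphase]right:$\alpha$}} = \innerprod{rn, label={[rphase]right:$\alpha$}}{gn, label={[gphase]right:$\beta$}}$ at once, accounting for two of the three required equalities.

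It then remains only to exchange the phases between the green and the red node, i.e.\ to show $\innerprod{gn, label={[gphase]right:$\alpha$}}{rn, label={[rphase]right:$\beta$}} = \innerprod{gn, label={[gphase]right:$\beta$}}{rn, label={[rphase]right:$\alpha$}}$. Here I would apply the colour change rule to the one-legged red node in each of these two diagrams. This replaces the red node by a green node of the same phase while inserting a single Hadamard node on the connecting wire, so the left-hand diagram becomes a chain consisting of a green node of phase $\alpha$, a Hadamard node, and a green node of phase $\beta$, while the right-hand diagram becomes the chain with $\alpha$ and $\beta$ swapped. Since the Hadamard node is symmetric, these two chains differ only by a $180^\circ$ rotation and are therefore equal by the topology meta-rule. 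Chaining this with the two flips above gives the full list of four equalities.

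All of these steps are short, and the single point that genuinely needs care is the justification that rotating a scalar diagram preserves its phase labels rather than negating them: this is precisely the content of the topology meta-rule for diagrams with no inputs or outputs (whose order must be preserved), and keeping it distinct from the dagger is the one real subtlety. Everything else -- the colour change rule and the symmetry of the Hadamard node -- is a direct application of rules already established earlier in the chapter.
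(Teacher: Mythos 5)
Your proposal is correct and follows essentially the same route as the paper: the flips come from the topology meta-rule (exactly as in Lemma~\ref{lem:alternative_interpretations}, as you note), and the phase exchange comes from the colour change rule. The only cosmetic difference is that the paper colour-changes \emph{both} nodes and cancels the resulting pair of Hadamard nodes by self-inverseness, whereas you colour-change only the red node and absorb the exchange into one more $180^\circ$ rotation of the resulting symmetric chain --- both are valid one-line applications of the same rules.
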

\begin{proof}
 The first equality results from the topology meta rule.
 The equality between the second and third diagram follows from the colour change rule and the fact that the Hadamard node is self-inverse:
 \begin{equation}
  \input{tikz_files/colour_change_der2.0.tikz}.
 \end{equation}
 The last equality is again by the topology meta rule.
\end{proof}

\begin{lem}\label{lem:pi_multiplication}
 For any pair of phase angles $\alpha$ and $\beta$, the complex phases resulting from the inner products of \state{rn, label={[rphase]right:$\pi$}} with phased green effects can be combined into just one subdiagram and a normalising factor \innerprodgr:
 \begin{equation}
  \innerprod{gn, label={[gphase]right:$\alpha$}}{rn, label={[rphase]right:$\pi$}} \; \innerprod{gn, label={[gphase]right:$\beta$}}{rn, label={[rphase]right:$\pi$}} \; = \; \innerprod{gn}{rn} \; \innerprod{gn, label={[gphase]right:$\alpha+\beta$}}{rn, label={[rphase]right:$\pi$}}.
 \end{equation}
\end{lem}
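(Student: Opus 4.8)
The plan is to first pin down the target by evaluating both sides under the interpretation functor, since that tells us exactly which phases must collapse and which must survive. Each factor $\innerprod{gn, label={[gphase]right:$\alpha$}}{rn, label={[rphase]right:$\pi$}}$ computes the overlap $\sqrt{2}\,e^{i\alpha}$ (the red $\pi$ effect is $\sqrt{2}\bra{1}$, which reads off the $\ket 1$ amplitude of the green state), so the left-hand side is $2\,e^{i(\alpha+\beta)}$; on the right, $\innerprodgr$ contributes $\sqrt 2$ and $\innerprod{gn, label={[gphase]right:$\alpha+\beta$}}{rn, label={[rphase]right:$\pi$}}$ contributes $\sqrt 2\,e^{i(\alpha+\beta)}$, again $2\,e^{i(\alpha+\beta)}$. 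This confirms soundness and, more usefully, shows the shape of the derivation we must produce graphically: the two green phases have to accumulate to $\alpha+\beta$, the two red phases have to cancel via $\pi+\pi\equiv 0 \pmod{2\pi}$ (turning one red node into a phase-$0$ node), and the normalisation must end with exactly one residual copy of $\innerprodgr$.

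The key structural observation is that the two factors are \emph{disconnected} scalar components, and the spider rule alone cannot merge spiders living in different components. Among the rewrite rules of Section~\ref{s:ZX_rules}, only the copy rule and the $\pi$-copy rule change the number of connected components (their right-hand sides are disconnected while their left-hand sides are connected, up to a factor of $\innerprodgr$). The derivation must therefore use one of these to fuse the two pieces. Concretely, I would first move each red $\pi$ onto its green partner using the $\pi$-commutation and spider rules, exposing a green node on each piece; then run the copy / $\pi$-copy rule in the direction that \emph{joins} the two green states into a single connected diagram in which one green node branches, via a red node, to two legs each terminating in a red $\pi$. The spider rule then merges the green data into a single node of phase $\alpha+\beta$, and the two red $\pi$ nodes (now sharing the branching red node) merge under the spider rule to a single phase-$0$ red node, using $\pi+\pi\equiv 0$. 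Finally the loop, cup, and Hopf rules remove the self-loops and double edges created by the fusion, leaving the connected part in the form $\innerprod{gn, label={[gphase]right:$\alpha+\beta$}}{rn, label={[rphase]right:$\pi$}}$ together with the separated factor $\innerprodgr$.

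The main obstacle is not the phase arithmetic — which is just addition modulo $2\pi$, arranged so that the two $\pi$'s cancel exactly while $\alpha+\beta$ is preserved — but the connectivity-changing fusion step and the accompanying scalar bookkeeping. Every application of the copy, $\pi$-copy, loop, and Hopf rules splits off or absorbs copies of $\halfscalar$ and $\innerprodgr$, and the delicate part is tracking these so that precisely one $\innerprodgr$ remains and no stray $\halfscalar$ survives. Here I would lean on Lemma~\ref{lem:variant_star_rule} to collapse any leftover $\halfscalar\,\innerprodgr\,\innerprodgr$ blocks and on the symmetry Lemma~\ref{lem:innerprod_wlog} to present the two-node pieces in whichever colour and orientation makes the rule applications match. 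Once the fusion and cleanup are carried out carefully, the resulting diagram is exactly $\innerprodgr$ composed with $\innerprod{gn, label={[gphase]right:$\alpha+\beta$}}{rn, label={[rphase]right:$\pi$}}$, as claimed.
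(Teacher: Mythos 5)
Your structural instinct is correct---the two factors are disconnected scalar components, and the fusion must go through the copy and $\pi$-copy rules, which is exactly how the paper proceeds---but your concrete derivation is unsound, and it contradicts your own opening semantic computation. On the right-hand side one red $\pi$ node must \emph{survive}: in \innerprod{gn}{rn}\,\innerprod{gn, label={[gphase]right:$\alpha+\beta$}}{rn, label={[rphase]right:$\pi$}} the phase-$0$ red node sits inside the normalising factor \innerprodgr{} and is the scalar byproduct of the copy rule, not the residue of a $\pi+\pi\equiv 0$ cancellation. Your plan instead merges the two red $\pi$ states through a shared red branching node into ``a single phase-$0$ red node''; but then the connected component is \innerprod{gn, label={[gphase]right:$\alpha+\beta$}}{rn}, which by Lemma \ref{lem:overlap_with_ket_zero} (or direct computation) equals \innerprodgr{}, so your final diagram evaluates to $2$ rather than the $2e^{i(\alpha+\beta)}$ you yourself computed for the left-hand side. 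Equivalently, the intermediate diagram you describe---one green node branching via a red node to two legs each terminating in a red $\pi$---already has interpretation $\sqrt{2}$ times non-negative scalars and is therefore not reachable from the left-hand side by sound rewrites; the fusion step is misapplied, and no loop/cup/Hopf clean-up (none of which is needed here) or scalar bookkeeping via Lemma \ref{lem:variant_star_rule} can recover the lost phase.

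The correct fusion, which is the paper's two-move proof, reads the colour-swapped copy and $\pi$-copy rules right-to-left so that the two red $\pi$ \emph{states} are identified as two copies of a single red $\pi$ copied through a green spider. This single step produces the connected diagram consisting of a green spider with one leg to the green $\alpha$ node, one leg to the green $\beta$ node, and one leg to a lone red $\pi$ state, together with exactly the one required factor \innerprodgr{}---the greens stay as separate nodes at this stage, and precisely one red $\pi$ remains. The spider rule then merges the three green nodes into a single node of phase $\alpha+\beta$, yielding \innerprod{gn}{rn}\,\innerprod{gn, label={[gphase]right:$\alpha+\beta$}}{rn, label={[rphase]right:$\pi$}} with no residual scalars to track. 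One further caveat: your interpretation-functor computation at the start is a useful sanity check but cannot substitute for the derivation, since this lemma feeds into the completeness proof and so must be established from the rewrite rules themselves---a point your write-up does acknowledge.
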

\begin{proof}
 We have:
 \begin{equation}\label{eq:scalar_mult}
  \innerprod{gn, label={[gphase]right:$\alpha$}}{rn, label={[rphase]right:$\pi$}} \; \innerprod{gn, label={[gphase]right:$\beta$}}{rn, label={[rphase]right:$\pi$}} \; = \; \innerprod{gn}{rn} \; \input{tikz_files/multiplication_pi_der.tikz} \; = \; \innerprod{gn}{rn} \; \innerprod{gn, label={[gphase]right:$\alpha+\beta$}}{rn, label={[rphase]right:$\pi$}}
 \end{equation}
 using the copy rule, $\pi$-copy rule, and spider rule.
\end{proof}

\begin{lem}\label{lem:overlap_with_ket_zero}
 For any phase angle $\alpha$, the inner product of a green effect of phase $\alpha$ with \state{rn} is equal to \innerprodgr{}:
 \begin{equation}
  \innerprod{gn, label={[gphase]right:$\alpha$}}{rn} \; = \; \innerprod{gn}{rn}. 
 \end{equation}
\end{lem}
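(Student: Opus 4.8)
The plan is to reduce the claimed equality to the single fact that a phase shift of one colour leaves the opposite-colour basis state invariant, and then to discharge that fact with the copy rule. First I would record the semantic content, since it dictates the mechanism: because \state{rn} $=\sqrt{2}\ket{0}$ and \effect{gn} carrying phase $\alpha$ denotes $\bra{0}+e^{i\alpha}\bra{1}$, the composite is $\sqrt{2}\bigl(\braket{0}{0}+e^{i\alpha}\braket{1}{0}\bigr)=\sqrt{2}$, independent of $\alpha$. The phase is annihilated precisely because $\braket{1}{0}=0$, i.e.\ because \state{rn} is a computational-basis state. The entire task is to turn this annihilation into a purely graphical derivation, so that both sides reduce to the same scalar \innerprodgr.

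Concretely I would proceed in three steps. (1) Using the spider rule, pull the phase off the green effect, rewriting $\innerprod{gn, label={[gphase]right:$\alpha$}}{rn}$ as \state{rn} with a green phase shift \phase{gn, label={[gphase]right:$\alpha$}} on its wire, capped by a phase-free effect \effect{gn}. (2) Establish the crux, namely that \phase{gn, label={[gphase]right:$\alpha$}}$\,\circ\,$\state{rn} equals \state{rn} up to a scalar: the red state is a green-copyable basis state, so feeding it into a green node forces that node onto its phase-free branch. This is exactly the behaviour encoded by the copy rule, with the $\alpha=\pi$ case being the $\pi$-copy rule. (3) Apply the zero-output instance of the copy rule — equivalently the colour symmetry of Lemma \ref{lem:innerprod_wlog} together with the phase-free case — to collapse the remaining \state{rn} feeding \effect{gn} into the bare scalar \innerprodgr, which is the right-hand side of the lemma.

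The main obstacle is step (2) for a general, non-stabilizer $\alpha$. The identity is \emph{not} an equality of effects: \effect{gn} with phase $\alpha$ and \effect{gn} with phase $0$ are genuinely distinct maps, so any valid rewriting must \emph{consume} \state{rn} in order to discard the $e^{i\alpha}$ branch, and the only phase-killing rules in the calculus are the copy rule (phase $0$) and the $\pi$-copy rule (phase $\pi$). I expect to bridge this gap by combining the copy rule with the spider rule so that the copied \state{rn} absorbs the green node together with its phase in a single move, rather than trying to cancel the phase in isolation. Reassuringly, the target equation is sound under every alternative interpretation $\llbracket-\rrbracket_j$ with $j$ odd, so no analogue of the incompleteness obstruction behind Theorem \ref{thm:incompleteness} can block a purely graphical derivation; the content is entirely in finding the right bracketing of copy and spider moves.
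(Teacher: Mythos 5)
There is a genuine gap, and you have in fact located it yourself: your step (2) is not established, and the hope that it follows from ``combining the copy rule with the spider rule'' is misplaced. The identity \phase{gn, label={[gphase]right:$\alpha$}}$\,\circ\,$\state{rn} $=$ \state{rn} is \emph{not} the behaviour encoded by the copy rule for general $\alpha$: the copy rule fires only on phase-$0$ spiders and the $\pi$-copy rule only on phase $\pi$, and for a generic phase such as $\alpha=\pi/4$ no rewrite rule lets a red state pass through a phased green node. Spider and copy together cannot bridge this, because the spider rule only adds phases of \emph{adjacent like-coloured} nodes, and in \innerprod{gn, label={[gphase]right:$\alpha$}}{rn} there is no like-coloured adjacency and no phase-$0$ green node through which to copy; crucially, neither rule can manufacture the compensating phase $-\alpha$ that any derivation must somehow produce. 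Worse, your step (2) is essentially the lemma in disguise: capping the state identity with \effect{gn} yields the lemma, while conversely the lemma together with spider, copy, and invertible scalars yields the state identity. So asserting step (2) as ``exactly the behaviour encoded by the copy rule'' is circular for general $\alpha$.

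The paper's actual proof supplies precisely the machinery your sketch omits. It handles $\alpha=0$ trivially and $\alpha=\pi$ by spider and $\pi$-copy (equation \eqref{eq:pi_remove}); for general $\alpha$ it multiplies by one using the variant star rule \eqref{eq:halfscalar_innerprodgr2}, rewrites a scalar factor as \innerprod{gn}{rn, label={[rphase]right:$\pi$}} via \eqref{eq:pi_remove} and Lemma \ref{lem:innerprod_wlog}, and then applies Lemma \ref{lem:pi_multiplication} \emph{backwards} with $\beta=-\alpha$ to split the zero phase into the pair \innerprod{gn, label={[gphase]right:$-\alpha$}}{rn, label={[rphase]right:$\pi$}}\,\innerprod{gn, label={[gphase]right:$\alpha$}}{rn, label={[rphase]right:$\pi$}}. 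A copy/$\pi$-copy/spider manipulation then merges the target inner product with the $+\alpha$ factor and shunts the $\alpha$ onto a $\pi$-phase red state, where Lemma \ref{lem:pi_multiplication} applied forwards cancels it against the $-\alpha$; a final variant-star application removes the bookkeeping scalars. In short, the phase dies by the cancellation $\alpha+(-\alpha)=0$ on a $\pi$-state, not by direct annihilation against a copied basis state --- that is the missing idea. Two smaller points: your step (3) is vacuous, since \state{rn} composed with \effect{gn} \emph{is} the diagram \innerprodgr{} by the topology rule, and soundness of the target under every $\llbracket-\rrbracket_j$ is a necessary but not sufficient condition for derivability, so it gives no guarantee that your intended bracketing of copy and spider moves exists.
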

\begin{proof}
 The case $\alpha=0$ is trivial. For $\alpha=\pi$, note that by the spider and $\pi$-copy rules:
 \begin{equation}\label{eq:pi_remove}
  \input{tikz_files/pi_remove_der2.0.tikz}.
 \end{equation}
 Now, using \eqref{eq:pi_remove}, \eqref{eq:halfscalar_innerprodgr2}, and Lemma \ref{lem:pi_multiplication} with $\beta=-\alpha$, together with various rewrite rules, yields:
 \begin{align}
  \innerprod{gn, label={[gphase]right:$\alpha$}}{rn} \;
  &= \; \halfscalar \;\; \innerprod{gn}{rn} \;\; \innerprod{gn}{rn} \;\; \innerprod{gn, label={[gphase]right:$\alpha$}}{rn} \;
  = \; \halfscalar \;\; \innerprod{gn}{rn} \;\; \innerprod{gn}{rn,label={[rphase]right:$\pi$}} \; \innerprod{gn, label={[gphase]right:$\alpha$}}{rn} \;
  = \; \halfscalar \; \innerprod{gn, label={[gphase]right:$-\alpha$}}{rn, label={[rphase]right:$\pi$}} \; \innerprod{gn, label={[gphase]right:$\alpha$}}{rn, label={[rphase]right:$\pi$}} \; \innerprod{gn, label={[gphase]right:$\alpha$}}{rn} \;
  = \; \halfscalar \; \innerprod{gn, label={[gphase]right:$-\alpha$}}{rn, label={[rphase]right:$\pi$}} \;\; \innerprod{gn}{rn} \;\; \input{tikz_files/overlap_with_ket_zero_1.tikz} \nonumber \\
  &= \; \halfscalar \; \innerprod{gn, label={[gphase]right:$-\alpha$}}{rn, label={[rphase]right:$\pi$}} \;\; \innerprod{gn}{rn} \;\; \input{tikz_files/overlap_with_ket_zero_2.tikz} \;
  = \; \halfscalar \; \innerprod{gn, label={[gphase]right:$-\alpha$}}{rn, label={[rphase]right:$\pi$}} \;\; \innerprod{gn}{rn} \;\; \input{tikz_files/overlap_with_ket_zero_3.tikz} \;
  = \; \halfscalar \; \innerprod{gn, label={[gphase]right:$-\alpha$}}{rn, label={[rphase]right:$\pi$}} \; \innerprod{gn, label={[gphase]right:$\alpha$}}{rn, label={[rphase]right:$\pi$}} \;\; \innerprod{gn}{rn} \;\;
  = \;\; \innerprod{gn}{rn},
 \end{align}
 thus proving the result for any $\alpha$.
\end{proof}

\begin{lem}\label{lem:y-states}
 The states \state{gn, label={[gphase]right:$\pi/2$}} and \state{rn, label={[rphase]right:$-\pi/2$}} are equal up to a complex phase:
 \begin{equation}
  \state{rn, label={[rphase]right:$-\pi/2$}} = \halfscalar \; \innerprod{gn, label={[gphase]right:$-\pi/2$}}{rn, label={[rphase]right:$-\pi/2$}} \; \state{gn, label={[gphase]right:$\pi/2$}}.
 \end{equation}
\end{lem}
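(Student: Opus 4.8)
The plan is to turn the red state on the left into the green state on the right by a colour change followed by the Euler decomposition, using the variant star rule to account for the scalar. Before rewriting, I would first record what has to come out: under the interpretation $\state{rn, label={[rphase]right:$-\pi/2$}}$ is $\ket{+}-i\ket{-}=e^{-i\pi/4}(\ket{0}+i\ket{1})$, i.e.\ $e^{-i\pi/4}$ times $\state{gn, label={[gphase]right:$\pi/2$}}$, and $\halfscalar\,\innerprod{gn, label={[gphase]right:$-\pi/2$}}{rn, label={[rphase]right:$-\pi/2$}}$ interprets to exactly $e^{-i\pi/4}$. So the equation is sound, and the whole task is to reproduce this particular phase graphically.

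The derivation I would carry out begins with $\state{rn, label={[rphase]right:$-\pi/2$}}$ and applies the colour change rule to rewrite it as a Hadamard node applied to the green state $\state{gn, label={[gphase]right:$-\pi/2$}}$. Next, using the variant star rule (Lemma \ref{lem:variant_star_rule}), which says $\halfscalar\,\innerprodgr\,\innerprodgr$ equals the empty diagram, I insert this identity scalar so that the subdiagram $\innerprodgr\,\innerprodgr\,\Hadamard$ appears explicitly. The Euler decomposition rule \eqref{eq:Euler_dec} then replaces $\innerprodgr\,\innerprodgr\,\Hadamard$ by $\innerprod{gn, label={[gphase]right:$-\pi/2$}}{rn, label={[rphase]right:$-\pi/2$}}$ times the phase-shift chain green$(\pi/2)$–red$(\pi/2)$–green$(\pi/2)$. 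The bottom green $\pi/2$ phase shift then merges with the green $-\pi/2$ state by the spider rule, cancelling to a phase-$0$ green state $\state{gn}$.

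What is left is $\halfscalar\,\innerprod{gn, label={[gphase]right:$-\pi/2$}}{rn, label={[rphase]right:$-\pi/2$}}$ times green$(\pi/2)\circ$red$(\pi/2)\circ\state{gn}$, so the last graphical move is to absorb the stray red $\pi/2$ phase shift sitting on the green phase-$0$ state. This absorption, $\text{red}(\pi/2)\circ\state{gn}=\state{gn}$, is the one non-routine step: I would obtain it by colour-changing the red phase shift into Hadamard–green$(\pi/2)$–Hadamard, using the colour change rule again to turn $\Hadamard\circ\state{gn}$ into $\state{rn}$, and then removing the green phase on the red phase-$0$ state with the copy rule (a phase shift sitting on a $\ket0$-type state is killed). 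The surviving green $\pi/2$ then merges into $\state{gn}$ by the spider rule to give $\state{gn, label={[gphase]right:$\pi/2$}}$, leaving precisely $\halfscalar\,\innerprod{gn, label={[gphase]right:$-\pi/2$}}{rn, label={[rphase]right:$-\pi/2$}}\,\state{gn, label={[gphase]right:$\pi/2$}}$.

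The main obstacle is scalar bookkeeping rather than topology: each use of colour change, Euler, and copy must be carried out in its scalar-correct form, so that the accumulated factor is exactly $\halfscalar\,\innerprod{gn, label={[gphase]right:$-\pi/2$}}{rn, label={[rphase]right:$-\pi/2$}}$ and no spurious $\innerprodgr$ or complex phase survives. The variant star rule is the device that makes the $\innerprodgr\,\innerprodgr$ introduced for the Euler step cancel cleanly against the $\halfscalar$, and the absorption step must be confirmed to contribute a trivial scalar. Checking each rewrite against the interpretation, exactly as in the soundness computation above, is how I would guard against an off-by-a-phase error in the final answer.
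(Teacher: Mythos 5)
Your proposal is correct and follows essentially the same route as the paper's proof: colour change to rewrite \state{rn, label={[rphase]right:$-\pi/2$}} as a Hadamard on \state{gn, label={[gphase]right:$-\pi/2$}}, the Euler decomposition with the variant star rule \eqref{eq:halfscalar_innerprodgr2} managing the $\innerprodgr$ factors, a spider merge of the bottom phase, and a copy-rule-based absorption of the residual red $\pi/2$ phase shift on \state{gn}. The only gloss worth adding is that your shorthand ``a phase shift sitting on a $\ket{0}$-type state is killed by the copy rule'' is not the bare copy rule: as in the paper's derivation, one must first split the phase off as a one-legged node via the spider rule, copy the phase-zero state through the other-colour node, and then discharge the resulting scalar $\innerprod{gn, label={[gphase]right:$\pi/2$}}{rn}$ using Lemma \ref{lem:overlap_with_ket_zero} together with the variant star rule.
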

\begin{proof}
 The desired equality can be derived as follows:
 \begin{align}
  \state{rn, label={[rphase]right:$-\pi/2$}} \;
  &= \; \begin{tikzpicture}
	\begin{pgfonlayer}{nodelayer}
		\node [style=Hadamard] (0) at (0, -0) {};
		\node [style=none] (1) at (0, 0.5) {};
		\node [label={[gphase]right:$-\pi/2$}, style=gn] (2) at (0, -0.75) {};
	\end{pgfonlayer}
	\begin{pgfonlayer}{edgelayer}
		\draw (1.center) to (2);
	\end{pgfonlayer}
\end{tikzpicture} \;
  = \; \halfscalar \; \innerprod{gn, label={[gphase]right:$-\pi/2$}}{rn, label={[rphase]right:$-\pi/2$}} \; \begin{tikzpicture}
	\begin{pgfonlayer}{nodelayer}
		\node [label={[rphase]right:$\pi/2$}, style=rn] (0) at (0, -0.25) {};
		\node [label={[gphase]right:$\pi/2$}, style=gn] (1) at (0, 1) {};
		\node [style=gn] (2) at (0, -1.5) {};
		\node [style=none] (3) at (0, 1.5) {};
	\end{pgfonlayer}
	\begin{pgfonlayer}{edgelayer}
		\draw (3.center) to (2);
	\end{pgfonlayer}
\end{tikzpicture} \;
  = \; \halfscalar \; \innerprod{gn, label={[gphase]right:$-\pi/2$}}{rn, label={[rphase]right:$-\pi/2$}} \; \input{tikz_files/y_state_der_3.tikz} \nonumber \\
  &= \; \halfscalar \; \halfscalar \;\; \innerprod{gn}{rn} \;\; \innerprod{gn}{rn} \;\; \innerprod{gn, label={[gphase]right:$-\pi/2$}}{rn, label={[rphase]right:$-\pi/2$}} \; \input{tikz_files/y_state_der_3.tikz} \;
  = \; \halfscalar \; \halfscalar \;\; \innerprod{gn}{rn} \;\; \innerprod{gn, label={[gphase]right:$-\pi/2$}}{rn, label={[rphase]right:$-\pi/2$}} \; \innerprod{gn}{rn, label={[rphase]right:$\pi/2$}} \; \state{gn, label={[gphase]right:$\pi/2$}} \nonumber \\
  &= \; \halfscalar \; \innerprod{gn, label={[gphase]right:$-\pi/2$}}{rn, label={[rphase]right:$-\pi/2$}} \; \state{gn, label={[gphase]right:$\pi/2$}},
 \end{align}
 using the colour change rule, the Euler decomposition rule, the copy rule, \eqref{eq:halfscalar_innerprodgr2}, and Lemma \ref{lem:overlap_with_ket_zero}.
\end{proof}

\begin{lem}\label{lem:omega_inverses}
 The scalar diagrams \halfscalar{} \innerprod{gn,label={[gphase]right:$-\pi/2$}}{rn,label={[rphase]right:$-\pi/2$}} and \halfscalar{} \innerprod{gn,label={[gphase]right:$\pi/2$}}{rn,label={[rphase]right:$\pi/2$}} are inverse to each other:
 \begin{equation}
  \halfscalar \innerprod{gn,label={[gphase]right:$-\pi/2$}}{rn,label={[rphase]right:$-\pi/2$}} \halfscalar \innerprod{gn,label={[gphase]right:$\pi/2$}}{rn,label={[rphase]right:$\pi/2$}} = \quad.
 \end{equation}
\end{lem}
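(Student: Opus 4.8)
The plan is to prove the identity entirely by rewriting, bootstrapping it from Lemma \ref{lem:y-states}, the colour change rule, and the variant star rule \eqref{eq:halfscalar_innerprodgr2}, without ever computing a matrix. Write $\sigma$ for the scalar subdiagram \halfscalar{} \innerprod{gn,label={[gphase]right:$-\pi/2$}}{rn,label={[rphase]right:$-\pi/2$}} and $\tau$ for \halfscalar{} \innerprod{gn,label={[gphase]right:$\pi/2$}}{rn,label={[rphase]right:$\pi/2$}}; the goal is then to show that $\sigma\tau$ equals the empty diagram.

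First I would record two instances of Lemma \ref{lem:y-states}. The lemma as stated gives that \state{rn, label={[rphase]right:$-\pi/2$}} equals $\sigma$ times \state{gn, label={[gphase]right:$\pi/2$}}. Because every rewrite rule of the \ZX-calculus remains sound, and hence every derivation remains valid, when the signs of all phase angles are simultaneously flipped (cf.\ Section \ref{s:ZX_sound}), replaying the derivation of Lemma \ref{lem:y-states} with all signs reversed yields the companion identity that \state{rn, label={[rphase]right:$\pi/2$}} equals $\tau$ times \state{gn, label={[gphase]right:$-\pi/2$}}.

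Next I would change colours and substitute. Composing a \Hadamard{} node onto the output of the first identity and applying the colour change rule on both sides -- it sends a red state to the green state of the same phase and vice versa, while commuting past the scalar $\sigma$ -- turns that identity into the statement that \state{gn, label={[gphase]right:$-\pi/2$}} equals $\sigma$ times \state{rn, label={[rphase]right:$\pi/2$}}. Substituting the sign-flipped identity from the previous step for \state{rn, label={[rphase]right:$\pi/2$}} then gives a single-qubit equation asserting that \state{gn, label={[gphase]right:$-\pi/2$}} equals $\sigma\tau$ times \state{gn, label={[gphase]right:$-\pi/2$}}; that is, the scalar $\sigma\tau$ acts as the identity on one fixed non-zero state.

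The final and conceptually most delicate step is to cancel this state so as to isolate $\sigma\tau$, since the calculus admits no literal division. I would compose both sides of the last equation with the effect \effect{rn}: by Lemma \ref{lem:overlap_with_ket_zero} the resulting overlap \innerprod{gn,label={[gphase]right:$-\pi/2$}}{rn} collapses on each side to \innerprodgr{}, leaving the relation that \innerprodgr{} equals $\sigma\tau$ times \innerprodgr{}. Multiplying through by \halfscalar{} \innerprodgr{} and invoking the variant star rule \eqref{eq:halfscalar_innerprodgr2}, namely that \halfscalar{} \innerprodgr{} \innerprodgr{} is the empty diagram, collapses both copies of \innerprodgr{} and produces the empty diagram on the left and $\sigma\tau$ on the right, which is the claim. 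The main obstacle is exactly this cancellation: the equality cannot be read off a non-zero single-qubit state directly, so the argument must first descend to the scalar \innerprodgr{}, whose invertibility has already been secured by the variant star rule.
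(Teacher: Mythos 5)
Your proof is correct, but it takes a genuinely different route from the paper's. The paper attacks the product of the two scalar diagrams head-on: it rewrites \innerprod{gn,label={[gphase]right:$\pi/2$}}{rn,label={[rphase]right:$\pi/2$}} as a chain of phase shifts between \state{rn} and \effect{gn}, pads the chain with a cancelling pair of green $\pm\pi/2$ shifts (identity and spider rules) so that the pattern on the right-hand side of the Euler decomposition rule appears, then applies that rule from right to left -- which is precisely what absorbs the partner scalar \innerprod{gn,label={[gphase]right:$-\pi/2$}}{rn,label={[rphase]right:$-\pi/2$}} -- and reduces the remainder via the colour change and copy rules and Lemma \ref{lem:overlap_with_ket_zero} to four copies of \innerprodgr{}, finishing with \eqref{eq:halfscalar_innerprodgr2}. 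You never touch the product directly: you invoke Lemma \ref{lem:y-states} twice (the stated instance and its sign-flipped companion), show that $\sigma\tau$ fixes the non-zero state \state{gn, label={[gphase]right:$-\pi/2$}}, and then cancel that state by descending to the invertible scalar \innerprodgr{} via Lemma \ref{lem:overlap_with_ket_zero} and the variant star rule -- correctly recognising that this last cancellation, not the commutations, is the delicate point. There is no circularity: Lemmas \ref{lem:variant_star_rule}, \ref{lem:overlap_with_ket_zero} and \ref{lem:y-states} are all proved without the present lemma, and since the proof of Lemma \ref{lem:y-states} already uses the Euler decomposition, both arguments ultimately rest on the same ingredients. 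What each buys: the paper's proof is self-contained diagram-pushing at the object level, while yours is shorter and isolates two reusable moves (the sign-flip replay and cancellation of an invertible scalar against a fixed state). One point you should make explicit, though: soundness of the sign-flipped rules (Section \ref{s:ZX_sound}) is not by itself a licence to replay derivations, since in an incomplete calculus a sound equation need not be derivable; the replay is nonetheless legitimate here because every rewrite rule with all signs flipped is itself available -- most rules are phase-generic or involve only $\pi=-\pi$, and the one exception, the Euler decomposition, is exactly its own upside-down version, which the conventions of Section \ref{s:ZX_conventions} admit. With that remark added, your argument is fully rigorous.
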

\begin{proof}
 By the identity rule, spider rule, Euler decomposition rule, colour change rule, copy rule, and Lemma \ref{lem:overlap_with_ket_zero}:
 \begin{align}\label{eq:scalar_pi_2_inverse}
  \innerprod{gn,label={[gphase]right:$-\pi/2$}}{rn,label={[rphase]right:$-\pi/2$}} \; \innerprod{gn,label={[gphase]right:$\pi/2$}}{rn,label={[rphase]right:$\pi/2$}}
  &= \; \input{tikz_files/multiplication5_der_a.tikz} \nonumber \\
  &= \; \innerprod{gn}{rn} \; \innerprod{gn}{rn} \; \innerprod{gn}{rn} \; \input{tikz_files/multiplication5_der_b.tikz} \;
  = \; \innerprod{gn}{rn} \; \innerprod{gn}{rn} \; \innerprod{gn}{rn} \; \innerprod{gn,label={[gphase]right:$-\pi/2$}}{rn} \;
  = \; \innerprod{gn}{rn} \; \innerprod{gn}{rn} \; \innerprod{gn}{rn} \; \innerprod{gn}{rn}.
 \end{align}
 The desired equality follows by multiplying both sides with \halfscalar{} \halfscalar{} and using \eqref{eq:halfscalar_innerprodgr2}.
\end{proof}

The existence of a unique normal form for non-zero stabilizer scalars immediately implies the following.

\begin{thm}
 The \ZX-calculus is complete for non-zero stabilizer scalars.
\end{thm}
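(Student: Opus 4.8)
The plan is to obtain the theorem as an immediate corollary of the unique normal form established in Theorem~\ref{thm:scalar_nf}, exploiting the general principle used repeatedly in this thesis: a unique normal form together with invertible rewrite rules yields completeness. Concretely, I would take two non-zero stabilizer scalar diagrams $D_1$ and $D_2$ with $\intf{D_1} = \intf{D_2}$ and exhibit a derivation of $D_1 = D_2$ from the rewrite rules.

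First I would invoke the existence half of Theorem~\ref{thm:scalar_nf} to rewrite each $D_i$ into its normal form $N_i$, i.e.\ a combination of one of the diagrams in \eqref{eq:complex_phases} (or the empty diagram) with one of the modulus diagrams from Lemma~\ref{lem:modulus_nf}. Since every rewrite rule of the \ZX-calculus is sound (Section~\ref{s:ZX_sound}), this rewriting preserves the interpretation, so $\intf{N_1} = \intf{D_1} = \intf{D_2} = \intf{N_2}$. Next I would use the uniqueness half of Theorem~\ref{thm:scalar_nf}: because each value in the set \eqref{eq:scalar_values} is represented by exactly one normal-form diagram, the equality $\intf{N_1} = \intf{N_2}$ forces $N_1$ and $N_2$ to be literally the same diagram. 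Finally, since the rewrite rules are not directed — the left-hand side of any rule may be replaced by its right-hand side or vice versa — the reduction of $D_1$ to $N_1$ can be run backwards, yielding a derivation
\begin{equation*}
 D_1 \;=\; N_1 \;=\; N_2 \;=\; D_2
\end{equation*}
entirely within the calculus, which is exactly the statement of completeness.

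I do not expect a serious obstacle here, since the substantive work has already been carried out in Theorem~\ref{thm:scalar_nf} and in the supporting results, Corollary~\ref{cor:decompose_scalars} and Proposition~\ref{prop:scalar_interpretation}. The only points that need care are, first, to confirm that ``uniquely represented'' in Theorem~\ref{thm:scalar_nf} genuinely means that distinct normal forms carry distinct interpretations, so that an equality of interpretations can be upgraded to an identity of diagrams; and second, to note explicitly that every rule appearing in the reduction to normal form is invertible, so that the one-directional reduction $D_i \to N_i$ can be reversed. Both of these are already built into the framework: the former is precisely the content of the uniqueness claim, and the latter holds because none of the \ZX-calculus rewrite rules are oriented.
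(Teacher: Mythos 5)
Your proposal is correct and follows essentially the same route as the paper: the paper's proof likewise observes that both diagrams rewrite to the same normal form by Theorem~\ref{thm:scalar_nf} and then invokes invertibility of the rewrite rules to obtain a derivation of one diagram from the other. Your added care points (soundness of the reduction and the reading of uniqueness as distinct interpretations for distinct normal forms) are implicit in the paper's two-sentence argument, so nothing is missing or different in substance.
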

\begin{proof}
 Suppose \genscalar{$s$} and \genscalar{$r$} are two diagrams in the stabilizer \ZX-calculus such that:
 \begin{equation}
  \intf{\genscalar{$s$}} = \intf{\genscalar{$r$}}.
 \end{equation}
 Then both diagrams must be rewritable to the same normal form diagram.
 As all rewrite rules are invertible, this implies that \genscalar{$s$} can be rewritten into \genscalar{$r$}, or conversely.
\end{proof}

Lemmas \ref{lem:innerprod_wlog}--\ref{lem:omega_inverses} actually hold in the general \ZX-calculus, not just in the stabilizer fragment.
Nevertheless, the normal form for stabilizer scalars is unlikely to be extendable to arbitrary scalars as the incompleteness result \cite{schroeder_incomplete_2014} implies the existence of scalar diagrams in the general \ZX-calculus which cannot be decomposed into  two node-segments.

As shown in Proposition \ref{prop:scalar_interpretation}, the non-zero stabilizer scalar diagrams represent numbers of the form:
\begin{equation}
 \sqrt{2^r} e^{i\pi s/4},
\end{equation}
where $r,s$ are integers.
Thus we find the following.

\begin{thm}\label{thm:scalars_group}
 The non-zero stabilizer scalar normal form diagrams form a group, which is isomorphic to $\ZZ_8\times\ZZ$.
\end{thm}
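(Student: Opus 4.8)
The plan is to read this off as a corollary of the unique normal form for non-zero stabilizer scalars, transporting the multiplicative structure of the represented complex numbers onto the diagrams. First I would record the target group on the numerical side. By Proposition \ref{prop:scalar_interpretation}, $\intf{-}$ sends every non-zero stabilizer scalar diagram into
\[
 G = \left\{ \sqrt{2^r}\, e^{i\pi s/4} \;\middle|\; r,s\in\ZZ \right\} \subseteq \CC\setminus\{0\}.
\]
I would check that $G$ is a subgroup of the multiplicative group $\CC\setminus\{0\}$: it is closed, since $\sqrt{2^{r_1}}e^{i\pi s_1/4}\cdot\sqrt{2^{r_2}}e^{i\pi s_2/4}=\sqrt{2^{r_1+r_2}}e^{i\pi(s_1+s_2)/4}$; it contains $1=\sqrt{2^0}e^{0}$; and the inverse of $\sqrt{2^r}e^{i\pi s/4}$ is $\sqrt{2^{-r}}e^{-i\pi s/4}$, again in $G$. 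The assignment sending $(s\bmod 8,\,r)$ to $\sqrt{2^r}e^{i\pi s/4}$ is then a group isomorphism $\ZZ_8\times\ZZ\to G$: it is a homomorphism because multiplication adds exponents, it is surjective by construction, and it is injective because the modulus $2^{r/2}$ determines $r$ while the argument determines $s$ modulo $8$.

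Next I would set up the bijection between normal form diagrams and $G$. Theorem \ref{thm:scalar_nf} gives each non-zero stabilizer scalar diagram a unique normal form built from one of the phase diagrams of \eqref{eq:complex_phases} (or the empty diagram) together with appropriately many copies of \halfscalar{} and \innerprodgr{} from Lemma \ref{lem:modulus_nf}. Combining the interpretations from those two results shows that, as $s$ ranges over $\{0,\dots,7\}$ and $r$ over $\ZZ$, the normal forms realise every element of $G$, so $\intf{-}$ restricts to a surjection from normal form diagrams onto $G$. Uniqueness of the normal form together with soundness makes this restriction injective, so $\intf{-}$ is a bijection between normal form diagrams and $G$.

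Finally I would equip the normal form diagrams with the operation of juxtaposition followed by reduction to normal form. This is well-defined because juxtaposing two non-zero scalar diagrams again yields a non-zero scalar diagram, which Theorem \ref{thm:scalar_nf} sends to a unique normal form; and it is compatible with the bijection because
\[
 \left\llbracket \genscalar{$s$} \; \genscalar{$r$} \right\rrbracket = \left\llbracket \genscalar{$s$} \right\rrbracket \left\llbracket \genscalar{$r$} \right\rrbracket,
\]
so juxtaposition corresponds to multiplication in $G$. Transporting the group structure of $G$ across the bijection -- associativity, the identity (the empty diagram) and inverses are all inherited -- shows that the normal form diagrams form a group isomorphic to $G$, and hence to $\ZZ_8\times\ZZ$.

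The substantive work is already complete in Proposition \ref{prop:scalar_interpretation} and Theorem \ref{thm:scalar_nf}, so the remaining steps are routine. The one point needing a little care is surjectivity of $\intf{-}$ onto $G$: it relies on Lemma \ref{lem:modulus_nf} supplying a modulus diagram for every integer exponent, so that the fixed modulus contributed by each phase diagram in \eqref{eq:complex_phases} can always be corrected to the required $\sqrt{2^r}$. I do not expect any genuine obstacle here.
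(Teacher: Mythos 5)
Your proposal is correct, and it rests on the same two pillars as the paper (Proposition \ref{prop:scalar_interpretation} and Theorem \ref{thm:scalar_nf}), but it obtains the group structure by a genuinely different route. The paper argues directly on diagrams: closure under juxtaposition comes from Theorem \ref{thm:scalar_nf}, the empty diagram is the unit, and -- this is the substantive difference -- the inverse of a normal form diagram is \emph{constructed explicitly}, by taking its Hermitian adjoint, replacing each \halfscalar{} by two copies of \innerprodgr{} and each \innerprodgr{} by \halfscalar{}\,\innerprodgr{}, and simplifying with the variant star rule (Lemma \ref{lem:variant_star_rule}); the isomorphism to $\ZZ_8\times\ZZ$ is then read off from Proposition \ref{prop:scalar_interpretation} in one line. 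You instead transport the group structure of $G=\{\sqrt{2^r}e^{i\pi s/4}\}$ across the bijection induced by $\intf{-}$ on normal forms, so inverses exist abstractly: the preimage of $g^{-1}$ works, and by semantic uniqueness of the normal form its juxtaposition with the original diagram rewrites to the empty diagram, so invertibility holds inside the calculus, which is what the theorem is ultimately needed for. Each approach buys something: the paper's construction yields an actual algorithm for writing down the inverse diagram, which matters because this theorem is invoked to justify spawning and cancelling scalar subdiagrams during rewriting in Chapter \ref{ch:completeness}, where one needs the inverse diagram concretely; your transport-of-structure argument is shorter and makes the $\ZZ_8\times\ZZ$ identification fully explicit where the paper leaves it implicit. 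One small point of care you handle correctly but phrase loosely: injectivity of $\intf{-}$ on normal forms is not a consequence of soundness but of the semantic reading of uniqueness in Theorem \ref{thm:scalar_nf} (one normal form per value, since the phase representative fixes $s \bmod 8$ and the modulus diagram of Lemma \ref{lem:modulus_nf} fixes $r$) -- the same reading the paper itself uses to deduce scalar completeness.
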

\begin{proof}
 By Theorem \ref{thm:scalar_nf}, the set of stabilizer scalars is closed under multiplication.
 The unit of this multiplication is the empty diagram.
 Given a scalar \genscalar{$s$} in normal form, its inverse can be constructed by taking the Hermitian adjoint of \genscalar{$s$}, doing the following replacement:
 \begin{align*}
  \halfscalar \; &\mapsto \; \innerprodgr \; \innerprodgr, \text{ and} \\
  \begin{tikzpicture}
	\begin{pgfonlayer}{nodelayer}
		\node [style=rn] (0) at (0, 0.4) {};
		\node [style=gn] (1) at (0, -0.15) {};
	\end{pgfonlayer}
	\begin{pgfonlayer}{edgelayer}
		\draw (0) to (1);
	\end{pgfonlayer}
\end{tikzpicture}
 \; &\mapsto \; \halfscalar \; \innerprodgr,
 \end{align*}
 and then applying the variant star rule to simplify the resulting diagram.
 Thus the stabilizer scalar normal form diagrams form a group.

 The isomorphism to $\ZZ_8\times\ZZ$ follows from Proposition \ref{prop:scalar_interpretation}.
\end{proof}

This theorem justifies the assumption of invertibility of non-zero scalars made throughout most of Chapter \ref{ch:completeness}.

\section{Completeness for scaled stabilizer diagrams}
\label{s:scaled_completeness}

Given the completeness result for scalar-free stabilizer diagrams derived in Chapter \ref{ch:completeness} and the completeness result for non-zero stabilizer scalars from Section \ref{s:scalar_completeness}, the obvious next step is to combine them into a completeness result for non-zero scaled stabilizer diagrams.
All that is missing for a full completeness result is a completeness proof for stabilizer zero diagrams.

First, we combine the previous stabilizer completeness results into a completeness proof for arbitrary non-zero stabilizer diagrams.
We then give a normal form for stabilizer zero diagrams and prove that it is unique; this immediately implies the stabilizer \ZX-calculus is complete for zero diagrams.
Section \ref{s:full_stabilizer_completeness} contains the full stabilizer completeness proof.
Finally, we give an example application for the scalar completeness results.

\subsection{Completeness for non-zero stabilizer diagrams}
\label{s:non-zero_completeness}

Using the completeness result for non-zero stabilizer scalars derived in Section \ref{s:scalar_completeness} and the ability to derive equalities between non-scalar stabilizer diagrams as in Section \ref{s:scalar-free_equalities}, we can now prove that the \ZX-calculus is complete for arbitrary non-zero diagrams.

\begin{thm}\label{thm:non-zero_completeness}
 The \ZX-calculus is complete for non-zero scaled diagrams, i.e.\ diagrams that contain both scalar and non-scalar parts.
\end{thm}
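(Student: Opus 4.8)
The plan is to glue together the two completeness results already established: the scalar-free completeness for stabilizer state diagrams (Section \ref{s:scalar-free_equalities}) and the completeness for non-zero stabilizer scalars (Section \ref{s:scalar_completeness}). First I would use map-state duality (Theorem \ref{thm:Choi-Jamiolkowski}) to bend all inputs of $D_1$ and $D_2$ into outputs, reducing to the case where the non-scalar part of each diagram is a state diagram. By Theorem \ref{thm:ZX_GS-LC} each can then be rewritten into GS-LC form, which splits canonically as the juxtaposition of a non-scalar part $N_i$ (a GS-LC state) and a scalar part $s_i$, so that $\intf{D_i}=\intf{s_i}\,\intf{N_i}$. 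Since $\intf{D_1}=\intf{D_2}\neq 0$, Lemma \ref{lem:GS-LC_zero} guarantees that both scalar parts are non-zero; moreover each $\intf{N_i}$ is a non-zero vector, being a unitary image of $\ket{0}^{\otimes n}$.

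Next I would compare the non-scalar parts. From $\intf{s_1}\,\intf{N_1}=\intf{s_2}\,\intf{N_2}$ with all scalars non-zero, the states $\intf{N_1}$ and $\intf{N_2}$ coincide up to a non-zero complex factor. Bringing $N_1$ and $N_2$ into reduced GS-LC form (Theorem \ref{thm:ZX_rGS-LC}) and then into a simplified pair (Proposition \ref{prop:simplified}), Theorem \ref{thm:rGS-LC_equality} shows that the two diagrams become \emph{identical}. Reinstating the scalar factors dropped during this scalar-free derivation — which may be read off rule by rule, as remarked in Section \ref{s:scalar-free_equalities} — produces an explicit rewriting of $N_1$ into a diagram $c\,N_2$, where $c$ is a definite stabilizer scalar diagram recording the accumulated factors.

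At this stage $D_1$ has been rewritten into $s_1\,c\,N_2$, whose non-scalar part already agrees with that of $D_2=s_2\,N_2$. Soundness of every rewrite step gives $\intf{s_1}\intf{c}\,\intf{N_2}=\intf{s_2}\,\intf{N_2}$, and cancelling the non-zero vector $\intf{N_2}$ yields $\intf{s_1}\intf{c}=\intf{s_2}$. Both sides are non-zero stabilizer scalars, so the completeness result of Section \ref{s:scalar_completeness} rewrites $s_1\,c$ into $s_2$, completing a derivation of $D_1$ into $D_2$; invertibility of all \ZX-calculus rules then splices the partial derivations of $D_1$ and $D_2$ into a single directed path. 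The step I expect to be most delicate is the scalar bookkeeping in the middle paragraph: I must verify that the factor $c$ picked up while turning $N_1$ into $N_2$ really is a non-zero stabilizer scalar (so that the group structure of Theorem \ref{thm:scalars_group} is available), and that the cancellation of $\intf{N_2}$ is legitimate, which rests on the non-scalar part of a GS-LC diagram never representing the zero vector.
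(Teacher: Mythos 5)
Your proposal is correct and follows essentially the same route as the paper's proof: Choi--Jamio{\l}kowski duality to reduce to states, rewriting to a simplified pair of rGS-LC diagrams with scalars tracked on the side, and then reconciling the scalar parts via the scalar completeness result (equivalently, the unique normal form of Theorem \ref{thm:scalar_nf}). Your explicit bookkeeping of the accumulated factor $c$, the non-vanishing of the non-scalar part (Lemma \ref{lem:GS-LC_zero}), and the cancellation step merely spell out what the paper summarises as ``keeping track of the scalars somewhere on the side.''
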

\begin{proof}
 Given two scaled diagrams $D$ and $D'$, first consider the non-scalar parts.
 Assume these are both operators from $n$ to $m$ qubits; if the numbers of inputs or outputs do not match up the diagrams are trivially distinct.
 Proceed according to the following steps:
 \begin{enumerate}
  \item Use the Choi-Jamio{\l}kowski isomorphism to bend all inputs into outputs.
  \item Bring the resulting states into GS-LC form, and thus into rGS-LC form, keeping track of the scalars somewhere on the side.
  \item Simplify the pair of rGS-LC diagrams, again keeping track of scalars on the side.
 \end{enumerate}
 Now if the resulting rGS-LC diagrams, ignoring scalars, are not identical then the original diagrams cannot be equal.
 This is because multiplying two different linear operators by scalars can only make them equal if one was a scaled version of the other to begin with.

 If the simplified rGS-LC diagrams are identical, proceed by bringing the scalars into normal form as described in Theorem \ref{thm:scalar_nf}.
 This normalisation does not change the non-scalar part of the diagram.

 Now if the two resulting diagrams are identical, apply the Choi-Jamio{\l}kowski isomorphism to transform the rewrites performed in the non-scalar part of this process into rewrites of the original diagrams.
 The Choi-Jamio{\l}kowski isomorphism preserves equalities, so this yields a sequence of rewrite steps transforming the original two diagrams into two new diagrams that are identical to each other.
 Then some of the rewrite steps can be inverted to find a series of rewrites transforming $D$ into $D'$ (or conversely), thus proving that the two diagrams are equal according to the rules of the graphical calculus.

 Otherwise, the two diagrams must represent different operators, as multiplying the same operator by two different scalars yields two different operators.
\end{proof}

\subsection{Completeness for stabilizer zero diagrams}
\label{s:zero_nf}

We have shown that the stabilizer \ZX-calculus is complete for scaled diagrams as long as they are non-zero.
By Corollary \ref{cor:recognize_zero}, any stabilizer zero diagram can be rewritten to explicitly contain one of the following scalar diagrams as a subdiagram:
\begin{equation}\label{eq:zero_scalars2}
 \scalar{gn, label={[gphase]right:$\pi$}}, \quad\quad \scalar{rn, label={[rphase]right:$\pi$}}, \quad\quad \innerprod{gn, label={[gphase]right:$\pi/2$}}{rn, label={[rphase]right:$-\pi/2$}}, \quad \text{ or } \quad \innerprod{gn, label={[gphase]right:$-\pi/2$}}{rn, label={[rphase]right:$\pi/2$}}.
\end{equation}
Of course the calculus does not actually contain four distinct representations of 0, as shown in the following lemma.

\begin{lem}\label{lem:unique_zero}
 Any diagram that contains one of the subdiagrams in \eqref{eq:zero_scalars2} can be rewritten to contain \scalar{gn, label={[gphase]right:$\pi$}}.
\end{lem}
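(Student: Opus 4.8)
The goal is to show that the four ``zero'' scalar diagrams in \eqref{eq:zero_scalars2} are all interconvertible using the rewrite rules, and in particular that each can be rewritten to \scalar{gn, label={[gphase]right:$\pi$}}. My plan is to treat the four diagrams pairwise, establishing the required equalities one at a time and then chaining them together.

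First I would handle the two single-node diagrams. By the colour change rule with $n=m=0$ (as noted explicitly in the excerpt immediately after the rewrite rules), we have $\scalar{rn, label={[rphase]right:$\pi$}} = \scalar{gn, label={[gphase]right:$\pi$}}$, so the red single node is immediately rewritten to the green one. This disposes of the second diagram straightforwardly.

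Next I would tackle the two-node inner products. The key observation is that $\innerprod{gn, label={[gphase]right:$\pi/2$}}{rn, label={[rphase]right:$-\pi/2$}}$ and $\innerprod{gn, label={[gphase]right:$-\pi/2$}}{rn, label={[rphase]right:$\pi/2$}}$ should each be shown equal to a single $\pi$-phase node (possibly together with some invertible non-zero scalar factor, which by Theorem \ref{thm:scalars_group} can be created or absorbed freely). Concretely, I would copy the red $\mp\pi/2$ node through the green $\pm\pi/2$ node. Since the phases here are multiples of $\pi/2$, the natural tool is a combination of the $\pi$-commutation rule and the spider rule, together with Lemma \ref{lem:innerprod_wlog} to swap which colour carries which phase if convenient. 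The idea is that an inner product of two phase-$\pi/2$ states of opposite colours and opposite signs represents $0$, and by merging the nodes one obtains a node whose phase collects to $\pi$ (up to a non-zero normalising scalar); applying the zero rule (which states $\scalar{gn, label={[gphase]right:$\pi$}}$ connected to a wire equals a broken wire with green and red $\pi$ caps) or the spider rule then exposes an explicit $\scalar{gn, label={[gphase]right:$\pi$}}$ subdiagram. Alternatively one can feed the result into the zero scalar rule, which absorbs any other scalar next to $\scalar{gn, label={[gphase]right:$\pi$}}$.

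The main obstacle I anticipate is bookkeeping of the non-zero scalar factors that appear when merging and copying nodes: the rewrite steps that extract a $\pi$-node will generically also produce copies of \innerprodgr{} or \halfscalar{}. These are harmless because they are invertible non-zero scalars (Theorem \ref{thm:scalars_group}), and the zero scalar rule $\scalar{gn, label={[gphase]right:$\pi$}} \; \scalar{gn, label={[gphase]right:$\alpha$}} = \scalar{gn, label={[gphase]right:$\pi$}}$ lets any adjacent scalar subdiagram be swallowed once a single $\scalar{gn, label={[gphase]right:$\pi$}}$ is present; but one must verify at each step that no genuinely problematic (i.e. itself zero) factor is silently introduced, which would make the intermediate rewrite unsound. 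Once the two inner-product diagrams have each been reduced to contain $\scalar{gn, label={[gphase]right:$\pi$}}$, and the red single node has been colour-changed, every diagram in \eqref{eq:zero_scalars2} has been shown to be rewritable so as to explicitly contain $\scalar{gn, label={[gphase]right:$\pi$}}$, completing the argument.
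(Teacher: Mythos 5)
Your first and last steps coincide with the paper's proof: the colour change rule with $n=m=0$ turns \scalar{rn, label={[rphase]right:$\pi$}} into \scalar{gn, label={[gphase]right:$\pi$}}, and Lemma \ref{lem:innerprod_wlog} reduces the fourth diagram to the third. The gap is in the middle. The paper disposes of \innerprod{gn, label={[gphase]right:$\pi/2$}}{rn, label={[rphase]right:$-\pi/2$}} via Lemma \ref{lem:y-states}, i.e.\ the identity
\begin{equation*}
 \state{rn, label={[rphase]right:$-\pi/2$}} \; = \; \halfscalar \; \innerprod{gn, label={[gphase]right:$-\pi/2$}}{rn, label={[rphase]right:$-\pi/2$}} \; \state{gn, label={[gphase]right:$\pi/2$}},
\end{equation*}
which converts the red $-\pi/2$ state into a green $\pi/2$ state up to an \emph{invertible} scalar; composing with the green $\pi/2$ effect and merging the two now like-coloured nodes by the spider rule collects the phase to $\pi/2+\pi/2=\pi$, exposing an explicit \scalar{gn, label={[gphase]right:$\pi$}}. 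Your proposal never invokes this lemma or any substitute for it, and the tools you cite cannot do its job.

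Concretely, the mechanism you describe fails at every point: the copy and $\pi$-copy rules copy only phase-$0$ and phase-$\pi$ states through spiders of the other colour, so the red $\mp\pi/2$ node cannot be ``copied through'' the green one; the spider rule merges only like-coloured spiders, so there is no rule under which the green $\pi/2$ and red $-\pi/2$ nodes fuse so that ``the phase collects to $\pi$''; and the $\pi$-commutation rule is inapplicable since no $\pi$ phase is present. If you manufacture one by splitting the green effect into a $-\pi/2$ effect followed by \phase{gn, label={[gphase]right:$\pi$}} and push the $\pi$ past the red state (using the $m=0$ instance of the $\pi$-copy rule), the net result is just \innerprod{gn, label={[gphase]right:$-\pi/2$}}{rn, label={[rphase]right:$\pi/2$}} up to invertible scalars --- the rewrite cycles between the two zero inner products without ever producing a $\pi$ node. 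The zero rule and the zero scalar rule are also unavailable at that stage, since both \emph{require} an explicit \scalar{gn, label={[gphase]right:$\pi$}} as input rather than creating one. The missing idea is precisely a colour-conversion identity for $\pm\pi/2$ phase states (Lemma \ref{lem:y-states}, whose own proof needs the Euler decomposition rule); once that is supplied, your scalar bookkeeping via Theorem \ref{thm:scalars_group} is sound and the rest of your outline goes through as in the paper.
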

\begin{proof}
 By the colour change law, $\scalar{rn, label={[rphase]right:$\pi$}} \; = \; \scalar{gn, label={[gphase]right:$\pi$}}$.
 Using Lemma \ref{lem:y-states}, we find that:
 \begin{equation}
  \innerprod{gn, label={[gphase]right:$\pi/2$}}{rn, label={[rphase]right:$-\pi/2$}} \; = \; \halfscalar \; \innerprod{gn, label={[gphase]right:$-\pi/2$}}{rn, label={[rphase]right:$-\pi/2$}} \; \scalar{gn, label={[gphase]right:$\pi$}}.
 \end{equation}
 This result also applies to \innerprod{gn, label={[gphase]right:$-\pi/2$}}{rn, label={[rphase]right:$\pi/2$}} by Lemma \ref{lem:innerprod_wlog}.
\end{proof}

This lemma allows the zero rule and the zero scalar rule to be applied to any zero diagram.

\begin{thm}\label{thm:zero_completeness}
 The stabilizer \ZX-calculus is complete for zero diagrams.
\end{thm}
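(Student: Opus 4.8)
The plan is to prove completeness by exhibiting a unique \emph{zero normal form} $Z_{n,m}$ that depends only on the number of inputs $n$ and outputs $m$, and then arguing exactly as in Theorem \ref{thm:non-zero_completeness}: two diagrams representing the zero matrix necessarily have the same type $(n,m)$, both reduce to $Z_{n,m}$, and since every rewrite rule is invertible, a derivation of the first into $Z_{n,m}$ followed by the reverse of the derivation of the second gives $D_1 = D_2$. I would take $Z_{n,m}$ to be the diagram consisting of $m$ disconnected bare green output nodes, $n$ disconnected bare red input nodes, and a single copy of the zero scalar \scalar{gn, label={[gphase]right:$\pi$}}, and nothing else. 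This mirrors the right-hand side of the zero rule and is manifestly determined by $(n,m)$ alone.

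To bring an arbitrary zero stabilizer diagram $D$ into this form, I would first \emph{expose} the zero scalar. Using map-state duality (Theorem \ref{thm:Choi-Jamiolkowski}) I bend all inputs into outputs, apply Theorem \ref{thm:ZX_GS-LC} to reach GS-LC form, and decompose the scalar part into two-node segments by Corollary \ref{cor:decompose_scalars}. Since $D$ is zero, Corollary \ref{cor:recognize_zero} guarantees that one of the four zero scalars of \eqref{eq:zero_scalars2} now appears explicitly, and Lemma \ref{lem:unique_zero} rewrites it to \scalar{gn, label={[gphase]right:$\pi$}}. With a copy of the zero scalar present, I then invoke the zero rule---together with its colour-swapped and upside-down variants and the topology meta-rule---to sever every connection between the dangling wires and the body of the diagram, as this is precisely the scalar-free zero normal form construction for which the zero rule was introduced. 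After bending the outputs back to inputs, the non-scalar structure has collapsed to bare disconnected green output nodes and red input nodes, and everything that was formerly the body of the diagram has become a residual scalar subdiagram $s$.

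The remaining work is purely scalar, and this is where the main obstacle lies. Because $D$ is zero while the disconnected nodes are not, $s$ must itself be zero, so (again via Corollary \ref{cor:recognize_zero} and Lemma \ref{lem:unique_zero}) it contains a copy of \scalar{gn, label={[gphase]right:$\pi$}}; it then remains to absorb all other factors so that a single \scalar{gn, label={[gphase]right:$\pi$}} survives. The key lemma I must establish is that \scalar{gn, label={[gphase]right:$\pi$}} annihilates every non-zero stabilizer scalar, i.e.\ that its product with any non-zero stabilizer scalar is derivably equal to \scalar{gn, label={[gphase]right:$\pi$}}. The zero scalar rule disposes directly of green phase nodes, and by the non-zero scalar normal form of Theorem \ref{thm:scalar_nf} every remaining factor reduces to such phase nodes together with copies of \innerprodgr{} and \halfscalar{}; the factor \halfscalar{} is removed using the star rule, and a leftover \innerprodgr{} can be eliminated by closing the loop of the zero rule (an identity wire whose ends are joined becomes \scalar{gn} via the identity and loop rules, so that $\scalar{gn, label={[gphase]right:$\pi$}}\,\innerprodgr = \scalar{gn, label={[gphase]right:$\pi$}}\,\scalar{gn} = \scalar{gn, label={[gphase]right:$\pi$}}$ by the zero scalar rule, with further \innerprodgr{} pairs handled by Lemma \ref{lem:variant_star_rule}). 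Once the scalar part has been reduced to the lone \scalar{gn, label={[gphase]right:$\pi$}}, the diagram is exactly $Z_{n,m}$; uniqueness of this form together with invertibility of the rules then completes the proof. I expect the subtle point to be not any single derivation but checking that the absorption covers every non-zero stabilizer scalar without exception, since a single unabsorbable factor would destroy uniqueness of the normal form and hence completeness.
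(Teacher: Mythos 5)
Your proposal is correct and takes essentially the same route as the paper's proof: the paper likewise fixes a unique zero normal form consisting of $n+m$ bare disconnected nodes plus a single \scalar{gn, label={[gphase]right:$\pi$}} (all green where you use red on the inputs, a cosmetic difference bridged by the zero rule's colour-changing applications), exposes the zero scalar via Corollary \ref{cor:recognize_zero} and Lemma \ref{lem:unique_zero}, severs every edge with the zero rule after spider-rule preprocessing, and absorbs the residual scalars with the zero scalar and star rules before concluding by uniqueness and invertibility. Your detour through Theorem \ref{thm:scalar_nf} for the leftover scalar factors is unnecessary but harmless — the zero rule with its colour-swapped and upside-down variants already cuts the connected two-node scalar pairs into single disconnected nodes that the zero scalar rule absorbs directly, which also repairs your slight imprecision in describing the non-zero scalar normal form as consisting of bare phase nodes.
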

\begin{proof}
 We show that any zero diagram with $n$ inputs and $m$ outputs can be rewritten into the following normal form:
 \begin{equation}\label{eq:zero_normal_form}
  \input{tikz_files/zero_normal_form.tikz}.
 \end{equation}
 First, note that the normal form is clearly unique as a zero matrix, the interpretation of any zero diagram, is fully determined by its dimensions.
 
 Now, to rewrite a zero diagram into normal form, first apply the Euler decomposition rule to remove all Hadamard nodes.
 Apply the spider rule until all remaining edges are between one red and one green node.
 Then apply the zero rule (or its upside-down equivalent, depending on how the colours match up) to each edge, transforming the diagram into a completely disconnected graph where the only remaining edges are inputs and outputs of the diagram.
 Further applications of the zero rule (upside-down or not) can be used to change the colour of the nodes connected to inputs or outputs.
 Other than one copy of \scalar{gn, label={[gphase]right:$\pi$}}, any disconnected red and green nodes can be removed using the zero scalar rule.
 Finally, any copies of \halfscalar{} can be removed by using the zero scalar rule to create a new copy of \scalar{gn}, and then applying the star rule.
 This leaves the diagram in the normal form \eqref{eq:zero_normal_form}.
 
 As all rewrite rules are invertible, this implies that any two zero diagrams with the same numbers of inputs and outputs can be rewritten into each other.
 Therefore the \ZX-calculus is complete for stabilizer zero diagrams.
\end{proof}

The results derived in this section are not actually specific to stabilizer diagrams.
In fact, given the zero and zero scalar rules, any diagram that explicitly contains \scalar{gn, label={[gphase]right:$\pi$}} can be brought into the normal form given in \eqref{eq:zero_normal_form}.
Still, Theorem \ref{thm:zero_completeness} only holds for the stabilizer \ZX-calculus, as for a larger fragment of the calculus it may not always be possible to rewrite zero diagrams so that they explicitly contain \scalar{gn, label={[gphase]right:$\pi$}} as a subdiagram.

\subsection{The full stabilizer completeness result}
\label{s:full_stabilizer_completeness}

The completeness result for non-zero stabilizer diagrams and the one for stabilizer zero diagrams straightforwardly combine to a completeness result for arbitrary scaled stabilizer diagrams.

\begin{thm}
 The stabilizer \ZX-calculus is complete, i.e.\ for two \ZX-calculus diagrams $D_1$ and $D_2$ in which all phase angles are integer multiples of $\pi/2$:
 \begin{equation}
  \left\llbracket D_1 \right\rrbracket = \left\llbracket D_2 \right\rrbracket \quad \implies \quad D_1 = D_2,
 \end{equation}
 where the second equality is according to the rewrite rules given in Section \ref{s:rewrite_rules}.
\end{thm}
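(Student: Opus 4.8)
The plan is to obtain the full result as a direct combination of the two completeness results already in hand — Theorem \ref{thm:non-zero_completeness} for non-zero scaled diagrams and Theorem \ref{thm:zero_completeness} for zero diagrams — by splitting on whether the common matrix is zero. I would begin by noting what the hypothesis $\intf{D_1}=\intf{D_2}$ already gives for free: the two matrices have the same dimensions, so $D_1$ and $D_2$ have matching numbers of inputs and outputs; and since the zero matrix differs from every non-zero matrix, $\intf{D_1}$ and $\intf{D_2}$ are either both zero or both non-zero. There is therefore no mixed case to handle.

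In the non-zero case I would invoke Theorem \ref{thm:non-zero_completeness}, taking a little care over the shape of the diagrams. If $D_1$ and $D_2$ carry both a scalar and a non-scalar part, Theorem \ref{thm:non-zero_completeness} applies verbatim. If they are pure scalars (no inputs or outputs), the relevant statement is instead the completeness result for non-zero stabilizer scalars following Theorem \ref{thm:scalar_nf}. If they have no disconnected scalar part, I can adjoin the empty diagram, which represents the scalar $1$ and leaves the interpretation unchanged, so that the scalar part is trivial and its normal form is empty; Theorem \ref{thm:non-zero_completeness} then applies. In each subcase the conclusion is that $D_1$ and $D_2$ are inter-derivable.

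In the zero case I would appeal to Theorem \ref{thm:zero_completeness}. The only point requiring the earlier machinery is exhibiting each $D_i$ as a \emph{recognisably} zero diagram. Reducing to a state diagram by map-state duality \eqref{eq:Choi-Jamiolkowski} and bringing it into GS-LC form by Theorem \ref{thm:ZX_GS-LC} with scalar subdiagrams decomposed as in Corollary \ref{cor:decompose_scalars}, the result is again zero because the Choi-Jamio{\l}kowski transformation is a bijection; then Lemma \ref{lem:GS-LC_zero} and Corollary \ref{cor:recognize_zero} force one of the four zero scalars to appear explicitly, and Lemma \ref{lem:unique_zero} rewrites it to $\scalar{gn, label={[gphase]right:$\pi$}}$. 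Since this is a disconnected scalar it survives bending the outputs back into inputs, so Theorem \ref{thm:zero_completeness} can then reduce each $D_i$ to the unique zero normal form \eqref{eq:zero_normal_form} for its dimensions. The two diagrams thus reach a common form.

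The main obstacle is not any single hard computation but the bookkeeping: checking that the three shapes of diagram (pure scalar, pure non-scalar, mixed) are all covered, and that the zero/non-zero distinction can be detected \emph{inside} the calculus rather than only at the level of matrices — which is precisely what Corollary \ref{cor:recognize_zero} together with Lemma \ref{lem:unique_zero} supplies. Once two diagrams have been rewritten to a common form, invertibility of all the rewrite rules (as used throughout Chapters \ref{ch:completeness} and \ref{ch:more_completeness}) converts this into a derivation of $D_1=D_2$, completing the argument.
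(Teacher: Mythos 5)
Your proposal is correct and follows essentially the same route as the paper's own (very brief) proof: bring each diagram into GS-LC form with decomposed scalars, detect the zero/non-zero dichotomy inside the calculus via Corollary \ref{cor:recognize_zero}, and then conclude by Theorem \ref{thm:zero_completeness} in the zero case and Theorem \ref{thm:non-zero_completeness} in the non-zero case. The extra bookkeeping you supply --- matching dimensions, the pure-scalar subcase handled by the normal form of Theorem \ref{thm:scalar_nf}, and Lemma \ref{lem:unique_zero} to unify the zero scalars --- is exactly the detail the paper leaves implicit, so nothing is missing.
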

\begin{proof}
 Bring the diagram into GS-LC form (up to Choi-Jamio{\l}kowski isomorphism) and decompose the scalars.
 If the diagram is zero, proceed as in Theorem \ref{thm:zero_completeness}.
 Otherwise, proceed as in Theorem \ref{thm:non-zero_completeness}.
\end{proof}

Thus any equality about pure state stabilizer quantum mechanics with post-selected measurements that can be derived using matrices can also be derived graphically, i.e.\ the \ZX-calculus has the same power as any other formalism for stabilizer quantum theory.

\subsection{Example: Quantum key distribution}
\label{s:example_QKD}

Consider the BB84 protocol for quantum key distribution using a two-qubit Bell state \cite{bennett_quantum_1984}: Alice and Bob each hold one half of the entangled state, and they each randomly decide to measure their qubit in either the computational basis $\{\ket{0}, \ket{1}\}$ or the Hadamard basis $\{\ket{+}, \ket{-}\}$.
They later compare their choice of basis over a classical communication channel, e.g.\ a telephone line.
Assuming the Bell state is $\frac{1}{\sqrt{2}}\left(\ket{00}+\ket{11}\right)$, if they have picked the same basis, their results always agree and they have a key; if they have picked different bases, their results are uncorrelated.
We are not interested in the security properties of this key distribution scheme here, instead we simply use it as motivation for some \ZX-calculus derivations since all the states and operations required for the protocol are in the stabilizer formalism.

The Bell state given above is represented in the \ZX-calculus as a ``cup'' with some normalising factors:
\begin{equation}
 \frac{1}{\sqrt{2}}\left(\ket{00}+\ket{11}\right) = \left\llbracket \halfscalar \; \innerprodgr \;  \right\rrbracket.
\end{equation}
Measurements in the \ZX-calculus are post-selected, so we have to consider each pair of measurement outcomes in turn.
Graphically, the normalised outcomes of computational and Hadamard basis measurements on single qubits are:
\begin{equation}
 \bra{0}= \left\llbracket \halfscalar{} \; \innerprodgr{} \; \effect{rn} \right\rrbracket, \quad
 \bra{1} = \left\llbracket \halfscalar{} \; \innerprodgr{} \; \effect{rn, label={[rphase]right:$\pi$}} \right\rrbracket, \quad
 \bra{+}= \left\llbracket \halfscalar{} \; \innerprodgr{} \; \effect{gn} \right\rrbracket, \quad\text{and}\quad
 \bra{-} = \left\llbracket \halfscalar{} \; \innerprodgr{} \; \effect{gn, label={[gphase]right:$\pi$}} \right\rrbracket.
\end{equation}

First, assume Alice and Bob both measure in the computational basis. Can they both get outcome 0?
Constructing the \ZX-calculus diagram for the overlap between the Bell state and $\bra{00}$ and then simplifying it using \eqref{eq:halfscalar_innerprodgr2}, the topology rule, the spider rule, and the star rule, yields:
\begin{equation}
 \halfscalar \; \halfscalar \; \halfscalar \; \innerprodgr \; \innerprodgr \; \innerprodgr \; \input{tikz_files/bell1.tikz} \;
 = \; \halfscalar \; \halfscalar \; \innerprodgr \; \innerprod{rn}{rn} \;
 = \; \halfscalar \; \halfscalar \; \innerprodgr \; \scalar{rn} \;
 = \; \halfscalar \; \innerprodgr,
\end{equation}
which is non-zero, so this outcome is possible.
The probability of this outcome can be found by multiplying this amplitude with its dagger, graphically:
\begin{equation}
 \halfscalar \;  \; \halfscalar \; \innerprodgr \;
 = \; \halfscalar \; \halfscalar \; \innerprodgr \; \innerprodgr \;
 = \; \halfscalar,
\end{equation}
i.e.\ the probability of Alice and Bob both measuring 0 is $\llbracket\halfscalar\rrbracket=1/2$.
Similarly, the overlap of the Bell state with the effect $\bra{11}$ is, graphically:
\begin{equation}
 \halfscalar \; \halfscalar \; \halfscalar \; \innerprodgr \; \innerprodgr \; \innerprodgr \; \input{tikz_files/bell5.tikz} \;
 = \; \halfscalar \; \halfscalar \; \innerprodgr \; \innerprod{rn, label={[rphase]right:$\pi$}}{rn, label={[rphase]right:$\pi$}} \;
 = \; \halfscalar \; \halfscalar \; \innerprodgr \; \scalar{rn} \;
 = \; \halfscalar \; \innerprodgr,
\end{equation}
so the probability of Alice and Bob both getting measurement outcome 1 is again $1/2$.

On the other hand, if we consider whether Alice can get 0 while Bob gets 1, we find the following:
\begin{equation}
 \halfscalar \; \halfscalar \; \halfscalar \; \innerprodgr \; \innerprodgr \; \innerprodgr \; \input{tikz_files/bell2.tikz} \;
 = \; \halfscalar \; \halfscalar \; \; \innerprodgr \; \innerprod{rn}{rn,label={[rphase]right:$\pi$}} \;
 = \; \halfscalar \; \halfscalar \; \innerprodgr \; \scalar{rn,label={[rphase]right:$\pi$}} \;
 = \; \halfscalar \; \halfscalar \; \scalar{gn} \; \scalar{rn} \; \scalar{rn,label={[rphase]right:$\pi$}} \;
 = \; \scalar{gn,label={[gphase]right:$\pi$}},
\end{equation}
where the penultimate equality is by the zero rule and the last one by the star and colour change rules. As $\llbracket\scalar{gn,label={[gphase]right:$\pi$}}\rrbracket=0$, that combination of outcomes is impossible.

If Alice measures in the computational basis and Bob in the Hadamard basis, the probability of Alice getting outcome $\theta$ and Bob getting $\phi$ for some fixed $\theta,\phi\in\{0,\pi\}$ is:
\begin{equation}
 \begin{array}{c} \halfscalar \; \halfscalar \; \halfscalar \;  \;  \;  \\ \halfscalar \; \halfscalar \; \halfscalar \; \innerprodgr \; \innerprodgr \; \innerprodgr \end{array} \input{tikz_files/bell3.tikz} \;
 = \; \halfscalar \; \halfscalar \; \halfscalar \; \innerprod{gn,label={[gphase]right:$-\phi$}}{rn,label={[rphase]right:$-\theta$}} \; \innerprod{gn,label={[gphase]right:$\phi$}}{rn,label={[rphase]right:$\theta$}} \;
 = \; \halfscalar \; \halfscalar \; \halfscalar \; \innerprod{gn}{rn} \; \innerprod{gn}{rn} \;
 = \; \halfscalar \; \halfscalar,
\end{equation}
where the penultimate equality is by Lemmas \ref{lem:pi_multiplication} and \ref{lem:overlap_with_ket_zero}.
Thus if Alice and Bob choose different bases, their outcomes are random and uncorrelated: each of the four combinations of outcomes has probability $\llbracket\halfscalar\;\halfscalar\rrbracket = 1/4$.

\section{Completeness for the single-qubit Clifford+T group}
\label{s:Clifford+T}

We have shown in the previous chapter and sections that the \ZX-calculus is complete for pure state stabilizer quantum mechanics with post-selected measurements even though it is incomplete for universal pure state qubit quantum mechanics.
While not obvious, it is also not surprising that a finite set of rewrite rules suffices to derive all equalities between stabilizer operators since there are only finitely many stabilizer operations on any fixed finite number of qubits.

The addition of any non-stabilizer operation to a set of generating operations for the Clifford group yields an infinitely large set of operations, which is moreover dense in the space of all pure state qubit operations.
Nevertheless, the incompleteness proof does not hold for any such approximately universal set of operations (cf.\ Definition \ref{dfn:approximately_universal}) because the set of allowed basic $X$- and $Z$-rotations is still restricted and thus the Euler decomposition rule does not hold in general.

In this section, we make a step towards a completeness result for the Clifford+T group, which is approximately universal, by showing that the \ZX-calculus is complete for scalar-free single-qubit unitaries within this group, i.e.\ for line diagrams where all phases are integer multiples of $\pi/4$.
The result was originally published in \cite{backens_zx-calculus_2014}.

For simplicity, and because we are only considering unitary operators, we ignore scalars in this section.
As a sequential composite of unitary operators can never be zero, we do not need to consider zero diagrams.
When rewriting line diagrams built from phase shifts and Hadamard nodes, the only scalars that can arise are those that appear in rewrite rules (excluding the scalar and zero rules).
This means that ``ignoring scalars'' is equivalent to replacing the star rule from Section \ref{s:rewrite_rules} with the following \emph{scalar rule}:
\begin{equation}
 \genscalar{$s$} \; = \qquad,
\end{equation}
for any:
\begin{equation}
 \genscalar{$s$} \in \left\{ \innerprod{gn}{rn}, \quad \innerprod{gn,label={[gphase]right:$\alpha$}}{rn,label={[rphase]right:$\pi$}}, \quad \innerprod{gn,label={[gphase]right:$-\pi/2$}}{rn,label={[rphase]right:$-\pi/2$}} \right\},
\end{equation}
with $\alpha$ an integer multiple of $\pi/4$.

The scalar rule can be used left-to-right and right-to-left, therefore non-trivial scalars can be ``spawned'' where needed for other rewrite rules and scalars appearing after rewriting can be dropped.
This is done implicitly in the following subsections.

We first give some definitions and lemmas in Section \ref{s:Clifford+T_definitions}.
The completeness proof -- again via a unique normal form -- is contained in Section \ref{s:Clifford+T_completeness}.

\subsection{Preliminary definitions and lemmas}
\label{s:Clifford+T_definitions}

We denote the single-qubit Clifford group by $\mathcal{C}_1$.
From the results in Section \ref{s:full_stabilizer_completeness}, we know that the \ZX-calculus is complete for this group.
In fact, Lemma \ref{lem:single-qubit_Clifford_normal_form} gives a unique normal form for any single-qubit Clifford operator.
There are also other possible choices of normal forms for single-qubit Clifford operators, some of which are useful later.

\begin{lem}\label{lem:more_C1_normal_forms}
 The following two sets each contain a unique representation for each operator $C\in\mathcal{C}_1$:
 \begin{equation}\label{eq:local_Clifford_normal_form2}
  \left\{
  \input{tikz_files/C1-1.tikz}
  \right\} \quad\quad\text{and}\quad\quad
  \left\{
  \input{tikz_files/C1-2.tikz}
  \right\},
 \end{equation}
 where in both cases $\alpha,\beta,\gamma\in\{0,\pi/2,\pi,-\pi/2\}$.
\end{lem}
The proof is analogous to that of Lemma \ref{lem:single-qubit_Clifford_normal_form}.

It will be useful to define two further sets of \ZX-calculus operators:
 \begin{equation}
  \input{tikz_files/W.tikz}
  \qquad \text{and} \qquad
  \input{tikz_files/V.tikz}
 \end{equation}
In the remainder of this section we prove various lemmas about how operators in $\mathcal{C}_1$, $\mathcal{W}$ and $\mathcal{V}$ compose.

\begin{lem}\label{lem:U}
 The following set contains a unique representation of each operator of the form $TC$, where $C\in\mathcal{C}_1$:
 \begin{equation}
  \mathcal{U} = \left\{
  \input{tikz_files/U.tikz}
  \right\}
 \end{equation}
 if $\alpha,\beta,\gamma\in\{0,\pi/2,\pi,-\pi/2\}$.
\end{lem}
\begin{proof}
 This follows immediately from the second set of single-qubit Clifford normal forms given in Lemma \ref{lem:more_C1_normal_forms}.
\end{proof}

\begin{lem}\label{lem:CV}
 Let $C\in\mathcal{C}_1$, $U\in\mathcal{U}$ and $V\in\mathcal{V}$. Then:
 \begin{equation}
  \input{tikz_files/VC.tikz}
  \qquad \text{and} \qquad
  \input{tikz_files/UC.tikz}
 \end{equation}
 for some $W\in\mathcal{W}$, $U'\in\mathcal{U}$, $V'\in\mathcal{V}$ and $a,b\in\{0,1\}$. For the particular case of the first equality where $C$ consists solely of $\pi$ phase shifts, $W$ is the identity and we have:
 \begin{equation}
  \input{tikz_files/V-pi_a.tikz}, \qquad \input{tikz_files/V-pi_b.tikz}, \quad \text{and} \quad \input{tikz_files/V-pi_c.tikz},
 \end{equation}
 with $\bar{V}\in\mathcal{V}\setminus\{V\}$.
\end{lem}
\begin{proof}
 Substitute for $C$ using the first set of normal forms given in Lemma \ref{lem:more_C1_normal_forms} and for $V$ and $U$ using the definitions of $\mathcal{V}$ and $\mathcal{U}$; the results then follow from straightforward application of the rules of the \ZX-calculus.
\end{proof}

\begin{lem}\label{lem:pi-commutation}
 Suppose $V_1,\ldots,V_n\in\mathcal{V}$ for some positive integer $n$. Then if $a,b\in\{0,1\}$:
 \begin{equation}
  \input{tikz_files/VV-pi.tikz}
 \end{equation}
 for some $a',b'\in\{0,1\}$ and $V_1',\ldots,V_n'\in\mathcal{V}$.
\end{lem}
\begin{proof}
 By induction on $n$, using the second part of Lemma \ref{lem:CV}.
\end{proof}

Lemmas \ref{lem:CV} and \ref{lem:pi-commutation} show that single-qubit Clifford operators interact nicely with the diagrams in the sets $\mathcal{U}$, $\mathcal{V}$, and $\mathcal{W}$.
Red and green $\pi$-phase shifts in particular can be moved past operators from $\mathcal{V}$ in a generalisation of the $\pi$-commutation rule.

\subsection{The Clifford+T completeness proof}
\label{s:Clifford+T_completeness}

We now use the definitions in the previous section to define a normal form for single-qubit Clifford+T operators and show that it is unique.
This proof is inspired by an analogous result for quantum circuits in \cite{matsumoto_representation_2008}.
As before, the existence of a unique normal form implies completeness.

\begin{thm}\label{thm:Clifford+T_diagrams}
 Any single-qubit operator consisting of phase shifts that are multiples of $\pi/4$ and Hadamard operators is either a pure Clifford operator or it can be written in the normal form:
 \begin{equation}\label{eq:UVVW}
  \input{tikz_files/UVVW.tikz}
 \end{equation}
 for some integer $n\geq 0$, where $W\in\mathcal{W}$, $V_1,\ldots,V_n\in\mathcal{V}$ and $U\in\mathcal{U}$.
\end{thm}

\begin{proof}
 Any single-qubit Clifford+T operator can be written solely in terms of \phase{rn,label={[rphase]right:$\pi/2$}} and \phase{gn,label={[gphase]right:$\pi/4$}}.
 To prove the theorem, it thus suffices to show that adding \phase{rn,label={[rphase]right:$\pi/2$}} or \phase{gn,label={[gphase]right:$\pi/4$}} to any Clifford operator or any diagram in normal form yields a diagram that can be rewritten to a Clifford operator or normal form diagram.

 Consider first \phase{rn,label={[rphase]right:$\pi/2$}}.
 This is a Clifford operator, so adding it to a Clifford diagram yields another Clifford diagram.
 Furthermore:
 \begin{equation}
  \input{tikz_files/W-pi_2.tikz} = \; \genunitary{$C$}
 \end{equation}
 for some $C\in\mathcal{C}_1$, so if $n>0$:
 \begin{equation}
  \input{tikz_files/UVVW-pi_2.tikz}
  = \; \input{tikz_files/UVVC.tikz} \;
  = \; \input{tikz_files/U-pi-pi-VVW.tikz}
  = \; \input{tikz_files/UVVW-prime.tikz},
 \end{equation}
 by Lemmas \ref{lem:CV} and \ref{lem:pi-commutation}, where $a,b\in\{0,1\}$, $W'\in\mathcal{W}$, $U'\in\mathcal{U}$ and $V_1',\ldots,V_n'\in\mathcal{V}$.
 From Lemma \ref{lem:CV}, we also have that, if $n=0$, the diagram resulting from the application of \phase{rn,label={[rphase]right:$\pi/2$}} to a normal form diagram can be rewritten into normal form.
 This covers all the cases.

 Now consider \phase{gn,label={[gphase]right:$\pi/4$}} instead.
 Note that:
 \begin{equation}
  \input{tikz_files/C-pi_4.tikz} = \; \genunitary{$U'$} \qquad \text{and} \qquad
  \input{tikz_files/U-pi_4.tikz} = \; \genunitary{$C'$}
 \end{equation}
 for some $U'\in\mathcal{U}$ and $C'\in\mathcal{C}_1$.
 Furthermore, unless $W$ is the identity:
 \begin{equation}
  \input{tikz_files/W-pi_4.tikz}
 \end{equation}
 for some $V\in\mathcal{V}$.
 Thus adding \phase{gn,label={[gphase]right:$\pi/4$}} to a Clifford operator or a normal form diagram with non-trivial $W$ results in diagrams that can be rewritten to normal form.
 If $W$ is the identity and $n=0$, then the result of adding \phase{gn,label={[gphase]right:$\pi/4$}} is a Clifford diagram.

 It remains to check what happens when $W$ is the identity and $n>0$.
 For any $V_n\in\mathcal{V}$, we can find $W\in\mathcal{W}$ and $a\in\{0,1\}$ such that:
 \begin{equation}
  \input{tikz_files/V-pi_4.tikz}.
 \end{equation}
 Then by Lemmas \ref{lem:CV} and \ref{lem:pi-commutation}, the entire diagram can be brought into normal form.

 Thus, whenever \phase{rn,label={[rphase]right:$\pi/2$}} or \phase{gn,label={[gphase]right:$\pi/4$}} is added to a pure Clifford diagram or a normal form diagram, the resulting diagram can be rewritten into a pure Clifford diagram or a normal form diagram, completing the proof.
\end{proof}

Theorem \ref{thm:Clifford+T_diagrams} proves that any proper Clifford+T operator can be brought into normal form.
To get a completeness result, it remains to show that this normal form is unique.
We proceed in several steps, following \cite{matsumoto_representation_2008}: Firstly, we show that no non-trivial normal form diagram represents the identity map; this is Theorem \ref{thm:not_identity}.
Then, we prove that the Hermitian adjoint of any normal form diagram has a normal form with the same number of non-Clifford phase shifts, see Lemma \ref{lem:adjoint}.
Finally, we combine those two results in Theorem \ref{thm:Clifford+T_normal_form} to show that normal forms are unique.

\begin{thm}\label{thm:not_identity}
 No normal form diagram as given in \eqref{eq:UVVW} is equal to the identity.
\end{thm}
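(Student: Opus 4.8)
The plan is to pass from diagrams to matrices using soundness of the \ZX-calculus (Section \ref{s:ZX_sound}): if a normal form diagram $D = W\circ V_n\circ\cdots\circ V_1\circ U$ were equal to the identity, then $\intf{D}$ would be the $2\times2$ identity matrix (up to the global scalar we are ignoring throughout this section). So it suffices to exhibit an invariant of $\intf{D}$ that never takes the value it has on the identity. Following Matsumoto and Amano \cite{matsumoto_representation_2008}, I would work in the ring $\ZZ[\omega,1/\sqrt2]$, where $\omega=e^{i\pi/4}$, which contains all matrix entries of single-qubit Clifford+T operators. For $x$ in this ring let $\mathrm{sde}(x)$ be the least $k\geq0$ with $\sqrt2^{\,k}x\in\ZZ[\omega]$, and for a vector let its sde be the maximum of the sdes of its entries once the vector has been put into the reduced form in which no common factor of $\sqrt2$ can be cancelled.

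The argument splits according to the number $n$ of $\mathcal{V}$-syllables. For $n=0$ the diagram is $W\circ U$ with $W\in\mathcal{W}$ and $U\in\mathcal{U}$. Since $U=T\,C$ for some $C\in\mathcal{C}_1$ and $W\in\mathcal{C}_1$, the equality $\intf{W\circ U}=I$ would force $T=W^{-1}C^{-1}$ to be a Clifford operator; but $T$ does not normalise the Pauli group, since $TXT^{\dagger}=(X+Y)/\sqrt2\notin P_1$, so $T\notin\mathcal{C}_1$, a contradiction. (Equivalently, $W\circ U$ represents a proper Clifford+T operator and hence cannot admit the Clifford normal form of Lemma \ref{lem:single-qubit_Clifford_normal_form}.)

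For $n\geq1$ I would track the image of $\ket0$, i.e.\ the first column of $\intf{D}$. Applying $U=TC$ to $\ket0$ gives a vector of sde at most $1$; the key lemma is that each subsequent syllable $V_i\in\mathcal{V}$, whose red $\pi/2$ node contributes the sole essential factor of $1/\sqrt2$, strictly raises the sde of the running state, so that after all $n$ syllables the state has sde at least $1$. The final Clifford $W$ changes the sde by at most one in a controlled way and, crucially, cannot bring it back down to $0$. Since $\ket0$ itself has sde $0$, the first column of $\intf{D}$ differs from that of the identity, whence $\intf{D}\neq I$.

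The main obstacle is the key lemma that a $\mathcal{V}$-syllable strictly increases the sde, which is exactly where Matsumoto and Amano do their real work: one must show that when the $1/\sqrt2$ introduced by the red $\pi/2$ node meets the numerator of the current state, that numerator is not divisible by $\sqrt2$, so that no cancellation occurs. I would prove this by a residue analysis modulo $\sqrt2$ (equivalently, a parity argument modulo $2$ in $\ZZ[\omega]$), classifying the admissible residues of a reduced state vector and checking, for each of the two elements of $\mathcal{V}$, that it sends an admissible residue to one whose sde is one higher; Lemmas \ref{lem:CV} and \ref{lem:pi-commutation} would be used to absorb the intervening Clifford data into the boundary operators. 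Controlling the boundary syllables $U$ and $W$ so that they cannot spuriously lower the accumulated sde is the remaining bookkeeping.
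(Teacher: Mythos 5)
Your route is genuinely different from the one in the thesis. The proof given here does not track amplitude denominators at all: following Matsumoto and Amano it tracks the Bloch-type stabilizer triple $(x,y,z)$ of the running state (where $M_{(x,y,z)}=xX+yY+zZ$ and $M_{(x,y,z)}\ket{\psi}=\ket{\psi}$), writes each component in the form $\frac{1}{\sqrt{2^m}}(x_1+x_2\sqrt2)$ over the \emph{real} ring $\ZZ[\sqrt2]$, and shows by a finite parity analysis that after at least one $\mathcal{V}$-syllable the coefficient $x_1$ is always odd. Since $W^{-1}\ket{0}$ is a $Z$- or $\pm Y$-eigenstate for every $W\in\mathcal{W}$, equality of \eqref{eq:UVVW} with (a scalar multiple of) the identity would force the stabilizer of $V_n\ldots V_1 U\ket{0}$ to lie in $\{(0,0,c),(0,c,0)\}$, i.e.\ $x_1$ even --- contradiction. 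Your denominator-exponent argument over $\ZZ[\omega,1/\sqrt2]$ is instead the Giles--Selinger-style proof of Matsumoto--Amano uniqueness (your attribution is slightly off: the parity-of-Bloch-vector argument is what Matsumoto and Amano actually do, and is what the thesis adapts; the sde machinery came later). Both routes work. Yours buys more --- strict sde growth per syllable also shows the normal form is $T$-optimal --- while the paper's buys cheaper bookkeeping: over $\ZZ[\sqrt2]$ the element $\sqrt2$ behaves like a prime, so ``reduced form'' and parity are unambiguous, and $W$ is disposed of in one line. (The up-to-scalar issue you should still note is harmless on either route: a proportionality factor making $\intf{D}\ket{0}\propto\ket{0}$ is a modulus-one matrix entry, hence a power of $\omega$, which does not disturb reduced forms.)

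Two concrete soft spots in your plan, both fixable but both load-bearing. First, in $\ZZ[\omega]$ the element $\sqrt2$ is \emph{not} prime: it ramifies, $\sqrt2\sim(1+\omega)^2$ up to a unit (e.g.\ $1+i=\omega\sqrt2$), so a vector whose entries share a factor $\lambda=1+\omega$ is nevertheless $\sqrt2$-reduced; your residue analysis ``modulo $\sqrt2$'' must therefore run over the four-element quotient $\ZZ[\omega]/(\sqrt2)$ with the $\lambda$-valuation tracked, and a naive parity argument modulo $2$ does not suffice. Second, your closing step ``$W$ changes the sde by at most one and cannot bring it back down to $0$'' does not follow from sde counting alone: when $n=1$ and $UC\ket{0}$ has sde $0$, one syllable raises the sde only to $1$, and a drop of one under a $W$ containing the red $\pi/2$ node would return it to $0$. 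Excluding this requires extending the residue classification to cover $W$ --- it is exactly the analogue of the paper's observation that the stabilizers reachable after a syllable never lie in $\{(0,0,c),(0,c,0)\}$. By contrast, your $n=0$ case (if $WTC=cI$ then $T$ would normalise the Pauli group, contradicting $TXT^{\dagger}=(X+Y)/\sqrt2\notin P_1$) is clean, covers all $W$ at once, and is arguably slicker than the paper's ``straightforward to check'' for that case.
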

\begin{proof}
 We show that if $D$ is a normal form diagram, then there does not exist a complex number $c$ such that $\intf{D}=cI$, where $I$ is the single-qubit identity operator.
 As the \ZX-calculus is sound, this implies that no normal form diagram is equal to the identity within the \ZX-calculus.

 Following \cite{matsumoto_representation_2008}, we use an adaptation of the stabilizer formalism.
 Let:
 \begin{equation}
  M_{(x,y,z)}:=xX+yY+zZ,
 \end{equation}
 where $X,Y,Z$ are the Pauli matrices.
 We say that a single qubit state $\ket{\psi}$ is \emph{stabilized} by $(x,y,z)$ if $M_{(x,y,z)}\ket{\psi}=\ket{\psi}$.
 It is straightforward to show that if $(x,y,z)$ stabilizes $\ket{0}$, then $(x,y,z)=(0,0,1)$.

 Let $S$ be the phase gate, and let $R=\intf{\phase{rn,label={[rphase]right:$\pi/2$}}}$.
 Throughout this proof, we refer to diagrams and their interpretations interchangeably, e.g.\ we say that $\mathcal{V}=\{TR, TSR\}$.
 Now suppose $(x,y,z)$ stabilizes some state $\ket{\psi}$.
 Then for any $C\in\mathcal{C}_1$, $C\ket{\psi}$ is stabilized by some expression of the form $(a\sigma(x),b\sigma(y),c\sigma(z))$, where $\sigma$ is some permutation on the set $\{x,y,z\}$ and $a,b,c\in\{\pm1\}$.
 This is because $C\ket{\psi} = (CM_{(x,y,z)}C^{-1})C\ket{\psi}$ and conjugation by a Clifford operator maps the set of Pauli matrices to itself, up to factors of $\pm 1$.
 Furthermore:
 \begin{mitem}
  \item $T\ket{\psi}$ is stabilized by $\frac{1}{\sqrt{2}}(x-y,x+y,z\sqrt{2})$,
  \item $TR\ket{\psi}$ is stabilized by $\frac{1}{\sqrt{2}}(x+z,x-z,y\sqrt{2})$, and
  \item $TSR\ket{\psi}$ is stabilized by $\frac{1}{\sqrt{2}}(z-x,x+z,y\sqrt{2})$.
 \end{mitem}

 We shall consider the effect of applying a normal form diagram to $\ket{0}$.
 First, consider the case where $W$ is the identity and $n=0$, i.e.\ the diagram is simply of the form $TC$ for some Clifford operator $C$.
 Now $TC\ket{0}$ is stabilized by one of the following expressions:
 \begin{equation}\label{eq:TC_stabilizers}
  \frac{1}{\sqrt{2}}(\pm1,\pm1,0),\quad \frac{1}{\sqrt{2}}(\mp1,\pm1,0), \quad\text{and}\quad (0,0,\pm1).
 \end{equation}
 Even though one of the potential stabilizers is $(0,0,1)$, it is straightforward to check that $TC$ is not a scalar multiple of the identity for any $C$.

 Next consider the possible stabilizers for $V_1TC\ket{0}$, where $V_1\in\mathcal{V}$.
 These are:
 \begin{gather*}
  \frac{1}{2}(\pm1,\pm1,\pm\sqrt{2}),\quad \frac{1}{2}(\mp1,\pm1,\pm\sqrt{2}),\quad \frac{1}{2}(\mp1,\mp1,\pm\sqrt{2}),\quad \frac{1}{2}(\pm1,\mp1,\pm\sqrt{2}),\\
  \frac{1}{\sqrt{2}}(\pm1,\pm1,0),\quad\text{and}\quad \frac{1}{\sqrt{2}}(\mp1,\pm1,0).
 \end{gather*}

 Any stabilizer in the set above can be expressed as:
 \begin{equation}\label{eq:x1x2}
  \frac{1}{\sqrt{2^m}}(x_1+x_2\sqrt{2},y_1+y_2\sqrt{2},z_1+z_2\sqrt{2}),
 \end{equation}
 where $m,x_1,x_2,y_1,y_2,z_1,z_2\in\ZZ$ with $m\geq0$.
 Applying a transformation from $\mathcal{V}$ maps that stabilizer to:
 \begin{equation}
  \frac{1}{\sqrt{2^{m+1}}}\left((x_1+z_1)+(x_2+z_2)\sqrt{2},(x_1-z_1)+(x_2-z_2)\sqrt{2},2y_2+y_1\sqrt{2}\right),
 \end{equation}
 or to:
 \begin{equation}
  \frac{1}{\sqrt{2^{m+1}}}\left((z_1-x_1)+(z_2-x_2)\sqrt{2},(x_1+z_1)+(x_2+z_2)\sqrt{2},2y_2+y_1\sqrt{2}\right).
 \end{equation}
 Note that $\mathcal{W}\subset\mathcal{C}_1$, so the effect of $W\in\mathcal{W}$ is at most a permutation of the numbers $x,y,z$ and the introduction of minus signs.
 Thus the stabilizer of $U\ket{\psi}$ for any normal form operator $U$ can be written in the form \eqref{eq:x1x2}.

 Following \cite{matsumoto_representation_2008}, we consider the parity of $x_1,x_2,y_1,y_2,z_1$ and $z_2$ under the transformations given by repeated application of elements of $\mathcal{V}$.
 For the stabilizers given in \eqref{eq:TC_stabilizers}, we have either $x_1$ and $y_1$ odd and the others even, or $z_1$ odd and the others even.
 For a given $a,b$, the parity of $|a-b|$ is the same as that of $a+b$, so the two transformations in $\mathcal{V}$ have the same effects on the parity of $x_1,x_2,y_1,y_2,z_1$ and $z_2$.
 
 If $x_1$ and $y_1$ are odd and the others even, then after application of some $V\in\mathcal{V}$, $x_1,y_1$, and $z_2$ are odd.
 A second application of $V$ leads to a stabilizer where all factors are odd except for $z_1$.
 A third application of $V$ gives a stabilizer where once again $x_1,y_1$, and $z_2$ are odd.
 Thus the parity of these factors changes cyclically.

 If $z_1$ is odd in the beginning and the other factors are even, then after one application of $V$, $x_1,y_1$ and $z_2$ are odd, after which the same cyclical behaviour appears as above.

 Note that if $WV_n\ldots V_1TC$ is to be a scalar multiple of the identity, then $V_n\ldots V_1TC\ket{0}$ must have a stabilizer in the set $\{(0,0,c),(0,c,0)\}$ for some non-zero $c$, i.e.\ either $x_1=x_2=y_1=y_2=0$ or $x_1=x_2=z_1=z_2=0$.
 In particular, $WV_n\ldots V_1TC$ can only be the identity if $V_n\ldots V_1TC\ket{0}$ has a stabilizer in which either $x_1,x_2,y_1,$ and $y_2$ are all even, or $x_1,x_2,z_1,$ and $z_2$ are all even.
 Yet, as shown above, for any $V_n\ldots V_1TC\ket{0}$, the factor $x_1$ in the stabilizer is always odd.
 Thus $WV_n\ldots V_1TC$ is never the identity, completing the proof.
\end{proof}

\begin{lem}\label{lem:adjoint}
 Consider a normal form diagram $D=WV_n\ldots V_1U$.
 Then $D^\dagger$ is equal to some normal form diagram with the same number of copies of elements of $\mathcal{V}$, i.e.\ $D^\dagger=W'V_n'\ldots V_1'U'$ for some $W'\in\mathcal{W}, V_1',\ldots,V_n'\in\mathcal{V}$ and $U'\in\mathcal{U}$.
\end{lem}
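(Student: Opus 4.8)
The plan is to construct a normal form for $D^\dagger$ by dualising $D$ block by block, re-normalising with the commutation lemmas, and then controlling the number of $\mathcal{V}$-factors by a dagger-symmetric integer invariant. First I would apply the dagger functor of the \ZX-calculus, which by Definition \ref{dfn:interpretation_functor} simply flips a diagram upside-down and negates every phase. Hence $D^\dagger$ is the diagram with $W^\dagger$ at the bottom, then $V_n^\dagger,\dots,V_1^\dagger$, and $U^\dagger$ at the top, i.e.\ $D^\dagger=U^\dagger V_1^\dagger\cdots V_n^\dagger W^\dagger$ as an operator. I would then record the daggers of the atomic blocks. Every element of $\mathcal{W}$ is Clifford, so $W^\dagger\in\mathcal{C}_1$. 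Writing a generic $V\in\mathcal{V}$ as $R_Z(\theta)R_X(\pi/2)$ with $\theta\in\{\pi/4,3\pi/4\}$, its dagger is $R_X(-\pi/2)R_Z(-\theta)$; since $R_X(-\pi/2)$ is Clifford and $R_Z(-\pi/4)=R_Z(\pi/4)R_Z(-\pi/2)$, $R_Z(-3\pi/4)=R_Z(\pi/4)R_Z(-\pi)$, the block $V^\dagger$ rewrites to a Clifford, a single copy of $T$, and a Clifford. By Lemma \ref{lem:U} the same holds for $U^\dagger=(TC)^\dagger=C^\dagger T^\dagger$. Thus each of the $n+1$ non-Clifford blocks of $D^\dagger$ carries exactly one $T$-phase, separated by single-qubit Cliffords.

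Next I would feed this word into the existence half of the normal-form theorem. Applying Theorem \ref{thm:Clifford+T_diagrams} shows that $D^\dagger$ rewrites either to a Clifford diagram or to a normal form $W'V_m'\cdots V_1'U'$; the rewriting is carried out exactly as there, pushing each Clifford factor towards the top with Lemma \ref{lem:CV} (which turns $\mathcal{V}$-blocks into $\mathcal{V}$-blocks and absorbs Cliffords into $W'$ and $U'$) and commuting the resulting red and green $\pi$-shifts downwards through the $\mathcal{V}$-stack with Lemma \ref{lem:pi-commutation}. This already yields that $D^\dagger$ possesses \emph{some} normal form; what remains is to show $m=n$.

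The hard part is precisely this count. As the construction shows, one cannot match blocks one-for-one: a $T$ meeting a trivial $W'$ can collapse via $T\cdot T=S$ instead of producing a fresh $\mathcal{V}$-block, so the number of $\mathcal{V}$-factors is a global feature of the operator rather than a feature of the word. To handle it I would introduce a dagger-symmetric integer invariant and show it equals the number of $\mathcal{V}$-factors of any normal form. The natural candidate is the least denominator exponent of the represented matrix over $\mathbb{Z}[\zeta_8]$ (the smallest $k$ with $\sqrt{2}^{\,k}\intf{D}$ having entries in $\mathbb{Z}[\zeta_8]$, taken modulo the Clifford+T scalar group since we work scalar-free). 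It is invariant under the sound rewrite rules, it satisfies $\mathrm{sde}(\intf{D})=\mathrm{sde}(\intf{D}^\dagger)$ because conjugate-transposition preserves $\sqrt{2}$-denominators, and for a normal form $WV_n\cdots V_1U$ it is determined by $n$, since applying an element of $\mathcal{V}$ raises the denominator by one half-power of two (compare the stabiliser computation in the proof of Theorem \ref{thm:not_identity}, where each $\mathcal{V}$-application increases the $\sqrt{2}$-exponent by exactly one).

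Granting the last point, $n=f(\mathrm{sde}(\intf{D}))=f(\mathrm{sde}(\intf{D^\dagger}))=m$, which is the claim. I expect the genuine work to lie in making the scalar-free version of $\mathrm{sde}$ well-defined and dagger-invariant, and in verifying that the correspondence $n\leftrightarrow\mathrm{sde}$ survives the edge cases (for instance $W$ trivial, or $U$ reducing to a pure Clifford), so that no collapse changes the count asymmetrically between $D$ and $D^\dagger$. If that route proves awkward, the fallback is a purely syntactic induction mirroring \cite{matsumoto_representation_2008}: expand $D^\dagger$ in the generators $R_X(\pi/2)$ and $T$ as a reversed, dagger-substituted word and track the normal-form length directly, which reduces the problem to the same invariance statement phrased combinatorially.
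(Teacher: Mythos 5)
There is a genuine gap, and it sits exactly where you located the ``hard part''. Your counting argument rests on the claim that for a normal form $WV_n\ldots V_1U$ the (scalar-free) least denominator exponent of $\intf{D}$ over $\mathbb{Z}[\zeta_8]$ ``is determined by $n$''. This is false: Clifford factors themselves carry powers of $\sqrt{2}$. For $n=0$, take $U=T$ (sde $0$) versus $U=TH$ (sde $1$); likewise the three elements of $\mathcal{W}$ do not all have the same sde, since $R_X(\pi/2)$ has matrix $\tfrac{1}{\sqrt{2}}\bigl(\begin{smallmatrix}1&-i\\-i&1\end{smallmatrix}\bigr)$ of sde $1$ while the identity has sde $0$. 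So sde is shifted by the Clifford dressing at both ends and at best confines $n$ to a window of width $\geq 2$; dagger-invariance of sde then does not yield $m=n$. The same ambiguity already appears in the stabilizer-vector version you point to: the stabilizer of $TC\ket{0}$ in the proof of Theorem \ref{thm:not_identity} can be $(0,0,\pm 1)$ (no $\sqrt{2}$ denominator) or $\tfrac{1}{\sqrt{2}}(\pm 1,\pm 1,0)$, depending on $C$, so the base case of your ``$+1$ per $\mathcal{V}$-application'' count is not uniform. A repair along your lines would need sde together with residue/parity data (essentially the mod-$2$ analysis of Theorem \ref{thm:not_identity}), which is a substantially larger undertaking than you budget for.

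Moreover, the premise that motivated the invariant --- that ``one cannot match blocks one-for-one'' --- is mistaken, and the paper's proof is precisely the block-by-block construction you dismissed. Using $U^\dagger = C\circ T$ for a Clifford $C$ and the identity $V^\dagger = R_X(\pi/2)\circ V\circ R_X(\pi/2)$ for $V\in\mathcal{V}$, the word $U^\dagger V_1^\dagger\cdots V_n^\dagger W^\dagger$ telescopes: the adjacent $R_X(\pi/2)$ pairs between consecutive blocks merge into $R_X(\pi)$ phase shifts, which Lemma \ref{lem:pi-commutation} pushes to the bottom while keeping each $\mathcal{V}$-block a $\mathcal{V}$-block; the $\pi/4$ phase from $U^\dagger$ combines with the leftover top $R_X(\pi/2)$ into one \emph{new} $\mathcal{V}$-block $V_0$; and the bottom-most $\mathcal{V}$-block absorbs the accumulated $\pi$-shifts, the residual $R_X(\pm\pi/2)$, and $W^\dagger$ into an element of $\mathcal{U}$ via Lemmas \ref{lem:U} and \ref{lem:CV}. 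One block enters at the top, one leaves at the bottom, so the count $n$ is preserved manifestly, with no global invariant and no appeal to the (yet unproven) uniqueness of the count. Your collapse scenario $T\cdot T=S$ never arises here because the $T$-phases in the dagger word are always separated by the conjugating $R_X$ factors.
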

\begin{proof}
 By the properties of the dagger functor, $D^\dagger = U^\dagger V_1^\dagger\ldots V_n^\dagger W^\dagger$.
 Now for any $U\in\mathcal{U}$, we can find $C\in\mathcal{C}_1$ such that:
 \begin{equation}
  \input{tikz_files/U_dagger.tikz},
 \end{equation}
 and for any $V\in\mathcal{V}$, we have:
 \begin{equation}
  \input{tikz_files/V_dagger.tikz}.
 \end{equation}
 Thus by Lemmas \ref{lem:CV} and \ref{lem:pi-commutation}:
 \begin{equation}\label{eq:adjoint}
  \input{tikz_files/WVVU_dagger.tikz}
  \;=\;
  \input{tikz_files/WVVU_dagger-1.tikz}
  =\;
  \input{tikz_files/WVVU_dagger-2.tikz}
  =\;
  \input{tikz_files/WVVU_dagger-3.tikz}
  =\;
  \input{tikz_files/WVVU_dagger-4.tikz}
  \;=\;
  \input{tikz_files/WVVU_dagger-5.tikz}
 \end{equation}
 for some $W'\in\mathcal{W}$, $V_0',\ldots,V_n',V_1'',\ldots,V_n''\in\mathcal{V}$,  $U'\in\mathcal{U}$ and $a,b\in\{0,1\}$.
 Note that $V_1'',\ldots,V_n''$ is just a relabelling of $V_{n-1}',\ldots,V_0'$.
\end{proof}

\begin{thm}\label{thm:Clifford+T_normal_form}
 The normal form for Clifford+$T$ diagrams given in \eqref{eq:UVVW} is unique.
\end{thm}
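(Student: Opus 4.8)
The plan is to follow the template of Matsumoto and Amano and combine the two preceding results. Suppose $N_1$ and $N_2$ are both normal form diagrams as in \eqref{eq:UVVW}, with $\mathcal{V}$-lengths $n_1$ and $n_2$ respectively, and suppose they represent the same operator up to scalar, i.e.\ $\intf{N_1}=\intf{N_2}$ in the scalar-free calculus. The goal is to show that $N_1$ and $N_2$ are the \emph{identical} diagram. First I would pass to the product $N_1 N_2^\dagger$, which represents a scalar multiple of the single-qubit identity. By Lemma \ref{lem:adjoint}, $N_2^\dagger$ is again a normal form diagram, and moreover its $\mathcal{V}$-length is still $n_2$; so $N_1 N_2^\dagger$ is a composite of two normal form diagrams that I can reduce to canonical form using the reduction procedure of Theorem \ref{thm:Clifford+T_diagrams}.

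Next I would argue that this reduction must terminate in a \emph{pure Clifford} operator rather than a proper normal form. Since $\intf{N_1 N_2^\dagger}$ equals the identity (scalars being ignored), and the rewrite rules are sound, the canonical form of $N_1 N_2^\dagger$ also represents the identity. But Theorem \ref{thm:not_identity} states that no proper normal form diagram is equal to the identity, so the reduction cannot land on a proper normal form; it must produce a pure Clifford operator. A pure Clifford operator equal to the identity up to scalar is, by the uniqueness of the single-qubit Clifford normal form (Lemma \ref{lem:single-qubit_Clifford_normal_form}), the trivial one. Hence $N_1 N_2^\dagger$ reduces to the identity Clifford.

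It then remains to deduce $N_1 \equiv N_2$ from the fact that $N_1 N_2^\dagger$ collapses to the trivial Clifford. I would do this by induction on the total $\mathcal{V}$-length, reading $N_1 N_2^\dagger$ as the word obtained by appending the syllables of $N_2^\dagger$ to $N_1$ and renormalising one generator at a time. The controlled-composition lemmas (Lemmas \ref{lem:CV} and \ref{lem:pi-commutation}) show that appending a single $\mathcal{V}$-syllable or a Clifford generator changes the $\mathcal{V}$-length by at most one, and that the length can only \emph{decrease} when an appended $\mathcal{V}$-syllable exactly cancels the current outermost syllable. Since the length must fall from $n_1$ all the way to $0$, this forces the syllables of $N_2^\dagger$ to be precisely the inverses of those of $N_1$, in order, together with matching $W$- and $U$-factors; unwinding the cancellations then identifies $N_1$ and $N_2$ term by term.

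The main obstacle is exactly this last bookkeeping step: making the syllable-cancellation analysis rigorous, so that the collapse of $N_1 N_2^\dagger$ to the identity genuinely pins down each $V_i$, the $W$-factor, and the $U$-factor uniquely, rather than merely showing the two diagrams agree up to some residual Clifford ambiguity. A cleaner way to organise this, which I would use to discharge the obstacle, is to extract from the stabilizer-parity computation in the proof of Theorem \ref{thm:not_identity} the observation that applying one $\mathcal{V}$-syllable raises the denominator exponent of the stabilizer of $N\ket{0}$ (written in lowest terms with denominator $\sqrt{2^m}$) by exactly one. This makes the $\mathcal{V}$-length $n$ an \emph{invariant of the operator}, immediately giving $n_1=n_2$, after which peeling off the outermost syllable and invoking the induction hypothesis completes the argument.
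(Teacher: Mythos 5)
Your overall skeleton is the same as the paper's: both pass to the composite of one normal form with the adjoint of the other, use Lemma \ref{lem:adjoint} to keep the adjoint in normal form, and aim to contradict Theorem \ref{thm:not_identity}. The gap is in the final paragraph, where you claim to discharge the bookkeeping obstacle. Your proposed invariant --- ``applying one $\mathcal{V}$-syllable raises the denominator exponent of the stabilizer of $N\ket{0}$ by exactly one, so the $\mathcal{V}$-length is an invariant of the operator'' --- is false as stated, because the exponent of the \emph{base} $U\ket{0}$ is not constant over $\mathcal{U}$. The proof of Theorem \ref{thm:not_identity} itself records this in \eqref{eq:TC_stabilizers}: $TC\ket{0}$ may be stabilized by $(0,0,\pm 1)$ (exponent $0$, e.g.\ $U=T$, since $T\ket{0}=\ket{0}$) or by $\frac{1}{\sqrt 2}(\pm 1,\pm 1,0)$ (exponent $1$), depending on whether $C$ preserves the $z$-axis. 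Since $W$ is Clifford and preserves the lowest-terms exponent, the exponent of $WV_n\ldots V_1U\ket{0}$ is $n+e_U$ with $e_U\in\{0,1\}$ diagram-dependent; your invariant therefore yields only $n_1+e_1=n_2+e_2$, i.e.\ $|n_1-n_2|\le 1$, not $n_1=n_2$. (The ``exactly $+1$ per syllable'' part is fine --- the parity analysis keeps $x_1$ odd, so the numerator is never divisible by $\sqrt 2$ --- but the offset $e_U$ breaks the inference. One could repair this by tracking a second input such as $N\ket{+}$, but that is extra work you have not done.)

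The second, deeper gap is the peeling step itself. Even granting $n_1=n_2$, your assertion that ``the length can only decrease when an appended $\mathcal{V}$-syllable exactly cancels the current outermost syllable'' is not a consequence of Lemmas \ref{lem:CV} and \ref{lem:pi-commutation}: those lemmas show how Clifford factors and $\pi$-type phases commute past syllables while \emph{preserving} length, but they say nothing about when a product like $V^\dagger C V'$ collapses to a Clifford, which is precisely what controls cancellation. Establishing that is the actual content of the paper's proof, which proceeds differently at this point: it takes a \emph{shortest} pair of equal-but-not-identical normal forms, so that the topmost nodes already differ, and then runs a six-way case analysis (on whether $W=W'$ and whether $n,m$ vanish) showing in each case that $D_1^\dagger\circ D_2$ rewrites to a \emph{non-trivial} normal form --- e.g.\ $U^\dagger U'=C^\dagger C'$ is a non-identity Clifford when the tops of $U,U'$ differ, and when $W\neq W'$ the product $W^\dagger W'$, computed in \eqref{eq:WdaggerW}, produces a residual $\pi/4$-type phase that seeds a genuine syllable --- after which Theorem \ref{thm:not_identity} gives the contradiction. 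Your proposal presupposes exactly these facts without proving them, so as written it does not close; the minimal-counterexample-plus-case-analysis is what fills the hole.
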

\begin{proof}
 Suppose there are two normal form diagrams which are equal but not identical.
 Pick a shortest pair of such diagrams, i.e.\ suppose the topmost nodes in the two diagrams have different colours or different phases (or both).
 If the topmost nodes are the same, remove them both and keep going like this until a stage is reached where the remaining topmost nodes are different.
 As the two diagrams are not identical, this must be possible.

 Call these two diagrams $D_1$ and $D_2$.
 As $D_1=D_2$ by assumption, and because any normal form diagram is unitary, it must be the case that $D_1^\dagger\circ D_2$ is equal to the identity.
 We show that under the given assumptions, $D_1^\dagger\circ D_2$ must be equal to some non-trivial normal form diagram.
 By Theorem \ref{thm:not_identity}, this normal form diagram cannot be equal to the identity, thus leading to a contradiction.
 From that we conclude that two normal form diagrams are equal if and only if they are identical.

 Suppose $D_1$ can be written in normal form as $WV_n\ldots V_1U$ and $D_2$ as $W'V_m'\ldots V_1'U'$. 
 The requirement that the topmost nodes of $D_1$ and $D_2$ be different can be satisfied in different ways.
 Where the conditions are not symmetric under interchange of $D_1$ and $D_2$, by Lemma \ref{lem:adjoint} it nevertheless suffices to consider just one of the two options.
 We hence distinguish the following cases:
 \begin{mitem}
  \item $W=W'=I$, $n=m=0$, and the topmost nodes of $U$ and $U'$ differ,
  \item $W=W'=I$, $n=0\neq m$, and the topmost nodes of $U$ and $V_m'$ differ,
  \item $W=W'=I$, $n,m\neq 0$, and $V_n\neq V_m'$,
  \item $W\neq W'$, $n=m=0$,
  \item $W\neq W'$, $n=0\neq m$, and
  \item $W\neq W'$, $n,m\neq 0$.
 \end{mitem}

 Firstly, if $W=W'=I$ and $n=m=0$, then $D_1=U$ and $D_2=U'$ with $U,U'\in\mathcal{U}$.
 Now any element of $\mathcal{U}$ can be expressed as $TC$, for some $C\in\mathcal{C}$.
 Thus $D_1=TC$ and $D_2=TC'$, and as $U\neq U'$ we must have $C\neq C'$.
 Therefore:
 \begin{equation}
  \input{tikz_files/U-U_dagger.tikz}.
 \end{equation}

 Secondly, if $W=W'=I$ and $n=0\neq m$, consider $U$ and $V_m'$.
 Note that $U=TC$ for some Clifford operator $C$, and $V_m'=TC'$ for some Clifford operator $C'$.
 Again, the requirement that the topmost nodes of $U$ and $V_m'$ be different means that $C\neq C'$.
 As in the first case, we thus find $U^\dagger V_m' = C''$ for some $C''$.
 Then by Lemmas \ref{lem:CV} and \ref{lem:pi-commutation}, $D_1^\dagger\circ D_2$ has a normal form $W''V_{m-1}''\ldots V_1''U''$.
 As $m>0$, this is non-trivial.

 The third case, $W=W'=I$, $n,m\neq 0$, and $V_n\neq V_m'$, can be reduced to a case where $W\neq W'$ by applying \phase{gn,label={[gphase]right:$-\pi/4$}} to both diagrams and using the spider rule.

 For $W\neq W'$, we have (after some rewriting):
 \begin{equation}\label{eq:WdaggerW}
  \input{tikz_files/W-W_dagger.tikz}
  \in\left\{
  \input{tikz_files/W-W_dagger-set.tikz}
  \right\}.
 \end{equation}
 Then if $n=m=0$:
 \begin{equation}
  \input{tikz_files/UW-WU_dagger.tikz},
 \end{equation}
 since:
 \begin{equation}
  \input{tikz_files/W-W_dagger-pi.tikz}
  \in\left\{
  \input{tikz_files/W-W_dagger-pi-set.tikz}
  \right\} = \left\{
  \input{tikz_files/W-W_dagger-pi-set-V.tikz}
  \right\}
 \end{equation}
 for some $\alpha,\beta,\gamma\in\{0,\pi/2,\pi,-\pi/2\}$, $a,c\in\{0,1\}$ and $V\in\mathcal{V}$.

 The argument for the case $W\neq W'$ and $n=0\neq m$ is very similar, noting that for any $c\in\{0,1\}$, $\gamma\in\{0,\pi/2,\pi,-\pi/2\}$, and $V\in\mathcal{V}$:
 \begin{equation}
  \input{tikz_files/V-pi-gamma.tikz}
 \end{equation}
 for some $V'\in\mathcal{V}$ and $a,b\in\{0,1\}$.
 Hence by Lemmas \ref{lem:CV} and \ref{lem:pi-commutation}, the diagram can be rewritten into normal form.

 Lastly, consider the case where $W\neq W'$ and $n,m\neq 0$.
 By Lemma \ref{lem:adjoint}, we can rewrite $D_1^\dagger$ to:
 \begin{equation}
  \input{tikz_files/W_dagger-pi-VVW.tikz}.
 \end{equation}
 Now:
 \begin{equation}
  \input{tikz_files/pi_2-pi-V.tikz}
 \end{equation}
 for some $\beta\in\{0,\pi/2,\pi,-\pi/2\}$ and $b\in\{0,1\}$.
 Thus the argument concludes in the same way as in the previous case.

 We have shown that for any pair of normal form diagrams $D_1$ and $D_2$, $D_1^\dagger\circ D_2$ has a non-trivial normal form unless the two diagrams are identical.
 Therefore, by Theorem \ref{thm:not_identity} and by unitarity of Clifford+T operators, two normal form diagrams are equal if and only if they are identical, i.e.\ the normal form is unique.
\end{proof}

The existence of a unique normal form means that any two diagrams representing the same operator can be rewritten into each other since all the rewrite rules are invertible.
Thus Theorem \ref{thm:Clifford+T_normal_form} immediately implies:

\begin{thm}
 The \ZX-calculus with the rewrite rules given in Section \ref{s:rewrite_rules} is complete for the scalar-free single-qubit Clifford+T group.
\end{thm}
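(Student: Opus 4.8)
The plan is to derive completeness directly from the three ingredients already in place: the existence of a normal form (Theorem~\ref{thm:Clifford+T_diagrams}), its uniqueness (Theorem~\ref{thm:Clifford+T_normal_form}), and the fact that every rewrite rule of the \ZX-calculus is invertible. The general pattern is the same one used for the stabilizer completeness results: if every diagram in the fragment can be rewritten to a canonical representative, and that representative is uniquely determined by the operator it denotes, then two diagrams denoting the same operator share a common normal form and hence can be rewritten into one another by running the normalisation of one forwards and the normalisation of the other backwards.

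Concretely, I would start with two line diagrams $D_1$ and $D_2$ in which all phases are integer multiples of $\pi/4$ and which satisfy $\intf{D_1} = \intf{D_2}$. By Theorem~\ref{thm:Clifford+T_diagrams}, each $D_i$ can be rewritten either to a pure single-qubit Clifford diagram or to a diagram of the normal form \eqref{eq:UVVW}. The pure Clifford case is already covered by the single-qubit Clifford completeness established via Lemma~\ref{lem:single-qubit_Clifford_normal_form}, so the substantive case is when at least one diagram is a proper Clifford+T operator. Here I would invoke Theorem~\ref{thm:Clifford+T_normal_form}: since the normal form \eqref{eq:UVVW} is unique, and since the \ZX-calculus is sound so that normalisation cannot change the denoted operator, the equal interpretations force $D_1$ and $D_2$ to have identical normal forms. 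Inverting the rewrite sequence that brought $D_2$ into normal form then transforms the common normal form back into $D_2$, and composing this with the forward normalisation of $D_1$ yields a derivation of $D_1 = D_2$ within the calculus.

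The genuine content of the argument lies entirely in the two normal-form theorems, which I am assuming. Existence (Theorem~\ref{thm:Clifford+T_diagrams}) is an inductive closure argument showing that appending \phase{rn,label={[rphase]right:$\pi/2$}} or \phase{gn,label={[gphase]right:$\pi/4$}} to a normal form yields a diagram rewritable back to normal form, while uniqueness (Theorem~\ref{thm:Clifford+T_normal_form}) reduces, via the adjoint computation of Lemma~\ref{lem:adjoint}, to the statement that no non-trivial normal form denotes the identity (Theorem~\ref{thm:not_identity}). Given these, the only work remaining for the present theorem is routine bookkeeping: one verifies that the dichotomy between pure Clifford diagrams and proper Clifford+T normal forms is exhaustive under the interpretation, and that the invertibility of every rewrite rule --- together with the scalar rule replacing the star rule in this scalar-free setting --- lets each normalisation step be reversed. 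I therefore expect no real obstacle at this stage; the hard part has already been surmounted in proving that the normal form both exists and is canonical, and the completeness statement follows as an immediate corollary.
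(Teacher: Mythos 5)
Your proposal is correct and takes essentially the same route as the paper, which states this theorem as an immediate consequence of normal-form existence (Theorem~\ref{thm:Clifford+T_diagrams}) and uniqueness (Theorem~\ref{thm:Clifford+T_normal_form}) together with the invertibility of all rewrite rules. Your additional care over the dichotomy between pure Clifford diagrams and proper normal forms, handled via Lemma~\ref{lem:single-qubit_Clifford_normal_form}, merely makes explicit bookkeeping the paper leaves implicit.
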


Hence any equality between single-qubit Clifford+T diagrams that holds up to a non-zero scalar factor can be derived entirely graphically.

\chapter{A complete graphical calculus for Spekkens' toy bit theory}
\label{s:spekkens}

In the previous chapters, we have shown that graphical languages can replace conventional formalisms for several fragments of quantum theory without loss of mathematical rigour.
We now show that similar graphical languages can also be used outside quantum theory.

Toy models are developed in quantum foundations to explore the differences between classical and quantum behaviour.
These are models whose description is entirely classical, but which nevertheless exhibit many properties and effects usually associated with quantum mechanics.

Spekkens' toy theory is one such toy model, which is described in terms of local hidden variables.
The toy bit theory -- the toy theory for the simplest non-trivial system -- is very similar to stabilizer quantum mechanics.
There are also some phenomena that appear in stabilizer quantum theory but are not replicated in the toy model, e.g.\ the violation of Bell inequalities.
Spekkens' toy theory is a $\psi$-epistemic theory by construction, i.e.\ a theory where the state that an observer assigns to a system, is not real: it is only an artefact of the restricted knowledge of the observer.
Quantum theory on the other hand is considered to be $\psi$-ontic, i.e.\ it is a theory where the states an observer assigns to a system are real \cite{pusey_reality_2012}.

Our work builds on the analysis of the toy theory using a stabilizer formalism \cite{pusey_stabilizer_2012}, as well as the categorical formulation of the toy theory \cite{coecke_spekkens_2011,coecke_phase_2011}.
Most of the original results in this chapter were originally published in \cite{backens_complete_2014}.

We first give an introduction to Spekkens' toy theory and its categorical formulation.
Next, we construct a graphical calculus for the toy theory and show that it is universal and sound for the maximal-knowledge fragment of the theory with post-selected measurements -- this corresponds to pure states and post-selected measurements in quantum theory.
Finally, we show that the graphical calculus for the toy theory is complete.

\section{Definition of Spekkens' toy bit theory}
\label{s:spekkens_definition}

Spekkens' toy theory was originally constructed using a \emph{knowledge balance principle} \cite{spekkens_evidence_2007} and has since been reformulated in terms of classical mechanics with restrictions on the knowledge an observer may have of the canonical variables \cite{spekkens_quasi-quantization_2014}.
We use the more recent definition, which has the added advantage of making the similarities between the toy bit theory and stabilizer quantum mechanics more obvious.

The basic ideas behind the toy theory are given in Section \ref{s:spek_basics}.
We then explain the valid states, transformations, and measurements of the theory in detail.
Section \ref{s:toy_theory_categorical} contains the category-theoretical formulation of the toy theory.

\subsection{Basic idea: the principle of classical complementarity}
\label{s:spek_basics}

A single toy bit is a system with four states, these are the \emph{ontic states} or states of reality.
An ontic state can be described by giving the values -- 0 or 1 -- for two variables $Q$ and $P$.

An observer or experimenter working with toy bits does not have direct access to the ontic states, instead they assign to a system an \emph{epistemic state}, a state of knowledge.
The observer can learn about the state of a system by measuring \emph{quadrature variables}, which are linear combinations of the variables $Q$ and $P$; for a single toy bit, these are $Q$, $P$, or $Q\oplus P$, where $\oplus$ denotes addition modulo 2.
As in quantum mechanics, the quadrature variables $Q$ and $P$ for the same toy bit are considered non-commuting: this is done by imposing a commutation relation $[-,-]$ satisfying:
\begin{equation}
 [Q,Q]=0=[P,P] \qquad \text{and} \qquad [Q,P] = 1 = [Q,P],
\end{equation}
which is furthermore linear, so that e.g.:
\begin{equation}
 [Q\oplus P,P] = 1.
\end{equation}
Multiple toy bits can be considered jointly, in which case the variables $Q$ and $P$ for separate subsystems are considered to commute, i.e.:
\begin{equation}
 [Q_i,P_j] = \delta_{ij},
\end{equation}
where the subscripts denote the subsystem to which the variable belongs and $\delta_{ij}$ is 1 if $i=j$ and 0 otherwise.

Now the knowledge an observer may have is determined by the \emph{principle of classical complementarity} \cite{spekkens_quasi-quantization_2014}:

\begin{quotation}
 \noindent The valid epistemic states are those where an agent knows the values of a set of commuting quadrature variables and is ignorant otherwise.
\end{quotation}

Knowing the values of some quadrature variables implies knowledge of the values of any other quadrature variables that arise as linear combinations of the original ones.
Furthermore, if the original variables commute pairwise, then the linear combinations also commute with the original variables and amongst themselves.
Therefore the set of quadrature variables whose values are known to an agent forms a group.

\subsection{Valid states}
\label{s:spek_states}

An epistemic state is fully specified by giving a set of quadrature variables generating the full group of known quadrature variables, and their values.
Epistemic states consist of a probability distribution over multiple ontic states that are compatible with the known quadrature variables.
As an observer has no information other than the quadrature variables, the probability distribution is always the uniform probability distribution with support on all the ontic states compatible with the values of the known quadrature variables.

In the following, the word ``state'' without any qualifiers will be taken to refer to epistemic states.

\emph{States of maximal knowledge} are those epistemic states where the observer knows the values of a maximal set of commuting quadrature variables, i.e.\ a set with which no other quadrature variable commutes.
These states correspond to pure states in quantum theory, and we only consider states of maximal knowledge in this thesis.

Toy theory states for small systems can be visualised as follows: draw the phase space of a single toy bit as four cells arranged in a square, see Figure \ref{fig:toy_bit_state_space} a.
An epistemic state can then be represented by colouring in the boxes corresponding to the allowed ontic states.

\begin{figure}
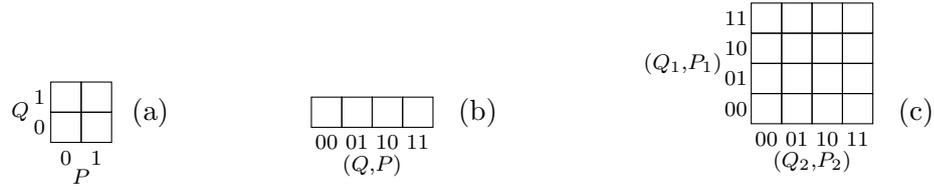

 \centering
 \input{tikz_files/spekkens_state_square.tikz} (a) $\qquad\qquad$
 \input{tikz_files/spekkens_state_line.tikz} (b) $\qquad\qquad$
 \input{tikz_files/spekkens_2state.tikz} (c)
 \caption{(a) \& (b) Visualisations of the state space of a single toy bit. (c) Visualisation of the joint state space of two toy bits. Specific states can be represented by colouring in cells in the diagram.}
 \label{fig:toy_bit_state_space}
\end{figure}

\begin{ex}
 The valid epistemic states of a single toy bit are as follows:
 \begin{equation}\label{eq:single_toy_bit_states}
  \SpekOneSquare{\fill}{\fill}{\draw}{\draw} \, , \quad
  \SpekOneSquare{\draw}{\draw}{\fill}{\fill} \, , \quad
  \SpekOneSquare{\draw}{\fill}{\fill}{\draw} \, , \quad
  \SpekOneSquare{\fill}{\draw}{\draw}{\fill} \, , \quad
  \SpekOneSquare{\fill}{\draw}{\fill}{\draw} \, , \quad
  \SpekOneSquare{\draw}{\fill}{\draw}{\fill} \, , \quad \text{and} \quad
  \SpekOneSquare{\fill}{\fill}{\fill}{\fill} \, .
 \end{equation}
 The first six are the states of maximal knowledge, the last one is a state of less-than-maximal knowledge.
\end{ex}

The four cells denoting the phase space of a single toy bit can also be arranged in a line as in Figure \ref{fig:toy_bit_state_space} b.
This allows the joint state of two toy bits to be visualised on a 4 by 4 grid, cf.\ Figure \ref{fig:toy_bit_state_space} c. 

Joint states of multiple toy bits are \emph{product states} if there exists a generating set for the group of known quadrature observables such that each generator only acts on one subsystem.
States for which there is no such generating set are \emph{correlated}; these correspond to entangled states in quantum theory.

\begin{ex}\label{ex:toy_product_state}
 The epistemic state of two toy bits denoted by:
 \begin{equation}
  \SpekTwoSquare{\Spekfill{-2}{-2} \Spekfill{-2}{0} \Spekfill{-1}{-2} \Spekfill{-1}{0}}
 \end{equation}
 is a product state as it can be expressed in terms of separate conditions on the two toy bits: for example, this state is fully determined by the conditions $P_1=0$ and $Q_2=0$.
 The full group of known quadrature variables for this state is $\{I, P_1, Q_2, P_1\oplus Q_2\}$.
\end{ex}

\begin{ex}\label{ex:toy_correlated_state}
 Consider the epistemic state of two toy bits defined by $Q_1\oplus Q_2=0$ and $P_1\oplus P_2=0$.
 The group of known quadrature variables for this state is:
 \begin{equation}
  \{ I, Q_1\oplus Q_2, P_1\oplus P_2, Q_1\oplus Q_2\oplus P_1\oplus P_2 \}.
 \end{equation}
 This cannot be expressed in terms of separate conditions for the two subsystems, so the state is correlated.
 The correlation is also obvious in the visualisation:
 \begin{equation}
  \SpekTwoSquare{\Spekfill{-2}{-2} \Spekfill{-1}{-1} \Spekfill{0}{0} \Spekfill{1}{1}}
 \end{equation}
\end{ex}

A maximal set of commuting quadrature variables on $n$ toy bits has size $2^n$ and can be specified by giving $n$ independent generators for the corresponding group \cite{spekkens_quasi-quantization_2014}.
Thus a state of maximal knowledge on $n$ toy bits can be specified by giving $n$ commuting quadrature variables, together with their values.

\subsection{Reversible transformations}
\label{s:spek_transformations}

The reversible transformations in the toy theory have to satisfy two conditions:
\begin{itemize}
 \item They arise from reversible transformations of the ontic states, and
 \item they map valid epistemic states to valid epistemic states.
\end{itemize}
Reversible transformations can be represented in the visualisation introduced in the previous section in different ways.
One option is to use arrows to show the transformation of the ontic states.

\begin{ex}
 The transformation:
 \begin{center}
  \begin{tikzpicture}[baseline=0.7,fill=teal,every path/.style={draw}]
  \draw (0,0) rectangle +(.8,.8);
  \draw (.8,0) rectangle +(.8,.8);
  \draw (0,.8) rectangle +(.8,.8);
  \draw (.8,.8) rectangle +(.8,.8);
  \draw[thick,<->] (1.2,.4) -- (.4,1.2);
\end{tikzpicture}

 \end{center}
 swaps the ontic states $(0,1)$ and $(1,0)$ and keeps the other two ontic states invariant.
 The effect on the epistemic states follows from the effect on the underlying ontic states.
 E.g., the following state is mapped to itself:
 \begin{equation}
  \SpekOneSquare{\draw}{\fill}{\fill}{\draw} \; \mapsto \; \SpekOneSquare{\draw}{\fill}{\fill}{\draw} \; ,
 \end{equation}
 whereas this epistemic state changes:
 \begin{equation}
  \SpekOneSquare{\fill}{\fill}{\draw}{\draw} \; \mapsto \; \SpekOneSquare{\fill}{\draw}{\fill}{\draw} \; .
 \end{equation}
\end{ex}

The set of valid reversible transformations of a single toy bit consists of all 24 permutations of the four ontic states.
For more complicated systems, not all permutations of the ontic states yield valid transformations of the epistemic states.

A reversible transformation is called \emph{local} if it is a product of permutations of the states of each individual system.

\subsection{Valid measurements}
\label{s:spek_measurements}

Any quadrature variable makes a valid measurement observable in the toy theory.
A commuting set of quadrature variables can be measured at the same time.
By the principle of classical complementarity, after the measurement of any set of quadrature variables, the values of any quadrature variables for which there exists a measured variable with which they do not commute are completely unknown.
Thus measurements in the toy theory potentially change the ontic state of the system.

\begin{ex}
 Assume an agent, who knows that a toy bit is in the state:
 \begin{center}
  \SpekOneSquare{\fill}{\draw}{\draw}{\fill}
 \end{center}
 measures the observable $P$ and gets outcome 1.
 This means the system is left in the state:
 \begin{center}
  \SpekOneSquare{\draw}{\fill}{\draw}{\fill} .
 \end{center}
 It is possible to retrodict that the system must have been in the ontic state:
 \begin{center}
  \SpekOneSquare{\draw}{\draw}{\draw}{\fill}
 \end{center}
 originally for this measurement outcome to occur, but the measurement scrambles the ontic states so that the system may no longer be in that state afterwards.
\end{ex}

Any decomposition of the phase space into a set of valid epistemic states corresponds to a valid measurement.
Correspondingly, any valid epistemic state is a possible outcome of some valid measurement.
Measurements on multiple systems can be correlated or not, like states.

\subsection{The categorical formulation of the toy theory}
\label{s:toy_theory_categorical}

Number the ontic states of the toy theory 1 through 4 in the order in which they appear in Figure \ref{fig:toy_bit_state_space} b.
Epistemic states can then be denoted by sets.

\begin{ex}
 The first state in \eqref{eq:single_toy_bit_states}, $Q=0$, corresponds to the set $\{1,2\}$.
 The product state of two toy bits from Example \ref{ex:toy_product_state} is represented by the set:
 \begin{equation}
  \{(1,1),(1,2),(3,1),(3,2)\}.
 \end{equation}
 Alternatively, the same state can be written as:
 \begin{equation}
  \{1,3\}\times\{1,2\},
 \end{equation}
 emphasising the fact that it is a product state.
\end{ex}

Rather than considering states of $n$ toy bits to be sets, they can also be seen as relations from the one-element set $I=\{\bullet\}$ into $IV^n$, the $n$-fold Cartesian product of the set $IV=\{1,2,3,4\}$.

\begin{ex}
 The state $Q=0$ corresponds to the relation:
 \begin{equation}
  \bullet\sim\{1,2\}.
 \end{equation}
\end{ex}

Post-selected measurements on $n$ toy bits can be seen as relations from $IV^n$ to $I$, e.g.\ the single-toy bit measurement of the $P$ variable with outcome 1 corresponds to:
\begin{equation}
 \{2,4\}\sim\bullet.
\end{equation}
Reversible transformations can also be considered as relations.
This perspective puts state preparation and post-selected measurements on equal footing with reversible transformations and allows any process on toy bits to be considered as a relation.

The allowed processes in the maximal-knowledge fragment of the toy theory with post-selected measurements form a dagger compact closed category called $\Spek$, which is a subcategory of $\FRel$ \cite{coecke_spekkens_2011} (cf.\ also Section \ref{s:category_theory}).

\begin{dfn}\label{dfn:Spek}
 The category $\Spek$ is composed of the following: The objects of $\Spek$ are the one-element set $I=\{\bullet\}$, the four-element set $IV=\{1,2,3,4\}$, and its $n$-fold Cartesian products with itself, denoted $IV^n$.
 The arrows of $\Spek$ are generated by parallel composition, sequential composition, and dagger, from the following basic relations \cite{edwards_non-locality_2009}:
\begin{itemize}
 \item the 24 permutations $IV\to IV$,
 \item the map $\delta:IV\to IV^2$ defined as:
  \begin{equation}
   \delta = \begin{cases} 1\sim\{(1,1),(2,2)\} \\ 2\sim\{(1,2),(2,1)\} \\ 3\sim\{(3,3),(4,4)\} \\ 4\sim\{(3,4),(4,3)\}, \text{ and} \end{cases}
  \end{equation}
 \item the measurement effect $\epsilon:IV\to I$ defined as:
  \begin{equation}
   \epsilon = \{1,3\}\sim\bullet,
  \end{equation}
\end{itemize}
\end{dfn}

Parallel composition, sequential composition, and dagger are defined as in $\FRel$, see Section \ref{s:category_theory}.

The scalars in this formulation of the toy theory are the relations from $I\to I$.
There are just two of those: the identity scalar $\{(\bullet,\bullet)\}$, and the empty relation $\emptyset$.
This means that the categorical formulation of the toy theory is possibilistic, i.e.\ it does not allow the computation of probabilities but simply shows whether an outcome is possible (the identity scalar) or not (the empty relation).

\section{A graphical calculus for the toy theory}
\label{s:spekkens_graphical}

We have introduced the toy bit theory in the previous section and given the corresponding category.
Building up on that work, we now construct a graphical calculus for the toy theory, which is closely analogous to the scalar-free stabilizer \ZX-calculus.
In particular, we use the same notation in terms of red and green spiders for the toy theory.
It should always be clear from context whether a specific diagram is part of the \ZX-calculus or the toy theory graphical calculus.

We first define basic elements of the graphical notation and show how to combine them into more complicated diagrams.
Next, we give rewrite rules for those diagrams.
In Sections \ref{s:spekkens_universal} and \ref{s:spekkens_sound}, we argue that the calculus is universal and sound for Spekkens' toy bit theory.
Finally, we compare the graphical calculus for the toy theory to the \ZX-calculus.

\subsection{Components and their interpretations}
\label{s:spekkens_components}

Like the \ZX-calculus, the graphical calculus for the toy theory is read from bottom to top and most maps are denoted by circular nodes, which may have labels attached.
As before, we use $\intf{D}$ to denote the process corresponding to a diagram $D$.
 
Define \splitnode{} to be the following map from one toy bit to two toy bits:
\begin{equation}
 \intf{\splitnode{}} := \begin{cases}
    1 \sim \{(1,1),(2,2)\} \\
    2 \sim \{(1,2),(2,1)\} \\
    3 \sim \{(3,3),(4,4)\} \\
    4 \sim \{(3,4),(4,3)\} .
   \end{cases}
\end{equation}
This is a valid process in the toy theory; it can be considered to consist of the preparation of an ancilla in some fixed state followed by some joint reversible operation on the original toy bit and the ancilla.

Let \joinnode{} be the relational converse of \splitnode{}:
\begin{equation}
 \intf{\joinnode{}} := \intf{\splitnode{}}^\dagger.
\end{equation}
This is also a valid process in the toy theory, which can be thought of as a reversible operation on two toy bits, followed by a post-selected measurement of one of them.

More complicated diagrams in the toy theory graphical calculus can be built by putting smaller diagrams side-by-side, which corresponds to taking the Cartesian product of the corresponding relations; i.e.\ if:
\begin{center}
 \gendiagram{$D$} $\quad$ and $\quad$ \gendiagram{$D'$}
\end{center}
denote two arbitrary diagrams, then:
\begin{equation}
 \intf{ \gendiagram{$D$} \; \gendiagram{$D'$} } = \intf{ \gendiagram{$D$} } \times \intf{ \gendiagram{$D'$} }.
\end{equation}
Connecting the inputs of some diagram to the outputs of another corresponds to the operation of relational composition.
Graphically, assuming the number of outputs of $D$ is equal to the number of inputs of $D'$:
\begin{equation}
 \intf{ \input{tikz_files/composite_diagram.tikz} } = \intf{ \gendiagram{$D'$} } \circ \intf{ \gendiagram{$D$} }.
\end{equation}

Motivated by the \ZX-calculus, we introduce \emph{spiders} as a short-hand notation for specific diagrams built from \splitnode{} and \joinnode{}: a green node with $n$ inputs and $m$ outputs for positive integers $n,m$ is defined as follows:
\begin{equation}\label{eq:spider}
 \intf{ \input{tikz_files/spider_single.tikz} }  :=
 \intf{ \input{tikz_files/spider_no_phase.tikz} }.
\end{equation}

Represent the following four single-toy bit states by green nodes with phase labels:
\begin{align}
  \intf{\state{gn,label={[gphase]right:$00$}}} &:= \bullet\sim\{1,3\}, \label{eq:state00} \\
  \intf{\state{gn,label={[gphase]right:$01$}}} &:= \bullet\sim\{1,4\}, \\
  \intf{\state{gn,label={[gphase]right:$10$}}} &:= \bullet\sim\{2,3\}, \quad\text{and} \\
  \intf{\state{gn,label={[gphase]right:$11$}}} &:= \bullet\sim\{2,4\}, \label{eq:state11}
\end{align}
and let $\state{gn}$ be short-hand for \state{gn,label={[gphase]right:$00$}}.
These two alternative notations make later definitions consistent.
Spiders can now be given phase labels via the following definition:
\begin{equation}\label{eq:phased_spider}
 \intf{\input{tikz_files/spider_def_Spek_LHS.tikz}} \; := \; \intf{\input{tikz_files/spider_def_Spek_RHS.tikz}}
\end{equation}
where $x,y\in\{0,1\}$.
Furthermore, spiders without inputs can be defined by composing \state{gn} and a spider with one input.
Let \effect{gn} be the converse of \state{gn}, seen as a relation:
\begin{equation}
 \intf{\effect{gn}} := \begin{cases} 1\sim\bullet \\ 3\sim\bullet. \end{cases}
\end{equation}
Then arbitrary spiders with no outputs can be defined as composites of a one-output spider and \effect{gn}.
In this way, definitions \eqref{eq:spider} and \eqref{eq:phased_spider} can be extended to arbitrary non-negative numbers of inputs and outputs $n$ and $m$.

Let \HadSpek{} be the following reversible single-toy bit operation:
\begin{equation}
  \intf{\HadSpek} := \begin{cases}
    1 \sim 1 \\
    2 \sim 3 \\
    3 \sim 2 \\
    4 \sim 4.
   \end{cases}
 \end{equation}
As a final short-hand, define red spiders as green spiders with copies of \HadSpek{} on all inputs and outputs:
 \begin{equation}
  \left\llbracket \input{tikz_files/spek_red_spider.tikz} \right\rrbracket := \left\llbracket \input{tikz_files/spek_colour_rhs.tikz} \right\rrbracket .
 \end{equation}

A single straight wire corresponds to the identity relation and a wire crossing is the obvious \SWAP{} relation interchanging the states of the two subsystems.
A ``cup'' is interpreted as follows:
\begin{equation}
 \intf{} := \bullet\sim\{(1,1),(2,2),(3,3),(4,4)\},
\end{equation}
and the cap is its converse.

As \splitnode{}, \joinnode{}, \state{gn, label={[gphase]right:$xy$}}, and \effect{gn} are all special cases of green spiders, the graphical calculus can be considered to consist of green phased spiders with $n$ inputs and $m$ outputs, where $n$ and $m$ are now non-negative integers; red phased spiders with arbitrary numbers of inputs and outputs; and \HadSpek{}.

In the following, we re-use most of the \ZX-calculus terminology introduced in Section \ref{s:ZX_terminology}, with the term ``zero diagram'' now referring to diagrams representing the empty relation.

\subsection{Rewrite rules}
\label{s:spekkens_rules}

We postulate the following rewrite rules for the toy theory graphical calculus.
Any rule given here can also be used with the colours red and green swapped.
Rules can furthermore be used upside-down.
In the following, $n,m,k,l$ are non-negative integers, $a,b,c,d\in\{0,1\}$, and addition is modulo 2:
 \begin{itemize}
  \item the spider rule:
   \begin{equation}
    \input{tikz_files/spek_spider.tikz},
   \end{equation}
  \item the loop rule:
   \begin{equation}
    \input{tikz_files/spek_loop.tikz},
   \end{equation}
  \item the cup rule:
   \begin{equation}
    \input{tikz_files/cup.tikz},
   \end{equation}
  \item the bialgebra rule:
   \begin{equation}
    \input{tikz_files/bialgebra_no_scalars.tikz},
   \end{equation}
  \item the copy rule:
   \begin{equation}
    \input{tikz_files/copy_no_scalars.tikz}\;,
   \end{equation}
  \item the 11-copy rule:
   \begin{equation}
    \input{tikz_files/spek_11-copy.tikz},
   \end{equation}
  \item the 11-commutation rule:
   \begin{equation}
    \input{tikz_files/spek_11-commutation.tikz},
   \end{equation}
  \item the colour change rule:
   \begin{equation}
    \input{tikz_files/spek_red_spider.tikz} \; = \; \input{tikz_files/spek_colour_rhs.tikz},
   \end{equation}
  \item the ``Euler decomposition'' rule:
   \begin{equation}
    \HadSpek \; = \; \input{tikz_files/spek_Euler_dec.tikz},
   \end{equation}
  \item the scalar rule:
   \begin{equation}
    \innerprod{gn,label={[gphase]right:$ab$}}{rn,label={[rphase]right:$cd$}} = \begin{cases}  \scalar{gn,label={[gphase]right:$11$}} &\text{if } a=d\neq b=c, \text{ and} \\ \quad &\text{otherwise}, \end{cases}
   \end{equation}
  \item and the zero rule:
   \begin{equation}
    \input{tikz_files/spek_zero.tikz}\;.
   \end{equation}
 \end{itemize}

As in the \ZX-calculus, whenever a rule holds for any number of edges, that number may be zero.
Furthermore, there is a meta rule: ``only the topology matters'', i.e.\ two diagrams represent the same process whenever they contain the same set of nodes connected up in the same ways, no matter how those nodes are arranged on the plane.

\subsection{Universality}
\label{s:spekkens_universal}

The graphical calculus for the toy theory as defined in the previous two subsections is universal for the maximal knowledge fragment of Spekkens' toy bit theory with post-selected measurements.
This follows from the category-theoretical formulation of the toy theory by Coecke et al.\ \cite{coecke_spekkens_2011} (cf.\ Section \ref{s:toy_theory_categorical}), where it is shown that all processes in the toy theory arise -- via parallel and sequential composition, and taking the relational converse -- from the 24 reversible transformations of a single toy bit together with a map $\delta$ from one toy bit to two toy bits, and a post-selected measurement outcome $\epsilon$.
It is straightforward to see that \HadSpek{} and the phase shifts suffice to construct all 24 reversible single-toy bit transformations, which correspond to the 24 permutations of the ontic states.
The maps $\delta$ and $\epsilon$ from Definition \ref{dfn:Spek} are exactly the maps denoted by \splitnode{} and \effect{gn}.
The graphical calculus allows parallel and sequential composition, as well as the taking of relational converses, which corresponds to flipping diagrams upside-down.
Therefore any process in the maximal knowledge fragment of Spekkens' toy bit theory with post-selected measurements can be represented graphically.

\subsection{Soundness}
\label{s:spekkens_sound}

Most of the rewrite rules of the toy theory graphical calculus can straightforwardly be checked to be sound by translating the diagrams on both sides of the equality into the corresponding maps and possibly using induction over the number of inputs and/or outputs, cf.\ the \ZX-calculus soundness argument in Section \ref{s:ZX_sound}.

For soundness of the spider rule, we again rely on results from the categorical formulation of Spekkens' toy bit theory.
As shown in \cite{coecke_phase_2011}, the maps \splitnode{} and \effect{gn} form a category-theoretical \emph{observable}.
This means that any connected diagram constructed from these maps, their converses, wire crossings, and curved wires is determined solely by its number of inputs and outputs.
Graphically, this corresponds exactly to the spider law without phase labels \cite{coecke_povms_2008}.
The states \state{gn, label={[gphase]right:$xy$}} form a \emph{phase group} for this observable \cite{coecke_phase_2011}.
In particular, they form a group under the operation given by composition with \joinnode{}:
\begin{equation}
 \input{tikz_files/phase_group_operation.tikz},
\end{equation}
with group identity \state{gn} and all group elements being self-inverse.
From this, it follows that the spider law with phase labels is also sound.
Equivalently, the phase group can be considered to consist of green phase shifts under sequential composition, with \phase{gn} as the group identity (cf.\ Section \ref{s:stabilizer_ZX}).
The phase group is isomorphic to $\ZZ_2\times\ZZ_2$ \cite{coecke_phase_2011}.

Soundness of the topology meta-rule also follows from the category-theoretical formulation of the toy theory: The toy theory is modelled as a dagger compact closed category, therefore the results given in Section \ref{s:graphical_languages_algebraic} apply.

\subsection{The toy theory graphical calculus and the \ZX-calculus}
\label{s:spekkens_ZX}

Category-theoretically, the only difference between the toy theory and scalar-free stabilizer quantum theory is the phase group of the respective observables: for the toy theory the phase group is isomorphic to the Klein Four group $\ZZ_2\times\ZZ_2$, whereas for stabilizer quantum theory the phase group is isomorphic to the cyclic group of order four, $\ZZ_4$ \cite{coecke_phase_2011}.

Correspondingly, the rewrite rules of the toy theory graphical calculus that do not involve specific phases are exactly the same as those of the \ZX-calculus if the phase groups are swapped out.
The phase shift \phase{gn,label={[gphase]right:$11$}} takes the place of the $\pi$-phase shift in the \ZX-calculus in that it is copied by spiders of the other colour, interacts interestingly with phase shifts of the other colour, and yields the zero scalar when sandwiched between \state{gn} and \effect{gn}.

The \ZX-calculus analogue to the $11$-commutation rule is the scalar-free $\pi$-commutation rule:
\begin{equation}
 \input{tikz_files/pi-comm2.0.tikz},
\end{equation}
where $\alpha\in\{-\pi/2,0,\pi/2,\pi\}$.
At first glance this looks different to the $11$-commutation rule: the $\pi$-commutation rule sends any phase shift to its inverse whereas the $11$-commutation rule swaps the two bits denoting the phase.
In fact, both commutation rules can be expressed in the same way nevertheless.
Let $\varphi$ denote $11$ or $\pi$ and let $\theta$ be an arbitrary phase label for the respective theory.
We can write the two commutation rules in general form as:
\begin{equation}
 \input{tikz_files/spek_commutation_general.tikz},
\end{equation}
where $f$ is some map from the phase group to itself.
Then in both the \ZX-calculus for stabilizer quantum mechanics and in the graphical calculus for Spekkens' toy bit theory, the map $f$ can be characterised as follows: $f$ maps both $\varphi$ and the identity of the phase group back to themselves, but it swaps the remaining two elements of the phase group.

\section{Completeness of the toy theory graphical calculus}
\label{s:spekkens_completeness}

We now show that the toy theory graphical calculus is complete by adapting the completeness proof for the scalar-free stabilizer \ZX-calculus given in Chapter \ref{ch:completeness}.
There are several parts to the argument: First, we show that the results characterising all equalities between stabilizer states from \cite{van_den_nest_graphical_2004}, which are central to the \ZX-calculus completeness proof, also hold in Spekkens' toy theory.
We then argue that it is sufficient to consider equalities between toy states rather than more general processes in the toy theory because the toy theory has map-state duality.
Next, we prove that diagrams in the toy theory graphical calculus can be brought into a normal form called GS-LO form.
Finally, we show that the rewriting strategies used in the \ZX-calculus completeness proof also work in the toy theory graphical calculus.

Where the steps in the completeness argument for the toy theory differ only marginally from the corresponding steps in the stabilizer \ZX-calculus completeness proof, the proofs are left out or given in sketch form.

\subsection{A binary formalism and graph state theorems for the toy theory}
\label{s:binary}

Completeness of a graphical language means that any equality that can be derived in the standard formalism for the same theory can also be derived graphically.
Thus it is useful to have some simple way of characterising the equalities that can be derived in the underlying theory.

The completeness proof for the stabilizer \ZX-calculus makes use of two theorems about relationships between stabilizer states under local Clifford unitaries, i.e.\ unitary stabilizer operations that are tensor products of single-qubit unitaries.
Those theorems, proved by Van den Nest et al.\ \cite{van_den_nest_graphical_2004} are given here as Theorems \ref{thm:stabilizer_graph_state} and \ref{thm:graph_states_LC}.
We now show that these results translate to the toy theory.

As described in Section \ref{s:spekkens_definition}, a state of maximal knowledge on $n$ toy bits is given by a set of $n$ commuting quadrature variables, together with the values for each of the variables.
These quadrature variables can be represented as binary vectors, similar to the representation of Pauli products, where the $m$-th and $(m+n)$-th component together encode the quadrature variable acting on the $m$-th toy bit according to the following encoding:
\begin{align}
 Q &\mapsto 01, \\
 P &\mapsto 10, \text{ and} \\
 Q\oplus P &\mapsto 11,
\end{align}
with 00 indicating that no quadrature variable is acting on the given toy bit.
Thus, ignoring the values of the quadrature variables, any state of maximal knowledge can be described by a binary $2n$ by $n$ matrix in the same way as a pure quantum state, cf. Section \ref{s:binary_stabilizer_formalism}.

\begin{ex}
 The toy state from Example \ref{ex:toy_product_state}, $P_1=0\wedge Q_2=0$, corresponds to the check matrix:
 \begin{equation}
  \left( \begin{array}{cc} 1 & 0 \\ 0 & 0 \\ \hline 0 & 0 \\ 0 & 1 \end{array} \right),
 \end{equation}
 where the first column represents $P_1$ and the second column $Q_2$.
 Similarly, the state from Example \ref{ex:toy_correlated_state}, $Q_1\oplus Q_2=0\wedge P_1\oplus P_2=0$, can be represented by the check matrix:
 \begin{equation}
  \left( \begin{array}{cc} 0 & 1 \\ 0 & 1 \\ \hline 1 & 0 \\ 1 & 0 \end{array} \right).
 \end{equation}
 The values of the quadrature variables are not represented in the check matrix.
\end{ex}

In fact, the binary $2n$ by $n$ matrices representing valid epistemic states of the toy theory are exactly the same as the ones representing valid stabilizer states.

\begin{lem}
 A binary $2n$ by $n$ matrix $S$ represents a valid state in the toy theory if and only if $S^T J S=0$, where:
 \begin{equation}
  J = \begin{pmatrix}0&I\\I&0\end{pmatrix}
 \end{equation}
 with $I$ the $n$ by $n$ identity matrix.
 The valid reversible transformations of the toy theory are represented in the binary picture by $2n$ by $2n$ binary matrices $Q$ satisfying $Q^T J Q = J$.
\end{lem}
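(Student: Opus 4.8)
The plan is to transfer the binary check-matrix formalism of Section~\ref{s:binary_stabilizer_formalism} wholesale from stabilizer quantum mechanics to the toy theory, exploiting the fact that toy-theory quadrature variables carry exactly the same $\ZZ_2$-linear and commutation structure as (phase-free) Pauli products. Once this identification is in place, both assertions follow from the corresponding stabilizer results of Van den Nest et al.\ (Lemma~\ref{lem:check_matrix} and the symplectic characterisation of Clifford operations cited just after it). First I would fix the dictionary: a quadrature variable on $n$ toy bits is a $\ZZ_2$-linear combination of the $Q_m$ and $P_m$, and writing it as a vector $(z\mid x)\in\ZZ_2^{2n}$ via the given encoding ($Q\mapsto 01$, $P\mapsto 10$, $Q\oplus P\mapsto 11$) turns bitwise addition of vectors into addition of quadrature variables modulo $2$. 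This makes the quadrature variables a $\ZZ_2$-vector space isomorphic to $\ZZ_2^{2n}$, exactly as for Pauli products.

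The crux is that the commutation relation is computed by the symplectic form. Using $[Q_i,P_j]=\delta_{ij}$ and $[Q_i,Q_j]=[P_i,P_j]=0$ together with bilinearity, a one-line computation on the $2\times 2$ blocks shows that for variables with vectors $v=(z\mid x)$ and $w=(z'\mid x')$,
\begin{equation}
 [v,w] \;=\; z\cdot x' + x\cdot z' \;=\; v^{T} J w \pmod 2 .
\end{equation}
The first assertion is then immediate. A state of maximal knowledge is specified by $n$ independent, pairwise-commuting quadrature variables; collecting their vectors as the columns of $S$, pairwise commutativity is precisely the vanishing of the off-diagonal entries of $S^{T}JS$, while the diagonal entries vanish automatically because $v^{T}Jv=0$ over $\ZZ_2$. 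Hence a valid state forces $S^{T}JS=0$. Conversely, a self-orthogonal $S$ of rank $n$ encodes $n$ independent commuting generators, which generate a commuting group of size $2^{n}$; by the principle of classical complementarity and the counting of maximal commuting sets, this set is maximal and so defines a valid state of maximal knowledge. Since \emph{commuting} now denotes the same relation in both theories, this is exactly the content of Lemma~\ref{lem:check_matrix} carried across the dictionary.

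For the transformations I would argue that a valid reversible transformation acts linearly on the binary data and that validity is equivalent to preservation of the symplectic form. A reversible transformation permutes ontic states; those that map valid epistemic states to valid epistemic states act (affine-)linearly on $(Q,P)$, with linear part represented by an invertible $2n\times 2n$ matrix $Q$ over $\ZZ_2$. Preserving commuting sets means
\begin{equation}
 v^{T}Jw \;=\; (Qv)^{T}J(Qw) \;=\; v^{T}(Q^{T}JQ)w
\end{equation}
for all $v,w$, which forces $Q^{T}JQ=J$; conversely any symplectic $Q$ preserves commutation and is invertible, hence realises a valid reversible transformation. This mirrors the stabilizer statement that a binary matrix corresponds to a Clifford operation if and only if it preserves the symplectic inner product \cite{van_den_nest_graphical_2004}.

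I expect the transformation direction to be the main obstacle, specifically showing that the valid reversible transformations are \emph{exactly} the symplectic matrices. One must verify that every physically valid transformation descends to a genuinely linear action on the quadrature variables once the values (the translations, analogous to Pauli phases in the stabilizer setting) are quotiented out, rather than remaining a merely set-theoretic permutation of ontic states; and, in the converse direction, that every symplectic $Q$ is actually realised by a permutation of ontic states obtained from a composition of the generating toy-theory operations. The state half, by contrast, is essentially immediate once the commutator-equals-symplectic-form identity has been established.
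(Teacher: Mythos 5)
Your proposal is sound on the mathematics, but you should know that the paper does not actually prove this lemma: it is stated and then dispatched with the single remark that it ``follows from the principle of classical complementarity, as shown in \cite{spekkens_quasi-quantization_2014}'' --- the entire content is delegated to Spekkens' quasi-quantization paper. The surrounding text does exactly your dictionary-building (the encoding $Q\mapsto 01$, $P\mapsto 10$, $Q\oplus P\mapsto 11$, and the observation that the resulting conditions coincide verbatim with the stabilizer check-matrix formalism of Section~\ref{s:binary_stabilizer_formalism}), so your identity $[v,w]=v^{T}Jw \pmod 2$ and the ensuing state argument --- off-diagonal entries of $S^{T}JS$ encode pairwise commutation, diagonal entries vanish since $v^{T}Jv=2(z\cdot x)\equiv 0$ over $\ZZ_2$, and conversely a rank-$n$ isotropic $S$ spans a maximal isotropic subspace, hence by classical complementarity a valid maximal-knowledge state --- is a correct, self-contained replacement for the citation on the state half. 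Your explicit rank-$n$ hypothesis is also the right reading; the paper's quantum analogue (Lemma~\ref{lem:check_matrix}) carries the same implicit independence assumption.

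On the transformation half, your own diagnosis of the obstacle is accurate, and it is precisely the nontrivial content hidden inside the citation. The paper defines a valid transformation as \emph{any} permutation of ontic states that maps valid epistemic states to valid epistemic states (Section~\ref{s:spek_transformations}), and it is not automatic that such a permutation acts affinely on $\ZZ_2^{2n}$: for a single toy bit it is a numerical coincidence that all $24$ permutations of the four ontic states constitute the affine group $AGL(2,\ZZ_2)\cong S_4$, and for $n>1$ deriving affinity (hence symplecticity of the linear part) from epistemic-state preservation is the analogue of showing the Clifford group exhausts the normaliser of the Pauli group --- this is what \cite{spekkens_quasi-quantization_2014} supplies and what your sketch leaves open. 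By contrast, the converse you flag as a possible obstacle is the easy direction and needs no decomposition into generating operations: a symplectic linear map is itself a permutation of the ontic space $\ZZ_2^{2n}$, and since it preserves the form it carries solution sets of commuting quadrature constraints to sets of the same kind, hence valid epistemic states to valid epistemic states. In summary: your state half is complete and more informative than the paper's treatment; your transformation half is correct in outline but the forward implication (validity $\Rightarrow$ affine symplectic) remains a genuine gap in your write-up, exactly at the point where the paper chose to cite rather than prove.
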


This follows from the principle of classical complementarity, as shown in \cite{spekkens_quasi-quantization_2014}.

The conditions for $2n$ by $n$ binary matrices to represent valid states and the condition for $2n$ by $2n$ binary matrices to represent valid transformations are exactly the same as in the binary formalism for stabilizer quantum mechanics, cf. Lemmas \ref{lem:check_matrix} and \ref{lem:binary_operation}.
Therefore the binary matrix formalism for Spekkens' toy bit theory is exactly the same as the check matrix formalism for stabilizer quantum mechanics, if one ignores the values of the quadrature variables in the former and the eigenvalues in the latter.
An equivalent result was shown in \cite{pusey_stabilizer_2012}, albeit not using check matrices.

This equivalence can be used to define graph states for the toy theory.

\begin{dfn}\label{dfn:toy_graph_check_matrix}
 A $n$-toy bit \emph{graph state in Spekkens' toy bit theory} is a state having the same check matrix representation as some $n$-qubit graph state in stabilizer quantum mechanics; i.e.\ there exists a $n$ by $n$ symmetric binary matrix $\theta$ with zeroes along the diagonal such that:
 \begin{equation}
  \left( \begin{array}{cc} \theta \\ \hline I \end{array} \right)
 \end{equation}
 is a check matrix for the state.
\end{dfn}

The values of the quadrature variables can be ignored when describing toy states by check matrices because each value can be changed by a local reversible toy theory operation that leaves all other values invariant.
This is analogous to the case of eigenvalues in quantum theory, cf.\ Section \ref{s:binary_stabilizer_formalism}.

Theorems \ref{thm:stabilizer_graph_state} and \ref{thm:graph_states_LC} are proved entirely within the binary formalism.
We have shown that the binary formalism for the toy theory is exactly the same as that for stabilizer quantum theory, and we have defined graph states for the toy theory which are analogous to those in quantum theory.
Therefore these theorems carry over to the toy theory, i.e.\ we have:

\begin{thm}\label{thm:vdN1_toy}
 Any toy stabilizer state is equivalent to some toy graph state under local toy transformations $\sigma\in (S_4)^n$.
\end{thm}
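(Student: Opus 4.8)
The plan is to argue entirely within the binary (check matrix) formalism, where the lemma preceding the theorem has established that valid toy states and valid toy transformations are characterised by \emph{exactly} the same conditions as valid stabilizer states and Clifford operations in quantum theory. Since Theorem \ref{thm:stabilizer_graph_state} was proved in \cite{van_den_nest_graphical_2004} purely within this formalism, the strategy is to lift a given toy state into the shared binary picture, apply the quantum result there, and then reinterpret the resulting binary operation as a local toy transformation.

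First I would take an arbitrary toy stabilizer state (state of maximal knowledge) on $n$ toy bits and write down a check matrix $S$, a binary $2n$ by $n$ matrix obtained by encoding a generating set of $n$ commuting quadrature variables and discarding their values. By the lemma preceding the theorem, $S$ satisfies $S^T J S = 0$, where $J = \left(\begin{smallmatrix}0&I\\I&0\end{smallmatrix}\right)$; but this is exactly the self-orthogonality condition of Lemma \ref{lem:check_matrix}, so $S$ is simultaneously a valid \emph{quantum} stabilizer check matrix. Theorem \ref{thm:stabilizer_graph_state}, in its binary form, then supplies a local Clifford operation which in the binary picture is a $2n$ by $2n$ matrix $Q$ all of whose $n$ by $n$ blocks are diagonal (Lemma \ref{lem:binary_operation}) and which satisfies $Q^T J Q = J$, such that $QS$ is the check matrix $\left(\begin{smallmatrix}\theta\\I\end{smallmatrix}\right)$ of a graph state for some symmetric binary adjacency matrix $\theta$.

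The key reinterpretation step is that the binary conditions $Q^T J Q = J$ together with the requirement that all blocks be diagonal characterise \emph{local} operations identically in both theories: by the lemma preceding the theorem the former is precisely the condition for $Q$ to represent a valid reversible toy transformation, and the block-diagonal structure encodes that it acts one toy bit at a time, i.e.\ as an element $\sigma \in (S_4)^n$. Hence the very same $Q$ that realises the quantum local Clifford also realises a local toy transformation, and applying $\sigma$ to the original toy state produces a state whose check matrix is $\left(\begin{smallmatrix}\theta\\I\end{smallmatrix}\right)$, which by Definition \ref{dfn:toy_graph_check_matrix} is a toy graph state.

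The main obstacle, and the only place where the toy theory genuinely differs from the quantum case, is the \emph{value} information that the check matrix deliberately ignores. After the above, the transformed state and the target toy graph state agree on their check matrices but may differ on the values assigned to the quadrature variables. I would close this gap by invoking the remark that each such value can be toggled by a local reversible toy operation that leaves all other values untouched (the toy analogue of adjusting generator signs via local Cliffords); composing this value-fixing local transformation with $\sigma$ yields the required element of $(S_4)^n$. The bookkeeping here, verifying that the value adjustments really are local and really do leave the graph structure intact, is routine but is where care is needed, since it has no counterpart in the purely matrix-level quantum argument.
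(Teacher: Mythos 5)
Your proposal is correct and takes essentially the same route as the paper: the paper's proof likewise rests on the observation that Theorem \ref{thm:stabilizer_graph_state} was proved entirely within the binary formalism, that this formalism is identical for the two theories, and that quadrature-variable values can be toggled by local toy operations leaving all other values invariant, so the quantum result carries over directly. Your write-up merely makes explicit the transfer steps (lifting $S$, applying the block-diagonal symplectic $Q$, reinterpreting it as an element of $(S_4)^n$, then fixing values) that the paper leaves implicit.
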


\begin{thm}\label{thm:vdN2_toy}
 Two toy graph states on the same number of toy bits are equivalent under local toy transformations if and only if there is a sequence of local complementations (cf.\ Definition \ref{dfn:local_complementation}) that transform one graph into other.
\end{thm}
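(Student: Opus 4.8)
The plan is to obtain Theorem \ref{thm:vdN2_toy} by the same transfer argument used to establish Theorem \ref{thm:vdN1_toy}: since local complementation (Definition \ref{dfn:local_complementation}) is a purely graph-theoretic operation, and since both the valid toy states and the valid local reversible toy transformations are represented in the binary formalism by exactly the same matrices as their stabilizer counterparts, the proof of the quantum statement (Theorem \ref{thm:graph_states_LC}) applies essentially verbatim. Concretely, I would first recall that a toy graph state is defined (Definition \ref{dfn:toy_graph_check_matrix}) to have a check matrix of the form $\left(\begin{smallmatrix}\theta\\I\end{smallmatrix}\right)$ identical to that of the corresponding quantum graph state, and that Theorems \ref{thm:stabilizer_graph_state} and \ref{thm:graph_states_LC} were proved by Van den Nest et al.\ entirely within the check-matrix calculus, manipulating the symplectic conditions $S^T J S = 0$ (Lemma \ref{lem:check_matrix}) and $Q^T J Q = J$. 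Neither proof refers to the eigenvalues of the Pauli generators, only to the matrices $S$ and $Q$, so nothing in them is specifically quantum.

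For the \emph{if} direction I would establish the toy analogue of Theorem \ref{thm:Van_den_Nest}: a single graph-theoretic local complementation about a vertex $v$ is realised on the graph state by a local reversible toy transformation, namely the analogue of $R_{X,v}\otimes\bigotimes_{u} R_{Z,u}^{-\theta_{uv}}$. The essential point to check is that the single-toy-bit operations playing the roles of $R_X$ and $R_Z$ are valid toy transformations. This holds because the reversible single-toy-bit transformations are exactly the $24$ permutations of the four ontic states, and in the binary picture (Lemma \ref{lem:binary_operation}) these realise precisely the same group of single-system symplectic transformations as the $24$-element single-qubit Clifford group modulo phase. Iterating, any sequence of local complementations lifts to a composite of local toy transformations, so the corresponding toy graph states are locally equivalent.

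For the \emph{only if} direction I would argue that the Van den Nest classification is a statement about check matrices: two graph states are equivalent under local operations exactly when there is a sequence of local complementations relating the graphs, the proof proceeding by showing that an arbitrary local symplectic $Q$ taking $\left(\begin{smallmatrix}\theta\\I\end{smallmatrix}\right)$ to $\left(\begin{smallmatrix}\theta'\\I\end{smallmatrix}\right)$ decomposes into the elementary moves induced by local complementations. Since the toy-theory binary formalism uses the identical matrices and the identical notion of local transformation (Section \ref{s:binary}), this argument transfers without change; I only need to supply the bookkeeping for the values of the quadrature variables, noting as in Section \ref{s:binary} that any single value can be flipped by a local reversible operation leaving the other values fixed, so that the check matrix determines the toy graph state up to a local transformation.

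I expect the main obstacle to be the verificatory step of the \emph{if} direction: confirming that the concrete local operations implementing local complementation lie inside the permitted toy transformations. In the quantum case these are manifestly local Clifford unitaries, whereas in the toy theory I must check that the relevant elements of $(S_4)^n$ exist and act on check matrices exactly as the corresponding local Clifford binary matrices do. Once this correspondence between $S_4$ and the single-qubit Clifford group modulo phase is pinned down at the level of binary symplectic matrices, the remainder of both directions is a direct appeal to the already-established equivalence of the two binary formalisms.
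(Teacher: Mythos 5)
Your proposal is correct and follows essentially the same route as the paper: the paper also derives Theorem \ref{thm:vdN2_toy} by observing that Van den Nest et al.'s proof lives entirely in the binary check-matrix formalism, that the toy theory's binary formalism (valid states $S^TJS=0$, valid transformations $Q^TJQ=J$, and the quadrature-value bookkeeping via local value-flipping operations) is identical to the stabilizer one, and that toy graph states are defined by the same check matrices. The only difference is that you make explicit the verification that the $24$ single-toy-bit permutations realise the same single-system symplectic group as the single-qubit Cliffords modulo phase, a point the paper leaves implicit by citing \cite{spekkens_quasi-quantization_2014} for the characterisation of valid toy transformations.
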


These two \emph{graph state theorems} allow all possible equalities between toy theory states to be characterised in a straightforward way.

\subsection{Map-state duality for the toy theory}
\label{s:toy_Choi}

The graph state theorems, as implied by the name, apply only to toy states, not to more general processes in the toy theory.
Yet it suffices to consider only equalities between states in order to get a completeness result for the entire theory.
This is because, like quantum theory, the toy theory exhibits map-state duality, also called the Choi-Jamio{\l}kowski isomorphism.

\begin{thm}
 For any pair of positive integers $n$ and $m$, there exists a bijection between the toy theory operators from $n$ to $m$ toy bits and the states on $n+m$ toy bits.
\end{thm}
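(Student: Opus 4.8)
The plan is to prove the Choi-Jamio\l{}kowski isomorphism for the toy theory in exact parallel to its quantum-mechanical counterpart (Theorem \ref{thm:Choi-Jamiolkowski}), exploiting the fact that $\Spek$ is a dagger compact closed category and hence carries a compact structure given by cups and caps. The isomorphism is, at bottom, a purely categorical phenomenon: in any compact closed category the hom-sets $\mathcal{C}(A,B)$ and $\mathcal{C}(I, A^*\otimes B)$ are naturally isomorphic via ``bending wires''. Since $\Spek$ is compact closed with each object self-dual (it is a subcategory of $\FRel$, which has the self-dual compact structure from the earlier example), this general fact immediately supplies a bijection at the level of relations.

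First I would set up the two maps explicitly. Given a toy operator $A:IV^n\to IV^m$, define its corresponding state on $n+m$ toy bits by precomposing with the cup $\eta:I\to IV^n\otimes IV^n$ on the input wires, i.e.\ bending all $n$ inputs of $A$ around to become outputs; diagrammatically this is exactly the transformation depicted in \eqref{eq:Choi-Jamiolkowski}. Conversely, given a state $B:I\to IV^{n+m}$, I would recover an operator by bending the first $n$ output wires back down into inputs using the cap $\epsilon:IV^n\otimes IV^n\to I$. The key step is then to verify that these two constructions are mutually inverse, which follows directly from the snake equations (the defining equations of the compact structure in Definition \ref{dfn:compact_closed_category}): composing ``bend up'' with ``bend down'' yanks the wires straight and returns the identity. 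This is precisely the content of the topology meta-rule, that only the topology of a diagram matters, so the verification reduces to observing that the resulting string diagram is isotopic to the original.

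The one point requiring genuine toy-theoretic (rather than purely abstract) input is that the cup and cap used in the construction are legitimate arrows of $\Spek$, not merely of the ambient $\FRel$. I would check that the cup $\eta_{IV}:I\to IV\otimes IV$ given by $\bullet\sim\{(1,1),(2,2),(3,3),(4,4)\}$ is indeed a valid toy-theory process, which it is since it was already introduced as the interpretation of the ``cup'' in Section \ref{s:spekkens_components}; the cap is its relational converse and hence also lies in $\Spek$ because the category is closed under dagger. Nested cups and caps on $IV^n$ are built from parallel and sequential composites of these, all of which stay within $\Spek$. With this confirmed, the bijection restricts correctly to toy operators and toy states.

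The main obstacle, such as it is, will be bookkeeping rather than mathematical difficulty: one must confirm that ``bending wires'' respects the cardinality count so that operators $IV^n\to IV^m$ correspond exactly to states on $IV^{n+m}$ and not to some mismatched object, and that the construction is genuinely well-defined on the nose (every toy operator has an image and every toy state has a preimage). Since everything is phrased relationally and the compact structure is explicit, I expect no subtlety analogous to the boundedness issue that obstructs the compact structure on infinite-dimensional $\Hilb$; the finiteness of $IV$ means $\eta$ and $\epsilon$ are unproblematic. Thus the proof is essentially immediate once the categorical framework and the snake equations are invoked, and the bulk of the writeup is simply recording the two wire-bending maps and appealing to the earlier compact-closed machinery.
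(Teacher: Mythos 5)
Your proposal is correct and follows essentially the same route as the paper, which establishes the toy-theory Choi--Jamio{\l}kowski isomorphism by bending wires with the cups and caps of the compact structure on $\Spek$ and appealing to the topology meta-rule (equivalently, the snake equations) to see that the two constructions are mutually inverse. Your additional check that the cup $\bullet\sim\{(1,1),(2,2),(3,3),(4,4)\}$ and its relational converse are genuine arrows of $\Spek$, not merely of the ambient $\FRel$, is a detail the paper leaves implicit but is exactly the right point to verify.
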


Diagrammatically, this duality is represented in \eqref{eq:Choi-Jamiolkowski}; as in the \ZX-calculus, the toy diagram equality follows from the topology rule.

The Choi-Jamio{\l}kowski isomorphism allows toy theory operators to be turned into states.
Any equalities derived between these states then apply also to the original operators.
Thus, a completeness result for the entire toy theory can be derived by considering only toy states.

\subsection{Graph states and related diagrams in the toy theory graphical calculus}
\label{s:toy_diagrams}

In the first part of this section, we defined graph states for the toy theory via their check matrices.
We now show that, like their quantum equivalents in the \ZX-calculus, they also have an elegant graphical representation.

\begin{dfn}\label{dfn:toy_graph_state}
 Let $G$ be a finite simple undirected graph, i.e.\ a graph with finitely many vertices, at most one edge between any pair of vertices, and no self-loops.
 Let the set of vertices be $V$ and the set of edges $E$.
 The associated graph state in the toy theory graphical calculus comprises the following:
 \begin{mitem}
  \item for each vertex in $V$, a green node with one output, and
  \item for each edge in $E$, a copy of \HadSpek{} connected to the green nodes representing the vertices at either end of the edge.
 \end{mitem}
\end{dfn}

To show that this definition is equivalent to Definition \ref{dfn:toy_graph_check_matrix}, we consider the operators stabilizing the graph state.

\begin{lem}\label{lem:fixpoint}
 Let \gengraph{$G$} denote the toy bit theory state associated with a graph $G=(V,E)$. Then for any vertex $v\in V$:
 \begin{equation}
  \gengraph{$G$} \; = \; \input{tikz_files/spek_fp.tikz}
 \end{equation}
 where $a_k=1$ if $\{v,k\}\in E$ and $a_k=0$ otherwise.
 This means that \gengraph{$G$} is an eigenstate of any operator that applies \phase{rn,label={[rphase]right:$11$}} to one of the vertices and \phase{gn,label={[gphase]right:$11$}} to all neighbours of that vertex.
\end{lem}

This lemma follows from the rules of the red-green calculus for the toy theory; the proof is entirely analogous to that for the \ZX-calculus \cite{duncan_graph_2009}.

\begin{cor}
 Definition \ref{dfn:toy_graph_state} is equivalent to Definition \ref{dfn:toy_graph_check_matrix}.
\end{cor}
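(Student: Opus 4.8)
The plan is to establish the corollary by showing that the two definitions of toy graph states yield exactly the same check matrix, using Lemma \ref{lem:fixpoint} as the bridge between the graphical representation (Definition \ref{dfn:toy_graph_state}) and the binary formalism (Definition \ref{dfn:toy_graph_check_matrix}). First I would take an arbitrary finite simple undirected graph $G=(V,E)$ with adjacency matrix $\theta$, and consider the state $\ket{G}$ defined graphically according to Definition \ref{dfn:toy_graph_state}. The goal is to read off its check matrix and verify that it coincides with the one prescribed in Definition \ref{dfn:toy_graph_check_matrix}, namely the matrix with $\theta$ in the top block and the $n\times n$ identity in the bottom block.

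The key computational step exploits Lemma \ref{lem:fixpoint}, which states that the graphically-defined graph state is a simultaneous eigenstate of the family of operators that apply \phase{rn,label={[rphase]right:$11$}} to a single vertex $v$ and \phase{gn,label={[gphase]right:$11$}} to each neighbour of $v$. I would translate these $n$ stabilizing operators into the binary formalism using the encoding $Q\mapsto 01$, $P\mapsto 10$, $Q\oplus P\mapsto 11$ established earlier in Section \ref{s:binary}. Under this encoding, the operator \phase{rn,label={[rphase]right:$11$}} on toy bit $v$ corresponds to a $P$-type (i.e.\ ``$X$-like'') quadrature variable on that toy bit, contributing to the lower block of the check matrix, while \phase{gn,label={[gphase]right:$11$}} on each neighbour $u$ corresponds to a $Q$-type (i.e.\ ``$Z$-like'') variable, contributing to the upper block. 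Reading these off for all $v\in V$ produces precisely the generators $X_v\otimes\bigotimes_{u}Z_u^{\theta_{uv}}$ familiar from the stabilizer case (cf.\ the proposition preceding \eqref{eq:graph_check_matrix}), whose check matrix is exactly $\left(\begin{smallmatrix}\theta\\I\end{smallmatrix}\right)$.

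Since the binary formalism for the toy theory is identical to the check matrix formalism for stabilizer quantum mechanics — a fact established in Section \ref{s:binary} — and since the check matrix (ignoring the values of the quadrature variables) uniquely determines a state of maximal knowledge, the two definitions must describe the same toy graph state. I would then invoke the remark, made just after Definition \ref{dfn:toy_graph_check_matrix}, that the values of the quadrature variables can always be adjusted by local reversible operations leaving the others invariant, so that working at the level of check matrices loses no essential information.

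The main obstacle I anticipate is not conceptual but bookkeeping: making the correspondence between the graphical stabilizers from Lemma \ref{lem:fixpoint} and the binary check vectors fully precise, in particular tracking which of the two bits in each pair records the $Q$- versus $P$-component and confirming that \phase{rn,label={[rphase]right:$11$}} and \phase{gn,label={[gphase]right:$11$}} map to the $P$- and $Q$-quadratures respectively under the chosen encoding. One must verify that the self-orthogonality condition $S^T J S = 0$ holds — which it does automatically, since the same matrix is a valid stabilizer check matrix in quantum theory — and that no sign or phase ambiguity spoils the identification, an issue that is absent here precisely because the toy theory is possibilistic and the check matrix discards the variable values. Once these identifications are nailed down, the equivalence follows immediately from the uniqueness of the state determined by a given check matrix.
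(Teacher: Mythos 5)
Your proposal is correct and follows essentially the same route as the paper: it likewise translates the $n$ stabilizing operators supplied by Lemma \ref{lem:fixpoint} into binary check vectors via the encoding of Section \ref{s:binary}, assembles them into the matrix with $\theta$ in the upper block and $I$ in the lower block, and settles the converse direction by the uniqueness of the maximal-knowledge state determined by a check matrix. The one bookkeeping point you flagged is exactly where your labels are transposed relative to the paper's convention---the paper identifies the red $11$ phase shift with a $Q$-term (which lands in the \emph{lower} block, since $Q\mapsto 01$) and the green $11$ phase shift with a $P$-term (upper block, since $P\mapsto 10$)---but because you also swapped the block assignments of the $Q$- and $P$-components, the two transpositions cancel and your final check matrix $\left(\begin{smallmatrix}\theta\\I\end{smallmatrix}\right)$ is the correct one.
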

\begin{proof}
 Lemma \ref{lem:fixpoint} gives $n$ quadrature variables of which the diagram is an eigenstate: a phase shift \phase{rn, label={[rphase]right:$11$}} on the $k$-th output corresponds to a term $Q_k$ in the quadrature variable, a phase shift \phase{gn, label={[gphase]right:$11$}} on the $l$-th output corresponds to a term $P_l$.
 Translate each of these variables into a binary vector as described in Section \ref{s:binary}, and assemble the resulting vectors as the columns of a matrix with the vector for the variable involving the term $Q_m$ as the $m$-th column.
 The resulting check matrix then has the form required by Definition \ref{dfn:toy_graph_check_matrix}.

 Conversely, take a check matrix of the form given in Definition \ref{dfn:toy_graph_check_matrix}.
 Let $\theta$ be the upper square of that check matrix and define $G$ to be the graph with adjacency matrix $\theta$.
 Then it is straightforward to see that the diagram constructed for this $G$ according to Definition \ref{dfn:toy_graph_state} represents the state determined by the check matrix.
\end{proof}

The local complementation operations from Theorem \ref{thm:vdN2_toy} can be derived from the rules of the toy theory graphical calculus.
From now on, we use the term ``local complementation'' to refer to an operation on graph states together with the application of a local operation to all the toy bits that keeps the overall toy state invariant. 

\begin{lem}
\label{lem:local_complementation}
 The following local complementation rewrite rule holds in the red-green calculus for the toy theory:
 \begin{equation}
  \gengraph{$G$} \; = \; \input{tikz_files/spek_lc.tikz}
 \end{equation}
 where $a_k=1$ if $\{v,k\}\in E$ and $a_k=0$ otherwise, and $G\star v$ denotes the graph-theoretical local complementation as defined in \eqref{eq:graph_local_complementation}.
\end{lem}
\begin{proof}[Proof (sketch)]
 The proof is analogous to the \ZX-calculus case as given by Duncan and Perdrix \cite{duncan_graph_2009}.
 We show here as an example the case of the complete graph on three vertices (rearranged with two inputs at the bottom for ease of reading):
 \begin{equation}
  \input{tikz_files/spek_local_complementation_derivation1.tikz}\!\!\!\input{tikz_files/spek_local_complementation_derivation2.tikz}
 \end{equation}
 The first equality uses the decomposition of \HadSpek{} in terms of red and green phase shifts.
 In the second step, the spider rule is used to ``push'' the green phase shifts through their green neighbours.
 At the same time, the colour change law and the fact that \HadSpek{} is self-inverse are used to change the green node at the top into a red one.
 The next step is an application of the bialgebra law.
 The penultimate step uses the fact that \state{rn,label={[rphase]right:$01$}} = \state{gn,label={[gphase]right:$01$}}, which is the case $a=0$ of \eqref{eq:spek_red_green_states} below, followed by the spider law.
 Lastly, the colour change rule is applied again.

 The full proof then proceeds by induction over the number of vertices in the graph state.
\end{proof}

\begin{remark}
 We can now define a toy-theory version of the \emph{local complementation along an edge} by applying three local complementations to a pair of toy bits $v, w \in V$ where $\{v,w\}\in E$, yielding:
 \begin{equation}
  \input{tikz_files/spek_pivoting.tikz}
 \end{equation}
 Here:
 \begin{equation}
  \sigma_j' = \begin{cases} \sigma_j \circ (23) & \text{if }j\in\{v,w\} \\ \sigma_j & \text{otherwise,} \end{cases}
 \end{equation}
 where $(23)$ denotes the transposition of 2 and 3.
 The graph $G' = (V,E')$ satisfies the same properties as in the stabilizer \ZX-calculus equivalent, see Section \ref{s:equivalence_GS-LC}.
\end{remark}

It will be useful to have a normal form for reversible single-toy bit operators.

\begin{lem}\label{lem:single_toy_bit_operator}
 Any single-toy bit operator, i.e.\ any diagram or subdiagram consisting solely of phase shifts and \HadSpek{} can be written uniquely in one of the following forms:
 \begin{equation}
  \input{tikz_files/spek_single_toybit.tikz},
 \end{equation}
 where $a,b,c,d,e,f,g\in\{0,1\}$ and $\bar{e}=e\oplus 1$.
\end{lem}

This is straightforward to check, analogously to the corresponding result in the \ZX-calculus.
In the following, whenever we talk about reversible single-toy bit operators we assume that they are normalised as in the above lemma.

\begin{dfn}
 A diagram in the red-green calculus for Spekkens' toy theory is called a \emph{GS-LO diagram} (graph state with local operators) if it consists of a graph state as in Definition \ref{dfn:toy_graph_state} with single-toy bit operators on each output.
\end{dfn}

This is analogous to the definition of GS-LC diagrams in the \ZX-calculus.
GS-LO diagrams play a central role in the graphical calculus for the toy theory, as shown by the following theorem.

\begin{thm}\label{thm:GS_LO}
 Any state diagram in the red-green calculus for Spekkens' toy theory is equal to some GS-LO diagram, possibly composed with \scalar{gn, label={[gphase]right:$11$}}, according to the rewrite rules.
\end{thm}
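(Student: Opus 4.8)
The plan is to mirror the structure of the stabilizer \ZX-calculus proof of Theorem \ref{thm:ZX_GS-LC} as closely as possible, exploiting the fact established in Section \ref{s:spekkens_ZX} that the two calculi differ only in their phase group. First I would invoke the analogue of Lemma \ref{lem:basic_elements}: any diagram in the toy theory graphical calculus decomposes into the four basic green spiders \splitnode, \effect{gn}, \joinnode, \state{gn}, together with phase shifts and \HadSpek{} nodes. Since a state diagram has no inputs, it must contain at least one copy of \state{gn} (or a cup, which rewrites into spiders). This sets up an induction: it suffices to show that applying each basic generator to a GS-LO diagram yields something equal to a GS-LO diagram (possibly composed with \scalar{gn, label={[gphase]right:$11$}}).

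The inductive steps are the toy-theory analogues of Lemmas \ref{lem:basic_state} through \ref{lem:join}. Adding \state{gn} simply adds an isolated vertex; applying a single-toy bit operator follows directly from the definition of GS-LO diagrams together with the normal form of Lemma \ref{lem:single_toy_bit_operator}; applying \effect{gn} is a measurement step; and \splitnode{} and \joinnode{} are handled by case analysis on whether the operand vertices have neighbours. The crucial enabling tool in each nontrivial case is the local-complementation machinery I would establish via Lemma \ref{lem:local_complementation} and its edge-local-complementation corollary, exactly as local complementations about an operand vertex and a chosen swapping partner are used in the \ZX proof to reduce the vertex operator on the operand vertex to the identity or to \HadSpek. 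Here I would exploit that the phase shifts \phase{rn,label={[rphase]right:$01$}} and \phase{gn,label={[gphase]right:$01$}} generate all of $(S_4)$ on a single toy bit, so that repeated local complementations can steer any vertex operator to a desired form, just as in $\mathcal{C}_1$ for the quantum case. The graph-state theorems for the toy theory, Theorems \ref{thm:vdN1_toy} and \ref{thm:vdN2_toy}, guarantee that these graph manipulations stay within the class of valid graph states.

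I expect the main obstacle to be the bookkeeping of the scalar factor \scalar{gn, label={[gphase]right:$11$}}, which appears in the statement and corresponds to the zero scalar (the empty relation) in the possibilistic setting. Whereas the scalar-free \ZX proof could discard all scalars, here the only nontrivial scalar is the zero scalar, so the rewriting must track whether a measurement effect meets an incompatible state and produces \scalar{gn, label={[gphase]right:$11$}} via the scalar rule. The delicate cases are those in the analogue of Lemma \ref{lem:measurement} and the ``operand vertex has no neighbours'' subcases of the \joinnode{} and \splitnode{} steps, where an effect composes with a single-toy bit state; I would need to verify using the scalar rule that the outcome is either a genuine GS-LO diagram or a GS-LO diagram composed with the single zero scalar, and never anything else. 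The remaining subcases, where operand vertices have neighbours, should go through essentially verbatim from the \ZX proof once the local-complementation lemma is in hand, because the phase-independent rewrite rules (spider, bialgebra, copy, colour change, Hopf) are formally identical in the two calculi.

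A useful preliminary identity I would record is the toy-theory counterpart of \eqref{eq:single_qubit_states}, namely the six single-toy bit states expressed as phased green nodes and the relation \state{rn,label={[rphase]right:$01$}} $=$ \state{gn,label={[gphase]right:$01$}} (the equation I will label \eqref{eq:spek_red_green_states}), which lets red and green states be interconverted up to a zero-scalar-free rewrite. With these pieces assembled, the induction closes and every state diagram reduces to GS-LO form, possibly composed with \scalar{gn, label={[gphase]right:$11$}}.
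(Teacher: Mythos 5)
Your overall plan is the paper's own: the published proof of Theorem~\ref{thm:GS_LO} is explicitly a sketch that proceeds ``analogously to Theorem~\ref{thm:ZX_GS-LC} and its constituent lemmas,'' recording exactly the toy-specific facts you list -- the identity \eqref{eq:spek_red_green_states}, the normal form of Lemma~\ref{lem:single_toy_bit_operator}, a Hopf-style rule for loops through \HadSpek, and the scalar bookkeeping (two-node scalar decomposition, removal of non-zero scalars by the scalar rule, and -- one item you should add explicitly -- the merging of multiple copies of \scalar{gn, label={[gphase]right:$11$}} into one, which the paper derives separately).

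There is, however, one concrete step in your argument that fails as stated: the claim that \phase{rn,label={[rphase]right:$01$}} and \phase{gn,label={[gphase]right:$01$}} ``generate all of $S_4$'' on a single toy bit. As permutations of the ontic states, \phase{gn,label={[gphase]right:$01$}} is the transposition $(3\,4)$, \phase{rn,label={[rphase]right:$01$}} is $(2\,4)$, and \HadSpek{} is $(2\,3)$. Since every element of the phase group $\ZZ_2\times\ZZ_2$ is an involution, two such phase shifts generate a dihedral group; here $\langle(3\,4),(2\,4)\rangle$ is the copy of $S_3$ acting on $\{2,3,4\}$, of order $6$, not $S_4$ of order $24$. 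Worse, every operator obtainable by composing the vertex operator with elements of this subgroup fixes the ontic state $1$, so the image of $1$ is an invariant of your steering procedure: a vertex operator such as \phase{gn,label={[gphase]right:$10$}} $=(1\,2)$ can never be brought to the identity or to \HadSpek{} by local complementations about the operand vertex and a swapping partner alone, which is precisely what your analogues of Lemmas~\ref{lem:measurement} and~\ref{lem:split} rely on. This is a genuine disanalogy with the quantum case, where the corresponding phase shifts have order four and do generate all of $\mathcal{C}_1$. Two repairs are available within your framework. First, full $S_4$-reachability is not actually needed: for the measurement step it suffices that the composite of \effect{gn} with the vertex operator can be steered to a \emph{red} effect, i.e.\ to the subsets $\{1,2\}$ or $\{3,4\}$, and a short orbit computation shows the coset you can reach always permits this (subsets containing $1$ reach $\{1,2\}$; subsets avoiding $1$ reach $\{3,4\}$); the split/join cases similarly only require reaching one of the forms in Lemma~\ref{lem:single_toy_bit_operator}, exploiting the $11$-copy rule for the red $11$ phase. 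Second, you may enlarge your toolkit with the fixpoint operation of Lemma~\ref{lem:fixpoint}, which composes \phase{rn,label={[rphase]right:$11$}} $=(1\,3)(2\,4)$ onto the operand vertex at the cost of harmless green $11$ phases on its neighbours; then $\langle(3\,4),(2\,4),(1\,3)(2\,4)\rangle = S_4$ and your steering argument goes through verbatim. (Your appeal to Theorems~\ref{thm:vdN1_toy} and~\ref{thm:vdN2_toy} is superfluous here: Lemma~\ref{lem:local_complementation} is a derived graphical rewrite, so validity of the resulting graph states needs no semantic guarantee.)
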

\begin{proof}[Proof (sketch)]
 Consider the scalar part and the non-scalar part of the diagram separately.

 The proof that the non-scalar part of a state diagram in the toy theory graphical calculus can be brought into GS-LO form is analogous to the proof of Theorem \ref{thm:ZX_GS-LC} for the \ZX-calculus and its constituent lemmas, noting the following facts:
 \begin{itemize}
  \item Let $a\in\{0,1\}$ and $\bar{a}=a\oplus 1$, then:
   \begin{equation}\label{eq:spek_red_green_states}
    \input{tikz_files/spek_single_toy_state.tikz}
   \end{equation}
   Here, the first step uses the fact that \HadSpek{} is self-inverse and the second step uses the decomposition of \HadSpek{} into red and green phase shifts.
   The third step is an application of the spider law to merge the bottom two nodes, which is again used in the fourth step to pull apart the green node.
   In the fifth step, the bottom red node is copied: this works for both values of $a$.
   The penultimate step, involves dropping the scalar diagram on the left and merging the two red nodes in the non-scalar part by the spider law.
   The last equality is by the colour change law.
  \item Any scalar subdiagram appearing during the rewrite process consists of at most two nodes.
  Subdiagrams consisting of exactly two nodes of different colours can be removed using the scalar rule.
  Single-node scalars can be rewritten into two-node scalars as follows: let $a,b\in\{0,1\}$, and let $\bar{b}=b\oplus 1$, then:
   \begin{equation}\label{eq:spek_two-node_scalar}
    \scalar{gn, label={[gphase]right:$ab$}} \; = \; \innerprod{gn, label={[gphase]right:$a\bar{b}$}}{gn, label={[gphase]right:$01$}} \; = \; \innerprod{gn, label={[gphase]right:$a\bar{b}$}}{rn, label={[rphase]right:$01$}}
   \end{equation}
  by the spider law and \eqref{eq:spek_red_green_states}.
  Then the scalar rule can be applied.
  \item Any single-toy bit operator can be written as:
   \begin{equation}
    \input{tikz_files/spek_green-red-green.tikz}
   \end{equation}
   for some $a,b,c,d,e,f\in\{0,1\}$.
  \item A loop with a \HadSpek{} node in it disappears:
   \begin{equation}
    \input{tikz_files/spek_HadSpek_loop.tikz}
   \end{equation}
 \end{itemize}

 Scalar parts of a diagram can be decomposed into disconnected segments of at most two nodes each as in Corollary \ref{cor:decompose_scalars}.
 Any non-zero such segment can then be dropped by the scalar rule.

 Multiple copies of the zero scalar can be rewritten into just one copy:
 \begin{equation}
  \scalar{gn, label={[gphase]right:$11$}} \; \scalar{gn, label={[gphase]right:$11$}} \;
  = \; \innerprod{gn, label={[gphase]right:$11$}}{gn} \; \innerprod{gn, label={[gphase]right:$11$}}{gn} \;
  = \; \input{tikz_files/multiplication_11_der1.tikz} \;
  = \; \input{tikz_files/multiplication_11_der2.tikz} \;
  = \; \scalar{gn} \; \scalar{gn, label={[gphase]right:$11$}} \;
  = \; \scalar{gn, label={[gphase]right:$11$}},
 \end{equation}
 where the first step is by the spider rule, the second by the copy rule, the third by the $11$-copy rule, then the copy rule again, and the final step follows from \eqref{eq:spek_two-node_scalar} and the scalar rule. 

 Thus any state diagram can be brought into the desired form.
\end{proof}

Note that, as mentioned above, Corollary \ref{cor:decompose_scalars} translates to the toy theory graphical calculus; this is why the scalar rule is sufficient to ensure that all scalar diagrams can be rewritten to the empty diagram or to \scalar{gn, label={[gphase]right:$11$}}.
There is also a result analogous to Corollary \ref{cor:recognize_zero}:

\begin{cor}\label{cor:spek_recognize_zero}
 A diagram in the toy theory graphical calculus is zero if and only if it can be rewritten to explicitly contain \scalar{gn, label={[gphase]right:$11$}} as a subdiagram.
 Furthermore, it is straightforward to decide whether a diagram is zero by bringing the diagram into GS-LO form and simplifying all the scalars.
\end{cor}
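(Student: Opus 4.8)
The statement to prove is \Cref{cor:spek_recognize_zero}, which has two parts: first, a diagram in the toy theory graphical calculus is zero if and only if it can be rewritten to explicitly contain \scalar{gn, label={[gphase]right:$11$}} as a subdiagram; second, that zero-ness is decidable by bringing the diagram into GS-LO form and simplifying scalars. The plan is to mirror the proof of \Cref{cor:recognize_zero} in the \ZX-calculus, leveraging the fact that the toy theory graphical calculus is structurally analogous to the scalar-free stabilizer \ZX-calculus, with the phase group swapped from $\ZZ_4$ to $\ZZ_2 \times \ZZ_2$.

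First I would establish the easy direction of the biconditional via soundness. If a diagram can be rewritten to explicitly contain \scalar{gn, label={[gphase]right:$11$}}, then since $\intf{\scalar{gn, label={[gphase]right:$11$}}} = \emptyset$ (the empty relation), and since putting scalar subdiagrams next to the rest of a diagram corresponds to relational composition with that scalar --- and composing any relation with the empty relation yields the empty relation --- the whole diagram must be zero. This is the toy-theory analogue of the argument in \Cref{cor:recognize_zero} that a scalar diagram is zero whenever one disconnected component is zero, and it relies only on soundness of the calculus together with the multiplicative behaviour of disconnected scalar components.

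For the converse and for decidability, the key tool is \Cref{thm:GS_LO}: any state diagram is equal to some GS-LO diagram possibly composed with \scalar{gn, label={[gphase]right:$11$}}, and by map-state duality (\Cref{s:toy_Choi}) this extends to arbitrary diagrams. I would first argue the analogue of \Cref{lem:GS-LC_zero} for the toy theory: a GS-LO diagram is zero if and only if its scalar part is zero, since the non-scalar part is a valid (non-empty) toy process built from a graph state with local reversible operators applied, which can never be the empty relation. Then, as in the proof of \Cref{thm:GS_LO}, the scalar part decomposes (via the toy-theory analogue of \Cref{cor:decompose_scalars}, already invoked in the excerpt) into disconnected segments of at most two nodes. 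Checking the finitely many such two-node scalar diagrams shows that the only zero one, up to the rewrites \eqref{eq:spek_two-node_scalar} and the colour-change rule, is \scalar{gn, label={[gphase]right:$11$}}. Hence a nonzero diagram simplifies so that no \scalar{gn, label={[gphase]right:$11$}} appears, while a zero diagram retains at least one copy, giving both the converse implication and the decision procedure: bring the diagram to GS-LO form, decompose and simplify all scalars, and inspect whether \scalar{gn, label={[gphase]right:$11$}} survives.

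The main obstacle I anticipate is not any single hard lemma but rather the bookkeeping of the exhaustive check that, among all connected two-node scalar diagrams in the toy theory, \scalar{gn, label={[gphase]right:$11$}} is the unique zero one --- this is the toy analogue of the list \eqref{eq:zero_scalars} in \Cref{cor:recognize_zero}, but the toy theory's phase group $\ZZ_2\times\ZZ_2$ produces a different enumeration of inner products \innerprod{gn,label={[gphase]right:$ab$}}{rn,label={[rphase]right:$cd$}}, governed by the scalar rule, which yields \scalar{gn,label={[gphase]right:$11$}} precisely when $a=d\neq b=c$. I would verify by direct application of the interpretation map that these are exactly the zero cases and that all other two-node scalars are nonzero, then appeal to \Cref{lem:unique_zero} (already proved in the excerpt) to collapse all four zero representations into the single canonical form \scalar{gn, label={[gphase]right:$11$}}. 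Since all of this reduces to finite case-checking plus the already-established GS-LO normalisation and scalar-decomposition machinery, the proof is essentially a transcription of the \ZX-calculus argument with the phase group replaced.
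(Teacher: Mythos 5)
Your proposal is correct and takes essentially the same route as the paper, which states this corollary as an immediate consequence of Theorem \ref{thm:GS_LO} together with the toy-theory translation of Corollary \ref{cor:decompose_scalars}: normalise to GS-LO form (whose non-scalar part, being a graph state composed with reversible local operators, is never the empty relation), decompose the scalars into at-most-two-node segments, and observe via the scalar rule that the only surviving zero scalar is \scalar{gn, label={[gphase]right:$11$}}. One small correction: Lemma \ref{lem:unique_zero} is a \ZX-calculus statement about \scalar{gn, label={[gphase]right:$\pi$}} and its variants, so it is not the right citation here; in the toy theory the collapse of all zero two-node scalars to the canonical \scalar{gn, label={[gphase]right:$11$}} is performed directly by the scalar rule itself (with \eqref{eq:spek_two-node_scalar} handling single-node scalars), so no separate uniqueness lemma is needed.
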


This corollary allows us to focus on non-zero diagrams only in the next parts of the proof.

The GS-LO form is not unique, i.e.\ there may be different GS-LO diagrams representing the same state.
It is not clear how to define a unique normal form, but it is possible to reduce the number of diagrams needing to be considered further.

\begin{dfn}\label{dfn:rGS-LO}
 A diagram in Spekkens' toy theory is said to be in \emph{reduced GS-LO (or rGS-LO) form} if it is non-zero, in GS-LO form, and satisfies the following additional conditions:
 \begin{itemize}
  \item All vertex operators belong to the set:
   \begin{equation}\label{eq:reduced_vertex_operators_spek}
    R = \input{tikz_files/spek_set_R.tikz}
   \end{equation}
  \item Two adjacent vertices must not both have vertex operators that include red nodes.
 \end{itemize}
\end{dfn}

\begin{thm}\label{thm:rGS-LO}
 Any non-zero toy stabilizer state diagram is equal to some rGS-LO diagram within the graphical calculus.
\end{thm}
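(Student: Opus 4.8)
The plan is to mirror the structure of Theorem~\ref{thm:ZX_rGS-LC} from the stabilizer \ZX-calculus, exploiting the fact that the toy theory graphical calculus differs from the scalar-free stabilizer \ZX-calculus only in the phase group ($\ZZ_2\times\ZZ_2$ rather than $\ZZ_4$), as laid out in Section~\ref{s:spekkens_ZX}. By Theorem~\ref{thm:GS_LO}, any non-zero toy state diagram is already equal to some GS-LO diagram (the zero scalar \scalar{gn, label={[gphase]right:$11$}} having been discarded, using Corollary~\ref{cor:spek_recognize_zero} to confirm the diagram is genuinely non-zero). So the task reduces to showing that the vertex operators of such a GS-LO diagram can be rewritten until they all lie in the restricted set $R$ of \eqref{eq:reduced_vertex_operators_spek} and no two adjacent vertices carry red nodes.

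First I would establish the analogue of the key observation in the \ZX\ proof: a local complementation about a vertex $v$ (Lemma~\ref{lem:local_complementation}) pre-composes the vertex operator of $v$ with \phase{rn,label={[rphase]right:$01$}}, toggles edges in the graph, and pre-composes the vertex operators of the neighbours of $v$ with \phase{gn,label={[gphase]right:$01$}}. Using the normal form for single-toy bit operators from Lemma~\ref{lem:single_toy_bit_operator}, I would check that repeated pre-composition of a vertex operator with \phase{rn,label={[rphase]right:$01$}} cycles through representatives with at most two red nodes, so that after finitely many local complementations about a vertex with at least one neighbour, its vertex operator can be forced into the set $R$. Since the phase group here is $\ZZ_2\times\ZZ_2$ rather than $\ZZ_4$, the precise counting differs from the \ZX\ case, but the structural argument---that the number of red nodes in the normal form is invariant under pre-composition with green phases, and that local complementations leave green-only vertex operators green---carries over. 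This gives the first condition of Definition~\ref{dfn:rGS-LO}, terminating after at most $O(n)$ steps.

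Next I would handle the second condition. Suppose two adjacent vertices $u,v$ both carry red nodes in their vertex operators, giving a subdiagram analogous to \eqref{eq:neighbouring_red_nodes}. I would apply a local complementation along the edge $\{u,v\}$ (the toy-theory edge complementation defined in the remark after Lemma~\ref{lem:local_complementation}), compute its effect on the two vertex operators using the single-toy bit normal form and the colour-change and spider rules, and then apply fixpoint operations (Lemma~\ref{lem:fixpoint}) as needed to return both vertex operators to green phase shifts lying in $R$. The effect on neighbouring vertices is pre-composition by powers of \phase{gn,label={[gphase]right:$11$}}, which preserves membership in $R$; and because fixpoint operations do not alter edges, the rest of the diagram is unaffected. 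Each such operation eliminates a pair of adjacent red nodes, so the process terminates, leaving the diagram in rGS-LO form.

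I expect the main obstacle to be the second step: verifying explicitly, using only the postulated toy-theory rewrite rules, that the edge-complementation-plus-fixpoint combination maps a pair of red-node vertex operators to green ones while keeping all other vertex operators inside $R$. In the \ZX-calculus this relied on specific identities such as \eqref{eq:pi_2+api_2+lc-edge}, whose toy-theory analogues must be re-derived with the $\ZZ_2\times\ZZ_2$ phase arithmetic and the $11$-commutation rule in place of the $\pi$-commutation rule. The self-inverse nature of every element of the toy phase group should actually simplify matters relative to the \ZX\ case (where $\pm\pi/2$ phases complicate the bookkeeping), but the bialgebra and colour-change interactions still need careful tracking. Once these local identities are confirmed, the termination argument and the overall induction follow exactly as in Theorem~\ref{thm:ZX_rGS-LC}.
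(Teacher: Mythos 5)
Your proposal is correct and matches the paper's intent exactly: the paper's proof of Theorem~\ref{thm:rGS-LO} is literally the one-line remark that it is analogous to Theorem~\ref{thm:ZX_rGS-LC} using Lemma~\ref{lem:single_toy_bit_operator}, and your sketch is a faithful expansion of precisely that argument (GS-LO form via Theorem~\ref{thm:GS_LO}, local complementations to force vertex operators into $R$, edge complementations plus fixpoint operations to clear adjacent red nodes). The local identities you flag as needing re-derivation are indeed the ones the paper works out in Propositions~\ref{prop:rGS-LO_transformation1} and~\ref{prop:rGS-LO_transformation2}, with the $\ZZ_2\times\ZZ_2$ self-inverse phases simplifying the bookkeeping just as you anticipate.
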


The proof is analogous to that of Theorem \ref{thm:ZX_rGS-LC}, using Lemma \ref{lem:single_toy_bit_operator}.

The following two propositions show that, as in the case of stabilizer QM, rGS-LO forms are not unique.

\begin{prop}\label{prop:rGS-LO_transformation1}
 Suppose a rGS-LO diagram contains a pair of neighbouring toy bits $p$ and $q$ in the following configuration, where $a,b\in\{0,1\}$:
 \begin{equation}
  \input{tikz_files/spek_rGS-LO_transformation1.tikz}
 \end{equation}
 Then a local complementation about $q$, followed by a local complementation about $p$, yields a diagram which can be brought into rGS-LO form by at most two applications of the fixpoint rule.
\end{prop}
\begin{proof}[Proof (sketch)]
 The effect of the local complementations on the vertex operators of $p$ and $q$ is the following:
 \begin{equation}
  \input{tikz_files/spek_rGS-LO_transformation1_proof.tikz}.
 \end{equation}
 If $a=1$, we apply a fixpoint operation to $p$ and if $b=1$, we apply a fixpoint operation to $q$; then the vertex operators of $p$ and $q$ are in $R$.
 The fixpoint operations add \phase{gn,label={[gphase]right:$11$}} to neighbouring toy bits, which maps the set $R$ to itself.
 As fixpoint operations do not change any edges, we do not have to worry about them when considering whether the rest of the diagram satisfies Definition \ref{dfn:rGS-LO}.

 The rest of the proof is analogous to the stabilizer QM case.
\end{proof}

\begin{prop}\label{prop:rGS-LO_transformation2}
 Suppose a rGS-LO diagram contains a pair of neighbouring toy bits $p$ and $q$ in the following configuration, where $a,b\in\{0,1\}$:
 \begin{equation}
  \input{tikz_files/spek_rGS-LO_transformation2.tikz}
 \end{equation}
 Then a local complementation along the edge $\{p,q\}$ yields a diagram which can be brought into rGS-LO form by at most two applications of the fixpoint rule.
\end{prop}
\begin{proof}
 After the local complementation along the edge, the vertex operator of $p$ is given by:
 \begin{equation}\label{eq:pivoting}
  \input{tikz_files/spek_vertex_operator_u.tikz}.
 \end{equation}
 For the vertex operator of $q$, we have:
 \begin{equation}
  \input{tikz_files/spek_rGS-LO_transformation2_proof.tikz}.
 \end{equation}
 Thus if $a$ or $b$ is 1, we apply a fixpoint operator to the appropriate vertex.
 From the properties of local complementations along edges it follows that the overall transformation preserves the two properties of rGS-LO states.
\end{proof}

With the definitions and results in this section, state diagrams in the toy theory graphical calculus can be simplified significantly.
By map-state duality, the results can be applied to arbitrary diagrams.

For completeness it remains to be shown that whenever two rGS-LO diagrams represent the same toy state, they can be rewritten into each other using the rewrite rules for the toy theory graphical calculus.

\subsection{Equalities between rGS-LO diagrams}
\label{s:spek_completeness_graphical}

Throughout this section, we consider non-zero diagrams only, which is possible by Corollary \ref{cor:spek_recognize_zero}.
The graphical calculus is complete for toy theory states if, given any two rGS-LO diagrams representing the same state, we can show that they are equal using the rules of the graphical calculus.
In this section, we exhibit an algorithm for rewriting two diagrams representing the same toy state to be identical.
As rewrite rules are invertible, this is equivalent to being able to rewrite one diagram into the other.
Again, the algorithm is similar to that for the stabilizer \ZX-calculus, cf.\ Section \ref{s:equality_testing}.

Given two toy state diagrams on the same number of toy bits, we start by pairing up red nodes between the two diagrams.

\begin{dfn}
 A pair of rGS-LO diagrams on the same number of toy bits is called \emph{simplified} if there are no pairs of toy bits $p,q$ such that $p$ has a red node in its vertex operator in the first diagram but not in the second, $q$ has a red node in the second diagram but not in the first, and $p$ and $q$ are adjacent in at least one of the diagrams.
\end{dfn}

\begin{prop}
 Any pair of rGS-LO diagrams on $n$ toy bits is equal to a simplified pair.
\end{prop}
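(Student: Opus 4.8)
The plan is to mirror exactly the proof of Proposition~\ref{prop:simplified} from the stabilizer \ZX-calculus, since the statement here is its direct analogue with qubits replaced by toy bits and the single-qubit Clifford operators replaced by reversible single-toy bit operators. First I would set up the situation: suppose there exists a pair of toy bits $p,q$ witnessing that the pair of diagrams is not simplified, so that $p$ carries a red node in its vertex operator in the first diagram but not the second, $q$ carries a red node in the second but not the first, and $p,q$ are adjacent in at least one of the two diagrams. The goal is to ``pair up'' these offending red nodes by transferring one of them along the edge $\{p,q\}$, thereby reducing the count of unpaired red nodes.

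The key step is to invoke the two rGS-LO equivalence transformations already established, namely Propositions~\ref{prop:rGS-LO_transformation1} and \ref{prop:rGS-LO_transformation2}. In whichever of the two diagrams $p$ and $q$ are adjacent, the local configuration around $\{p,q\}$ matches the hypothesis of one of these two propositions (depending on whether the vertex operator of $q$ in that diagram is of the form with a single green phase $bb$ or of the form with $b\bar{b}$, i.e.\ whether or not $q$ itself contributes a red node in that diagram). Applying the appropriate transformation followed by at most two fixpoint operations yields an equal rGS-LO diagram in which the red node has been moved from $p$ to $q$, so that now $p$ and $q$ either both carry red nodes or neither does — in other words, this red node has been matched with its partner in the other diagram. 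Crucially, as noted in the stabilizer case, these equivalence operations do not change the total number of red nodes among the vertex operators, they merely relocate one of them.

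The induction that closes the argument is a simple descent on the number of unpaired red nodes: each application of an equivalence transformation pairs up one red node between the two diagrams, paired-up toy bits take no further part in subsequent transformations, and there are at most $n$ red nodes to deal with, so after fewer than $n$ steps no witnessing pair $p,q$ remains and the pair of diagrams is simplified. I would state this termination bound explicitly, exactly as in Proposition~\ref{prop:simplified}.

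The main obstacle — though it is more a matter of bookkeeping than of genuine difficulty — is verifying that the case distinction is exhaustive: one must check that the local configuration at $\{p,q\}$ always falls under the hypotheses of Proposition~\ref{prop:rGS-LO_transformation1} or Proposition~\ref{prop:rGS-LO_transformation2}, using the fact that in an rGS-LO diagram the vertex operators on adjacent vertices cannot both contain red nodes (condition~2 of Definition~\ref{dfn:rGS-LO}). Since $p$ (in the diagram where it carries the red node) has vertex operator containing a red node, its neighbour $q$ in that diagram must have a purely-green vertex operator of the form $bb$, placing us squarely in the hypothesis of Proposition~\ref{prop:rGS-LO_transformation1}; the symmetric case, handled via the second diagram, is entirely analogous. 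I expect this verification to be routine given the toy-theory phase group $\ZZ_2\times\ZZ_2$ and the reduced vertex operator set $R$ of \eqref{eq:reduced_vertex_operators_spek}, so I would present it in sketch form and refer back to the stabilizer proof for the details.
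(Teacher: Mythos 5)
Your overall strategy is exactly the paper's: the paper proves this proposition simply by declaring it analogous to Proposition~\ref{prop:simplified}, whose proof is precisely your descent argument --- locate a witnessing pair $p,q$, apply the appropriate equivalence transformation in the diagram where they are adjacent so as to transfer the red node from $p$ to $q$, observe that the transformations preserve the total number of red nodes and that paired-up toy bits take no further part, and conclude in fewer than $n$ steps.

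However, your closing ``exhaustiveness'' paragraph contains a concrete error. Condition~2 of Definition~\ref{dfn:rGS-LO} only tells you that $q$'s vertex operator contains \emph{no red node} in the diagram where $p$ and $q$ are adjacent, i.e.\ that it is one of the purely green phases $00$, $01$, $10$, $11$; it does not force it to have the form $bb$. When $q$'s green phase is $01$ or $10$ (form $b\bar{b}$), the hypothesis of Proposition~\ref{prop:rGS-LO_transformation1} is not met and you must instead invoke Proposition~\ref{prop:rGS-LO_transformation2}, the local complementation along the edge $\{p,q\}$ --- so both transformations are genuinely needed, and a proof that only ever applies the first one fails on those cases. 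Your second paragraph states the dichotomy correctly, but the gloss there (``i.e.\ whether or not $q$ itself contributes a red node in that diagram'') is also wrong: in the hypotheses of \emph{both} transformations $q$'s operator is purely green, and the $bb$ versus $b\bar{b}$ distinction separates the two purely-green possibilities rather than tracking a red node on $q$, which condition~2 excludes outright. Once the case analysis is repaired along these lines the argument goes through exactly as in the stabilizer case.
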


The proof of the above proposition is analogous to the stabilizer QM case, Proposition \ref{prop:simplified}.

As in the \ZX-calculus, if there exist red nodes that cannot be paired up between the two diagrams, then the diagrams cannot represent the same state.

\begin{lem}\label{lem:unpaired_red}
 Consider a simplified pair of rGS-LO diagrams and suppose there exists an unpaired red node, i.e.\ there is a toy bit $p$ which has a red node in its vertex operator in one of the diagrams, but not in the other. Then the two diagrams are not equal.
\end{lem}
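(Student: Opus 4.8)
The plan is to mirror the structure of Lemma~\ref{lem:unpaired_red_node} from the stabilizer \ZX-calculus, exploiting the fact established in Section~\ref{s:spekkens_ZX} that the toy theory graphical calculus is structurally identical to the scalar-free stabilizer \ZX-calculus, with the phase group $\ZZ_4$ replaced by $\ZZ_2\times\ZZ_2$. Let $D_1$ be the diagram in which the toy bit $p$ carries a red node in its vertex operator, and $D_2$ the one in which it does not. I would split the argument into the same three cases according to the connectivity of $p$: first where $p$ has no neighbours in either diagram, second where $p$ is isolated in one diagram but not the other, and third where $p$ has neighbours in both.

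First I would handle the disconnected cases. If $p$ has no neighbours in either diagram, the overall state factorises, and the two diagrams can only be equal if the single-toy-bit states on $p$ agree; but a vertex operator containing a red node produces a state of the form \effect{rn,label={[rphase]right:$01$}}-type (i.e.\ one of the ``$P$'' or ``$Q\oplus P$'' states), whereas a purely green vertex operator produces a ``$Q$''-type state, and these are distinct epistemic states. A short diagrammatic computation using \eqref{eq:spek_red_green_states} and Lemma~\ref{lem:single_toy_bit_operator} would confirm inequality. For the case where $p$ is isolated in exactly one diagram, I would invoke Theorem~\ref{thm:vdN2_toy}: two graph states are locally equivalent only if their graphs are connected by local complementations, and local complementation never converts a vertex with neighbours into an isolated one (or vice versa), so the diagrams cannot be equal.

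The substantial case is where $p$ has neighbours in both diagrams. Following the \ZX-calculus proof, I would define $N_1, N_2$ to be the neighbourhoods of $p$ in $D_1, D_2$ and apply an invertible operator $U$ built from a phase shift on $p$ together with the toy-theory analogue of controlled-\NOT{} gates from each $v\in N_1$ into $p$. Since $U$ is invertible, $U\circ D_1 = U\circ D_2 \iff D_1 = D_2$. The aim is to show that $U\circ D_1$ leaves $p$ in a red-type single-toy-bit state (so $p$ factors out), while $U\circ D_2$ leaves $p$ either still entangled or in a green-type state, so that by the first two cases the two diagrams differ. This will require working through subcases according to the green phase $\beta$ on $p$ in $D_2$ (the analogues of $\beta\in\{0,\pi/2,\pi,-\pi/2\}$, now valued in $\ZZ_2\times\ZZ_2$), using local complementations (Lemma~\ref{lem:local_complementation}), the fixpoint rule (Lemma~\ref{lem:fixpoint}), the Euler decomposition, and the copy rule exactly as in the \ZX-calculus derivation.

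The main obstacle will be verifying that the long chain of diagrammatic rewrites in the entangled case goes through with the $\ZZ_2\times\ZZ_2$ phase group in place of $\ZZ_4$; in particular one must check that the crucial step where $p$ ends up connected to some $v\in N_1\cap N_2$ survives (so that entanglement is not accidentally destroyed), and that the self-inverse property of every phase shift in $\ZZ_2\times\ZZ_2$ does not collapse a distinction that the $\ZZ_4$ proof relied on. Because the toy-theory phases are all self-inverse, some \ZX-calculus steps that produced phase inverses will simplify, and I would check carefully that these simplifications do not inadvertently make $U\circ D_1$ and $U\circ D_2$ coincide. The stabilizer tracking used in the $\ZZ_4$ case becomes a tracking of which of the three quadrature-type states $p$ occupies, and I expect the parity bookkeeping to be slightly cleaner here; nonetheless, this case-by-case rewriting is where the real work lies, and it is where I would spend most of the proof.
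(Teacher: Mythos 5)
Your proposal follows essentially the same route as the paper's proof: the same three connectivity cases, the same invertible operator $U$ (the toy analogue of controlled-\NOT{}s from $N_1$ into $p$ composed with a phase shift on $p$), the same subcase analysis on the phase of $p$'s vertex operator in $D_2$ using local complementations, the fixpoint rule, and the Euler decomposition, and the same key check that $p$ remains connected to some $v\in N_1\cap N_2$. The paper's proof likewise adapts Lemma~\ref{lem:unpaired_red_node} with the $\ZZ_2\times\ZZ_2$ phase group in place of $\ZZ_4$, so your plan, once the case-by-case rewrites are carried out, is correct and matches it.
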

\begin{proof}\footnote{This proof closely follows that of Lemma \ref{lem:unpaired_red_node}. Nevertheless, as there are some differences and the details are complicated, we give it here in full.}
 Let $D_1$ be the diagram in which $p$ has the red node, $D_2$ the other diagram. There are multiple cases:

 \emph{In either diagram, $p$ has no neighbours}: In this case, the overall state factorises and the two diagrams are equal only if the two states of $p$ are the same. But:
 \begin{equation}
  \input{tikz_files/spek_distinct_single_states.tikz}
 \end{equation}
 for $a,b,c\in\{0,1\}$, so the diagrams must be unequal.

 \emph{$p$ is isolated in one of the diagrams but not in the other}: We argue in Section \ref{s:binary} that, as in stabilizer QM, two toy graph states with local operators are equal only if one can be transformed into the other via a sequence of local complementations with corresponding changes to the local operators.
 As a local complementation never turns a vertex with neighbours into a vertex without neighbours, or conversely, the two diagrams cannot be equal.

 \emph{$p$ has neighbours in both diagrams}: Without loss of generality, assume that $p$ is the first toy bit.
 Let $N_1$ be the set of all toy bits that are adjacent to $p$ in $D_1$, and define $N_2$ similarly.
 The vertex operators of any toy bit in $N_1$ must be green phases in both diagrams.
 In $D_1$, this is because of the definition of rGS-LO diagrams, in $D_2$ it is because the pair of diagrams is simplified.
 Suppose the original diagrams involve $n$ toy bits each.
 Let $G$ be the graph on $n$ vertices (named according to the same convention as in $D_1$ and $D_2$) whose edges are $\{\{p,v\} | v\in N_1\}$.
 Now consider the following diagram:
 \begin{equation}\label{eq:isolate_p}
  \input{tikz_files/spek_isolate_p.tikz}
 \end{equation}
 where the ellipse labelled $G$ denotes the toy graph state corresponding to $G$, except that each vertex in the graph has not only an output but also an input.
 Call this diagram $U$.
 It is straightforward to see that $U$ is invertible: composing it with itself upside-down yields the identity.
 Therefore composing this diagram with $D_1$ and $D_2$ yields two new diagrams which are equal if and only if $D_1=D_2$.
 We denote the new diagrams by $U\circ D_1$ and $U\circ D_2$ and show that, no matter what the properties of $D_1$ and $D_2$ are (beyond the existence of an unpaired red node on $p$):
 \begin{mitem}
  \item in $U\circ D_1$, the toy bit $p$ is in state \state{rn} or \state{rn,label={[rphase]right:$11$}};
  \item in $U\circ D_2$, $p$ is either entangled with other toy bits, or in one of the states \state{gn,label={[gphase]right:$ab$}}, where $a,b\in\{0,1\}$.
 \end{mitem}
 By the arguments used in the first two cases, this implies that $U\circ D_1\neq U\circ D_2$ and therefore $D_1\neq D_2$.

 Let $n=\abs{N_1}$, $m=\abs{N_1\cap N_2}$, and suppose the toy bits are arranged in such a way that the first $m$ elements of $N_1$ are those which are also elements of $N_2$, if there are any.
 Consider first the effect on diagram $D_1$.
 The local operator on $p$ combines with the single-toy bit operators from $U$ to:
 \begin{equation}
  \input{tikz_files/spek_local_operator_on_p.tikz},
 \end{equation}
 where $a\in\{0,1\}$.
 As green phase shifts can be pushed through other green nodes, the subdiagram involving $p$ and the elements of $N_1$ in $U\circ D_1$ is equal to:
 \begin{equation}
  \input{tikz_files/spek_U_after_D1_a.tikz} \; = \;\; \input{tikz_files/spek_U_after_D1_b.tikz}
 \end{equation}
 Here, $b_1,\ldots,b_n,c_1,\ldots,c_n\in\{0,1\}$.
 Note that at the end $p$ is isolated and in the state \state{rn,label={[rphase]right:$aa$}}.
 The fact that we have ignored all toy bits not originally adjacent to $p$ in $D_1$ does not change that.

 Next consider $U\circ D_2$.
 As $N_1$ is not in general equal to $N_2$, the subdiagram consisting of $p$ and vertices in $N_1$ looks as follows:
 \begin{equation}
  \input{tikz_files/spek_U_after_D2.tikz}
 \end{equation}
 where $l=m+1$ and $d,e,f_1,\ldots,f_n,g_1,\ldots,g_n\in\{0,1\}$.
 Note that we neglect edges that do not involve $p$ and also edges between $p$ and vertices not in $N_1$.
 We now distinguish different cases, depending on the values of $d$ and $e$.

 If $d=0$ and $e=1$, apply a local complementation about $p$.
 This does not change the edges incident on $p$:
 \[
  \input{tikz_files/spek_01_on_p_1.tikz} = \!\! \input{tikz_files/spek_01_on_p_2.tikz}
 \]
 \[
  = \!\!\!\!\!\!\! \input{tikz_files/spek_01_on_p_3.tikz} = \!\! \input{tikz_files/spek_01_on_p_4.tikz}
 \]
 \begin{equation}
  = \input{tikz_files/spek_01_on_p_5.tikz}
 \end{equation}
 Now if $N_1=N_2$, $p$ has no more neighbours and is in the state \state{rn,label={[rphase]right:$01$}}.
 This is not the same as the state $p$ has in diagram 1, so the diagrams are not equal.
 Else, after the application of $U$, $p$ still has some neighbours in diagram 2.
 Local complementations do not change this fact.
 Thus the two diagrams cannot be equal.
 The case $d=1,e=0$ is entirely analogous, except that there is a fixpoint operation in addition to the local complementation at the beginning.

 If $d=e=0$, there are two subcases.
 First, suppose there exists $v\in N_2$ such that $v\notin N_1$.
 Apply a local complementation about this $v$.
 This operation changes the vertex operator on $p$ to \phase{gn,label={[gphase]right:$01$}}.
 It also changes the edges incident on $p$, but the important thing is that $p$ still has at least one neighbour.
 Thus we can proceed as in the case $d=0,e=1$.

 Secondly, suppose there is no $v\in N_2$ which is not in $N_1$.
 Since $N_2\neq\emptyset$ ($N_2=\emptyset$ corresponds to the case ``$p$ has no neighbours in $D_2$'', which was considered above), we must then be able to find $v\in N_1\cap N_2$.
 The diagram looks as follows, where now $m>0$ (again, we are ignoring edges that do not involve $p$):
 \begin{equation}
  \input{tikz_files/spek_00_on_p_1.tikz} = \input{tikz_files/spek_00_on_p_2.tikz}
 \end{equation}
 To show that the two diagrams are unequal it suffices to show that in diagram 2 the state of $p$ either factors out, but is not \state{rn} or \state{rn,label={[rphase]right:$11$}}, or that it remains entangled with other toy bits.
 We are thus justified in ignoring large portions of the above diagram to focus only on $p$, $v$ and the edge between the two.
 In particular, we ignore for the moment the edges between $p$ and toy bits other than $v$, as well as the last \HadSpek{} on $p$.
 Then:
 \begin{center}
  \input{tikz_files/spek_p20_1a.tikz}
 \end{center}
 \begin{equation}
  \input{tikz_files/spek_p20_1b.tikz}
 \end{equation}
 where for the second equality we have applied a local complementation to $v$ and used the Euler decomposition, the third equality follows by a local complementation on $p$, and the last one comes from the merging of $p$ with the green node in the bottom left.
 Note that, in the end, $p$ and $v$ are still connected by an edge.
 None of the operations we ignored in picking out this part of the diagram can change that.
 Thus, as before, the state of $p$ cannot be the same as in diagram 1.
 The two diagrams are unequal.

 The case $d=e=1$ is analogous to $d=e=0$, except in either subcase we start with a fixpoint operation on the chosen $v$.

 We have thus shown that a simplified pair of rGS-LO diagrams are not equal if there are any unpaired red nodes.
\end{proof}

The existence of unpaired red nodes is not the only sign that a simplified pair of diagrams cannot be equal.
In fact, as in the \ZX-calculus, a simplified pair of diagrams are either identical or they do not represent the same state.

\begin{thm}\label{thm:state_completeness}
 The two diagrams making up a simplified pair of rGS-LO diagram are equal, i.e.\ they correspond to the same toy theory state, if and only if they are identical.
\end{thm}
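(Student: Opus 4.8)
The plan is to mirror the proof of Theorem \ref{thm:rGS-LC_equality}, the corresponding statement for the stabilizer \ZX-calculus, exploiting the fact established in Section \ref{s:binary} that the binary formalism and the graph-state theorems (Theorems \ref{thm:vdN1_toy} and \ref{thm:vdN2_toy}) for the toy theory coincide with those of stabilizer quantum mechanics. First I would use Lemma \ref{lem:unpaired_red} to dispose of the case where some red node is unpaired: such a pair cannot be equal. I may therefore assume that the red nodes occur on exactly the same set of toy bits $V'$ in both diagrams. With the red nodes paired, I would write the two diagrams in a common standard form, with underlying graphs $G_1=(V,E_1)$ and $G_2=(V,E_2)$, an identical red $01$ phase shift on each toy bit in $V'$, and green phase shifts $\beta_v$ (in $D_1$) and $\gamma_v$ (in $D_2$) below them, where all labels live in $\ZZ_2\times\ZZ_2$ and are constrained on $V'$ by Definition \ref{dfn:rGS-LO}.

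Next I would introduce two invertible toy-theory operators. Let $U$ apply a red $01$ phase shift to each toy bit in $V'$; since every element of the phase group $\ZZ_2\times\ZZ_2$ is self-inverse, $U$ is its own inverse, and composing it with either diagram removes all the red nodes, leaving only green phases. Let $W$ be the operator that adds the edges of $G_1$, i.e.\ a \HadSpek{} node connecting the appropriate pairs of green spiders, which is the toy-theory analogue of the controlled-$Z$ operators used in the \ZX-calculus proof. Because adding the same edge twice cancels (double \HadSpek{}-edges disappear, as used in the proof of Theorem \ref{thm:GS_LO}), $W$ is invertible. Since composing with invertible operators both preserves and reflects equality, $D_1=D_2$ if and only if $(W\circ U)\circ D_1=(W\circ U)\circ D_2$.

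The heart of the argument is to evaluate both sides. In $(W\circ U)\circ D_1$ the red nodes vanish, every vertex operator becomes a green phase that slides freely past the \HadSpek{}-edges by the spider rule, and the edges of $G_1$ cancel against those added by $W$, leaving a product $\state{gn,label={[gphase]right:$\beta_1$}}\cdots\state{gn,label={[gphase]right:$\beta_n$}}$ of single-toy-bit states. The same moves applied to $(W\circ U)\circ D_2$ leave the graph state of $(V,E_1\triangle E_2)$ decorated with the green phases $\gamma_v$. This is a product state precisely when $E_1\triangle E_2=\emptyset$, i.e.\ $G_1=G_2$; and when the graphs agree, the two products coincide exactly when $\beta_v=\gamma_v$ for every $v$. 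Since the single-toy-bit states \state{gn,label={[gphase]right:$ab$}} are pairwise distinct for distinct labels $ab\in\ZZ_2\times\ZZ_2$, this forces the two diagrams to be identical, which is the claim.

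I expect the main obstacle to be the product-state criterion: that the toy graph state of a non-empty graph is never a product of single-toy-bit states. In the \ZX-calculus this follows immediately from entanglement, but in the possibilistic toy theory it must be argued at the level of the correlation structure, using Theorem \ref{thm:vdN2_toy} to rule out that a sequence of local complementations could turn a graph with edges into the empty graph. A secondary, more routine, source of friction is the bookkeeping of the $\ZZ_2\times\ZZ_2$ phase labels when commuting green phases past the \HadSpek{}-edges and checking that the constrained phases on $V'$ behave correctly under $U$; these are direct applications of the spider and colour-change rules and should present no conceptual difficulty.
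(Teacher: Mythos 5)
Your proposal is correct and takes essentially the same route as the paper, whose proof simply declares the theorem analogous to Theorem \ref{thm:rGS-LC_equality}: your operators $U$ and $W$ are exactly the toy-theory counterparts of the $R_X^{-1}$ phase shifts and the product of controlled-$Z$ operators used there, and the conclusion via $E_1\triangle E_2=\emptyset$ together with matching green phases is identical. Your remark that the product-state criterion needs a genuinely toy-theoretic justification (via Theorem \ref{thm:vdN2_toy}, or directly from the check-matrix formalism of Section \ref{s:binary}) is a sensible fleshing-out of a step the paper leaves implicit, not a departure from its method.
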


The proof of this theorem is analogous to that of Theorem \ref{thm:rGS-LC_equality} for the stabilizer \ZX-calculus.

\subsection{A normal form for zero diagrams}

As in the \ZX-calculus, it is possible to define a unique normal form for zero diagrams in the toy theory graphical calculus.

\begin{thm}\label{thm:spek_zero_completeness}
 The toy theory graphical calculus is complete for zero diagrams.
\end{thm}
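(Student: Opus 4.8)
The plan is to follow the exact template used in Theorem~\ref{thm:zero_completeness} for the stabilizer \ZX-calculus, since the toy theory graphical calculus has all the structural features that proof relied upon. First I would establish a unique normal form for toy zero diagrams, namely the disconnected diagram consisting of one copy of \scalar{gn, label={[gphase]right:$11$}} together with $m$ bare green outputs and $n$ bare green inputs, exactly as in \eqref{eq:zero_normal_form} but with the $11$-phase spider in place of the $\pi$-phase spider. This normal form is manifestly unique: the interpretation of any zero diagram is the empty relation, whose only data are its numbers of inputs and outputs, so there is precisely one normal form per pair $(n,m)$. The content of the theorem is therefore entirely in showing that \emph{every} toy zero diagram can be rewritten into this form using the rewrite rules of Section~\ref{s:spekkens_rules}.

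The key steps, in order, would be as follows. By Corollary~\ref{cor:spek_recognize_zero}, any zero diagram can be rewritten to explicitly contain \scalar{gn, label={[gphase]right:$11$}} as a subdiagram; this is the toy-theory analogue of Lemma~\ref{lem:unique_zero} and it is what makes the zero rule and scalar rule applicable. Next I would eliminate all \HadSpek{} nodes using the ``Euler decomposition'' rule, then apply the spider rule repeatedly so that every remaining internal edge connects a green node to a red node. At that point the zero rule (in its appropriate orientation, obtained via the colour-swap and upside-down meta-rules) applies to each such edge, severing it and leaving a completely disconnected collection of nodes in which the only surviving edges run to the actual inputs and outputs of the diagram. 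Further applications of the zero rule adjust the colours of the nodes still attached to inputs or outputs so that they are all green, and the scalar rule removes every disconnected non-zero scalar node. Any surplus copies of the zero scalar are merged into a single \scalar{gn, label={[gphase]right:$11$}} using the computation already carried out in the proof of Theorem~\ref{thm:GS_LO}, which shows $\scalar{gn, label={[gphase]right:$11$}} \; \scalar{gn, label={[gphase]right:$11$}} = \scalar{gn, label={[gphase]right:$11$}}$. This leaves precisely the normal form \eqref{eq:zero_normal_form} with $11$ in place of $\pi$.

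Finally, since all the rewrite rules are invertible, the existence of a unique normal form immediately yields completeness: any two zero diagrams with the same numbers of inputs and outputs rewrite to the same normal form, hence into each other. I expect the main obstacle to be bookkeeping rather than any deep difficulty, because the argument is structurally identical to the stabilizer case. The one point requiring genuine care is verifying that Corollary~\ref{cor:spek_recognize_zero} really does let us expose \scalar{gn, label={[gphase]right:$11$}} in \emph{every} zero diagram, and that the scalar rule together with the two-node decomposition \eqref{eq:spek_two-node_scalar} suffices to clear away all the leftover scalar clutter; these rely on the fact (noted after Theorem~\ref{thm:GS_LO}) that Corollary~\ref{cor:decompose_scalars} transfers to the toy calculus. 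As in the remark following Theorem~\ref{thm:zero_completeness}, the result is not special to the toy fragment in its mechanics, but the statement holds for this calculus precisely because here one can always rewrite a zero diagram to contain \scalar{gn, label={[gphase]right:$11$}} explicitly.
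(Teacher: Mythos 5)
Your proposal is correct and takes essentially the same route as the paper, whose proof is itself only a sketch declaring the argument analogous to Theorem~\ref{thm:zero_completeness}: expose \scalar{gn, label={[gphase]right:$11$}} via Corollary~\ref{cor:spek_recognize_zero}, rewrite to the unique normal form \eqref{eq:spek_zero_normal_form}, and conclude by invertibility of the rules. You have filled in exactly the intended details, including the correct toy-theory substitutes for the \ZX-calculus's star and zero scalar rules, namely the scalar rule together with the idempotence of \scalar{gn, label={[gphase]right:$11$}} derived in the proof of Theorem~\ref{thm:GS_LO}.
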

\begin{proof}[Proof (sketch).]
 By Corollary \ref{cor:spek_recognize_zero}, zero diagrams in the toy theory graphical calculus can be recognised.
 Now, analogously to the proof of Theorem \ref{thm:zero_completeness} in the \ZX-calculus, any zero diagram with $n$ inputs and $m$ outputs in the toy theory graphical calculus can be rewritten into the form:
 \begin{equation}\label{eq:spek_zero_normal_form}
  \input{tikz_files/spek_zero_normal_form.tikz}.
 \end{equation}
 Furthermore, this normal form is unique.
\end{proof}

The existence of a unique normal form immediately implies completeness for zero diagrams.

\subsection{Completeness}

By map-state duality for the toy theory, as given in Section \ref{s:toy_Choi}, and invertibility of the rewrite rules, Theorem \ref{thm:state_completeness} directly implies that the toy theory graphical calculus is complete for non-zero diagrams. Combining this with the normal form for zero diagrams in Theorem \ref{thm:spek_zero_completeness} yields: 

\begin{thm}
 The red-green calculus is complete for Spekkens' toy bit theory.
\end{thm}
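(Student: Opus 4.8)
The plan is to assemble the completeness result for the full toy theory from the two cases already handled separately — non-zero diagrams and zero diagrams — after first reducing arbitrary processes to state diagrams. Suppose $D_1$ and $D_2$ are two diagrams in the red-green calculus with $\intf{D_1} = \intf{D_2}$; if their numbers of inputs and outputs differ the claim is vacuous, so assume they match. First I would invoke map-state duality for the toy theory (Section \ref{s:toy_Choi}) to bend all inputs of both diagrams into outputs, turning them into state diagrams on $n+m$ toy bits while preserving the equality of the represented relations. Because this transformation is realised purely by the topology rule, as in \eqref{eq:Choi-Jamiolkowski}, any graphical rewriting established between the resulting state diagrams lifts back to a rewriting between $D_1$ and $D_2$.

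Next I would split into cases according to whether the common relation is zero. Using Corollary \ref{cor:spek_recognize_zero}, both state diagrams can be brought into GS-LO form with their scalars decomposed, and one can read off whether each is zero. If the diagrams are zero, I would appeal directly to Theorem \ref{thm:spek_zero_completeness}: the unique normal form \eqref{eq:spek_zero_normal_form} for zero diagrams with fixed numbers of inputs and outputs means both diagrams rewrite to the same diagram, and by invertibility of the rewrite rules they rewrite into each other. If the diagrams are non-zero, I would apply Theorem \ref{thm:rGS-LO} to bring each into rGS-LO form, then use the simplification procedure to turn the pair into a simplified pair while preserving equality, and finally invoke Theorem \ref{thm:state_completeness}, which guarantees that a simplified pair of rGS-LO diagrams representing the same toy state must in fact be identical.

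In either case the two state diagrams are transformed, via sound and invertible rewrite steps, into a single common diagram, so one of the original state diagrams can be rewritten into the other; undoing the map-state duality then yields a derivation of $D_1 = D_2$ in the calculus, establishing completeness. The genuine difficulty is not in this final assembly, which is essentially bookkeeping, but in the ingredients it relies on: the hardest part is the equality-testing argument behind Theorem \ref{thm:state_completeness}, and in particular the unpaired-red-node analysis of Lemma \ref{lem:unpaired_red}, whose many cases require delicate use of local complementations and fixpoint operations to isolate a distinguishing toy bit. A secondary subtlety is ensuring that map-state duality really does reduce all processes to states without loss; this is underwritten by $\Spek$ being dagger compact closed, so that the snake equations — and hence the topology rule — suffice to yank inputs into outputs.
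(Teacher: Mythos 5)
Your proposal is correct and follows essentially the same route as the paper: map-state duality reduces to state diagrams, the zero case is settled by the unique normal form of Theorem~\ref{thm:spek_zero_completeness}, and the non-zero case by Theorems~\ref{thm:rGS-LO} and~\ref{thm:state_completeness} together with invertibility of the rewrite rules. Your closing remarks correctly identify Lemma~\ref{lem:unpaired_red} as the locus of the real work, matching the paper's structure.
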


Equalities between two diagrams in the toy theory graphical calculus can be derived as follows: if the diagrams are not states, bend all inputs around to become outputs.
Bring the two diagrams into GS-LO form.
If both diagrams are zero, bring them into the zero normal form.
Otherwise, bring the diagrams into rGS-LO form.
Simplify the pair of diagrams.
Then either the two diagrams are identical, in which case some of the rewrite steps can be inverted to get a sequence of rewrites transforming one diagram into the other, or they are not identical, in which case the two diagrams do not represent the same operator, so there is no equality to derive.

If the diagrams were not states to begin with, the appropriate outputs can be bent back into inputs in all diagram.
This yields a sequence of valid rewrites transforming one of the original diagrams into the other.

Thus, the maximal-knowledge fragment of Spekkens' toy bit theory can be analysed fully using this graphical calculus.

\chapter{Conclusions and further work}
\label{ch:conclusions}

In this thesis, I have shown that for many applications in quantum computing and quantum foundations, intutive and powerful graphical languages can be used without loss of mathematical rigour.
Here, the notion of being ``powerful and rigorous'' is captured by the property of completeness, meaning that any equality that can be derived using conventional formalisms can also be derived graphically.

I have proved that the \ZX-calculus, a graphical language for pure state qubit QM, is complete for stabilizer quantum theory.
This means that, within stabilizer QM, any true equality can be derived using the rewrite rules of the \ZX-calculus.
Furthermore, measurement amplitudes and probabilities can be calculated entirely graphically.

I have also shown that the \ZX-calculus is complete for the single-qubit Clifford+T group.
This group of operations is approximately universal, i.e.\ any single-qubit unitary can be approximated to arbitrary accuracy using only Clifford unitaries and the T gate.
In most physical implementations of quantum computers, general unitary operators cannot be directly applied but instead need to be approximated using some finite set of operators like Clifford+T.
Thus this completeness result implies that the \ZX-calculus can be used to analyse a wide range of realistic problems in quantum computation.

Finally, I have shown that similar graphical languages can replace conventional formalisms even outside quantum theory.
I have defined a graphical calculus for the maximal knowledge fragment of Spekkens' toy bit theory, a local hidden variable model that behaves very similar to pure state stabilizer QM, and shown that this graphical calculus is universal, sound, and complete.
This means that the graphical calculus has the full power of any formalism for analysing the toy theory.
The toy theory graphical calculus is modelled after the \ZX-calculus, therefore similarities and differences between stabilizer QM and the toy bit theory can be fully explored using analogous graphical methods.

A number of further research directions arise from this work.

\section{Further work: automated graphical reasoning}

Graphical languages are amenable to automated reasoning.
The software system Quantomatic \cite{kissinger_quantomatic_2015,quantomatic} discussed in Section \ref{s:automated_graphical_reasoning} enables automated and semi-automated manipulation of diagrams in the \ZX-calculus and similar graphical languages.
It would be interesting to implement the normalisation and equality testing algorithms from this thesis in that system, thus allowing automated simplification and comparison of diagrams for stabilizer QM and the single-qubit Clifford+T group, as well as Spekkens' toy bit theory.
Then, many questions in areas such as error-correcting codes or measurement-based quantum computation could be analysed automatically.
Quantomatic could also be used to compile single-qubit unitaries into the Clifford+T gate set, or simplify such approximations.
Furthermore, Quantomatic could automatically explore similarities and differences between stabilizer QM and Spekkens' toy bit theory.

This immediately offers a new question, namely that of the computational complexity of the equality decision problems.

\section{Further work on \ZX-calculus completeness}

The \ZX-calculus as defined in Section \ref{s:ZX-calculus} is incomplete for general pure state qubit quantum mechanics, but it is complete for pure state qubit stabilizer quantum mechanics as well as for the single-qubit Clifford+T group.

An obvious next step is to attempt to combine the two existing completeness results into a completeness proof for multi-qubit Clifford+T operators.
Such a result is not precluded by the incompleteness proof for the general \ZX-calculus -- cf.\ Section \ref{s:possible_completeness} -- but neither does it follow straightforwardly from the existing completeness proofs.

For example, as Perdrix and Wang recently showed, making the \ZX-calculus complete for multi-qubit Clifford+T group requires the addition of at least one new rule \cite{perdrix_zx_2015}, the \emph{supplementarity rule} first introduced in \cite{coecke_three_2011} and given here in correctly scaled form:
\begin{equation}\label{eq:supplementarity}
 \innerprod{gn}{rn} \; \innerprod{gn}{rn} \; \input{tikz_files/supplementary.tikz}
\end{equation}
for any $\alpha\in(-\pi,\pi]$.
The special cases of the supplementarity rule where $\alpha$ is an integer multiple of $\pi/2$ can be derived from the rewrite rules given in Section \ref{s:rewrite_rules}, thus the supplementarity rule is not required for stabilizer completeness.
Furthermore, as the LHS of \eqref{eq:supplementarity} is not a line graph, the rule does not apply in the context of the single-qubit Clifford+T diagrams considered in Section \ref{s:Clifford+T_completeness}.

It is still unclear whether the addition of the supplementarity rule is sufficient to make the \ZX-calculus complete for multi-qubit Clifford+T operators \cite{perdrix_zx_2015}.

There exists a presentation of the two-qubit Clifford+T group in terms of generators and relations \cite{greylyn_generators_2014}, which -- if translated into the \ZX-calculus -- would give a completeness result for two-qubit diagrams.
Yet the distinction between two-qubit diagrams and multi-qubit diagrams is not very natural in the \ZX-calculus, and several of the generators used in the derivation of that result only have complicated representations in the \ZX-calculus, making the translation of that proof difficult.

\section{Further work on the graphical calculus for Spekkens' toy theory}

In this thesis, only pure state qubit stabilizer quantum mechanics and the maximal knowledge fragment of Spekkens' toy bit theory have been considered.
An obvious next step would be to extend the graphical calculi to mixed states in the quantum case and states of less-than-maximal knowledge in the toy theory.
The category-theoretical formulations underlying the graphical calculi can easily be extended in this way using the CPM-construction and these extensions carry over to categorical graphical calculi \cite{selinger_dagger_2007}.

Furthermore, it would be interesting to extend this argument to stabilizer quantum mechanics for higher dimensional systems and the higher-dimensional toy theory.
Some steps in this direction have been made by generalising the \ZX-calculus to qudits and to Spekkens' toy theory for systems of dimension greater than two, though it is still unclear whether these graphical languages are complete \cite{ranchin_depicting_2014}.

Rigorous graphical languages have many applications in the analysis of quantum physics and related theories.


\addcontentsline{toc}{chapter}{Bibliography}
\bibliography{refs}        

\begin{thebibliography}{10}

\bibitem{aaronson_improved_2004}
Scott Aaronson and Daniel Gottesman.
\newblock Improved simulation of stabilizer circuits.
\newblock {\em Physical Review A}, 70(5):052328, November 2004.
\newblock \doi{10.1103/PhysRevA.70.052328}.

\bibitem{abramsky_categorical_2004}
Samson Abramsky and Bob Coecke.
\newblock A categorical semantics of quantum protocols.
\newblock In {\em {Proceedings of the 19th Annual IEEE Symposium on Logic in
  Computer Science (LICS'04)}}, pages 415--425, July 2004.
\newblock \doi{10.1109/LICS.2004.1319636}.

\bibitem{abramsky_categorical_2008}
Samson Abramsky and Bob Coecke.
\newblock Categorical quantum mechanics.
\newblock In {\em Handbook of quantum logic and quantum structures: quantum
  logic}, pages 261--324. Elsevier, August 2008.
\newblock \doi{10.1016/B978-0-444-52869-8.50010-4}.

\bibitem{anders_fast_2005}
Simon Anders and Hans~J. Briegel.
\newblock Fast simulation of stabilizer circuits using a graph-state
  representation.
\newblock {\em Physical Review A}, 73:022334, February 2006.
\newblock \doi{10.1103/PhysRevA.73.022334}.

\bibitem{apt_lectures_2011}
Krzysztof~R. Apt and Erich Gr\"{a}del.
\newblock {\em Lectures in game theory for computer scientists}.
\newblock Cambridge University Press, Cambridge, 2011.

\bibitem{baader_term_1998}
Franz Baader and Tobias Nipkow.
\newblock {\em Term {Rewriting} and {All} {That}}.
\newblock Cambridge University Press, March 1998.

\bibitem{backens_zx-calculus_2013}
Miriam Backens.
\newblock The {ZX}-calculus is complete for stabilizer quantum mechanics.
\newblock {\em New Journal of Physics}, 16(9):093021, September 2014.
\newblock \doi{10.1088/1367-2630/16/9/093021}.

\bibitem{backens_zx-calculus_2014}
Miriam Backens.
\newblock The {ZX}-calculus is complete for the single-qubit {Clifford}+{T}
  group.
\newblock {\em Electronic Proceedings in Theoretical Computer Science},
  172:293--303, December 2014.
\newblock \doi{10.4204/EPTCS.172.21}.

\bibitem{backens_making_2015}
Miriam Backens.
\newblock Making the stabilizer {ZX}-calculus complete for scalars.
\newblock {\em Electronic Proceedings in Theoretical Computer Science},
  195:17--32, November 2015.
\newblock \doi{10.4204/EPTCS.195.2}.

\bibitem{backens_complete_2014}
Miriam Backens and Ali~Nabi Duman.
\newblock A {Complete} {Graphical} {Calculus} for {Spekkens}' {Toy} {Bit}
  {Theory}.
\newblock {\em Foundations of Physics}, 46(1):70--103, October 2015.
\newblock \doi{10.1007/s10701-015-9957-7}.

\bibitem{bell_einstein-podolsky-rosen_1964}
John~S. Bell.
\newblock On the {Einstein}-{Podolsky}-{Rosen} paradox.
\newblock {\em Physics}, 1(3):195--200, 1964.

\bibitem{bennett_quantum_1984}
Charles~H. Bennett and Gilles Brassard.
\newblock Quantum cryptography: {Public} key distribution and coin tossing.
\newblock In {\em Proceedings of {IEEE} {International} {Conference} on
  {Computers}, {Systems} and {Signal} {Processing}}, pages 175--179, Bangalore,
  India, 1984.

\bibitem{bennett_teleporting_1993}
Charles~H. Bennett, Gilles Brassard, Claude Cr\'{e}peau, Richard Jozsa, Asher
  Peres, and William~K. Wootters.
\newblock Teleporting an unknown quantum state via dual classical and
  {Einstein}-{Podolsky}-{Rosen} channels.
\newblock {\em Physical Review Letters}, 70(13):1895--1899, March 1993.
\newblock \doi{10.1103/PhysRevLett.70.1895}.

\bibitem{boykin_universal_1999}
P.~Oscar Boykin, Tal Mor, Matthew Pulver, Vwani Roychowdhury, and Farrokh
  Vatan.
\newblock On universal and fault-tolerant quantum computing: A novel basis and
  a new constructive proof of universality for {Shor's} basis.
\newblock In {\em 40th Annual Symposium on Foundations of Computer Science},
  pages 486--494. {IEEE}, October 1999.
\newblock \doi{10.1109/SFFCS.1999.814621}.

\bibitem{calderbank_quantum_1997}
A.~R. Calderbank, E.~M. Rains, P.~W. Shor, and N.~J.~A. Sloane.
\newblock Quantum {Error} {Correction} and {Orthogonal} {Geometry}.
\newblock {\em Physical Review Letters}, 78(3):405--408, January 1997.
\newblock \doi{10.1103/PhysRevLett.78.405}.

\bibitem{casselman_ybc_2007}
Bill Casselman.
\newblock Black and white edit of photograph of the {Babylonian} clay tablet
  {YBC} 7289.
\newblock \url{https://en.wikipedia.org/wiki/File:Ybc7289-bw.jpg}, May 2007.
\newblock Accessed August 2015. Original photo available at
  \url{http://www.math.ubc.ca/~cass/Euclid/ybc/ybc.html}. Original tablet held
  by the Yale Babylonian Collection.

\bibitem{chiribella_probabilistic_2010}
Giulio Chiribella, Giacomo~Mauro D'Ariano, and Paolo Perinotti.
\newblock Probabilistic theories with purification.
\newblock {\em Physical Review A}, 81(6):062348, June 2010.
\newblock \doi{10.1103/PhysRevA.81.062348}.

\bibitem{coecke_categories_2010}
B.~Coecke and \'{E}.O. Paquette.
\newblock Categories for the {Practising} {Physicist}.
\newblock In Bob Coecke, editor, {\em New {Structures} for {Physics}}, volume
  813, pages 173--286. Springer Berlin Heidelberg, Berlin, Heidelberg, 2010.
\newblock \doi{10.1007/978-3-642-12821-9_3}.

\bibitem{coecke_logic_2003}
Bob Coecke.
\newblock The {Logic} of {Entanglement}. {An} invitation. ({Version} 0.9999).
\newblock Technical Report RR-03-12, Oxford University Computing Laboratory,
  October 2003.

\bibitem{coecke_logic_2004}
Bob Coecke.
\newblock The logic of entanglement.
\newblock February 2004.
\newblock \href{http://arxiv.org/abs/quant-ph/0402014}{arXiv:quant-ph/0402014}.

\bibitem{coecke_interacting_2008}
Bob Coecke and Ross Duncan.
\newblock Interacting quantum observables.
\newblock In {\em Automata, Languages and Programming}, volume 5126, pages
  298--310. Springer Berlin Heidelberg, Berlin, Heidelberg, 2008.
\newblock \doi{10.1007/978-3-540-70583-3_25}.

\bibitem{coecke_interacting_2011}
Bob Coecke and Ross Duncan.
\newblock Interacting quantum observables: categorical algebra and
  diagrammatics.
\newblock {\em New Journal of Physics}, 13(4):043016, April 2011.
\newblock \doi{10.1088/1367-2630/13/4/043016}.

\bibitem{coecke_generalised_2015}
Bob Coecke, Ross Duncan, Aleks Kissinger, and Quanlong Wang.
\newblock Generalised {Compositional} {Theories} and {Diagrammatic}
  {Reasoning}.
\newblock In Giulio Chiribella and Robert~W. Spekkens, editors, {\em Quantum
  {Theory}: {Informational} {Foundations} and {Foils}}, number 181 in
  Fundamental {Theories} of {Physics}, pages 309--366. Springer Netherlands,
  2016.
\newblock \doi{10.1007/978-94-017-7303-4_10}.

\bibitem{coecke_three_2011}
Bob Coecke and Bill Edwards.
\newblock Three qubit entanglement within graphical {Z}/{X}-calculus.
\newblock In {\em {\rm {Proceedings} {CSR} 2010 {Workshop} on} {High}
  {Productivity} {Computations}, {\rm {Kazan}, {Russia}, {June} 21-22, 2010}},
  volume~52 of {\em Electronic {Proceedings} in {Theoretical} {Computer}
  {Science}}, pages 22--33, Kazan, Russia, March 2011. Open Publishing
  Association.
\newblock \doi{10.4204/EPTCS.52.3}.

\bibitem{coecke_spekkens_2011}
Bob Coecke and Bill Edwards.
\newblock Spekkens's toy theory as a category of processes.
\newblock In Samson Abramsky and Michael Mislove, editors, {\em Proceedings of
  {Symposia} in {Applied} {Mathematics}}, volume~71, pages 61--88. American
  Mathematical Society, Providence, Rhode Island, 2012.
\newblock \href{http://arxiv.org/abs/1108.1978}{arXiv:1108.1978} [quant-ph].

\bibitem{coecke_phase_2011}
Bob Coecke, Bill Edwards, and Robert~W. Spekkens.
\newblock Phase groups and the origin of non-locality for qubits.
\newblock {\em Electronic Notes in Theoretical Computer Science},
  270(2):15--36, February 2011.
\newblock \doi{10.1016/j.entcs.2011.01.021}.

\bibitem{coecke_picturing_2015}
Bob Coecke and Aleks Kissinger.
\newblock {\em Picturing quantum processes}.
\newblock Cambridge University Press, (to appear).

\bibitem{coecke_povms_2008}
Bob Coecke and \'{E}ric~Oliver Paquette.
\newblock {POVMs} and {Naimark's} theorem without sums.
\newblock {\em Electronic Notes in Theoretical Computer Science}, 210:15--31,
  July 2008.
\newblock \doi{10.1016/j.entcs.2008.04.015}.

\bibitem{coecke_graphical_2011}
Bob Coecke, Quanlong Wang, Baoshan Wang, Yongjun Wang, and Qiye Zhang.
\newblock Graphical {Calculus} for {Quantum} {Key} {Distribution} ({Extended}
  {Abstract}).
\newblock {\em Electronic Notes in Theoretical Computer Science},
  270(2):231--249, February 2011.
\newblock \doi{10.1016/j.entcs.2011.01.034}.

\bibitem{dawson_solovay-kitaev_2006}
Christopher~M. Dawson and Michael~A. Nielsen.
\newblock The {Solovay}-{Kitaev} {Algorithm}.
\newblock {\em Quantum Info. Comput.}, 6(1):81--95, January 2006.

\bibitem{deutsch_quantum_1989}
D.~Deutsch.
\newblock Quantum {Computational} {Networks}.
\newblock {\em Proceedings of the Royal Society of London A: Mathematical,
  Physical and Engineering Sciences}, 425(1868):73--90, September 1989.
\newblock \doi{10.1098/rspa.1989.0099}.

\bibitem{devitt_quantum_2013}
Simon~J Devitt, William~J Munro, and Kae Nemoto.
\newblock Quantum error correction for beginners.
\newblock {\em Reports on Progress in Physics}, 76(7):076001, July 2013.
\newblock \doi{10.1088/0034-4885/76/7/076001}.

\bibitem{duncan_graph_2009}
Ross Duncan and Simon Perdrix.
\newblock Graph states and the necessity of {Euler} decomposition.
\newblock In {\em Mathematical Theory and Computational Practice}, volume 5635,
  pages 167--177. Springer Berlin Heidelberg, Berlin, Heidelberg, 2009.
\newblock \doi{10.1007/978-3-642-03073-4_18}.

\bibitem{duncan_rewriting_2010}
Ross Duncan and Simon Perdrix.
\newblock Rewriting measurement-based quantum computations with generalised
  flow.
\newblock In {\em Automata, Languages and Programming}, volume 6199, pages
  285--296. Springer Berlin Heidelberg, Berlin, Heidelberg, 2010.
\newblock \doi{10.1007/978-3-642-14162-1_24}.

\bibitem{duncan_pivoting_2013}
Ross Duncan and Simon Perdrix.
\newblock Pivoting makes the {ZX}-calculus complete for real stabilizers.
\newblock {\em Electronic Proceedings in Theoretical Computer Science},
  171:50--62, December 2014.
\newblock \doi{10.4204/EPTCS.171.5}.

\bibitem{edwards_non-locality_2009}
William Edwards.
\newblock {\em Non-locality in categorical quantum mechanics}.
\newblock PhD thesis, University of Oxford, Oxford, 2009.

\bibitem{elliott_graphical_2008}
Matthew~B. Elliott, Bryan Eastin, and Carlton~M. Caves.
\newblock Graphical description of the action of {Clifford} operators on
  stabilizer states.
\newblock {\em Physical Review A}, 77(4):042307, April 2008.
\newblock \doi{10.1103/PhysRevA.77.042307}.

\bibitem{elliott_graphical_2010}
Matthew~B. Elliott, Bryan Eastin, and Carlton~M. Caves.
\newblock Graphical description of {Pauli} measurements on stabilizer states.
\newblock {\em Journal of Physics A: Mathematical and Theoretical},
  43(2):025301, January 2010.
\newblock \doi{10.1088/1751-8113/43/2/025301}.

\bibitem{gilbreth_process_1921}
Frank~Bunker Gilbreth and Lillian~Moller Gilbreth.
\newblock Process charts.
\newblock In {\em Annual Meeting of The American Society of Mechanical
  Engineers}, New York, 1921.

\bibitem{goldschlager_computer_1982}
L.~Goldschlager and A.~Lister.
\newblock {\em Computer science: a modern introduction}.
\newblock Prentice-{Hall} international series in computer science.
  Prentice/Hall International, 1982.

\bibitem{gottesman_stabilizer_1997}
Daniel Gottesman.
\newblock {\em Stabilizer {Codes} and {Quantum} {Error} {Correction}}.
\newblock PhD thesis, Caltech, May 1997.
\newblock \href{http://arxiv.org/abs/quant-ph/9705052}{arXiv:quant-ph/9705052}.

\bibitem{gottesman_introduction_2010}
Daniel Gottesman.
\newblock An introduction to quantum error correction and fault-tolerant
  quantum computation.
\newblock In Samuel Lomonaco, editor, {\em Proceedings of {Symposia} in
  {Applied} {Mathematics}}, volume~68, pages 13--58. American Mathematical
  Society, Providence, Rhode Island, 2010.
\newblock \doi{10.1090/psapm/068/2762145}.

\bibitem{greylyn_generators_2014}
Seth E.~M. Greylyn.
\newblock {\em Generators and {Relations} for the {Group} $U_4(\ZZ[1/\sqrt{2},
  i])$}.
\newblock Master's thesis, Dalhousie University, Halifax, Nova Scotia, August
  2014.
\newblock \href{http://arxiv.org/abs/1408.6204}{arXiv:1408.6204} [quant-ph].

\bibitem{griffiths_atemporal_2006}
Robert~B. Griffiths, Shengjun Wu, Li~Yu, and Scott~M. Cohen.
\newblock Atemporal diagrams for quantum circuits.
\newblock {\em Physical Review A}, 73(5):052309, May 2006.
\newblock \doi{10.1103/PhysRevA.73.052309}.

\bibitem{hardy_formalism-local_2013}
Lucien Hardy.
\newblock A formalism-local framework for general probabilistic theories,
  including quantum theory.
\newblock {\em Mathematical Structures in Computer Science}, 23(Special Issue
  02):399--440, April 2013.
\newblock \doi{10.1017/S0960129512000163}.

\bibitem{heckel_graph_2006}
Reiko Heckel.
\newblock Graph {Transformation} in a {Nutshell}.
\newblock {\em Electronic Notes in Theoretical Computer Science},
  148(1):187--198, February 2006.
\newblock \doi{10.1016/j.entcs.2005.12.018}.

\bibitem{hillebrand_quantum_2011}
Anne Hillebrand.
\newblock {\em Quantum {Protocols} involving {Multiparticle} {Entanglement} and
  their {Representations} in the zx-calculus}.
\newblock Master's thesis, University of Oxford, September 2011.

\bibitem{hillebrand_superdense_2012}
Anne Hillebrand.
\newblock Superdense {Coding} with {GHZ} and {Quantum} {Key} {Distribution}
  with {W} in the {ZX}-calculus.
\newblock {\em Electronic Proceedings in Theoretical Computer Science},
  95:103--121, October 2012.
\newblock \doi{10.4204/EPTCS.95.10}.

\bibitem{horsman_quantum_2011}
Clare Horsman.
\newblock Quantum picturalism for topological cluster-state computing.
\newblock {\em New Journal of Physics}, 13(9):095011, September 2011.
\newblock \doi{10.1088/1367-2630/13/9/095011}.

\bibitem{joyal_geometry_1991}
Andr\'{e} Joyal and Ross Street.
\newblock The geometry of tensor calculus, {I}.
\newblock {\em Advances in Mathematics}, 88(1):55--112, July 1991.
\newblock \doi{10.1016/0001-8708(91)90003-P}.

\bibitem{kelly_coherence_1980}
G.~M. Kelly and M.~L. Laplaza.
\newblock Coherence for compact closed categories.
\newblock {\em Journal of Pure and Applied Algebra}, 19:193--213, December
  1980.
\newblock \doi{10.1016/0022-4049(80)90101-2}.

\bibitem{kissinger_communication_2014}
Aleks Kissinger.
\newblock Personal communication, 2014.

\bibitem{quantomatic}
Aleks Kissinger, Alex Merry, Ben Frot, Bob Coecke, David Quick, Lucas Dixon,
  Matvey Soloviev, Ross Duncan, and Vladimir Zamdzhiev.
\newblock Quantomatic.
\newblock \url{https://quantomatic.github.io/}.
\newblock Accessed September 2015.

\bibitem{kissinger_quantomatic_2015}
Aleks Kissinger and Vladimir Zamdzhiev.
\newblock Quantomatic: {A} proof assistant for diagrammatic reasoning.
\newblock March 2015.
\newblock \href{http://arxiv.org/abs/1503.01034}{arXiv:1503.01034} [cs.LO].

\bibitem{mac_lane_categories_1998}
Saunders Mac~Lane.
\newblock {\em Categories for the working mathematician}.
\newblock Springer, New York, 2nd edition, 1998.

\bibitem{matsumoto_representation_2008}
Ken Matsumoto and Kazuyuki Amano.
\newblock Representation of quantum circuits with {C}lifford and $\pi/8$ gates.
\newblock June 2008.
\newblock \href{http://arxiv.org/abs/0806.3834}{arXiv:0806.3834} [quant-ph].

\bibitem{merzbach_history_2011}
Uta~C. Merzbach and Carl~B. Boyer.
\newblock {\em A History of Mathematics}.
\newblock Wiley, Hoboken, NJ, 3rd edition, 2011.

\bibitem{nielsen_quantum_2010}
Michael~A. Nielsen and Isaac~L. Chuang.
\newblock {\em {Quantum Computation and Quantum Information}}.
\newblock Cambridge University Press, Cambridge, 2010.

\bibitem{penrose_applications_1971}
Roger Penrose.
\newblock Applications of negative dimensional tensors.
\newblock In D.~J.~A. Welsh, editor, {\em Combinatorial {Mathematics} and its
  {Applications}}, pages 221--244. Academic Press, 1971.

\bibitem{perdrix_zx_2015}
Simon Perdrix and Quanlong Wang.
\newblock The {ZX} {Calculus} is incomplete for {Clifford}+{T} quantum
  mechanics.
\newblock June 2015.
\newblock \href{http://arxiv.org/abs/1506.03055}{arXiv:1506.03055} [quant-ph].

\bibitem{pusey_stabilizer_2012}
Matthew~F. Pusey.
\newblock Stabilizer notation for {Spekkens'} toy theory.
\newblock {\em Foundations of Physics}, 42(5):688--708, March 2012.
\newblock \doi{10.1007/s10701-012-9639-7}.

\bibitem{pusey_reality_2012}
Matthew~F. Pusey, Jonathan Barrett, and Terry Rudolph.
\newblock On the reality of the quantum state.
\newblock {\em Nature Physics}, 8(6):476--479, 2012.
\newblock \doi{10.1038/nphys2309}.

\bibitem{ranchin_depicting_2014}
Andr\'{e} Ranchin.
\newblock Depicting qudit quantum mechanics and mutually unbiased qudit
  theories.
\newblock {\em Electronic Proceedings in Theoretical Computer Science},
  172:68--91, December 2014.
\newblock \doi{10.4204/EPTCS.172.6}.

\bibitem{raussendorf_one-way_2001}
Robert Raussendorf and Hans~J. Briegel.
\newblock A one-way quantum computer.
\newblock {\em Physical Review Letters}, 86(22):5188--5191, May 2001.
\newblock \doi{10.1103/PhysRevLett.86.5188}.

\bibitem{schroeder_incomplete_2014}
Christian Schr\"{o}der~de Witt and Vladimir Zamdzhiev.
\newblock The {ZX}-calculus is incomplete for quantum mechanics.
\newblock {\em Electronic Proceedings in Theoretical Computer Science},
  172:285--292, December 2014.
\newblock \doi{10.4204/EPTCS.172.20}.

\bibitem{selinger_dagger_2007}
Peter Selinger.
\newblock Dagger compact closed categories and completely positive maps
  {(Extended Abstract)}.
\newblock {\em Electronic Notes in Theoretical Computer Science},
  170(0):139--163, March 2007.
\newblock \doi{10.1016/j.entcs.2006.12.018}.

\bibitem{selinger_generators_2013}
Peter Selinger.
\newblock Generators and relations for n-qubit {Clifford} operators.
\newblock {\em Logical Methods in Computer Science}, 11(2:10), June 2015.
\newblock \doi{10.2168/LMCS-11(2:10)2015}.

\bibitem{shor_algorithms_1994}
P.W. Shor.
\newblock Algorithms for quantum computation: discrete logarithms and
  factoring.
\newblock pages 124--134. IEEE Comput. Soc. Press, 1994.
\newblock \doi{10.1109/SFCS.1994.365700}.

\bibitem{spekkens_evidence_2007}
Robert~W. Spekkens.
\newblock Evidence for the epistemic view of quantum states: A toy theory.
\newblock {\em Physical Review A}, 75(3):032110, March 2007.
\newblock \doi{10.1103/PhysRevA.75.032110}.

\bibitem{spekkens_quasi-quantization_2014}
Robert~W. Spekkens.
\newblock Quasi-quantization: classical statistical theories with an epistemic
  restriction.
\newblock September 2014.
\newblock \href{http://arxiv.org/abs/1409.5041}{arXiv:1409.5041} [quant-ph].

\bibitem{van_den_nest_graphical_2004}
Maarten Van~den Nest, Jeroen Dehaene, and Bart De~Moor.
\newblock Graphical description of the action of local {Clifford}
  transformations on graph states.
\newblock {\em Physical Review A}, 69(2):022316, February 2004.
\newblock \doi{10.1103/PhysRevA.69.022316}.

\bibitem{van_meter_blueprint_2013}
Rodney Van~Meter and Clare Horsman.
\newblock A {Blueprint} for {Building} a {Quantum} {Computer}.
\newblock {\em Commun. ACM}, 56(10):84--93, October 2013.
\newblock \doi{10.1145/2494568}.

\bibitem{zamdzhiev_abstract_2012}
Vladimir Zamdzhiev.
\newblock {\em An {Abstract} {Approach} towards {Quantum} {Secret} {Sharing}}.
\newblock Master's thesis, University of Oxford, August 2012.

\end{thebibliography}
\bibliographystyle{plain}  

\end{document}